\title{Global stability of Minkowski spacetime for a spin-1/2 field}
\author{Xuantao Chen}
\date
\begin{document}

\maketitle
\begin{abstract}
We study the initial value problem of the Einstein-Dirac system, and show the stability of the Minkowski solution in the massless case with the use of generalized wave coordinates. This requires the understanding of the Dirac equation in curved spacetime, for which we establish various estimates. The proof is based on the vector-field method which is widely used in works on the stability of Minkowski problems for other Einstein-coupled systems. Under a specific choice of the tetrad, we show that components of the Dirac field satisfy a quasilinear wave equation, by resolving a potential loss of derivative problem. We also show that the semilinear nonlinearity of this equation behaves like a null form. In this way, we obtain the sharp decay of the field along the light cone. The structure of the energy-momentum tensor causes worse behavior of some components of the metric than the vaccum case, but an additional structure shows that there is no harm to the global existence result. In addition, we develop an estimate of the Dirac equation itself adpated to the decay of the metric, as this provides better estimates in the interior compared with the estimates from the second order equation. The combination of these estimates leads to a good control of the Dirac field that helps close the proof. We shall also see how our argument here gives a proof of the wellposedness of the system in the massive case.
\end{abstract}

\newtheorem{thm}{Theorem}[section]
\newtheorem{lem}[thm]{Lemma}
\newtheorem{Cor}[thm]{Corollary}
\newtheorem{Prop}[thm]{Proposition}
\newtheorem{Def}[thm]{Definition}
\theoremstyle{remark}
\newtheorem{remark}[thm]{Remark}

\numberwithin{equation}{section}

\def\y{\gamma}
\def\d{\partial}
\def\dd{\nabla}
\def\lg{{L_{Geo}}}
\def\lt{{\widetilde{L}}}
\def\ltb{{\underline{\lt}}}
\def\lb{{\underline{L}}}
\def\e{\widetilde{e}}
\def\Tt{\widetilde{T}}
\def\Ut{\widetilde{U}}
\def\Rt{\widetilde{R}}
\def\pt{\tilde{\psi}}
\def\ra{\rightarrow}
\def\dts{\d_{t^*}}
\def\drs{\d_{r^*}}
\def\gt{\widetilde{g}}
\def\ds{\d^*}
\def\zsi{Z^{*I}}
\def\zhsi{\hat{Z}^{*I}}
\def\zsj{Z^{*J}}
\def\zsk{Z^{*K}}
\def\dt{\widetilde{\d}}
\def\db{\overline{\d}}
\def\dtb{\overline{\widetilde{\d}}}
\def\gb{\bar{\gamma}}
\def\l{\mathcal{L}}
\def\lh{\widehat{\mathcal{L}}}
\def\go{\mathring{\gamma}}
\def\gbar{\overline{g}}
\def\z{\widetilde{Z}}
\def\zh{\hat\widetilde{Z}}
\def\x{\widetilde{x}}
\def\wb{\overline{w}}
\def\lz{\l_{\z}}
\def\lzr{\l_Z}
\def\fp{F^\psi}
\def\mh{\widehat{m}}
\def\Hi{\widetilde{H}_1}
\def\hh{\widehat{h}}
\def\Hh{\widehat{H}}
\def\hi{\widetilde{h}^1}
\def\Hh{\widehat{H}}
\def\wp{w^\psi}
\def\mt{\widetilde{m}}
\def\cht{\widetilde{\chi}}
\def\a{\alpha}
\def\b{\beta}
\def\k{\kappa}
\def\pb{\overline{\psi}}
\def\lzh{{\widehat{\mathcal{L}}_{\z}}}
\def\lzp{(\lz^\psi)}
\def\lhzp{(\widehat{\mathcal{L}}_{\z}^\psi)}
\def\et{\widetilde{e}}
\def\gat{\widetilde{\Gamma}}
\def\gh{\widehat{\Gamma}}
\def\dh{\widehat{\dd}}
\def\hti{\widetilde{h}}
\def\Hti{\widetilde{H}}
\def\Tc{\widetilde{\mathcal{T}}}
\def\lhhz{\widehat{\widehat{\l}}_{\z}}
\def\lc{\widetilde{\mathcal{L}}}
\def\Uc{\widetilde{\mathcal{U}}}
\def\omegat{\widetilde{\omega}}
\def\mn{{\mu\nu}}

\section{Introduction}
We study the solutions of the Einstein-Dirac system around the Minkowski solution, mainly the global aspect of the massless case\footnote{We use the Einstein summation convention. The Greek indices $\a,\b,\mu,\nu,\cdots=0,1,2,3$, and Latin indices $i,j,k,\ell,\cdots=1,2,3$.}
$$R_{\mu\nu}-\frac 12 Rg_{\mu\nu}=T_{\mu\nu},$$
$$\y^\mu D_\mu \psi %+im\psi
=0.$$ %mainly the case when $m=0$, 
The system involves a dynamic Lorentzian metric $g$ and its interaction with a Dirac spinor field $\psi$ taking value in $\mathbb{C}^4$. The system here is coordinate covariant, and the Dirac equation is also locally Lorentz covariant. The $4\times 4$ Gamma matrices $\gamma^\mu$ and the covariant derivative $D_\mu$ are defined to be compatible with these covariances.

%too detailed here

%To derive this system involing spinor fields one needs to make a choice of an orthonormal frame, namely vierbein (or tetrad) $\{e_{\ a}^\mu\d_\mu\}$ with $a=0,1,2,3$ in the spacetime, so that $g_{\mu\nu}e_{\ a}^\mu e_{\ b}^\nu=m_{ab}$. Dually we can define $e_{\ \mu}^a$ satisfying $e_{\ a}^\mu e_{\ \nu}^a=\delta_\mu^\nu$, and $g_{\mu\nu}=m_{ab} e_{\ \mu}^a e_{\ \nu}^b$. Also $\sqrt{|g|}=e$ with $e:=|\det(e_{\ \mu}^a)|$. Then the Gamma matrices can be defined as $\gamma^\mu=\gb^a e_{\ a}^\mu$ where $\gb^a$ are standard Gamma matrices.

The system consists of equations of motion of the metric (or equivalently, the tetrad $\{e^\mu{}_{a}\}$, which is a field of orthonormal frame with respect to $g$) field and the spinor field, of the action
$$S[\psi,e_{\ a}^\mu]=S_\text{grav}[e_{\ a}^\mu]+S_{M}[\psi,e_{\ a}^\mu]=\int R\, |\det(e_{\ \mu}^a)|\, dx+\int  \,   \frac i2 (\overline{\psi} \gamma^\mu D_\mu \psi-D_\mu \overline{\psi} \gamma^\mu \psi)\, |\det(e_{\ \mu}^a)|\, dx.$$

%Here $D_\mu$ is the covariant derivative operator for the spin-$1/2$ field: $D_\mu \psi=\nabla_\mu+ \Omega_{\mu }\psi$, where $\Omega_\mu$ is the connection term (see Section \ref{notation} for detail).

The energy-momentum tensor is determined by $T^a{}_{\mu}=\displaystyle\frac 1{|\det(e_{\ \mu}^a)|} \frac {\delta S_M}{\delta e^\mu_{\ a}}$ and we have the formula for it \textit{on-shell}:
\begin{equation}\label{EMT}
T_{\mu\nu}=\frac i4 (\overline{\psi}\gamma_\mu D_\nu\psi-D_\nu\overline{ \psi} \gamma_\mu \psi)+\frac i4 (\overline{\psi}\gamma_\nu D_\mu\psi-D_\mu\overline{ \psi} \gamma_\nu \psi),
\end{equation}
which can be shown to be covariantly constant, i.e.\ $\dd^\mu T_{\mu\nu}=0$, where $\dd$ is the standard Levi-Civita connection.

The Einstein equation is coordinate covariant and one can impose coordinate conditions. In wave (harmonic) coordinates ($\Box_g x^\mu=0$), the equation can be written as a system of quasilinear wave equations:
\begin{equation*}
    \widetilde{\Box}_g g_{\mu\nu}=F_{\mu\nu}(g)(\d g,\d g)+T_{\mu\nu}-\frac 12 g_\mn \mathrm{tr}_g T,
\end{equation*}
where $\widetilde{\Box}_g=g^{\a\b}\d_\a\d_\b$ is the reduced wave operator. In the vaccum case, i.e.\ $T_\mn=0$, the wellposedness of the Einstein's equation then becomes a local wellposedness problem of quasilinear wave equations, as is proved in \cite{Choques-Bruhat1952}.

The local wellposedness result, however, does not provide insight on long-time behavior of the solution. The work by Christodoulou-Klainerman \cite{christodoulou1993global} proved the global stability of Minkowski spacetime for the Einstein vacuum equation without using coordinate gauge, as it was believed that the wave coordinates would lead to blow up. Nevertheless, Lindblad-Rodnianski \cite{03,04} managed to show the stability in wave coordinates, with a much simplified proof based on the vector-field method introduced by Klainerman \cite{klainerman1985uniform}, and gave the notion of weak null condition for nonlinear wave equations. There are many generalizations of the method to Einstein's equation coupled with different fields, such as Einstein-Maxwell \cite{loizelet2009solutions} (and more generalized model \cite{speck2014global}), Einstein-Vlasov \cite{Stability2017Fajman,EV} (see also \cite{TaylormasslessEV,masslessEVnoncompact} for massless case), Einstein-Klein-Gordon equation \cite{LMCMP} (see also a more frequency space approach for Einstein-Klein-Gordon \cite{ionescu2019einstein}). 

\vspace{0.5ex}

The approach in \cite{04} is based on the foliation in Minkowski spacetime. This is a very convenient choice of background geometry, and in fact, an important reason that the proof in \cite{04} is significantly simpler than in \cite{christodoulou1993global}. However, the estimates we can get from it are sometimes not optimal. %For instance, in \cite{04} one can get the sharp decay of $|\d h|_{TU}$ near the light cone ($|t-r|\leq C$), but this is far from the best decay we can get when we transits to the interior. 
This is in part because the Schwarzschild part with tail $Mr^{-1}$ causes a change in the asymptotics of the light cone from the Minkowski one. In \cite{1606Einsteinasymptotics}, Lindblad derived the asymptotics of the solution of Einstein's equations in wave coordinates by considering the change of coordinates $r^*=r+M\ln r$, where $M$ is the ADM mass of the initial data, and proved that $u^*=t-r^*$ converges to the true eikonal function of the metric. In these new coordinates, Kauffman \cite{kauffman2018} studied the fractional Morawetz estimates for massless Maxwell-Klein-Gordon equations in a fixed curved background, generalizing the estimate in \cite{lindblad04MKG}, which studied the problem in Minkowski spacetime. Then Kauffman-Lindblad \cite{KLMKG21} proved the global stability of the Minkowski solution for the system coupled with Einstein's equation.

\vspace{1ex}

The Dirac equation in Minkowski spacetime, $\gb^\mu \d_\mu \psi+im\psi=0$, is originally derived to be a relativistic generalization of Schr\"{o}dinger equation. The equation, as a first-order PDE, is remarkably Lorentz covariant, and now understood as a relativistic description of a spin-1/2 field. One can study Dirac equations coupled with other relativistic equations in Minkowski spacetime, and \cite{bachelot1988DiracVectorField} gave a generalization of the vector-field method to the Dirac equation, which helps the study of long-time problems. 
%Just like the difference between wave and Klein-Gordon equation, when the mass is nonzero then there is no scaling symmetry. In this case, one may use hyperboloid foliation to study the long-time problem, see for instance \cite{psarelli2005maxwell}, \cite{1902DiracHyperboloid}. 

\vspace{0.5ex}
The Einstein-Dirac coupled system, which is of great interest in physics, requires the understanding of the Dirac spinor in curved spacetime. There have been works on special solutions of the system (see for instance \cite{finster1999particlelike}, \cite{pechenick1979new}), but it is less understood in terms of the initial value problem. In fact, even the wellposedness of the initial value problem has not been widely studied. In this work, we study the solution of the initial value problem, mainly the global behavior of solutions for massless Einstein-Dirac system. We show for initial data close to Minkowski spacetime in our sense, the solution is global and asymptotically converging to the Minkowski solution\footnote{From our result here, one can show that the metric from the solution is causally geodesically complete, by a modification of the proof in \cite{03} adapted to the decay of the metric in our case.}, by developing various estimates of the Dirac equation in curved spacetime in harmony with the decay of the metric in our situation. To the author's knowledge, these estimates in curved setting have not been derived before. Some observations here also give a proof of the (local) wellposedness of the Einstein-Dirac system, possibly with a mass $m\neq 0$. We shall state our main result and discuss the idea of proof in this introduction.

\subsection{Main result}
\subsubsection{Initial value problem}
For the initial value problem of the Einstein equation, one needs to start with a 3-dim Riemannian manifold $(\Sigma,\gbar)$ and a second fundamental form $k_{ij}$, and the solution is a 4-dim Lorentzian manifold $(\mathcal{M},g)$ with an embedding $\Sigma\subset \mathcal{M}$, with $\gbar$ being the restriction of $g$ on $\Sigma$ and $k$ the second fundamental form of $\Sigma$ in $\mathcal{M}$.

For the Einstein-Dirac system the initial data is a set $(\overline{g}_{ij},k_{ij},\psi_0)$. We define the initial value on $\{t=0\}$ compatible with the wave coordinate condition:
$$\begin{aligned}
&\left.g_{i j}\right|_{t=0}=\gbar_{ i j},\left.\quad g_{00}\right|_{t=0}=-a^{2},\left.\quad g_{0 i}\right|_{t=0}=0,\left.\quad \psi\right|_{t=0}=\psi_{0} \\
&\left.\partial_{t} g_{i j}\right|_{t=0}=-2 a k_{ i j},\left.\quad \partial_{t} g_{00}\right|_{t=0}=2 a^{3} \gbar^{i j} k_{ i j}, \\
&\partial_{t} g_{0 \ell}|_{t=0}=a^{2} \gbar^{i j} \partial_{j} \gbar_{ i \ell}-\frac{1}{2} a^{2} \gbar^{i j} \partial_{\ell} \gbar_{ij}-a \partial_{\ell} a.
\end{aligned}
$$
Here $a^2=1-\frac Mr \chi(r)$, where $\chi$ is a smooth function taking value between 0 and 1, with $\chi(s)=1$ when $s>3/4$, and $\chi(s)=0$ when $s<1/2$. By the construction the wave coordinate condition\footnote{It is straightforward to verify that this is equivalent with the condition $\Box_g x^\mu=0$.}
\begin{equation}\label{wcc}
    g^{\a\b}\d_\b g_{\a\mu}=\frac 12 g^{\a\b}\d_\mu g_{\a\b}
\end{equation}holds on initial hypersurface. Under this condition, the Einstein equation can be written as a system of quasilinear wave equations:
\begin{equation}
    \widetilde{\Box}_g g_{\mu\nu}=F_{\mn}(g)(\d g, \d g)+T_\mn-\frac 12 \mathrm{tr}_g T g_\mn.
\end{equation}
We call this the reduced Einstein equation. Furthermore, the wave coordinate condition can be shown to hold for the solution of the new equations (see e.g.\  \cite{choquet2008general} for proof).

The initial data $(\gbar_{ij},k_{ij},\psi_0)$ needs to satisfy constraint equations:
\begin{equation}\label{constraintTnn}
    \overline{R}+\gbar^{ij}\gbar^{\ell m}(k_{ij}k_{\ell m}-k_{i\ell}k_{jm})=T_\mn n^\mu n^\nu,
\end{equation}
\begin{equation}\label{constraintTni}
    \overline{\dd}_i (\gbar^{j\ell}k_{j \ell})-\gbar^{j\ell}\overline{\dd}_\ell k_{ij}=T_{\mu i}n^\mu,
\end{equation}   
where $\overline\dd$ is the Levi-Civita connection on $(\Sigma,\gbar_{ij})$, $T$ is the energy momentum tensor, and $n$ is the future-oriented unit normal vector to $\Sigma$. Note that the right hand side can be determined only using $\gbar_{ij}$, $k_{ij}$ and $\psi_0$, as we will see after formulating the Dirac equation in curved spacetime.

The proof of the local wellposedness of the Einstein vacuum equation in \cite{Choques-Bruhat1952} becomes a problem of quasilinear wave equation when the wave coordinate condition is given. We shall use a similar way to get the local wellposedness of the Einstein-Dirac equation, but in order to write the Dirac equation explicitly, we need to pick a choice of the tetrad, in addition to the coordinate condition. We will discuss this in Section \ref{chaptersecondorderDiracderivation} and \ref{constructionoftetradsection}.

%One can impose a choice at the initial hypersurface, and an evolution equation of the tetrad, and then show that it can be preserved, %(for similar method to a different model, see \cite{Bao:1984bp}), 
%but we will show in Section \ref{constructionoftetradsection} this can also be done by giving an explicit choice of tetrad from the metric, which makes it easier to compute.

\subsubsection{Main result}
Suppose that $\Sigma$ is diffeomorphic to $\mathbb{R}^3$, and there is a global coordinate chart on $\Sigma$ such that $\gbar$ is close to the Euclidean metric. Consider the initial data $\overline h_{ij}=\gbar_{ij}-\delta_{ij}$ satisfies the asymptotic expansion:
$$\overline h_{ij}=\frac Mr \delta_{ij}+o(r^{-1-\gamma}),\ \ k_{ij}=o(r^{-2-\gamma}),\ \ \text{as }r\ra\infty$$
for some $M>0$ and $0<\gamma<1$. %One has $M>0$ unless $\gbar$ is flat, by the positive mass theorem, see Schoen-Yau \cite{schoenyau1979positivemasstheorem}.
We decompose the metric to separate the Schwarzschild part\footnote{We use $m$ to denote both the Minkowski metric and the mass in Dirac equation, as it is not hard to distinguish them.}
\begin{equation}\label{h0}
    g_\mn=m_\mn+h_\mn=m_\mn+h_\mn^0+h_\mn^1, \ \ \textrm{where } h^0_{\mu\nu}=\chi(\textstyle\frac r{t+1})\frac Mr \delta_{\mu\nu},
\end{equation}
where $\chi$ is the same as in last subsection, satisfying $\chi(s)=1$ when $s>3/4$, and $\chi(s)=0$ when $s<1/2$.
We also use the decomposition for the inverse metric \begin{equation}\label{H0}
    g^\mn=m^\mn+H^\mn=m^\mn+H_0^\mn+H_1^\mn, \ \ \textrm{where } H_0^{\mu\nu}=\chi(\textstyle\frac r{t+1})\frac Mr \delta^{\mu\nu}.
\end{equation}
For initial data, we also do the decomposition $\gbar=\delta+\overline h^0+\overline h^1$, where $\overline h_{ij}^0=\chi(r)\frac Mr \delta_{ij}$. We define the weighted norm for the initial data $(\gbar_{ij},k_{ij},\psi_0)$:
\begin{equation}
    \mathcal{E}_N(0)=\sum_{|I|\leq N}\left(||(1+r)^{\frac 12+\gamma+|I|} \overline{\dd}\, \overline{\dd}^I \overline h^1||_{L^2}^2+||(1+r)^{\frac 12+\gamma+|I|} \overline{\dd}^I k||_{L^2}^2\right),
\end{equation}
\begin{equation}
    \mathcal{E}_N^1(0)=\sum_{|I|\leq N}||(1+r)^{1+s+|I|} \overline{\dd}\, \overline{\dd}^I \psi_0||_{L^2}^2+||\psi_0||_{L^2}^2.
\end{equation}
Here $0<\gamma<1$, $s>0$, and $\overline{\dd}$ is the Levi-Civita connection associated with $\gbar_{ij}$.

\begin{thm}
The initial value problem of the Einstein-Dirac system is wellposed, i.e.\ for a set of initial data satisfying the constraint equations and with enough regularity, there exists a unique maximal development\footnote{This corresponds to the notion of local wellposedness in PDE terminology.}. The Einstein-massless Dirac system %in generalized wave coordinates, and in a gauge of tetrad (which we will specify in Section \ref{constructionoftetradsection}), 
admits a global solution, provided that the initial data also satisfies the smallness condition $\mathcal E_N(0)+\mathcal E^1_N(0)+M\leq \varepsilon$ for some $N\geq 9$, with $\varepsilon$ smaller than a small number $\varepsilon_0$. Moreover, there is a coordinate system $(t,\x_1,\x_2,\x_3)$ such that the solution satisfies the following decay estimates, for some small $\delta>0$: \footnote{The notation $A\lesssim B$ means that there is a constant $C>0$ such that $A\leq CB$. We will also often use $C$ to represent the existence of such constant, whose value may vary in different cases.}
\begin{equation}
    |\dt\z^I \hi|\lesssim \varepsilon (1+t+r^*)^{-1}\ln \left(\frac{1+t+r^*}{1+|q^*|}\right)\,  (1+|q^*|)^{-1+2\delta}(1+q_+^*)^{-\min(\y,2s)},\ \ |I|\leq N-5,
\end{equation}
\begin{equation}
    |\z^I \psi|\lesssim \varepsilon (1+t+r^*)^{-1}(1+|q^*|)^{-\frac 12+\delta}(1+q_+^*)^{-s},\ \ |I|\leq N-5,
\end{equation}
where $r^*=\sqrt{\x_1^2+\x_2^2+\x_3^2}$, $q^*=r^*-t$, $q^*_+=\max\{q^*,0\}$, and $\z\in \{\dt_\mu,\widetilde \Omega_{ij}=\x_i \dt_j-\x_j \dt_i,\widetilde\Omega_{0i}=t\dt_i+\x_i \dt_t,\widetilde S=t\dt_t+\x_i \dt_i\}$, the set of Minkowski commuting vector fields in this coordinate system.
\end{thm}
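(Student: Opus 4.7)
The plan is to run a bootstrap/continuity argument for weighted energies $\mathcal{E}_N(t)$ and $\mathcal{E}_N^1(t)$ commuted with the modified Minkowski fields $\z \in \{\dt_\mu, \widetilde{\Omega}_{ij}, \widetilde{\Omega}_{0i}, \widetilde{S}\}$, carried out in the coordinates $(t,\x)$ with $r^* = r + M\chi\ln r$ so that level sets of $q^* = r^*-t$ approximate the true outgoing null cones of $g$. This follows the scheme of Lindblad--Rodnianski \cite{04} and Kauffman--Lindblad \cite{KLMKG21}, modified to accommodate the Dirac field. Local wellposedness, including the massive case, will follow from a standard iteration on a quasilinear wave system once the spinor is recast as a second-order equation with a suitable tetrad.

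As preparation, I would fix the tetrad as in Section \ref{constructionoftetradsection} so that $\psi$ satisfies $\widetilde{\Box}_g \psi = \mathcal{N}(\psi,\d\psi,\d g)$; the apparent derivative loss coming from differentiating the first-order Dirac equation is absorbed by using the tetrad/connection equations to rewrite the problematic term algebraically, and I would verify that $\mathcal{N}$ has null-form structure so that the semilinear part decays like $(1+t+r^*)^{-1}(1+|q^*|)^{-1}$ after $\z^I$ commutation. I would then commute both the reduced Einstein system and this second-order Dirac equation with $\z^I$ for $|I|\leq N$, and run weighted $L^2$ energy estimates with a Lindblad-type multiplier in $q^*$, converting them to pointwise bounds by a Klainerman--Sobolev inequality adapted to $(t,\x)$. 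The bootstrap assumption would be precisely the pointwise bounds in the statement, with some margin; the closure step improves the exponents by $\varepsilon$ for $\varepsilon<\varepsilon_0$.

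Two additional ingredients are needed. First, the energy-momentum tensor $T_\mn$ is quadratic in $\psi$ and generates source terms in the Einstein system that decay more slowly than the vacuum self-interaction, so certain components of $h$ a priori threaten to grow. I would exploit the wave coordinate condition \eqref{wcc} together with the algebraic structure of $T_\mn$ under the $L,\lb,T$ decomposition to isolate the components that are mildly worse and show they do not feed back destructively into the hierarchy driving $\widetilde{\Box}_g h$. Second, the second-order estimate for $\psi$ yields suboptimal interior decay in $\{r^*<t\}$; to achieve the sharp $(1+|q^*|)^{-1/2+\delta}$ interior bound, I would supplement with a direct first-order energy estimate for $\y^\mu D_\mu \psi = 0$ using a Bachelot-type current \cite{bachelot1988DiracVectorField} adapted to the decay of $g$, then interpolate the two estimates.

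The main obstacle is balancing these couplings simultaneously: the slower-decaying source from $T_\mn$, the weak null structure of the metric equations, and the tension between the second-order Dirac estimate (good at infinity) and the first-order one (good in the interior) all have to be controlled inside a single hierarchy of $\z^I$ estimates. The identification of the null/algebraic structure in both the metric nonlinearity and the Dirac nonlinearity, together with the tetrad choice that makes the spinor wave equation well-posed without derivative loss, are the preparatory steps that render the various estimates mutually compatible, after which the bootstrap closes and upgrades local to global existence with the stated decay.
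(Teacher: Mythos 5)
Your proposal follows essentially the same route as the paper: recasting the Dirac equation as a second-order quasilinear wave equation via a tetrad that resolves the potential derivative loss, identifying the null structure in the semilinear Dirac term, working in asymptotic Schwarzschild coordinates $r^*=r+M\chi\ln r$ with generalized wave coordinates, bootstrapping weighted commuted energies, exploiting the wave-coordinate decomposition of $T_{\mu\nu}$ in the null frame, and supplementing with a first-order Dirac $L^2$ estimate to control the interior decay of $\psi$, then interpolating the two $\psi$-estimates to close the commutator. The only organizational deviation is that you phrase the bootstrap on pointwise bounds, whereas the paper bootstraps on the energy quantities $E_N$, $E_N^1$, $C_N$ and recovers the sharp pointwise bounds not from Klainerman--Sobolev alone but by combining it with H\"{o}rmander's $L^1$--$L^\infty$ estimate and a weighted $L^\infty$--$L^\infty$ integration along characteristics.
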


\subsection{Idea of proof}
We discuss here the main idea for our proof of the global result of the massless Einstein-Dirac system. We will also see how we can also show the local wellposedness for the case with a nonzero mass.
\paragraph{Weak null structure of the Einstein vacuum equation in wave coordinates}
We first review the weak null structure for nonlinear wave equations, introduced by Lindblad-Rodnianski in \cite{lindblad2003weaknull}, where they used the asymptotic systems to successfully predict the global existence of the Einstein equation in wave coordinates. A wave equation is said to satisfy the weak null condition if the asymptotic system admits a global solution. The asymptotic system is derived by ignoring cubic terms and terms involving derivatives of components tangential to the light cone, i.e.\  derivatives in $L=\d_t+\d_r$ and spherically tangential direction $S_1,S_2$, as they are expected to decay much better along the light cone, and the derivative in $\lb=-2\d_q=\d_t-\d_r$ decays less. Then the main contribution from the inhomogeneous term of the reduced Einstein equation is $L_\mu L_\nu P(\d_q h,\d_q h)$, where $P(\pi,\theta)=\frac 14m^{\a\b}\pi_{\a\b} m^{\rho\sigma}\theta_{\rho\sigma}-\frac 12 m^{\a\b}m^{\rho\sigma}\pi_{\a\rho}\theta_{\b\sigma}$.
%For Einstein vacuum equation in wave coordinates, the equation reads
%$$\widetilde{\Box}_g h_{\mu\nu}=P(\d_\mu h,\d_\nu h)+Q_{\mu\nu}(\d h,\d h)+G_{\mu\nu}(g)(\d h, \d h),$$
%Here $Q$ is a null form (which means that it contains one factor of tangential derivatives), and $G$ is a cubic term, so we ignore them in the asymptotic system. 
The asymptotic system then reads
\begin{equation}\label{asymptoticsystemvacuum}
    (\d_t+\d_r)\d_q D_{\mu\nu} =H_{LL} \d_q^2 D_{\mu\nu}-\frac 1{4r} L_\mu L_\nu P(\d_q D,\d_q D),
\end{equation}
%\frac 14 \d_q D_{LL}\d_q D_{\lb\lb}+\d_q D_{TT}\d_q D_{TT}
where $D_{\mu\nu}=rh_{\mu\nu}=r(g_\mn-m_\mn)$. We now first consider $D_{TU}$, where $T\in \{L,S_1,S_2\}$ is tangential and $U$ is not necessarily tangential. Using $L_\mu L_\nu T^\mu U^\nu=0$, we can integrate along the $\d_t+\d_r-H_{LL}\d_q$ direction (note that $H_{LL}$ is small as we will see that its leading behavior is $-2M/r$) to get $\d_q D_{TU}\sim \textrm{const}$, so $h_{TU}\sim t^{-1}$ along the light cone.

For the remaining component, we need the identifications $\d_q D_{LT}\sim 0$ and $\d_q \slashed{\mathrm{tr}} D\sim 0$, derived from the wave coordinate condition, by contracting \eqref{wcc} with $T$ and $\lb$ respectively, where $T$ is a tangential vector, and $\slashed{\mathrm{tr}}$ is the trace on the sphere. %$P$ here is given by $P(D,E)=\frac 14 D_\a^{\ \a} E_\b^{\ \b}-\frac 12 D^{\a\b} E_{\a\b}$.
Then expanding $P(\d_q D,\d_q D)$ in the null frame $\{L,\lb,S_1,S_2\}$ we have
$P(\d_q D,\d_q D)\sim \frac 14 \d_q D_{LL}\d_q D_{\lb\lb}+\d_q D_{TT}\, \d_q D_{TT}$. Here the only components we do not know is $\d_q D_{\lb\lb}$, and we have $\d_q D_{LL}\sim 0$ so this is not a problem.
Then integrating the equation we get a logarithmic growth of $\d_q D_{\lb\lb}$, and get $h_{\lb\lb}\sim t^{-1}\ln t$. From this we know that the asymptotic system admits a global solution, and it turns out that this predicts the correct behavior of the solution of the original system of quasilinear wave equations.

%In the presence of the Dirac spinor field, we have new terms from its energy-momentum tensor, and we will see that the behavior of $D_{TU}$ will be affected. However, we are still able to get the sharp decay for $D_{TT}$ and $D_{LU}$, as we will see, and this is still good enough for us in view of the asymptotic system above.

\paragraph{Einstein-Dirac coupled system}To study the Einstein-Dirac system, we need to formulate the Dirac equation in curved spacetime. The Dirac spinor field in Minkowski spacetime is a field satisfying the transformation law under Lorentz transformation $\Lambda$
$$\psi\ra S(\Lambda)\psi,$$
where $S$ is the spin representation of Lorentz group, and we see that generalization is necessary for curved spacetime. As such, one needs to pick a field of orthonormal frame (tetrad) $\{e_a^{\ \mu}\}$ to make tangent space look like Minkowski spacetime, and then define the spinor to satisfy the same transformation property under \textit{local} Lorentz transformation. In the language of fiber bundle, the spinor field here is a section of the spinor bundle, associated with the spin representation and the principal orthonormal frame bundle.

This also gives rise to the covariant derivative of the spinor field, where a new connection term appears compared with the Levi-Civita connection: 
$$D_\mu \psi=\d_\mu \psi-\frac 14 \omega_{\mu ab}\Sigma^{ab}\psi.$$
Here $\omega_{\mu ab}=g(e_a,\nabla_\mu e_b)$ is the spin connection $1$-form with $\dd$ being the Levi-Civita connection associated with the metric $g$, and $\Sigma^{ab}$ is the generator of the spin representation. We will analyze the Dirac equation using this formulation.%We will discuss this in detail in Section \ref{Chnotation}.

\paragraph{A second-order equation for Dirac field}%how to describe
The presence of the Dirac field $\psi$ produces new terms on the right hand side of the Einstein equation, where first-order derivatives of $\psi$ appear. This is undesirable, as the Dirac equation itself does not give good control of the derivative of $\psi$. We deal with this by considering a second-order equation of the spinor field, derived from the Dirac equation.

For the Dirac equation in Minkowski spacetime $\gb^\mu \d_\mu \psi+im\psi=0$, we can apply the Dirac operator $\gb^\mu \d_\mu$ again to get $$\d^\mu \d_\mu \psi-m^2 \psi=0.$$
This means that the components of $\psi$, assuming enough regularity, satisfiy the Klein-Gordon equation, and if $m=0$, it becomes wave equation. 

In our case, where the spacetime is curved, the covariant derivative $D$ is not commutative anymore:
$$[D_\mu,D_\nu]\psi=-\frac 14 R_{\mu\nu\sigma\delta} \gamma^\sigma \gamma^\delta \psi.$$
But still, by the anti-commutation relation of our adpated Gamma matrices $\gamma^\mu\gamma^\nu+\gamma^\nu\gamma^\mu=-2g^{\mu\nu}I$ one can show that
$$\gamma^\mu D_\mu(\gamma^\nu D_\nu\psi)=-D^\mu D_\mu \psi+\frac 14R\psi.$$
As we shall see, this gives a quasilinear hyperbolic equation for $\psi$, where we have the estimate for the derivative of $\psi$. This equation is also crucial for us to perform $L^\infty$-$L^\infty$ estimate using techniques of wave equations, in order to get the sharp decay of the field, as it looks quite difficult to integrate the Dirac equation directly to derive decay estimates.

\paragraph{Choice of gauge}
In order to derive the estimate explicitly, just as the case for the Einstein equation, where we want to fix a coordinate gauge, here we want to also fix the frame gauge, i.e.\  pick a choice of the tetrad. Since we are studying a spacetime that is very close to the Minkowski spacetime, we are able to make a global choice, and it might be the easiest to orthogonalize the coordinate basis directly. In view of the formula for Gram-Schmidt process, we see that the components of tetrad behave like the perturbation of metric from the Minkowski metric $h=g-m$:
$$|(e_a)^\mu-(\d_a)^\mu| \lesssim |h|,\ \ |\d (e_a)|\lesssim |\d h|,\ \ |Z (e_a)|\lesssim |Zh|,$$
where $Z$ is one of the Minkowski commuting vector fields, which we will explain later.

\paragraph{Avoiding loss of derivative}
Now we have established a new equation of $\psi$
$$D^\mu D_\mu \psi-m^2 \psi=\frac 14 R\psi.$$
When the metric $g$ and the field $\psi$ solve our system, we have $R=-m\pb\psi$, so the term $\frac 14 R\psi$ is a cubic term. 
While the equation looks nice, we need to take into consideration the behavior of the connection term. Recall we have $D_\mu=\dd_\mu+\Omega_\mu=\dd_\mu-\frac 14\omega_{\mu ab}\Sigma^{ab}$ for Dirac spinor field. Moving all terms involving the connection to the right hand side, we get $$g^{\mu\nu}\dd_\mu \dd_\nu \psi-m^2 \psi=-g^{\mu\nu}(\dd_\mu \Omega_\nu) \psi-2g^{\mu\nu}\Omega_\mu \d_\nu \psi-g^{\mu\nu}\Omega_\mu \Omega_\nu \psi+\frac 14 R\psi.$$
We note that $\Sigma^{ab}$ are constant matrices, and the spin connection term $\omega_{\mu ab}=g(e_a,\dd_\mu e_b)$ behaves like the first order derivative of the metric, according to our choice of the tetrad. Hence, there is a \textbf{potential loss of derivative} for the first term on the right hand side, which we need to deal with, in order to get even the local wellposedness result. We tackle this problem by the following observation: we first have
$$g^\mn \dd_\mu \omega_{\nu ab}=g^\mn g(e_a,\dd_\mu\dd_\nu e_b)+\text{lower order terms (l.o.t.)}.$$
There is a covariant wave operator on $e_b$, and we have $$g^{\mu\nu}\dd_\mu \dd_\nu (e_b)^\a=\widetilde\Box_g (e_b)^\a+g^\mn (\d_\mu \Gamma_{\nu\rho}^{\ \ \, \a}) (e_b)^\rho+\text{l.o.t.}$$
The first term is essentially like $\widetilde\Box_g h$, which can be controlled using the reduced Einstein equation; for the second term, we make use of the relation on the curvature tensor
$$\d_\mu \Gamma_{\nu\rho}^{\ \ \,\a}=\d_\rho \Gamma_{\mu\nu}^{\ \ \,\a}+R_{\rho\mu\nu}{}^{\a}+\text{l.o.t}.$$
Contracting this with $g^\mn$ we see the last term is like the Ricci curvature, which can be controlled by the Einstein equation; the first term is now 
$$g^\mn \d_\rho \Gamma_\mn^{\ \ \,\a}=\d_\rho(g^\mn \Gamma_\mn^{\ \ \,\a})+\text{l.o.t}=0+\text{l.o.t.}$$
where we used the wave coordinate condition $g^\mn \Gamma_\mn^{\ \ \,\a}=0$ for the last equality. We note that this works similarly for generalized wave coordinates, which we will use. Therefore, we have dealt with the potential loss of derivative and now this term just behaves like a cubic term\footnote{We remark that a similar idea was used in \cite{Baoavoidloss} decades ago in different notations.}. As a consequence, one can show the local wellposedness result of the Einstein-Dirac system, possibly with a nonzero mass.
%Because of this, the main term is now the second term, i.e.\ $g^{\mu\nu} \Omega_\mu \d_\nu \psi$. We have pointed out that this is of the type $\d h\cdot \d \psi$. This is not good enough, as we have already seen that the metric may only decay like $t^{-1}\ln t$ along the light cone. We deal with this term by making use of the null structure: modulo cubic terms, one has $g^{\mu\nu} \Omega_\mu \d_\nu \psi\sim \d h\cdot \db \psi+ \Omega_L \d\psi$, where $\db$ represents tangential derivatives, and we will show that with the coordinate condition we use, $\Omega_L=\Omega_\mu L^\mu$ also behaves like good derivatives of $h$. Therefore, we will see that all these right hand side behaves well. This means that the asymptotic equation of $\psi$ will be $(\d_t+\d_r-H_{LL}\d_q)\d_q(r\psi)=0$, which suggests $\psi\sim t^{-1}$ along the light cone.

\paragraph{Coupled asymptotic system}Now we have derived a system of quasilinear wave equations for the metric and the spinor field. We first look at the equation of $\psi$:
$$\widetilde\Box_g \psi=g^\mn \Omega_\mu \d_\nu \psi+\text{cubic terms}.$$
We again ignore tangential derivatives, so by expanding in the null frame $\{L,\lb,S_1,S_2\}$ we can focus on the term $\Omega_L \lb \psi$. For this we find that
$\omega_{L ab}=g(e_a,L^\mu \dd_\mu e_b)=g_{\alpha\beta} (e_a)^\alpha L (e_b)^\beta+g_{\alpha\beta}(e_a)^\alpha L^\mu \Gamma_{\mu\nu}^{\ \ \,\beta}(e_b)^\nu$. The first term behaves like a good derivative, according to the choice of tetrad. For the second term, it is a Christoffel symbol contracted with $L$, and we have
$$L^\mu \Gamma_{\mn}^{\ \ \,\b}g_{\a\b}=\frac 12 (Lg_{\nu\a}+\d_\nu g_{L\a}-\d_\a g_{L\nu}).$$
The worst term possible from this expression is $\underline{L} g_{L\lb}$. However, we see that the cancellation of last two terms means that there will be no such term. As a result, the worst term here is like $\lb g_{LT}$, which actually behaves like a good derivative using the estimates from the wave coordinate condition. Therefore, we see that there is a \textbf{null structure} in the semilinear term, and hence the asymptotic equation of $\psi$ reads $(\d_t+\d_r-H_{LL}\d_q)\d_q(r\psi)=0$, and we get $\d_q(r\psi)\sim \mathrm{const}$ and $\psi\sim t^{-1}$ along the light cone.

For the Einstein equation, we now have a new term from the energy-momentum tensor, and the asymptotic system reads, again using the identification from the wave coordinate condition: 
\begin{multline*}
    (\d_t+\d_r-H_{LL}\d_q)\d_q D_{\mu\nu} =-\textstyle\frac 1{4r} L_\mu L_\nu (\frac 14 \d_q D_{LL}\cdot\d_q D_{\lb\lb}+\d_q D_{TT}\cdot\d_q D_{TT})\\
    \textstyle-\frac 1{2r}\, \mathrm{Im}\, (L_\nu\overline{\Psi}\gb_\mu \d_q \Psi+L_\mu\overline{\Psi}\gb_\nu \d_q \Psi),
\end{multline*}
where $D_\mn=rh_\mn$, $\Psi=r\psi$. In this case, the right hand side is no longer zero when we contract the equation with $T^\mu U^\nu$, where $T$ is tangential and $U$ is arbitrary. As a result, compared with the vacuum case, we no longer have the sharp decay of all $TU$ components $h_{TU}$ along the light cone. Nevertheless, we can make use of \textbf{an extra room} in this weak null structure, which is not used in the original work $\cite{03,04}$\footnote{In \cite{03,04}, the term $\d_q D_{TT}\cdot \d_q D_{TT}$ is written as $\d_q D_{TU}\cdot \d_q D_{TU}$, and the identification $\d_q \slashed{\mathrm{tr}}D \sim 0$ is not used.}: we see that the right hand side is still zero when we contract it with $T^\mu T^\nu$, which gives $h_{TT}\sim t^{-1}$ and saves the equation in view of the structure. Moreover, expanding the Dirac equation (modulo cubic terms, it does not matter if we look at the curved or the Minkowski one) in the null frame, we notice that the term $\gb_L \d_q \psi$ is equal to a sum of tangential derivatives, which means that the asymptotic inhomogeneous term is also zero if we contract it with $L^\mu \lb^\nu$, so we also have $h_{L\lb}\sim t^{-1}$. This observation is crucial for us to derive the estimate of Dirac equation itself, as we will see.

\paragraph{Energy estimates for quasilinear wave equations; vector field method}
In \cite{04}, the following estimate is derived and played a fundamental role there:
\begin{equation}
    \int_{\Sigma_t}|\d \phi|^2w+\int_0^t\int_{\Sigma_\tau}|\db \phi|^2 w'\leq 8\int_{\Sigma_0}|\d \phi|^2 w+\int_0^t\int_{\Sigma_\tau}\frac {C\varepsilon}{1+\tau}|\d \phi|^2 w+16\int_0^t\int_{\Sigma_\tau} |\widetilde\Box_g \phi||\d_t\phi|w,
\end{equation}
where $\Sigma_t=\{t=\mathrm{const}\}$, $\db$ denotes tangential derivatives, and the weight function \begin{equation*}
    w(q)=\begin{cases}
1+(1+|q|)^{-2\mu},\, q<0,\\
1+(1+|q|)^{1+2\gamma},\, q>0,
\end{cases}
\end{equation*}
with $0<\mu\leq 1/2$, $0<\gamma<1$, and $q=r-t$. This is a generalization of the energy estimate associated with the multiplier $\d_t$ for linear wave equation. The weight function helps capture the exterior decay, and gives the spacetime integral on the left hand side, which is very useful for controlling terms with tangential derivative.

It is standard to combine the energy estimate of wave equations with commuting vector fields. There is a set of vector fields that commutes well with the flat wave operator:
$$\mathcal{Z}=\{\d_\mu,\ \Omega_{ij}=x_i\d_j-x_j\d_i,\ \Omega_{0i}=t\d_i+x_i\d_t,\ S=t\d_t+x^i\d_i\}$$
By commuting these vector fields with the wave operator, one can establish control of higher order energies $\int |\d Z^I \phi|^2\, dx$. We will apply this to the reduced Einstein equation, and the quasilinear wave equation of $\psi$ (of course, we will need to deal with the commutator since the wave operator here is not the flat one), using different weight functions in $q$, capturing the correct behavior of them in the exterior. As in many previous works, we use the bootstrap argument on the energy with higher order vector fields. With the bootstrap assumption on the energy, we immediately get preliminary decay estimates for various quantities, by using the Klainerman-Sobolev inequality. These decays are not sufficient, but they help us identify good terms, and derive improved estimates.

\paragraph{Estimate for the Dirac equation itself}
Note that the asymptotic system only suggests the behavior of solutions near the light cone, and it does not provide information about \textbf{the interior}. From the expression of the energy-momentum tensor, we see that there are terms of the form $\psi \cdot \d\psi$ on the right hand side of the Einstein equation. As $\psi$ exhibits worse behavior than $\d\psi$ in the interior, one needs to use estimates stronger than the standard energy estimate: even if the higher order energy $\int |\d Z^I \psi|^2\, dx$ is bounded, the Klainerman-Sobolev inequality only gives $|\d\psi|\leq C\varepsilon (1+t+r)^{-1}(1+|q|)^{-\frac 12}$, so integrating along $q$ direction we get $|\psi|\leq C\varepsilon (1+t+r)^{-1}(1+|q|)^{\frac 12}$ which is not good. The H\"{o}rmander estimate (Proposition \ref{Hormanderprop}) can help improve the interior behavior, but still in our context we can only get $|\psi|\leq C\varepsilon (1+t)^{\delta} (1+t+r)^{-1}$. %if we apply the method for the metric to the spinor field, then the best decay we can get is (and these already require refined estimates) $|\d \psi|\leq C\varepsilon (1+t+r)^{-1}(1+|q|)^{-1+\delta}$, , and we see that the positive power of $|q|$ here, which is unfavorable since $t^{-1}$ is the critical decay. 
The way we deal with this issue is to develop the estimate from the Dirac equation itself, which is also of its own interest.

For the Dirac equation in Minkowski spacetime, we have the conservation of $L^2$ norm of $\psi$. Moreover, from \cite{bachelot1988DiracVectorField} we know that the modified vector fields $\widehat\Omega_{\mu\nu}=\Omega_{\mu\nu}-\frac 12 \gb_\mu\gb_\nu$ commute well with the Dirac operator. In the massless case, the equation also commutes well with the scaling vector field $S=x^\a \d_\a$, so we can also use the vector-field method in this case. 

We try to establish here an estimate for $L^2$ norm in curved case. We make use of the identity
$$\nabla_\mu (\pb \gamma^\mu \psi)=\overline{\gamma^\mu D_\mu \psi}\psi+\pb \gamma^\mu D_\mu \psi.$$
The connection term on the left produces a term like $\Gamma_{\mu\nu}^{\ \ \,\mu}\pb\gamma^\nu \psi$. The Christoffel symbol can be controlled by the derivative of the metric, but this is not enough as we see from the asymptotic system that some components do not have $t^{-1}$ decay. To solve this issue, we make use of the special structure here:
$$|\Gamma_{\mu\nu}^{\ \ \,\mu}|=|\frac 12 g^{\mu\rho}\d_\nu h_{\mu\rho}|\lesssim |\d\,  \slashed{\mathrm{tr}} h|+|\d h|_{L\lb}+|h||\d h|.$$
%We will establish the following $L^2$ estimates for curved Dirac equation:
%\begin{equation*}
%    \int_{\Sigma_{t}} |\psi|^2 \, dx \leq 2\int_{\Sigma_0} |\psi|^2 \, dx
%+4\int_{0}^{t} \int_{\Sigma_\tau} \left(\frac{C\varepsilon}{1+\tau} |\psi|^2  +|\psi||F|\right) \, dx d\tau.
%\end{equation*}
These terms can be controlled, among which $|\d h|_{L\lb}$ is saved by our observation above, where we pointed out that $h_{L\lb}\sim t^{-1}$ near the light cone. In this way, we establish the following $L^2$ estimate:
\begin{equation*}
    \int_{\Sigma_{t}} |\psi|^2 \, dx \leq 2\int_{\Sigma_0} |\psi|^2 \, dx
    +4\int_{0}^{t} \int_{\Sigma_\tau} \left(\frac{C\varepsilon}{1+\tau} |\psi|^2  +|\psi||\gamma^\mu D_\mu \psi|\right) \, dx d\tau.
\end{equation*}
This $L^2$ estimate, combined with modified commuting vector fields, will provide much better preliminary estimate of the field in the interior, which is important for the improved decay estimates below, where we obtain a decay that is sharp along the light cone, and sufficient in the interior.

We note that it would be possible to derive a weighted version of the estimate as for the wave equations, and introduce a spacetime integral (which we will discuss in Appendix \ref{bulktermappendix}), which may help capture some null structures in contractions involving Gamma matrices. However, in our case, the decay of the metric is insufficient for us to do this.

\paragraph{Improved decay estimates}To derive the estimate predicted by the asymptotic systems (which one cannot get simply from the energy bound), one can integrate along the characteristics along the light cone, as in \cite{03,04}. We can also do this for the quasilinear wave equation of $\psi$, to get improved behavior of $\d\psi$. However, for the reason above, we need more decay in $q=r-t$ direction, in addition to the improved decay in the direction of light cone. %(here we in fact refer to the ``transition" region in which $|t-r|$ is not bounded, but still not comparable with $t+r$).
Therefore, we need to add $q$-weight in this process\footnote{The size of the weight we can insert is heavily dependent on the preliminary decay we have, and one can understand as we get preliminary decay in the interior $q<0$ by deriving the $L^2$ estimate of $\psi$, and the one in the exterior $q>0$ by including the weight in the $L^2$ estimate of $\d\psi$.}. If the wave operator is the Minkowskian one, then this follows as $\Box \phi=\frac 1r (\d_t+\d_r)(\d_t-\d_r)(r\phi)$ plus angular derivatives, so any function of $q$ remains unchanged along the integral curve of $\d_t+\d_r$. But in our case, as we can see from the asymptotic system \eqref{asymptoticsystemvacuum}, we will get a new term from 
$$(\d_t+\d_r-H_{LL} \d_q)\wb(q)\d_q(r\phi)=\wb(q)(\d_t+\d_r-H_{LL} \d_q)\d_q(r\phi)-H_{LL} \wb'(q) \d_q(r\phi),$$
if we want to insert a weight function $\wb(q)$.
Here the leading behavior of $H_{LL}$ is determined by the Schwarzschild part, which only decays like $r^{-1}$, so we get $\d_q (r\phi)$ growing because of the second term on the right, which makes it \textit{insufficient} for us to close the argument. Therefore, we need to find a better approximation of the characteristics along which the $q$-weight changes less, so that there is no such effect.

%As we have derived a wave equation for $\psi$, we can do this not only for the metric but also for the spinor field, to get the sharp decay $\psi$ along the light cone. We remark here it is the change of coordinates that make it possible to add the interior weight we need when doing this estimate for $\psi$.

\paragraph{Change of coordinates}We have seen from the asymptotic system above that a better approximation of the characteristic than the Minkowski light cone is given by the integral curve of $\d_t+\d_r-H_{LL}\d_q$. The leading behavior of $H_{LL}$ is contributed by the Schwarzschild part:
$$H_{LL}\sim -\frac Mr \delta_{LL}+o(r^{-1-\y})\sim -\frac{2M}r.$$
Motivated by this, Lindblad \cite{1606Einsteinasymptotics} made a change of coordinates $r^*=r+M\ln r$ close to the light cone, and gave asymptotics of solutions of the Einstein equation in wave coordinates. In the new coordinates, the reduced wave operator $\widetilde \Box_{\gt}=\gt^{\mu\nu}\dt_\mu\dt_\nu$ is much more close to the linear wave operator in our new coordinates $\Box^*=\mh^{\mu\nu}\dt_\mu\dt_\nu$. In this way, we are able to resolve the problem of $q$-weight above. Additionally, as in \cite{KLMKG21}, many estimates perform better in this coordinate, such as the estimate of the commutator from our wave equations.
%As is shown in \cite{KLMKG21}, many things behave better in these coordinates for Einstein's equation, and it is also important here for our estimates. 

%In language of gauge theory, now the gauge is the tetrad and gauge transformation is local Lorentz transformation.

%Second order equation: avoid loss of derivative, enough decay along the line cone, but need more in the interior

%\paragraph{Commutator for Dirac equation}
%When commuting the vector field $Z$ with the curved Dirac equation, the commutator will look like
%$$Z(\gamma^\mu-\gb^\mu)\d_\mu \psi.$$ The term where $Z$ falls on the curved Gamma matrices is comparable to the perturbation from Minkowski metric applied by $Z$. However, as we will see, for general components of the metric, the decay along the light cone is only $t^{-1}\ln t$, which is problematic here. For this we need to uncover the structure of the contraction here (we note that the analogue for those second order equations here is straightforward), and we do this by picking a special choice of the tetrad.

%Does charge cause a electromagnetic field?

\paragraph{Commutator for Dirac equation}As we remarked above, we also need to derive the estimate of the $L^2$ norm of $\psi$ itself. The modified vector fields commute well with the Minkowski part, i.e.\ the flat Dirac operator $\gb^\mu \d_\mu$. However, here we are dealing with quasilinear case, and this is even more difficult than the quasilinear part of the wave equations here, as for the wave equation the quasilinear part reads $g^{\mu\nu}\d_\mu\d_\nu \psi$, and we can use the contraction structure along with wave coordinate condition to estimate, while here the counterpart is from the difference between $\gamma^\mu$ and $\gb^\mu$, and in view of the construction, components of the metric are mixed, so the situation is more like the case in \cite{05}, where the scalar equation $g^{\a\b}(\phi)\d_\a\d_\b\phi=0$ is studied.

Indeed, the commutator for the curved Dirac equation looks like $$[\hat Z^I,\gamma^\mu D_\mu]\psi\approx Z^J(\gamma^\mu-\gb^\mu)\d_\mu Z^K\psi+\text{other terms},$$
where $|K|<|I|$, and the case where $|J|\leq |K|$ looks \textbf{problematic}, as the term $Z^J(\gamma^\mu-\gb^\mu)$ behaves like $Z^J h$, which does not have sharp decay $t^{-1}$ (recall we get $h\sim t^{-1}\ln t$\, from the asymptotic system), so the standard way will produce a term like 
$$\sum_{|I'|\leq |I|}\iint \frac{\ln t}{1+t}\, \frac{1}{(1+|q|)^{1-\delta}} |Z^{I'}\psi|^2\, dx dt$$ 
on the right hand side of the estimate, for some small $\delta>0$, which is \textit{not} good in view of Gronwall's inequality. %This problem is of similar type as in \cite{05}, where one has to deal with the quasilinear commutator with no additional structure (which is different from the case here for quasilinear wave equations, where we shall use the wave coordinate condition). 
An interesting fact is that since we have derived the estimate for the wave equation of $\psi$, we also have the estimate of $\int |\d Z^I \psi|^2\, dx$, and the control is better in $t$. However, this does not give satisfactory control either: the problem arises from the interior, as we have a $q$-decay loss in the decay $|Z^I h|\lesssim \varepsilon (1+t)^{-1} (1+|q|)^\delta$, so what we can get is $$\iint \varepsilon (1+t)^{-1}\ln t\, (1+|q|)^{\delta}\,  |\d Z^K\psi|^2\, dx dt$$ which is again not enough. The way we deal with this problem is to make use of both estimates, and absorb the decay loss of from one by the other. Using $|\d Z^K \psi|\lesssim (1+|q|)^{-1}\sum_{|K'|\leq |K|+1} |Z^{K'}\psi|$ we have 
\begin{multline*} 
|Z^I h||\d Z^K \psi|^2\lesssim \varepsilon (1+t)^{-1} \ln t (1+|q|)^\delta (\frac{1}{1+|q|}|Z^{K'} \psi|) |\d Z^K \psi|\\ \lesssim \varepsilon (1+t)^{-1} \frac{1}{(1+|q|)^{2-2\delta}} |Z^{K'}\psi|^2+\varepsilon (1+t)^{-1} (\ln t)^2 |\d Z^K \psi|^2,
\end{multline*}
and we are able to control both terms. In this way, we resolve the potential problem of the commutator.

\vspace{2ex}
\begin{remark}
We briefly discuss the global problem in the massive case. Instead of a wave equation of $\psi$, in this case we get the equation with an additional term $m^2\psi$, which makes it a Klein-Gordon equation. Note that unlike the energy estimate of wave equation, where we can only control $\d\psi$, the estimate of the Klein-Gordon equation directly gives the control of $\psi$ (more precisely, $m^{-1}\psi$). As such, in view of the analysis in this manuscript, it may not be necessary anymore to derive the estimate of the Dirac equation itself, and the strategy will be similar with the one of the Einstein-Klein-Gordon system (with mass $m^2$). We note that, however, for Einstein-Klein-Gordon one cannot use the full set of vector fields, as the scaling vector field does not commute well with the Klein-Gordon equation, and hence this gives rise to a highly nontrivial problem. See \cite{LMCMP} for the stablity of Minkowski for restricted data, using the hyperboloidal foliation method, and \cite{ionescu2019einstein} for more general data with the use of Fourier analysis. Also, one needs to deal with the different structure of the energy-momentum tensor for the Dirac field as in this work.
\end{remark}

\paragraph{Acknowledgements.}The author would like to thank his advisor, Hans Lindblad, for suggesting this problem and many helpful discussions.

\section{Preliminaries}\label{Chnotation}
Throughout this paper, we use the sign convention $(-+++)$, i.e.\ $m_\mn=\mathrm{diag}\{-1,1,1,1\}$. \textbf{In this section}, unless otherwise specified, we raise and lower tensorial indices with respect to the metric $g$. We will sometimes use the notation $h_{\mu\nu}=g_{\mu\nu}-m_{\mu\nu}$ and $H^{\a\b}=g^{\a\b}-m^{\a\b}$, where $g^{\a\b}$ is the inverse matrix of $g_{\mu\nu}$.

%We define the null frame with respect to the Minkowski spacetime:
%$L=\d_t+\d_r,\lb=\d_t-\d_r, S_1, S_2$, where $S_1,S_2$ is a pair of orthonormal basis of the sphere $\mathbb{S}^2$.
%$q=r-t,\db,\omega_i=x_i/|x|$.

%$\d,\dd,D$, without specifying, $\dd$ only acts on tensorial indices, no interaction with tetrad indices

We use the notation $\psi^\dagger$ to represent the standard conjugate transpose. We will define the spinor conjugate $\pb$ later.

\subsection{Levi-Civita connection, Riemann curvature tensor}
We define $\dd$ as the standard Levi-Civita connection associated with the metric $g$. In any coordinate system $\{x^\a\}$, the covariant derivative can be written as 
\begin{multline}\label{covariant derivative}
    \dd_\mu T^{\a_1\a_2\cdots \a_s}_{\b_1\b_2\cdots \b_r}=\d_\mu T^{\a_1\a_2\cdots \a_s}_{\b_1\b_2\cdots \b_r}+\Gamma^{\ \ \,\a_1}_{\mu \nu}\,T^{\nu\, \a_2\cdots \a_s}_{\b_1\b_2\cdots \b_r}+\cdots+\Gamma^{\ \ \,\a_s}_{\mu \nu}\,T^{\a_1\a_2\cdots \nu}_{\b_1\b_2\cdots \b_r}-\Gamma_{\mu\b_1}^{\ \ \ \,\nu}\,T^{\a_1\a_2\cdots \a_s}_{\nu\, \b_2\cdots \b_r}\\
    -\cdots-\Gamma_{\mu\b_r}^{\ \ \ \,\nu}\,T^{\a_1\a_2\cdots \a_s}_{\b_1\b_2\cdots\, \nu}.
\end{multline}
%wait until figure out what to do with new coordinates?
Here the Christoffel symbol is given by the formula
\begin{equation}\label{Christoffelsymbol}
    \Gamma_{\mu\nu}^{\ \ \,\rho}=\frac 12 g^{\lambda\rho}(\d_\mu g_{\nu\lambda}+\d_\nu g_{\lambda \mu}-\d_\lambda g_{\mu\nu}).
\end{equation}
%We also consider lowering its index: $\Gamma_{\mu\nu|\rho}=g_{\rho\lambda}\Gamma_{\mu\nu}^{\ \ \lambda}$.

The Riemann curvature tensor is defined as the (1,3)-tensor $R$ such that
\begin{equation}
    \dd_\mu\dd_\nu v_\rho-\dd_\nu\dd_\mu v_\rho=R_{\mu\nu\rho}^{\ \ \ \ \lambda}v_{\lambda},
\end{equation}
or equivalently using vector fields 
\begin{equation}
    \dd_\mu\dd_\nu X^\rho-\dd_\nu\dd_\mu X^\rho=-R_{\mu\nu\lambda}^{\ \ \ \ \rho}X^{\lambda}.
\end{equation}
The Ricci curvature is defined by contraction: $R_{\mu\nu}=R_{\mu\rho\nu}^{\ \ \ \ \rho}$. Now contracting with the metric $g$ again we get the scalar curvature $R=g^{\mu\nu}R_{\mu\nu}$.

We list some properties of the curvature tensor, which are standard and can be found in any textbooks in (pseudo-)Riemannian geometry:
\begin{Prop}We have $R_{\mu\nu\rho\sigma}=-R_{\nu\mu\rho\sigma}$, $R_{\mu\nu\rho\sigma}=R_{\rho\sigma\mn}$, and the first Bianchi identity: $R_{\mu\nu\rho\sigma}+R_{\mu\rho\sigma\nu}+R_{\mu\sigma\nu\rho}=0$.
\end{Prop}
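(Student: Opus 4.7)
The statement collects three standard symmetries of the Riemann tensor associated with the Levi-Civita connection, and the plan is to derive each one directly from the two defining properties of this connection: torsion-freeness ($\Gamma_{\mu\nu}^{\ \ \,\rho}=\Gamma_{\nu\mu}^{\ \ \,\rho}$, visible from \eqref{Christoffelsymbol}) and metric compatibility ($\dd_\mu g_{\nu\rho}=0$), together with the defining commutator formula for $R$.

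The antisymmetry $R_{\mu\nu\rho\sigma}=-R_{\nu\mu\rho\sigma}$ is immediate: the operator $\dd_\mu\dd_\nu-\dd_\nu\dd_\mu$ is manifestly antisymmetric in $(\mu,\nu)$, and the definition of $R$ is built from it. Next, I would establish the (unstated but needed) antisymmetry in the last pair, $R_{\mu\nu\rho\sigma}=-R_{\mu\nu\sigma\rho}$, by applying the commutator to $g_{\rho\sigma}$. Since metric compatibility gives $\dd g\equiv 0$, one has $0=(\dd_\mu\dd_\nu-\dd_\nu\dd_\mu)g_{\rho\sigma}$; expanding this using the curvature formula on a $(0,2)$-tensor yields $R_{\mu\nu\rho}{}^{\lambda}g_{\lambda\sigma}+R_{\mu\nu\sigma}{}^{\lambda}g_{\rho\lambda}=0$, i.e.\ the desired antisymmetry after lowering indices.

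For the first Bianchi identity, the cleanest route is to pass to coordinates normal at a point $p$, where $\Gamma_{\mu\nu}^{\ \ \,\rho}(p)=0$; then at $p$ the formula \eqref{Christoffelsymbol} together with the definition of $R$ reduces to $R_{\mu\nu\rho}{}^{\sigma}=\d_\mu\Gamma_{\nu\rho}^{\ \ \,\sigma}-\d_\nu\Gamma_{\mu\rho}^{\ \ \,\sigma}$. The torsion-free symmetry $\Gamma_{\nu\rho}^{\ \ \,\sigma}=\Gamma_{\rho\nu}^{\ \ \,\sigma}$ then makes the cyclic sum over $(\nu,\rho,\sigma)$ (with $\mu$ held fixed, after lowering) telescope to zero at $p$, and since $p$ was arbitrary and the identity is tensorial, it holds everywhere. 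Alternatively, one can avoid normal coordinates by applying the commutator of covariant derivatives to a gradient $\dd_\rho f$ and using that $\dd_\mu\dd_\nu f$ is symmetric in $(\mu,\nu)$ by torsion-freeness.

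Finally, the pair-exchange symmetry $R_{\mu\nu\rho\sigma}=R_{\rho\sigma\mu\nu}$ is the only genuinely combinatorial step. I would write the first Bianchi identity in four cyclic orderings, one for each of the index positions, add them, and cancel using the two antisymmetries already proved; this is the standard four-line manipulation. The main (mild) obstacle is bookkeeping: one has to ensure that the version of the Bianchi identity used matches the index convention in the statement (note that the stated form $R_{\mu\nu\rho\sigma}+R_{\mu\rho\sigma\nu}+R_{\mu\sigma\nu\rho}=0$ cycles the last three indices with $\mu$ fixed, which is equivalent to the more familiar form after using antisymmetry in the first pair). No analytic or geometric input beyond what is recalled above is needed.
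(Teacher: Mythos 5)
The paper does not actually prove this proposition; it defers to standard textbooks. So there is no ``paper's own proof'' to compare against, and I will just assess your argument on its merits. Your overall plan is sound and in the standard textbook order: antisymmetry in the first pair is immediate from the definition, antisymmetry in the last pair follows from applying the commutator to $g_{\rho\sigma}$ and using $\dd g = 0$, the first Bianchi identity comes from torsion-freeness, and the pair-exchange symmetry is derived last from the other three by the four-line cyclic combination. That last combinatorial step is genuinely needed, and you correctly identify it as the only nontrivial one.

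There is, however, a bookkeeping error in the Bianchi step. In normal coordinates the linear terms of \eqref{curvatureinChristoffelsymbol} are $R_{\mu\nu\rho}{}^{\sigma}(p)=-\d_\mu\Gamma_{\nu\rho}^{\ \ \,\sigma}+\d_\nu\Gamma_{\mu\rho}^{\ \ \,\sigma}$, and the cyclic sum that telescopes to zero purely from $\Gamma_{\nu\rho}^{\ \ \,\sigma}=\Gamma_{\rho\nu}^{\ \ \,\sigma}$ is the one over the \emph{first three} indices $(\mu,\nu,\rho)$, with $\sigma$ held fixed and raised. This is the same form that your alternative gradient argument produces. The paper states the identity with $\mu$ fixed and the \emph{last three} lowered indices cycled, $R_{\mu\nu\rho\sigma}+R_{\mu\rho\sigma\nu}+R_{\mu\sigma\nu\rho}=0$, and that cyclic sum does \emph{not} telescope from $\Gamma$-symmetry alone (one can verify it vanishes at $p$ by writing $R_{\mu\nu\rho\sigma}(p)=\tfrac12(\d_\mu\d_\sigma g_{\nu\rho}+\d_\nu\d_\rho g_{\mu\sigma}-\d_\mu\d_\rho g_{\nu\sigma}-\d_\nu\d_\sigma g_{\mu\rho})$ and checking the six-fold cancellation, but that is a different mechanism than the one you describe). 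Relatedly, your parenthetical claim that the stated form ``is equivalent to the more familiar form after using antisymmetry in the first pair'' understates what is needed: converting from cycling $(\mu,\nu,\rho)$ to cycling $(\nu,\rho,\sigma)$ requires both antisymmetries \emph{and} the pair-exchange symmetry. So the correct logical chain is: prove $F$-antisymmetry and $L$-antisymmetry, prove the Bianchi identity in the cyclic-first-three form, derive pair exchange from those three via the four-term combination, and only then does the form in the proposition follow. Your proof should make that ordering explicit rather than asserting a shortcut that does not hold.
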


Using (\ref{Christoffelsymbol}), we can express the curvature tensor in given coordinates in the Christoffel symbol: 
\begin{equation}\label{curvatureinChristoffelsymbol}
    R_{\mu\sigma\nu}{}^{\rho}=-\d_\mu \Gamma_{\sigma\nu}^{\ \ \,\rho} +\d_\sigma \Gamma_{\mu\nu}^{\ \ \,\rho} - \Gamma_{\mu\lambda}^{\ \ \,\rho}\,\Gamma_{\sigma\nu}^{\ \ \,\lambda}+\Gamma_{\sigma\lambda}^{\ \ \,\rho}\,\Gamma_{\mu\nu}^{\ \ \,\lambda}.
\end{equation}

\subsection{Gamma matrices, tetrad formalism}
We start from this subsection the formulation of the Dirac equation, especially in curved spacetime. The case for Minkowski spacetime is standard, and for curved one, we refer to previous literatures \cite{brill1957interaction}, \cite{brill1966cartan}, \cite{pollock2010dirac}.

The Gamma matrices in Minkowski spacetime $m_{\mu\nu}=\mathrm{diag}(-1,1,1,1)$ reads
$$\begin{aligned}
&\gb^{0}=\left(\begin{array}{cccc}
1 & 0 & 0 & 0 \\
0 & 1 & 0 & 0 \\
0 & 0 & -1 & 0 \\
0 & 0 & 0 & -1
\end{array}\right), \quad \gb^{1}=\left(\begin{array}{cccc}
0 & 0 & 0 & 1 \\
0 & 0 & 1 & 0 \\
0 & -1 & 0 & 0 \\
-1 & 0 & 0 & 0
\end{array}\right), \\
&\gb^{2}=\left(\begin{array}{cccc}
0 & 0 & 0 & -i \\
0 & 0 & i & 0 \\
0 & i & 0 & 0 \\
-i & 0 & 0 & 0
\end{array}\right), \quad \gb^{3}=\left(\begin{array}{cccc}
0 & 0 & 1 & 0 \\
0 & 0 & 0 & -1 \\
-1 & 0 & 0 & 0 \\
0 & 1 & 0 & 0
\end{array}\right) .
\end{aligned}$$
They are chosen to satisfy the anticommutation relation
\begin{equation}\label{anticommutation}
    \gb^\mu\gb^\nu+\gb^\nu\gb^\mu=-2m^{\mu\nu} I,
\end{equation}
where $I$ is the $4\times 4$ identity matrix.

By a Dirac spinor field in Minkowski spacetime we mean a field $\psi$ taking value in $\mathbb{C}^4$, which transforms in a special way under Lorentz transformation $\Lambda$ by $\psi\ra S(\Lambda)\psi$, where $S(\varepsilon_{\mu\nu})=-\frac 14 \Sigma^{\mu\nu}\varepsilon_{\mu\nu}$ for infinitesimal Lorentz transformation $\varepsilon_\mn$ with $\Sigma^\mn=\frac 12[\gb^\mu,\gb^\nu]$, hence giving a representation of the Lorentz group $\mathrm{SO}(1,3)$ \footnote{This representation is, strictly speaking, not a representation of the Lorentz group, but its double covering group $\mathrm{Spin}(1,3)$. Nevertheless, since this only causes a difference of sign that has no effect on our analysis, we shall just view the structure group as $\mathrm{SO}(1,3)$.}, called spin representation.

The Gamma matrices transform under the Lorentz transformation $\Lambda_\mu^{\ \nu}$ as $$\gb^\mu\ra S(\Lambda)\gb^\mu S(\Lambda)^{-1} \Lambda_\mu^{\ \nu}=\gb^\nu,$$ where the equality can be verified by recalling the definition of $S$, and commuting Gamma matrices with $\Sigma^\mn$, using \eqref{anticommutation}. In particular, $\gb^\mu$ does not transform like a $4$-vector.

The adjoint spinor of $\psi$ is defined by $\pb=\psi^\dagger \gb^0$. We note that this applies to all situations, including the case for a curved background metric, which we will discuss below. Then in view of the transformation law of $\psi$, the adjoint spinor $\pb$ transforms under the Lorentz transformation as $\pb\ra\pb\,  S(\Lambda)^{-1}$, as one can verify for the generator that $\gb^0 \Sigma^\mn \gb^0=-\Sigma^\mn$ simply using \eqref{anticommutation}.
Combining this together, it is easy to see that $\pb\gb^\mu\psi$ is a $4$-vector, and $\pb\psi$ is a scalar.

\vspace{1ex}
In curved spacetime, we cannot define a global Lorentz transformation. In order to define the spinor, we need the notion of tetrad. By a tetrad $\{e_a\}$ we mean a field of orthonormal frame in our Lorentzian spacetime, i.e.\ a set of vector fields satisfying $g_{\mu\nu}(e_a)^\mu (e_b)^\nu=m_{ab}$. In the language of fiber bundle, this means a section of the orthonormal frame bundle over the spacetime manifold. In this way, the notion of local Lorentz transformation is clear, and we can define the Dirac spinor field as the section of the spinor bundle associated with the above spin representation. This is equivalent with the requirement that the field transforms under local Lorentz transformation $\Lambda(x)$ by
$$\psi(x) \ra S(\Lambda(x)) \psi(x).$$

We often write the tetrad as $e_{\ a}^\mu$, and $e_a=e_{\ a}^\mu\d_\mu$. The Greek letter here represents the standard tensor indices, and the Latin letter is called flat indices or tetrad indices, and we raise the flat indices only using the Minkowski metric. We also denote the dual frame by $e^a=e^a_{\ \mu} dx^\mu$ and often write it simply as $e^a_{\ \mu}$.

For a vector field $X^\mu$ we can define its flat components $X^a=X^\mu e^a_{\ \mu}$. Similarly, we can also get curved components from the flat ones. 

For the Levi-Civita connection $\dd$, we require it to only interact with curved indices. We now define a covariant derivative which also acts with flat indices, hence covariant under local Lorentz transformations. The spin connection is defined by \begin{equation}
    \omega_{\mu ab}=g(e_a,\dd_\mu e_b)=e_{\nu a}(\d_\mu e_{\ b}^\nu+\Gamma_{\mu\rho}^{\ \ \,\nu}e_{\ b}^\rho),
\end{equation} and the derivative covariant both under coordinate transformation and local Lorentz transformation reads for $v^a$
$$D_\mu v^a=\d_\mu v^a+\omega_{\mu\ b}^{\ a} v^b.$$
We see that the spin connection plays a similar role as the Christoffel symbol does for curved indices. Similarly, we can derive the expression of $D$ on other tensors, for instance on mixed tensors:
$$D_\mu T_{\ \nu}^a=\d_\mu T_{\ \nu}^a-\Gamma_{\mu\nu}{}^{\rho}\, T_{\ \rho}^a+\omega_{\mu\ b}^{\ a}\, T_{\ \mu}^b.$$
%We let $D$ be the covariant derivative operator that also interacts with the flat indices, and $\dd$ only interacts with curved ones. 

\begin{lem}[tetrad postulate]
The tetrad vector fields are covariantly constant, i.e.
\begin{equation}\label{tetradpostulate}
    D_\nu e^\mu_{\ b}=\dd_\nu e_{\ b}^\mu-\omega_{\nu\ b}^{\ a}\, e_{\ a}^\mu=0.
\end{equation}
\end{lem}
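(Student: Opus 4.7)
My plan is to treat the tetrad postulate as an essentially definitional identity, unfolding both sides using the conventions set up just above the lemma. First I will unpack $D_\nu e^\mu_{\ b}$ according to those conventions: since $D$ uses $\Gamma$ on the curved index $\mu$ and $\omega$ on the flat index $b$, one has
\begin{equation*}
    D_\nu e^\mu_{\ b} = \d_\nu e^\mu_{\ b} + \Gamma_{\nu\rho}^{\ \ \,\mu} e^\rho_{\ b} - \omega_{\nu\ b}^{\ a}\, e^\mu_{\ a} = \dd_\nu e^\mu_{\ b} - \omega_{\nu\ b}^{\ a}\, e^\mu_{\ a},
\end{equation*}
so the identity to prove reduces to $\dd_\nu e^\mu_{\ b} = \omega_{\nu\ b}^{\ a}\, e^\mu_{\ a}$.

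Next, I will use that at each point the tetrad $\{e_a\}_{a=0}^3$ is a basis of the tangent space, so I can expand the vector $\dd_\nu e_b$ in this basis as $\dd_\nu e^\mu_{\ b} = C_{\nu\ b}^{\ a}\, e^\mu_{\ a}$ for some coefficient field $C$. To recover $C$, I contract with $e_{\mu c}$ and invert using the completeness relation $m^{ab} e^\mu_{\ a}\, e_{\nu b} = \delta^\mu_\nu$, which follows from orthonormality $g_{\mu\nu} e^\mu_{\ a} e^\nu_{\ b} = m_{ab}$ together with nondegeneracy (the tetrad is a basis in a $4$-dimensional tangent space). This gives
\begin{equation*}
    e_{\mu c}\, \dd_\nu e^\mu_{\ b} = C_{\nu c b},
\end{equation*}
and the left-hand side is, by the very definition of the spin connection, $g(e_c,\dd_\nu e_b) = \omega_{\nu cb}$. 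Raising $c$ with $m^{ac}$ and substituting back yields $\dd_\nu e^\mu_{\ b} = \omega_{\nu\ b}^{\ a}\, e^\mu_{\ a}$, which is the claimed identity.

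The only real subtlety, rather than obstacle, is index bookkeeping: one must keep flat and curved indices straight and verify that raising flat indices uses $m$ rather than $g$, consistent with the conventions stated in the paper. Modulo this, the whole argument is a single line of algebra, reflecting the fact that $\omega_{\nu ab} = g(e_a,\dd_\nu e_b)$ was designed precisely so that the tetrad is covariantly constant under $D$; one can equivalently read the lemma as saying that the extra term in $D$ relative to $\dd$ is by construction exactly the obstruction $\dd_\nu e^\mu_{\ b}$ expressed in the tetrad basis.
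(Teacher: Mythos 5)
Your proof is correct and uses essentially the same ingredients as the paper's: the definition $\omega_{\nu ab}=g(e_a,\dd_\nu e_b)$ together with the completeness/orthonormality relation for the tetrad. The only difference is presentational — you introduce intermediate coefficients $C_{\nu\ b}^{\ a}$ in the tetrad expansion and then identify them with $\omega$, whereas the paper substitutes the definition of $\omega$ directly and collapses $(e^a)_\rho(e_a)^\mu=\delta^\mu_\rho$ in one line; the mathematical content is identical.
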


\begin{proof}
By the definition of spin connection and $(e^a)_\mu (e_a)^\nu=\delta_\mu^\nu$ we have 
\begin{equation*}
    D_\nu e^\mu_{\ b}=\dd_\nu(e_b)^\mu-(e^a)_\rho \dd_\nu (e_b)^\rho (e_a)^\nu=\dd_\mu (e_b)^\mu-\delta_\rho^\nu \dd_\nu (e_b)^\rho=0.\eqno\qed
\end{equation*}
\renewcommand{\qedsymbol}{}
\end{proof}
The curved Gamma matrices are defined using the tetrad: $\gamma^\mu=\gb^a e^\mu_{\ a}$, and we clearly have a similar relation with the metric:
\begin{equation}
    \gamma^\mu\y^\nu+\y^\nu\y^\mu=-2g^\mn I.
\end{equation}
%We also denote the its difference with the standard one by $\go^\mu=\gamma^\mu-\gb^\mu=\gb^a ((e_a)^\mu-(\d_a)^\mu)$. %Here these quantities are not really tensors and only make sense when we fix a choice of coordinate. However, in computation we often act as if this is a tensor index.

\subsection{Covariant derivative of the spinor field}
The covariant derivative $D$ keeps the covariance of quantities under local Lorentz transformations. For the Dirac spinor field, the covariant derivative is derived by contracting the spin connection with the generator of the spinor representation%Since the spinor field is closely related to tetrad in curved background, one can expect that the covariant derivative $D$ also interacts with the spinor. Indeed, in curved spacetime, the covariant derivative of a spin-1/2 field $\psi$ is defined as
\footnote{The form depends on the choice of the sign convention and Gamma matrices. See \cite{diracguide} for the effects of the choices.}
\begin{equation}\label{eqdefinitionD}
    D_\mu \psi=\d_\mu\psi-\frac 14 \omega_{\mu ab}\Sigma^{ab} \psi.
\end{equation}
%Here $\omega_{\mu ab}$ is the spin-connection, and 
Here $\Sigma^{ab}=\frac 12(\gb^a\gb^b-\gb^b\gb^a)$ just as the Minkowski case, i.e.\ under an infinitesimal Lorentz transformation $\varepsilon_{ab}$, the spinor $\psi$ transforms as $\psi\ra (1-\frac 14\varepsilon_{ab}\Sigma^{ab})\psi$. One can verify $D$ is covariant under both coordinate transformation and local Lorentz transformations (see for instance \cite{anomalies}, \cite{yepez2011einsteinvierbein}).

For simplicity we denote $\Omega_\mu=-\frac 14 \omega_{\mu ab}\Sigma^{ab}$, so $D_\mu \psi=\d_\mu\psi+\Omega_\mu\psi$. For adjoint spinor $\pb=\psi^\dagger\gb^0$, the covariant derivative reads $$D_\mu \pb=\d_\mu \pb-\pb\, \Omega_\mu.$$
%In many literatures, this is denoted by $\psi\overleftarrow{D}_\mu$.

We will need some identities involing Gamma matrices, which we in fact have used some in the discussion for the case in Minkowski spacetime.
\begin{lem}
We have for the commutator of the generator of spinor representation:
\begin{equation}\label{commutatorsigmaabsigmacdeq}
    [\Sigma^{ab},\Sigma^{cd}]=2(m^{ac}\Sigma^{bd}-m^{ad}\Sigma^{bc}+m^{bd}\Sigma^{ac}-m^{bc}\Sigma^{ad}).
\end{equation}
\end{lem}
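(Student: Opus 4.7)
The plan is to reduce everything to the anticommutation relation $\gb^a\gb^b+\gb^b\gb^a=-2m^{ab}I$ from \eqref{anticommutation}. First I would observe that, by using this relation inside the definition $\Sigma^{ab}=\frac12[\gb^a,\gb^b]$, the commutator simplifies to a single product:
\begin{equation*}
\Sigma^{ab}=\gb^a\gb^b+m^{ab}I.
\end{equation*}
This identity is the key technical step and is cheap to verify: write $\gb^b\gb^a=-\gb^a\gb^b-2m^{ab}I$ and substitute into $\frac12(\gb^a\gb^b-\gb^b\gb^a)$.

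Since the $m^{ab}I$ piece is a scalar matrix, it commutes with everything, and the bracket we want collapses to the purely quartic object $[\gb^a\gb^b,\gb^c\gb^d]$. For this I would use the general operator identity
\begin{equation*}
[AB,CD]=A\{B,C\}D-AC\{B,D\}+\{A,C\}DB-C\{A,D\}B,
\end{equation*}
which is verified by direct expansion of the right-hand side. With $A=\gb^a$, $B=\gb^b$, $C=\gb^c$, $D=\gb^d$, every anticommutator on the right-hand side equals $-2m^{\cdot\cdot}I$, so each of the four terms becomes a scalar times a product $\gb^i\gb^j$.

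To finish, I would re-express each surviving $\gb^i\gb^j$ via $\gb^i\gb^j=\Sigma^{ij}-m^{ij}I$, using the first step backwards, and exploit the antisymmetry $\Sigma^{ij}=-\Sigma^{ji}$ together with the symmetry $m^{ij}=m^{ji}$ to put the $\Sigma$-terms in the order appearing in \eqref{commutatorsigmaabsigmacdeq}. The scalar $m\cdot m$ contributions from the four terms then cancel in pairs (this is the only place where bookkeeping is needed), leaving exactly $2(m^{ac}\Sigma^{bd}-m^{ad}\Sigma^{bc}+m^{bd}\Sigma^{ac}-m^{bc}\Sigma^{ad})$.

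The only real obstacle is sign tracking in the final step, since four separate $m\cdot m$ scalar matrices are produced and must be shown to cancel; the antisymmetry of $\Sigma$ and symmetry of $m$ make the cancellation automatic once the products $\gb^d\gb^b$ and $\gb^c\gb^b$ are rewritten in the canonical order. No further ingredient beyond \eqref{anticommutation} is needed, so this is purely an algebraic verification.
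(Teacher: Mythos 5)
Your proof is correct, and it takes a genuinely different route from the paper's. The paper works directly with the ordered product: it pushes $\gb^a$, then $\gb^b$, past $\gb^c\gb^d$ using \eqref{anticommutation} on adjacent pairs, takes the difference with the $(a,b)\leftrightarrow(c,d)$-interchanged expression to isolate $\gb^a\gb^b\gb^c\gb^d-\gb^c\gb^d\gb^a\gb^b$, and then performs a final antisymmetrization in $(a,b)$ and $(c,d)$ to promote the ordered products to $\Sigma$'s. You instead begin with the observation $\Sigma^{ab}=\gb^a\gb^b+m^{ab}I$, which is an immediate consequence of \eqref{anticommutation} and removes the need for antisymmetrization entirely, since $[\Sigma^{ab},\Sigma^{cd}]=[\gb^a\gb^b,\gb^c\gb^d]$ directly. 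You then apply the standard expansion $[AB,CD]=A\{B,C\}D-AC\{B,D\}+\{A,C\}DB-C\{A,D\}B$, which converts every bracket to an anticommutator and thus to a scalar multiple of the identity, leaving four products $\gb^i\gb^j$ that you convert back to $\Sigma^{ij}-m^{ij}I$; the scalar $m\cdot m$ pieces cancel by the symmetry of $m$, and the $\Sigma$ pieces arrange to the claimed right-hand side by antisymmetry of $\Sigma$. Your version is somewhat more systematic (a single identity does the work), while the paper's is more ``by hand.'' Both approaches use nothing beyond \eqref{anticommutation}, and both are complete.
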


\begin{proof}
Using the identity (\ref{anticommutation}) twice we have $\gb^a\gb^b\gb^c=\gb^b\gb^c\gb^a-2m^{ab}\gb^c+2m^{ac}\gb^b$. Use this identity twice we get
\begin{multline*}
    \gb^a\gb^b\gb^c\gb^d=\gb^a(\gb^c\gb^d\gb^b-2m^{bc}\gb^d+2m^{bd}\gb^c)=\gb^c\gb^d\gb^a\gb^b+(-2m^{ac}\gb^d+2m^{ad}\gb^c)\gb^b\\
    +\gb^a(-2m^{bc}\gb^d+2m^{bd}\gb^c).
\end{multline*}
Interchanging $(a,b)$ and $(c,d)$ we get
\begin{equation*}
    \gb^c\gb^d\gb^a\gb^b=\gb^a\gb^b\gb^c\gb^d+(-2m^{ca}\gb^b+2m^{cb}\gb^a)\gb^d +\gb^c(-2m^{da}\gb^b+2m^{db}\gb^a).
\end{equation*}
Taking the difference of two equations and dividing it by 2 we get
\begin{equation}
    \gb^a\gb^b\gb^c\gb^d-\gb^c\gb^d\gb^a\gb^b=2(m^{ac}\Sigma^{bd}-m^{ad}\Sigma^{bc}+m^{bd}\Sigma^{ac}-m^{bc}\Sigma^{ad}).
\end{equation}
The right hand side here is antisymmetry in $a$ and $b$, and in $c$ and $d$, so antisymmetrizing them in the left hand side we get the result.
\end{proof}

\begin{lem}
We have $\gb^0 (\gb^a)^\dagger \gb^0=\gb^a$, so $\gb^0 (\gamma^\mu)^\dagger \gb^\mu=\gamma^\mu$, and $\gb^0 \Omega_\mu^\dagger \gb^0=-\Omega_\mu$.
\end{lem}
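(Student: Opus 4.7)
The statement consists of three identities, each following from the one before it in a short computation, so the plan is simply to dispatch them in order.

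For the first identity $\gb^0(\gb^a)^\dagger \gb^0 = \gb^a$, I would proceed by cases, using the explicit form of the Gamma matrices given earlier in the subsection. Inspecting them one sees that $\gb^0$ is Hermitian and satisfies $(\gb^0)^2 = I$, while each of $\gb^1,\gb^2,\gb^3$ is anti-Hermitian, i.e.\ $(\gb^i)^\dagger = -\gb^i$. For $a=0$ the identity is immediate from $(\gb^0)^3 = \gb^0$. For $a = i \in \{1,2,3\}$ I would write
\[
\gb^0(\gb^i)^\dagger \gb^0 = -\gb^0 \gb^i \gb^0
\]
and then use the anticommutation relation \eqref{anticommutation}, which gives $\{\gb^0,\gb^i\}=0$ (since $m^{0i}=0$), to commute $\gb^i$ past $\gb^0$ and absorb the resulting $(\gb^0)^2 = I$.

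The second identity $\gb^0(\gamma^\mu)^\dagger \gb^0 = \gamma^\mu$ then follows immediately from the first by linearity, since $\gamma^\mu = e^\mu{}_a \gb^a$ with real tetrad components $e^\mu{}_a$, so $(\gamma^\mu)^\dagger = e^\mu{}_a(\gb^a)^\dagger$, and conjugation by $\gb^0$ passes through the real scalar.

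For the third identity $\gb^0\Omega_\mu^\dagger \gb^0 = -\Omega_\mu$, I would note that $\omega_{\mu ab}$ is real-valued, so it suffices to show $\gb^0(\Sigma^{ab})^\dagger \gb^0 = -\Sigma^{ab}$. Starting from $\Sigma^{ab} = \tfrac 12(\gb^a\gb^b - \gb^b\gb^a)$, I would take the Hermitian conjugate, insert $I = \gb^0\gb^0$ between the two factors inside each term, and then apply the first identity twice to each product, obtaining
\[
\gb^0(\gb^a\gb^b)^\dagger \gb^0 = \gb^0(\gb^b)^\dagger \gb^0\cdot \gb^0(\gb^a)^\dagger \gb^0 = \gb^b\gb^a.
\]
Combining this with the analogous computation for the other term yields $\gb^0(\Sigma^{ab})^\dagger \gb^0 = \tfrac 12(\gb^b\gb^a - \gb^a\gb^b) = -\Sigma^{ab}$, and the sign flip produces exactly $-\Omega_\mu$ from $\Omega_\mu = -\tfrac 14 \omega_{\mu ab}\Sigma^{ab}$.

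There is no real obstacle here; the only thing worth being careful about is the bookkeeping of signs and the order reversal in $(\gb^a \gb^b)^\dagger = (\gb^b)^\dagger (\gb^a)^\dagger$, together with noting that reality of $\omega_{\mu ab}$ (a consequence of its definition $g(e_a,\dd_\mu e_b)$ with a real metric and real tetrad) is what lets the Hermitian conjugate of $\Omega_\mu$ be computed purely in terms of $(\Sigma^{ab})^\dagger$.
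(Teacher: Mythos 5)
Your proof is correct and takes essentially the same approach as the paper's, which simply notes that $(\gb^0)^\dagger=\gb^0$, $(\gb^i)^\dagger=-\gb^i$, and the anticommutation relation give the first identity, and that $\gb^0(\Sigma^{ab})^\dagger\gb^0=-\Sigma^{ab}$ follows similarly. You have spelled out the same steps in more detail, including the insertion of $\gb^0\gb^0=I$ for the $\Sigma^{ab}$ identity and the reality of $\omega_{\mu ab}$ and of the tetrad components, all of which is accurate.
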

\begin{proof}
Using (\ref{anticommutation}), $(\gb^0)^\dagger=\gb^0$ and $(\gb^i)^\dagger=-\gb^i$ we have $\gb^0 (\gb^a)^\dagger \gb^0=\gb^a$. The second relation follows as we can similarly show that $\gb^0(\Sigma^{ab})^\dagger \gb^0=-\Sigma^{ab}$.
\end{proof}

As a corollary, one can prove from this that $D_\mu \pb=\overline{D_\mu\psi}$, and the spinor conjugate of $\gamma^\mu D_\mu\psi$ reads
\begin{equation}
    \overline{\gamma^\mu D_\mu \psi}=(D_\mu \psi)^\dagger (\gamma^\mu)^\dagger \gb^0=(D_\mu \psi)^\dagger \gb^0\gb^0(\gamma^\mu)^\dagger \gb^0=\overline{D_\mu \psi}\gamma^\mu=D_\mu\pb\gamma^\mu,
\end{equation}
so we get the adjoint equation of Dirac equation: $D_\mu \pb \gamma^\mu-im\pb=0$.

\begin{remark}
We can now show that the right hand side $T_{\mu\nu}n^\mu n^\nu$, $T_{\mu i}n^\mu$ on the constraint equations \eqref{constraintTnn}, \eqref{constraintTni} can be determined by the initial data set $(\gbar_{ij},k_{ij},\psi_0)$. The original data does not contain time components of $g$, so we are not able to determine $n$, but we only need to use the fact that it is the unit normal vector. By the transform law under the local Lorentz transformation, one can see that $T_\mn$ is a $(0,2)$-tensor, meaning that it is invariant under the choice of orthonormal frame on $(\Sigma,\gbar_{ij})$, which forms a tetrad together with $n$. Now given %Then in view of the expression of $T_\mn$ \eqref{EMT}, we also need to know $D_n \psi|_{t=0}$, and this can be derived using the Dirac equation. To write Dirac equation explicitly, we need to fix a choice of 
an orthonormal basis $\{e_1,e_2,e_3\}$ with respect to $\overline g$ on the initial hypersurface, %so $\{n,e_1,e_2,e_3\}$ is orthonormal with respect to $g|_{t=0}$. Then 
the Dirac equation can be written as $\gb^0 D_n \psi+\gb^i D_{e_i} \psi+im\psi=0$, so we have
\begin{equation}
    D_n \psi=-(\gb^0)^{-1} \gb^i (\d_{e_i}\psi-\frac 14 g(e_a,\dd_{e_i}e_b)\Sigma^{ab}\psi)-i(\gb^0)^{-1}m\psi,
\end{equation}
the term $g(e_a,\dd_{e_i} e_b)$ can be determined by $\gbar_{ij}$ and $k_{ij}$, which is the situation we want. The conjugate term is similar. We also have $\gamma_\mu n^\mu=-\gb^0$, again indepedent of $n$. %and first spatial dervatives of $\psi$ at $t=0$. 
%(then existence of the data?)
\end{remark}

We can also define the covariant derivative of Gamma matrices. Recall the relation between Gamma matrices and the tetrad postulate, it is expected that Gamma matrices are covariantly constant in some sense. We define as in \cite{anomalies}%\footnote{One can define this covariant derivative by $D=d+[\omega,\ ]$. This defines it for general tensors and the case here is just the one where we view the matrix as a (1,1)-tensor. See \cite{anomalies} for details.}(note: commutator notation, also see \cite{diracguide})
$$D_\nu \gamma^\mu=\dd_\nu\gamma^\mu +\Omega_\nu \gamma^\mu-\gamma^\mu\Omega_\nu.$$
Here $\dd$ only interacts with the index on $\gamma^\mu$ as a tensor (curved) index. 

Similar with the Minkowski case, while $\gamma^\mu$ is not a $4$-vector, we have $\overline\psi_1 \gamma^\mu \psi_2$ a $4$-vector, and $\overline\psi_1\psi_2$ is a scalar. Also, $\gamma^\mu D_\mu \psi$ is again a spinor. It is easy to verify the following Leibniz rules:
\begin{equation}
    \begin{split}
        \dd_\nu(\pb_1 \gamma^\mu \psi_2)&=(\d_\nu \pb_1)\gamma^\mu \psi_2+\pb_1 (\dd_\nu \gamma^\mu) \psi_2+ \pb_1 \gamma^\mu \d_\nu \psi_2  \\
        &=(\d_\nu \pb_1-\pb_1\Omega_\nu)\gamma^\mu \psi_2+\pb_1 (\dd_\nu \gamma^\mu+\Omega_\nu\gamma^\mu-\gamma^\mu\Omega_\nu) \psi_2+ \pb_1 \gamma^\mu (\d_\nu \psi_2+\Omega_\nu \psi_2)\\
        &=(D_\nu \pb_1)\gamma^\mu \psi_2+\pb_1 (D_\nu \gamma^\mu) \psi_2+ \pb_1 \gamma^\mu D_\nu \psi_2,
    \end{split}
\end{equation}
\begin{equation}\label{onesidedleibnizeq}
\begin{split}
    D_\mu(\gamma^\nu D_\rho \psi)&=\dd_\mu(\gamma^\nu D_\rho \psi)+\Omega_\mu \gamma^\nu D_\rho \psi=(\dd_\mu\gamma^\nu)D_\rho\psi+\gamma^\nu \dd_\mu D_\rho\psi+\Omega_\mu \gamma^\nu D_\rho \psi\\
    &=(\dd_\mu \gamma^\nu+\Omega_\mu \gamma^\nu-\gamma^\nu\Omega_\mu)D_\rho\psi+\gamma^\nu\Omega_\mu D_\rho\psi+\gamma^\nu \dd_\mu D_\rho\psi\\
    &=(D_\mu\gamma^\nu)D_\rho \psi+\gamma^\nu D_\mu D_\rho\psi.
\end{split}
\end{equation}

\begin{Prop}
The Gamma matrices are covariantly constant, i.e.\ $D_\nu \gamma^\mu=0$.
\end{Prop}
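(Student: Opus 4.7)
The plan is to expand $D_\nu\gamma^\mu$ directly from its definition, use the tetrad postulate \eqref{tetradpostulate} to rewrite the Levi-Civita piece, reduce the spinorial commutator term to the Clifford algebra computation of $[\Sigma^{ab},\gb^c]$, and then invoke the antisymmetry of the spin connection $\omega_{\nu ab}=-\omega_{\nu ba}$ (which is automatic: applying $\dd_\nu$ to $g(e_a,e_b)=m_{ab}$ and using metric compatibility gives $\omega_{\nu ab}+\omega_{\nu ba}=0$) to see that the two contributions exactly cancel.

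Concretely, write $\gamma^\mu=\gb^a e^\mu{}_a$. Because the $\gb^a$ are constant matrices and $\dd$ acts only on the curved tensor index, the tetrad postulate immediately gives
\[
\dd_\nu\gamma^\mu = \gb^a\,\omega_{\nu\ a}^{\ b}\,e^\mu{}_b.
\]
Independently, by moving $\gb^c$ successively past $\gb^a$ and $\gb^b$ using \eqref{anticommutation} one finds $[\gb^a\gb^b,\gb^c]=2(m^{ac}\gb^b-m^{bc}\gb^a)$, and since this right-hand side is already antisymmetric in $(a,b)$, the same formula holds for the antisymmetrization $\Sigma^{ab}$:
\[
[\Sigma^{ab},\gb^c] = 2\bigl(m^{ac}\gb^b - m^{bc}\gb^a\bigr).
\]
This is a milder version of the computation just carried out for $[\Sigma^{ab},\Sigma^{cd}]$ in \eqref{commutatorsigmaabsigmacdeq}.

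Inserting this into $\Omega_\nu\gamma^\mu-\gamma^\mu\Omega_\nu=-\tfrac14\,\omega_{\nu ab}[\Sigma^{ab},\gb^c]\,e^\mu{}_c$, the two resulting terms are equal after relabeling dummies and using $\omega_{\nu ab}=-\omega_{\nu ba}$, so they combine with an overall factor of $2$, and after raising an index with $m$ the expression collapses to $-\omega_{\nu\ a}^{\ b}\gb^a e^\mu{}_b$, which is exactly the negative of $\dd_\nu\gamma^\mu$. Adding the two pieces gives $D_\nu\gamma^\mu=0$. The only non-routine ingredient is the Clifford identity for $[\Sigma^{ab},\gb^c]$, and since that is a direct analogue of the computation already done above, I do not anticipate any serious obstacle; the remainder is index bookkeeping.
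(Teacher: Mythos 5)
Your proof is correct and uses essentially the same ingredients as the paper's: the tetrad postulate to handle $\dd_\nu\gamma^\mu$, the anticommutation relation \eqref{anticommutation} to compute the Clifford commutator, and the antisymmetry of $\omega_{\nu ab}$ to collapse the two resulting terms. The only organizational difference is that you isolate $[\Sigma^{ab},\gb^c]=2(m^{ac}\gb^b-m^{bc}\gb^a)$ as a standalone identity, whereas the paper carries out the equivalent algebra in line on $-\gb^a\gb^b\gb^c+\gb^c\gb^a\gb^b$; this is a cosmetic reorganization, not a different route.
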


\begin{proof}
We calculate by the antisymmetry $\omega_{\mu ab}=\omega_{\mu ba}$:
\begin{equation}
    \begin{split}
        \Omega_\nu \gamma^\mu-\gamma^\mu \Omega_\nu&=-\frac 18(\omega_{\nu ab} [\gb^a,\gb^b]\gamma^\mu-\gamma^\mu \omega_{\nu ab}[\gb^a,\gb^b])=-\frac 18(\omega_{\nu ab} \gb^a\gb^b\gamma^\mu-\gamma^\mu \omega_{\nu ab}\gb^a\gb^b)\\
        &=\frac 18 \omega_{\nu ab} e_{\ c}^\mu(-\gb^a\gb^b\gb^c+\gb^c\gb^a\gb^b).
    \end{split}
\end{equation}
Using (\ref{anticommutation}) twice we have $-\gb^a\gb^b\gb^c+\gb^c\gb^a\gb^b=-2m^{ac}\gb^b+2m^{bc}\gb^a$. Therefore
\begin{equation}
    \Omega_\nu \gamma^\mu-\gamma^\mu \Omega_\nu=\frac 18 \omega_{\nu ab} e_{\ c}^\mu(-2m^{ac}\gb^b+2m^{bc}\gb^a)  =-\omega_{\nu\ b}^{\ a} e_{\ a}^\mu \gb^b.
\end{equation}
The result is now a consequence of the tetrad postulate (\ref{tetradpostulate}):
$$D_\nu \gamma^\mu=\dd_\nu\gamma^\mu +\Omega_\nu \gamma^\mu-\gamma^\mu\Omega_\nu=\dd_\nu \gamma^\mu-\omega_{\nu\ b}^{\ a} e_{\ a}^\mu \gb^b=(\dd_\nu e_{\ b}^\mu-\omega_{\nu\ b}^{\ a} e_{\ a}^\mu)\gb^b=0.\eqno\qed$$
\renewcommand{\qedsymbol}{}
\end{proof}

\begin{Cor}\label{spinorleibnizcor}
We have for Dirac spinor fields $\psi_1,\psi_2$ that
$$\dd_\nu(\pb_1 \gamma^\mu \psi_2)=(D_\nu \pb_1) \gamma^\mu\psi_2+\pb_1\gamma^\mu D_\nu\psi_2.$$
\end{Cor}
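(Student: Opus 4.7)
The plan is to combine the three-term Leibniz rule already derived in the excerpt with the preceding proposition asserting that $D_\nu \gamma^\mu = 0$. Specifically, the expansion
$$\dd_\nu(\pb_1 \gamma^\mu \psi_2) = (D_\nu \pb_1)\gamma^\mu \psi_2 + \pb_1 (D_\nu \gamma^\mu) \psi_2 + \pb_1 \gamma^\mu D_\nu \psi_2$$
was established just above, by adding and subtracting $\pb_1 \Omega_\nu \gamma^\mu \psi_2$ and using the definitions of $D_\nu$ on spinors, adjoint spinors, and Gamma matrices. The observation that makes the corollary immediate is that $\pb_1 \gamma^\mu \psi_2$ has no free tetrad (flat) indices, so $\dd_\nu$ of it, viewed as a scalar (in the spinor indices) vector field (in $\mu$), agrees with the appropriate covariant derivative applied componentwise.

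Given this, the only step is to invoke the proposition $D_\nu \gamma^\mu = 0$ to kill the middle term, yielding the desired two-term identity. There is no real obstacle here; the work has already been done in proving covariant constancy of $\gamma^\mu$, which in turn rested on the tetrad postulate and the commutator computation $\Omega_\nu\gamma^\mu - \gamma^\mu\Omega_\nu = -\omega_\nu{}^a{}_b e^\mu{}_a \gb^b$ via the anticommutation relation \eqref{anticommutation}. Thus the corollary is a one-line consequence: substitute $D_\nu\gamma^\mu = 0$ into the Leibniz expansion above.
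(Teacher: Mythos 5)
Your proposal is correct and follows exactly the route the paper intends: the three-term Leibniz expansion is established in the preceding display, the proposition $D_\nu\gamma^\mu=0$ kills the middle term, and the corollary drops out. This is the same argument the paper has in mind, stated as an immediate consequence.
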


\subsection{Null frame, vector fields and Lie derivatives}
We define the null frame in the Minkowski spacetime:
$$L=\d_t+\d_r,\ \lb=\d_t-\d_r,\ S_1,S_2\text{ tangent to }\mathbb{S}^2,\ \langle S_i,S_j\rangle =\delta_{ij}.$$
Here $\d_r=\omega^i\d_i$, where $\omega_i=x_i/|x|$ (one should distinguish this with the spin connection, which is also denoted by $\omega$).
We use $\mathcal{T}=\{L,S_1,S_2\}$ to denote vectors tangential to the outgoing light cones, and $\db\in \mathcal{T}$ to represent the tangential derivatives. When orthogonality is not strictly required, we can also use the vector field $\db_i=\d_i-\omega_i\d_r$ instead of $S_1,S_2$, as the latter does not admit a global choice on the sphere.
We also define the Minkowski optical function $q=r-t$ with $r=|x|$.

The Minkowski wave operator has good commutation properties with the following vector fields:
$$\d_\mu,\ \Omega_{ij}=x_i\d_j-x_j\d_i,\ \Omega_{0i}=t\d_i+x_i\d_t,\ S=t\d_t+x^i\d_i.$$
We denote each of them by $Z^\iota$ with a 11-component index $\iota=(0,\cdots,1,\cdots)$. Then we can define the multi-index $I=(\iota_1,\cdots,\iota_k)$ to be of length $|I|=k$, and $Z^I=Z^{\iota_1}\cdots Z^{\iota_k}$. By $J+K=I$ we mean a sum of $J$ and $K$ over all order preserving partitions of $I$ into $J$ and $K$.

For the Minkowskian wave operator $\Box=m^{\a\b}\d_\a\d_\b$, we have $[\Box,Z]=0$ for $Z\in\{\d_\mu,\Omega_{ij},\Omega_{0i}\}$, and $[\Box,S]=2\, \Box$. One can also verify for the commutator of these vector fields $[Z_1,Z_2]=\sum_{|I|=1}c_{12I}Z^I$, and $[Z,\d_\mu]=c_\mu^\a \d_\a$, where $c_{12I}$ and $c_\mu^\a$ are constants. 

The following estimates of derivatives using vector fields are important:
\begin{equation}\label{dtoZ}
    |\db \phi|\lesssim (1+t+r)^{-1}\sum_{|I|\leq 1}|Z^I \phi|,\ |\d \phi|\lesssim (1+|t-r|)^{-1}\sum_{|I|\leq 1} |Z^I \phi|.
\end{equation}

We will use the following notation. Let $\mathcal{T}=\{L,S_1,S_2\}$, $\mathcal{U}=\{L,\lb,S_1,S_2\}$, $\l=\{L\}$ and $\mathcal S=\{S_1,S_2\}$. For any of these two families of vector fields $\mathcal{V}$ and $\mathcal{W}$ and a $(0,2)$-tensor $p_{\a\b}$, we define
\begin{equation*}
    |p|_{\mathcal V\mathcal W}=\sum_{V\in \mathcal V,\, W\in\mathcal W}|p_{\a\b} V^\a W^\b|,
\end{equation*}
\begin{equation*}
    |\d p|_{\mathcal V\mathcal W}=\sum_{U\in\, \mathcal U,\,  V\in \mathcal V,\, W\in\mathcal W}|(\d_\gamma p_{\a\b})U^\gamma V^\a W^\b|.
\end{equation*}
We will also use capital letters $A,B,\cdots$ to denote spherical tangent frame, e.g.\ $\slashed{\mathrm{tr}}\, p=\delta^{AB}p_{AB}=p_{S_1S_1}+p_{S_2S_2}$.

One can commute the quasilinear wave equations with the vector fields directly; however, as we will see later, if we use Lie derivatives instead then some fine structures can be preserved. 

\begin{Def}
The Lie derivative applied to a (s,r)-tensor $K$ is defined by 
\begin{multline}\lzr K^{\a_1\a_2\cdots \a_s}_{\b_1\b_2\cdots \b_r}=Z\, K^{\a_1\a_2\cdots \a_s}_{\b_1\b_2\cdots \b_r}-\d_\nu Z^{\a_1}\, K^{\nu\a_2\cdots \a_s}_{\b_1\b_2\cdots \b_r}-\cdots-\d_{\nu} Z^{\a_s}\, K^{\a_1\a_2\cdots \nu}_{\b_1\b_2\cdots \b_r}+\d_{\b_1}Z^\nu\,  K^{\a_1\a_2\cdots \a_s}_{\nu\b_2\cdots \b_r}\\
    +\cdots+\d_{\b_r} Z^\nu\, K^{\a_1\a_2\cdots \a_s}_{\b_1\b_2\cdots \nu}.
\end{multline}
\end{Def}

If $Z$ is one of the commuting vector fields, then $\d_\mu Z^\nu$ are constants and the following properties hold (see \cite{EV} for proof):

\begin{Prop}
For $Z=\d_\mu,\Omega_{ij},\Omega_{0i},S$, we have
\begin{equation}
    \lzr \d_{\mu_1}\cdots\d_{\mu_k} K^{\a_1\a_2\cdots \a_s}_{\b_1\b_2\cdots \b_r}=\d_{\mu_1}\cdots \d_{\mu_k} \lzr K^{\a_1\a_2\cdots \a_s}_{\b_1\b_2\cdots \b_r},
\end{equation}
and
\begin{equation}
    \lzr \d_\mu K^{\mu\cdots \a_s}_{\b_1\cdots \b_r}=\d_\mu \lzr K^{\mu\cdots \a_r}_{\b_1\cdots\b_s}.
\end{equation}
\end{Prop}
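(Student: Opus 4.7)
The plan is to reduce both identities to the single fact that the coordinate components $Z^\alpha$ of each $Z\in\{\d_\mu,\Omega_{ij},\Omega_{0i},S\}$ are affine in the coordinates, so $\d_\mu Z^\nu$ are constants and in particular $\d_\mu \d_\nu Z^\alpha = 0$. This is exactly the property making these vector fields ``conformal Killing / affine'' for the Minkowski structure, and it is what makes the Lie derivative commute with ordinary partials.

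First I would prove the case $k=1$ of the first identity by a direct expansion of the definition, treating $\d_\mu K^{\a_1\cdots\a_s}_{\b_1\cdots\b_r}$ as a $(s,r{+}1)$-tensor with an extra lower index $\mu$. Applying $\lzr$ to it produces $Z(\d_\mu K) - \sum_i (\d_\nu Z^{\a_i})\,\d_\mu K^{\cdots\nu\cdots}_{\cdots} + \sum_j (\d_{\b_j}Z^\nu)\,\d_\mu K^{\cdots}_{\cdots\nu\cdots} + (\d_\mu Z^\nu)\,\d_\nu K$. On the other side, $\d_\mu(\lzr K)$ expands by Leibniz: $\d_\mu(Z^\sigma \d_\sigma K) = Z(\d_\mu K)+(\d_\mu Z^\sigma)\d_\sigma K$, and each of the index-correction terms $\d_\mu(\d_\nu Z^{\a_i}\cdot K^{\cdots\nu\cdots})$ produces $(\d_\nu Z^{\a_i})\d_\mu K^{\cdots\nu\cdots} + (\d_\mu\d_\nu Z^{\a_i})K^{\cdots\nu\cdots}$ and similarly for the lower-index terms. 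Matching pieces, everything cancels except the residual terms $(\d_\mu\d_\nu Z^{\a_i})K$ and $(\d_\mu\d_{\b_j}Z^\nu)K$, which vanish identically because $\d\d Z \equiv 0$ for our vector fields. Iterating this $k$ times (or arguing by induction on $k$, using the $k=1$ case with $K$ replaced by $\d_{\mu_2}\cdots\d_{\mu_k} K$) yields the first identity.

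For the second identity, I would invoke that Lie differentiation commutes with contraction of a contravariant and a covariant index — a general property that follows from the Leibniz rule $\lzr(T\otimes S)=(\lzr T)\otimes S+T\otimes(\lzr S)$ applied to $T$ being the Kronecker delta (which is $\lzr$-invariant for any $Z$). Hence, viewing $\d_\mu K^{\mu\a_2\cdots\a_s}_{\b_1\cdots\b_r}$ as the contraction of $\d_\rho K^{\mu\a_2\cdots}_{\b_1\cdots}$ (a $(s,r{+}1)$-tensor) against $\delta^\rho_\mu$, we get $\lzr(\d_\mu K^{\mu\cdots}_{\cdots}) = (\lzr \d_\rho K^{\mu\cdots}_{\cdots})\,\delta^\rho_\mu$, and by the first identity this equals $\d_\rho(\lzr K^{\mu\cdots}_{\cdots})\,\delta^\rho_\mu = \d_\mu(\lzr K^{\mu\cdots}_{\cdots})$, as desired.

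The only real content is the cancellation in the $k=1$ case, and the main bookkeeping obstacle will be carefully matching the index-correction terms coming from the Lie derivative against the Leibniz expansion of $\d_\mu$; but no step is delicate once one records that $\d\d Z = 0$ for each of $\d_\mu$, $\Omega_{ij}$, $\Omega_{0i}$, $S$, which is immediate from their explicit linear form in $(t,x)$.
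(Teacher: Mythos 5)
Your proof is correct. The paper itself does not supply a proof for this proposition — it explicitly defers to the reference \cite{EV} after noting the key hypothesis that ``$\d_\mu Z^\nu$ are constants'' — so there is no in-paper argument to compare against. What you have written is a clean direct verification of both identities that uses exactly the fact the paper isolates: the components $Z^\nu$ are affine in the coordinates, hence $\d_\mu\d_\nu Z^\alpha=0$.

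Two small remarks. For the first identity, your term-by-term cancellation in the $k=1$ case is exactly right, and the induction to general $k$ is routine; worth stating explicitly that you also use commutativity of mixed partials $\d_\sigma\d_\mu=\d_\mu\d_\sigma$ when matching $Z^\sigma\d_\sigma(\d_\mu K)$ against $Z(\d_\mu K)$. For the second identity, your high-level reason (Lie derivative commutes with trace over one upper and one lower index, because $\delta^\mu_\nu$ is Lie-flat for every vector field) is correct, though phrasing it as ``Leibniz applied to the Kronecker delta'' is a slight shortcut: the Leibniz rule for $\lzr$ governs tensor products, and commutation with contraction is a separate (easy) fact whose content is precisely the cancellation $-\d_\nu Z^\mu\, T^\nu{}_{\mu\cdots}+\d_\mu Z^\nu\, T^\mu{}_{\nu\cdots}=0$ after relabeling dummy indices. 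If you prefer a fully self-contained argument, you can also just run the same direct expansion you did for the first identity and observe that the two correction terms attached to the contracted index pair collapse, while the remaining $\d\d Z$ terms vanish by affineness; that makes both halves of the proposition proofs of the same shape.
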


With the appearance of partial derivatives here, we are taking the Lie derivative of a quantity that is not a geometric quantities at least at first glance, so we may understand this formally. In fact, it can be viewed as taking the covariant derivative with respect to the Minkowski metric in the corresponding coordinate system.

Since $\d_\mu Z^\nu$ are constants, we also have the following equivalence, for any tensor field $K$:
\begin{equation}\label{equivalenceLiederivative}
    \sum_{|I|\leq k}|Z^I K|\lesssim \sum_{|I|\leq k}|\lzr^I K|\lesssim \sum_{|I|\leq k}|Z^I K|.
\end{equation}

\section{A second-order equation for the spinor field}\label{chaptersecondorderDiracderivation}
In this section, we raise and lower tensor indices with respect to the curved metric $g$.
\subsection{The square of Dirac operator}

We can apply one more Dirac operator to the Dirac equation to get a second order equation for components of $\psi$.

\begin{lem}\label{commutatorofcovariantderivative}
Let $D$ be the covariant derivative operator as in \eqref{eqdefinitionD}. Then $$[D_\mu,D_\nu]\psi=-\frac 14 R_{\mu\nu\sigma\delta} \gamma^\sigma \gamma^\delta \psi.$$
\end{lem}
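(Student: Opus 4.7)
The plan is to compute $[D_\mu, D_\nu]\psi$ directly from the definition and then match the resulting spinor curvature with the Riemann tensor expressed in flat (tetrad) indices.

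First I would observe that $D_\nu\psi$ carries one covector (curved) index and a spinor index, so when acting on it with $D_\mu$ one must include the Christoffel part:
\begin{equation*}
D_\mu(D_\nu\psi) = \partial_\mu(D_\nu\psi) - \Gamma_{\mu\nu}^{\ \ \rho}D_\rho\psi + \Omega_\mu(D_\nu\psi).
\end{equation*}
Antisymmetrizing over $\mu,\nu$, the Christoffel term drops out by the torsion-free property of $\dd$, leaving
\begin{equation*}
[D_\mu,D_\nu]\psi = \bigl(\partial_\mu \Omega_\nu - \partial_\nu \Omega_\mu + [\Omega_\mu,\Omega_\nu]\bigr)\psi,
\end{equation*}
i.e.\ the commutator is purely the curvature of the spin connection.

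Next I would substitute $\Omega_\mu = -\tfrac14 \omega_{\mu ab}\Sigma^{ab}$ and use the commutator identity \eqref{commutatorsigmaabsigmacdeq} to expand $[\Omega_\mu,\Omega_\nu]$. All four terms produced by \eqref{commutatorsigmaabsigmacdeq} contribute equally after relabeling flat indices and using the antisymmetry $\omega_{\mu ab}=-\omega_{\mu ba}$, combining to give
\begin{equation*}
[\Omega_\mu,\Omega_\nu] = -\tfrac14\bigl(\omega_\mu{}^{\ c}{}_a\,\omega_{\nu cb} - \omega_\nu{}^{\ c}{}_a\,\omega_{\mu cb}\bigr)\Sigma^{ab}.
\end{equation*}
Combining with the $\partial\Omega$ terms I obtain
\begin{equation*}
[D_\mu,D_\nu]\psi = -\tfrac14 \mathcal{R}_{\mu\nu ab}\,\Sigma^{ab}\psi,
\qquad \mathcal{R}_{\mu\nu ab} := \partial_\mu\omega_{\nu ab} - \partial_\nu\omega_{\mu ab} + \omega_\mu{}^{\ c}{}_a\omega_{\nu cb} - \omega_\nu{}^{\ c}{}_a\omega_{\mu cb}.
\end{equation*}

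Then I would identify $\mathcal{R}_{\mu\nu ab}$ with the Riemann tensor. The cleanest route is to compute $[\dd_\mu,\dd_\nu]e^\rho_{\ a}$ in two ways: on the one hand it equals $-R_{\mu\nu\sigma}{}^\rho e^\sigma{}_a$ by the definition of $R$; on the other hand, using the tetrad postulate \eqref{tetradpostulate} one has $\dd_\nu e^\rho_{\ a} = \omega_{\nu\ a}^{\ b} e^\rho_{\ b}$, and iterating produces exactly $-\mathcal{R}_{\mu\nu}{}^b{}_a e^\rho_{\ b}$. Lowering the flat index gives $\mathcal{R}_{\mu\nu ab} = R_{\mu\nu\sigma\delta}\,e^\sigma_{\ a}e^\delta_{\ b}$. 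Because $R_{\mu\nu\sigma\delta}$ is antisymmetric in the last two indices, I can replace $\bar\gamma^a\bar\gamma^b$ by $\Sigma^{ab}$ and convert back to curved indices via $\gamma^\sigma = \bar\gamma^a e^\sigma_{\ a}$:
\begin{equation*}
\mathcal{R}_{\mu\nu ab}\Sigma^{ab} = R_{\mu\nu\sigma\delta}\,e^\sigma_{\ a}e^\delta_{\ b}\bar\gamma^a\bar\gamma^b = R_{\mu\nu\sigma\delta}\gamma^\sigma\gamma^\delta,
\end{equation*}
which yields the claimed formula.

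The main obstacle is the bookkeeping in step two, namely showing that the four terms from \eqref{commutatorsigmaabsigmacdeq} collapse to the two-term $\omega\wedge\omega$ expression appearing in the Riemann tensor with the correct $-\tfrac14$ coefficient; and making sure the flat-index curvature $\mathcal{R}_{\mu\nu ab}$ really coincides with the curved Riemann tensor, which is precisely what the tetrad postulate \eqref{tetradpostulate} buys us. Once these two identifications are in place, the result follows directly.
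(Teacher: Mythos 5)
Your route is genuinely different from the paper's, and conceptually cleaner. The paper expands $\omega_{\mu ab} = g(e_a,\nabla_\mu e_b)$ directly inside $[D_\mu,D_\nu]\psi$ and then shows by hand that the quadratic $g(\nabla_\mu e_a,\nabla_\nu e_b)\Sigma^{ab}$ pieces cancel exactly against the $\omega\omega[\Sigma,\Sigma]$ contribution, isolating $g(e_a,[\nabla_\mu,\nabla_\nu]e_b)\Sigma^{ab}$, which is the Riemann tensor by definition. You instead compute $[D_\mu,D_\nu]$ as the Yang--Mills-type field strength $\partial_\mu\Omega_\nu-\partial_\nu\Omega_\mu+[\Omega_\mu,\Omega_\nu]$ of the spin connection one-form, and then separately identify this with the Riemann tensor by iterating the tetrad postulate; the cancellation the paper carries out explicitly is packaged into the universal curvature-of-a-connection formula. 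What you pay for this is the extra matching step between the flat-index spin curvature and the curved-index Riemann tensor, which the paper's computation folds in automatically. There is, however, a sign slip to flag: expanding via \eqref{commutatorsigmaabsigmacdeq} with the antisymmetries of $\omega_{\mu ab}$ and $\Sigma^{ab}$ actually yields
\begin{equation*}
[\Omega_\mu,\Omega_\nu] = +\tfrac14\bigl(\omega_\mu{}^{\ c}{}_a\,\omega_{\nu cb} - \omega_\nu{}^{\ c}{}_a\,\omega_{\mu cb}\bigr)\Sigma^{ab},
\end{equation*}
so the $\omega\wedge\omega$ contribution to $\mathcal{R}_{\mu\nu ab}$ should carry the opposite sign, $\mathcal{R}_{\mu\nu ab}=\partial_\mu\omega_{\nu ab}-\partial_\nu\omega_{\mu ab}-\omega_\mu{}^{\ c}{}_a\omega_{\nu cb}+\omega_\nu{}^{\ c}{}_a\omega_{\mu cb}$. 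Iterating $\dd_\nu e^\rho{}_a=\omega_\nu{}^{\ b}{}_a e^\rho{}_b$ and carefully lowering the flat index produces precisely this sign as well, so your final identity $[D_\mu,D_\nu]\psi=-\tfrac14 R_{\mu\nu\sigma\delta}\gamma^\sigma\gamma^\delta\psi$ is still correct (the two slips compensate), but the intermediate formulas for $[\Omega_\mu,\Omega_\nu]$ and for $\mathcal{R}_{\mu\nu ab}$ as you wrote them do not check out on their own.
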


\begin{proof}
Recall that $\omega_{\mu ab}=g(e_a,\dd_\mu e_b)=(e_a)_\rho (\dd_\mu e_b)^\rho$, which gives the expression  of $\dd_\mu (e_b)$ in terms of the spin connection, by applying the dual frame $(e^a)^\lambda$: $\dd_\mu (e_b)^\lambda=\omega_{\mu ab}(e^a)^\lambda$. We calculate as follows, using the torsion-free property of the connection $\dd$. Note that $D$ interacts with both spinor fields and tensorial indices, and we calculate
\begin{equation}\label{commutatorofDfirststep}
    \begin{split}
        [D_\mu,D_\nu]\psi&=(\nabla_\mu-\frac 14 \omega_{\mu ab}\Sigma^{ab})(\nabla_\nu\psi-\frac 14 \omega_{\nu cd}\Sigma^{cd}\psi)-(\nabla_\nu-\frac 14 \omega_{\nu cd}\Sigma^{cd})(\nabla_\mu\psi-\frac 14 \omega_{\mu ab}\Sigma^{ab}\psi)\\
        &=-\frac 14 (\dd_\mu \omega_{\nu ab}-\dd_\nu \omega_{\mu ab})\Sigma^{ab}\psi+\frac 1{16}(\omega_{\mu ab}\omega_{\nu cd}\Sigma^{ab}\Sigma^{cd}\psi-\omega_{\mu ab}\omega_{\nu cd}\Sigma^{cd}\Sigma^{ab}\psi)\\
        &=-\frac 14 g(e_a,\dd_\mu\dd_\nu e_b -\dd_\nu\dd_\mu e_b) \Sigma^{ab}\psi-\frac 14 (g(\dd_\mu e_a,\dd_\nu e_b)-g(\dd_\nu e_a,\dd_\mu e_b))\Sigma^{ab}\psi\\
        &\ \ \ \ \ \ \ \ \ \ \ \ \ +\frac 1{16} \omega_{\mu ab}\omega_{\nu cd}(\Sigma^{ab}\Sigma^{cd}-\Sigma^{cd}\Sigma^{ab})\psi.\\
        %&=-\frac 14 (e_a)^\rho (e_b)^\delta R_{\mu\nu\rho\delta}\Sigma^{ab}\psi,
    \end{split}
\end{equation}
Now $g(\dd_\mu e_a,\dd_\nu e_b)\Sigma^{ab}=\omega_{\mu ca}(e^c)^\lambda\, \omega_{\nu db}(e^d)_\lambda \Sigma^{ab}=m^{cd}\omega_{\mu ac}\, \omega_{\nu bd}\Sigma^{ab}$ (and similar for $\mu$ and $\nu$ interchanged). Also, using the commutator identity of $\Sigma^{ab}$ and $\Sigma^{cd}$ \eqref{commutatorsigmaabsigmacdeq}, along with the symmetry of $m^{ab}$ and antisymmetry of $\Sigma^{ab}$, we get
\begin{equation*}
\begin{split}
    \omega_{\mu ab}\omega_{\nu cd}(\Sigma^{ab}\Sigma^{cd}-\Sigma^{cd}\Sigma^{ab})&=2\, \omega_{\mu ab}\, \omega_{\nu cd}(m^{ac}\Sigma^{bd}-m^{ad}\Sigma^{bc}+m^{bd}\Sigma^{ac}-m^{bc}\Sigma^{ad})\\
    &=8m^{ac}\Sigma^{bd}\omega_{\mu ab}\, \omega_{\nu cd}.
\end{split}
\end{equation*}
Then it is not hard to see that the last two parts in (\ref{commutatorofDfirststep}) cancel out, and hence we have
\begin{equation}
    \begin{split}
        [D_\mu,D_\nu]\psi
        &=-\frac 14 g(e_a,\dd_\mu\dd_\nu e_b -\dd_\nu\dd_\mu e_b) \Sigma^{ab}\psi\\
        &=-\frac 14 (e_a)^\rho (e_b)^\delta R_{\mu\nu\rho\delta}\Sigma^{ab}\psi,
    \end{split}
\end{equation}
and this is the same as $-\frac 14 R_{\mu\nu\rho\delta}\gamma^\rho\gamma^\delta \psi$ by the definition of curved Gamma matrices and the anticommutation property (\ref{anticommutation}).
\end{proof}

\begin{lem}\label{curvaturecontractedwithgammematrices}
The identity $R_{\mu\nu\sigma\delta} \gamma^\nu\gamma^\sigma\gamma^\delta = 2 \gamma^\nu R_{\mu\nu}$ holds. As a result, we have the relation $$R_{\mu\nu\sigma\delta} \gamma^\mu\gamma^\nu\gamma^\sigma\gamma^\delta \psi =-2R\psi.$$
\end{lem}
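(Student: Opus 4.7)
The plan is to prove the first identity $R_{\mu\nu\sigma\delta}\gamma^\nu\gamma^\sigma\gamma^\delta = 2 R_{\mu\nu}\gamma^\nu$ by a purely algebraic manipulation: combine the anticommutation relation \eqref{anticommutation} with the first Bianchi identity to generate two independent linear relations among the ``reorderings'' of the gamma product, then solve the resulting small system. The second identity will then follow at once by contracting the first with $\gamma^\mu$ and exploiting the symmetry of the Ricci tensor.

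Concretely, I would move $\gamma^\nu$ past $\gamma^\sigma\gamma^\delta$ using $\gamma^a\gamma^b = -2g^{ab}-\gamma^b\gamma^a$ twice, obtaining the identity
$$\gamma^\nu\gamma^\sigma\gamma^\delta = -2g^{\nu\sigma}\gamma^\delta + 2g^{\nu\delta}\gamma^\sigma + \gamma^\sigma\gamma^\delta\gamma^\nu.$$
Contracting with $R_{\mu\nu\sigma\delta}$, the antisymmetry in the last pair gives $R_{\mu\nu\sigma\delta}g^{\nu\sigma}=-R_{\mu\delta}$, while $R_{\mu\nu\sigma\delta}g^{\nu\delta}=R_{\mu\sigma}$ directly from the Ricci definition in the paper, leading to
$$T := R_{\mu\nu\sigma\delta}\gamma^\nu\gamma^\sigma\gamma^\delta = 4R_{\mu\nu}\gamma^\nu + V,\qquad V:= R_{\mu\nu\sigma\delta}\gamma^\sigma\gamma^\delta\gamma^\nu.$$

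Next, I would apply the first Bianchi identity $R_{\mu\nu\sigma\delta}+R_{\mu\sigma\delta\nu}+R_{\mu\delta\nu\sigma}=0$, multiply by $\gamma^\nu\gamma^\sigma\gamma^\delta$, and relabel dummy indices in each summand to recognize them as $T$, $U:=R_{\mu\nu\sigma\delta}\gamma^\delta\gamma^\nu\gamma^\sigma$, and $V$ respectively, producing $T+U+V=0$. Repeating the move-past trick on $U$ (and noting that $R_{\mu\nu\sigma\delta}g^{\sigma\delta}=0$) yields $U = -2R_{\mu\nu}\gamma^\nu + T$. Solving the three linear relations in $T,U,V$ eliminates $U$ and $V$ and gives $3T = 6R_{\mu\nu}\gamma^\nu$, i.e.\ $T=2R_{\mu\nu}\gamma^\nu$.

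For the second assertion, multiply the first identity on the left by $\gamma^\mu$ and apply it to $\psi$; since $R_{\mu\nu}$ is symmetric,
$$R_{\mu\nu}\gamma^\mu\gamma^\nu = \tfrac{1}{2}R_{\mu\nu}(\gamma^\mu\gamma^\nu+\gamma^\nu\gamma^\mu) = -R_{\mu\nu}g^{\mu\nu} = -R,$$
so $R_{\mu\nu\sigma\delta}\gamma^\mu\gamma^\nu\gamma^\sigma\gamma^\delta\psi = 2R_{\mu\nu}\gamma^\mu\gamma^\nu\psi = -2R\psi$. The main obstacle I anticipate is not conceptual but bookkeeping: one must be careful with the dummy relabelings that turn the three Bianchi summands into $T,U,V$ and with the factor of $2$ in $-2g^{ab}$ in the anticommutation, as an error in either spot changes the final coefficient. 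Tracking antisymmetry of $R$ in the last pair (which kills the $g^{\sigma\delta}$ contraction) at each step is the safeguard against such mistakes.
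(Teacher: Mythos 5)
Your proof is correct and is essentially the same approach as the paper's: both combine the anticommutation relation (to reorder the gamma triple) with the first Bianchi identity, producing a small linear system that forces $R_{\mu\nu\sigma\delta}\gamma^\nu\gamma^\sigma\gamma^\delta = 2R_{\mu\nu}\gamma^\nu$, and the second identity follows identically by contracting with $\gamma^\mu$ and symmetrizing over the Ricci tensor. Your bookkeeping via the three named cyclic reorderings $T,U,V$ is a slightly tidier presentation of the same computation the paper carries out by expanding $B=\gamma^\nu R_{\mu\nu}$ directly.
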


\begin{proof}
We denote $A=R_{\mu\nu\sigma\delta} \gamma^\nu\gamma^\sigma\gamma^\delta$ and $B= \gamma^\nu R_{\mu\nu}$ respectively. Using the identity $g^{\mu\nu}I=-\frac 12 (\gamma^\mu\gamma^\nu+\gamma^\nu\gamma^\mu)$, the resulting identity $\gamma^\nu\gamma^\delta\gamma^\sigma=\gamma^\delta\gamma^\sigma\gamma^\nu-2g^{\nu\delta}\gamma^\sigma+2 g^{\sigma\nu}\gamma^\delta$, and the first Bianchi identity we have 
\begin{equation*}
    \begin{split}
        B&=-\frac 12\gamma^\mu R_{\mu\sigma\delta\nu}g^{\sigma\delta}=\frac 12\gamma^\nu R_{\mu\sigma\delta\nu}(\gamma^\sigma\gamma^\delta+\gamma^\delta\gamma^\sigma)=\frac 12 R_{\mu\sigma\delta\nu} \gamma^\nu\gamma^\delta\gamma^\sigma+\frac 12R_{\mu\sigma\delta\nu}\gamma^\nu\gamma^\sigma\gamma^\delta  \\
         &=-\frac 12(R_{\mu\nu\sigma\delta}+R_{\mu\delta\nu\sigma})\gamma^\nu\gamma^\delta\gamma^\sigma+\frac 12R_{\mu\sigma\delta\nu}\gamma^\nu\gamma^\sigma\gamma^\delta \\
         &=\frac 12 A+\left(\frac 12A-\frac 12 R_{\mu\delta\nu\sigma}(-2g^{\nu\delta}\gamma^\sigma +2\gamma^\delta g^{\sigma\nu})\right)+\left(\frac 12 A+\frac 12R_{\mu\sigma\delta\nu}(-2g^{\nu\sigma}\gamma^\delta+2\gamma^\delta g^{\delta\nu})\right) \\
         &=\frac 32 A-R_{\mu\sigma}\gamma^\sigma-R_{\mu\delta}\gamma^\delta = \frac 32 A-2B,
    \end{split}
\end{equation*}
so we get $3B=\frac 32 A$, i.e.\ $B=\frac 12 A$ which proves the first identity. Then we get
\begin{equation*}
        R_{\mu\nu\sigma\delta} \gamma^\mu\gamma^\nu\gamma^\sigma\gamma^\delta \psi=2\gamma^\mu \gamma^\nu R_{\mu\nu}\psi=-2R\psi
\end{equation*}
as required.
\end{proof}

We are now ready to state the Sch\"{o}dinger-Lichnerowicz formula, which is first given in \cite{schrodinger1932diracsches}. (We also refer the reader to \cite{kay2020editorial} for related histories)
\begin{Prop}
We have $\slashed{D}^2=-D^\mu D_\mu+\frac 14 R$, where $R$ is the scalar curvature and $\slashed D=\gamma^\mu D_\mu$ is the Dirac operator. Therefore, the solution of Dirac equation $\gamma^\mu D_\mu \psi=0$ satisfies the following equation, by applying the Dirac operator to the equation again:
$$D^\mu D_\mu\psi -\frac R4\psi=0.$$
\end{Prop}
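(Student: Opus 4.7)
The plan is to compute $\slashed{D}^2 \psi = \gamma^\mu D_\mu(\gamma^\nu D_\nu \psi)$ directly, using the two preceding lemmas and the fact that $D_\mu \gamma^\nu = 0$. By the one-sided Leibniz rule \eqref{onesidedleibnizeq}, we have $D_\mu(\gamma^\nu D_\nu \psi) = (D_\mu \gamma^\nu) D_\nu \psi + \gamma^\nu D_\mu D_\nu \psi = \gamma^\nu D_\mu D_\nu \psi$, so $\slashed{D}^2 \psi = \gamma^\mu \gamma^\nu D_\mu D_\nu \psi$.

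Next I would split $\gamma^\mu \gamma^\nu D_\mu D_\nu \psi$ into its symmetric and antisymmetric parts in $(\mu,\nu)$:
\begin{equation*}
\gamma^\mu \gamma^\nu D_\mu D_\nu \psi = \tfrac{1}{2}(\gamma^\mu\gamma^\nu + \gamma^\nu\gamma^\mu)D_\mu D_\nu \psi + \tfrac{1}{2}(\gamma^\mu \gamma^\nu - \gamma^\nu \gamma^\mu) D_\mu D_\nu \psi.
\end{equation*}
For the symmetric part, the anticommutation relation $\gamma^\mu \gamma^\nu + \gamma^\nu \gamma^\mu = -2 g^{\mu\nu} I$ collapses it to $-g^{\mu\nu} D_\mu D_\nu \psi = -D^\mu D_\mu \psi$. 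For the antisymmetric part, I would use the commutator identity from Lemma \ref{commutatorofcovariantderivative}, writing $\tfrac{1}{2}(\gamma^\mu \gamma^\nu - \gamma^\nu \gamma^\mu) D_\mu D_\nu \psi = \tfrac{1}{2}\gamma^\mu \gamma^\nu [D_\mu, D_\nu] \psi = -\tfrac{1}{8} R_{\mu\nu\sigma\delta} \gamma^\mu \gamma^\nu \gamma^\sigma \gamma^\delta \psi$, and then apply Lemma \ref{curvaturecontractedwithgammematrices} to get $-\tfrac{1}{8}(-2R\psi) = \tfrac{1}{4} R \psi$.

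Adding the two contributions gives $\slashed{D}^2 \psi = -D^\mu D_\mu \psi + \tfrac{1}{4} R \psi$, which is the Schr\"odinger-Lichnerowicz identity. The corollary for a solution of the Dirac equation $\gamma^\mu D_\mu \psi = 0$ then follows by applying $\slashed{D}$ to the equation, yielding $D^\mu D_\mu \psi - \tfrac{R}{4} \psi = 0$.

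I don't expect any serious obstacle here: the covariant constancy of $\gamma^\mu$ (already established), the anticommutation relation \eqref{anticommutation}, the commutator formula for $[D_\mu, D_\nu]$, and the curvature-Gamma contraction identity together do essentially all the work. The only mild care needed is in the bookkeeping of the symmetric/antisymmetric decomposition and making sure the Bianchi-type manipulations used in Lemma \ref{curvaturecontractedwithgammematrices} are applied to the correctly ordered triple product $\gamma^\nu \gamma^\sigma \gamma^\delta$, which is exactly what the lemma delivers.
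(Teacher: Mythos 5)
Your proposal is correct and follows essentially the same route as the paper's proof: after using covariant constancy of $\gamma^\mu$ to collapse $\slashed{D}^2\psi$ to $\gamma^\mu\gamma^\nu D_\mu D_\nu\psi$, both split into symmetric and antisymmetric contributions and invoke Lemmas \ref{commutatorofcovariantderivative} and \ref{curvaturecontractedwithgammematrices}. The only cosmetic difference is that you symmetrize/antisymmetrize the $\gamma^\mu\gamma^\nu$ factor while the paper symmetrizes $D_\mu D_\nu$, which amounts to the same computation after relabelling dummy indices.
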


\begin{proof}
By \eqref{onesidedleibnizeq} and Gamma matrices being covariantly constant, along with Lemma \ref{commutatorofcovariantderivative} and Lemma \ref{curvaturecontractedwithgammematrices}, we get 
\begin{equation*}
    \begin{split}
\gamma^\mu D_\mu(\gamma^\nu D_\nu \psi)&=\gamma^\mu\gamma^\nu D_\mu D_\nu \psi=\frac 12\gamma^\mu\gamma^\nu (D_\mu D_\nu\psi +D_\nu D_\mu\psi)+\frac 12 \gamma^\mu\gamma^\nu [D_\mu,D_\nu]\psi \\
&=-g^{\mu\nu}D_\mu D_\nu \psi-\frac 18 R_{\mu\nu\sigma\delta} \gamma^\mu\gamma^\nu\gamma^\sigma\gamma^\delta \psi=-D^\mu D_\mu \psi+\frac 14 R\psi,
    \end{split}
\end{equation*}
and the lemma follows.
\end{proof}

\subsection{Coupled system}The system can be derived by the variation of the action
$$S[\psi,e_{\ a}^\mu]=S_\text{grav}[e_{\ a}^\mu]+S_M[\psi,e_{\ a}^\mu]=\int R\, e\, dx+\int   \frac i2 (\overline{\psi} \gamma^\mu D_\mu \psi-D_\mu \overline{\psi}\gamma^\mu \psi)\, e\, dx,$$
where $e:=|\det(e_{\ \mu}^a)|$. One has $\sqrt{|g|}=e$ using $g_{\mu\nu}=m_{ab} e_{\ \mu}^a e_{\ \nu}^b$.

The equation of motion of the tetrad gives the Einstein equation, see for example \cite{yepez2011einsteinvierbein}. The energy-momentum tensor is determined by 
%(need a Lagrangian reformulation as in \cite{anomalies})
$$T^a_{\ \mu}=\frac 1e \frac {\delta S_M}{\delta e^\mu_{\ a}}=\frac i2 (\overline{\psi}\gb^a D_\mu\psi-D_\mu\overline{\psi} \gb^a \psi),$$
and consequently, the energy-momentum tensor in curved indices reads
\begin{equation}
T_{\mu\nu}=\frac 12(e^a_{\ \mu} m_{ab}T^b_{\ \nu}+(\mu\leftrightarrow\nu))=\frac i4 (\overline{\psi}\gamma_\mu D_\nu\psi-D_\nu\overline{ \psi} \gamma_\mu \psi)+\frac i4 (\overline{\psi}\gamma_\nu D_\mu\psi-D_\mu\overline{ \psi} \gamma_\nu \psi).
\end{equation}
 %The energy momentum tensor appeared on the right hand side of Einstein's equation can be expressed as $$T_{\mu\nu}=\frac i2(\pb \gamma_\mu D_\nu \psi -\overline {D_\nu \psi} \gamma_\mu \psi).$$
%Here we lower the indices using the dynamic metric $g$, i.e.\ $\gamma_\mu=g_{\mu\nu}\gamma^\nu$.

If $\psi$ satisfies the Dirac equation $\gamma^\mu D_\mu \psi+im\psi=0$, then taking spinor conjugate we also have $D_\mu \overline{\psi} \gamma^\mu-im\psi=0$, so $\mathrm{tr}_g T=m\pb\psi$. As a result, if in addition the Einstein equation also holds, then the scalar curvarture $R=-m\pb\psi$. In particular, if $m=0$, then $T_{\mu\nu}$ is $g$-traceless, and hence $R=0$.

\paragraph{Equivalent system}We see from above that if $\psi$ solves the Dirac equation $\gamma^\mu D_\mu \psi=0$, and regularity permits, it also satisfies the equation $D^\mu D_\mu \psi-m^2 \psi=-\frac 14 m\pb\psi$. We can consider the initial value problem with this equation coupled with the Einstein equation instead, by determing the initial time derivative of $\psi$ using the Dirac equation. On the other hand, one can use a similar argument as in \cite{psarelli2005maxwell} to show that the solution of the new system solves the Einstein-Dirac system: 
if $\psi$ solves the second order equation, and the metric satisfies the Einstein equation, then we know that $(\gamma^\mu D_\mu-im)(\gamma^\nu D_\nu \psi+im\psi)=0$, so $\gamma^\nu D_\nu \psi+im\psi$ is a solution of Dirac equation (with mass $-m$). Recall the way we set time derivative of $\psi$ at $t=0$, and we directly have $(\gamma^\nu D_\nu \psi+im\psi)|_{t=0}=0$. Now since we have got the metric $g$, the solution of the Dirac equation under a fixed metric should be unique (otherwise, we can get contradiction for instance by arguing that they will satisfy the same second order equation), and hence $\gamma^\nu D_\nu \psi+im\psi=0$, i.e.\ $\psi$ solves the Dirac equation. Therefore, to prove the local wellposedness of the Einstein-Dirac system we can instead prove it for the Einstein equation coupled with the second-order equation of $\psi$.

\paragraph{Equations in wave coordinates}
The wave (harmonic) coordinate condition is a gauge condition saying $\Gamma^\gamma:=g^{\a\b}\Gamma_{\a\b}^{\ \ \,\gamma}=0$, or equivalently as in \eqref{wcc}.
From \cite{03} we know that under the wave coordinate condition, the Ricci curvature can be written as 
\begin{equation}
    R_{\mu\nu}=-\frac 12\widetilde{\Box}_g g_{\mu\nu}+\frac 12 F_{\mu\nu}(g)(\d g,\d g)
\end{equation}
Here $F_{\mu\nu}(g)(\d g,\d g)=P(\d_\mu g,\d_\nu g)+Q_{\mu\nu}(\d g,\d g)+G(h)(\d g,\d g)$, where $Q_{\mu\nu}$ is a combination of standard null forms, $G_{\mu\nu}$ is a cubic term (vanishes when $h=g-m=0$), and $$P(\d_\mu g,\d_\nu g)=\frac 14 m^{\a\b}\d_\mu g_{\a\b} \, m^{\rho\sigma} \d_\nu g_{\rho\sigma}-\frac 12 m^{\a\b}m^{\rho\sigma}\d_\mu g_{\a\rho}\, \d_\nu g_{\b\sigma}.$$
We also have $\Box_g \psi=g^{\a\b} \dd_\a \dd_\b \psi=g^{\a\b}\d_\a\d_\b\psi=\widetilde{\Box}_g \psi$ under the coordinate condition. For terms involving the connection term $\Omega_\mu$, we move them to the right hand side. Then the Einstein equation and the second-order equation derived from the Dirac equation in wave coordinates can be written as
\begin{equation*}
    \begin{split}
        \widetilde{\Box}_g h_{\mu\nu} &=F_\mn(g)(\d g,\d g)+ T_{\mu\nu}-\frac 12 g_\mn \mathrm{tr}_g\, T,\\
        \widetilde{\Box}_g \psi &=m^2\psi+\fp(g)(\omega,\psi,\d \psi)+\frac 14 R\psi,
    \end{split}
\end{equation*}
where
\begin{equation}\label{Fpsi}\fp(\omega,\psi,\d \psi)=\frac 12 g^{\mu\nu} \omega_{\mu ab}\Sigma^{ab} \d_\nu \psi + \frac 14 \dd_\mu (\omega_{\nu ab})\Sigma^{ab} \psi +\frac 1{16} \omega_{\mu ab}\, \omega_{\nu cd} \Sigma^{ab}\Sigma^{cd} \psi+\frac 14 R\psi.\end{equation}
Note again that the term the scalar curvature $R$ and $-\mathrm{tr}_g\, T$ can be expressed as $-m\pb\psi$ when the system holds.

%The second and last term on the RHS of the second equation will become cubic terms by the argument in next section. The third term is clearly a cubic term. The first term will behave as a null form. All of them will be ignored in the asymptotic system and the behavior of $\psi$ should be like $\psi\sim t^{-1}$.

%Initial data for $h$: same as vacuum case.
%Initial data for $\psi$: we have $\psi$ on initial Cauchy hypersurface for the original Dirac equation. Apply spatial derivatives to get initial data for $\d_i \psi$, and use Dirac equation to get $\partial_t \psi$.

%Energy estimate: like vacuum case. Also add weight on Dirac. Also use charge?

%Weak decay: from C we get $|Z^I \psi|\leq (1+t+r)^{-1}(1+t)^\delta (1+|q|)^{-\frac 12} (1+q_+)^\gamma$. However we have not developed a way to integrate Dirac equation directly to get the sharp decay of $\psi$, so we use the wave equation of $\psi$ for this. 

%From E we get $|\d \psi|\leq (1+t+r)^{-1} (1+t)^\delta (1+|q|)^{-\frac 12} (1+q_-)^{-1-\gamma}$.

\section{Change of coordinates}\label{ChapterSchwarzschildcoordinates}
We will fix a special coordinate in this section, based on the wave coordinate, and use this new coordinate for the remaining parts of this work. After introducing this coordinates, we always raise and lower tensorial indices with respect to the Minkowski metric in the new coordinates.

In order to distinguish components in different coordinate system, we will sometimes use primed index $\a',\b',\cdots$ for the components in \textit{original} coordinates.
\subsection{Asymptotic Schwarzschild coordinates}
We introduce the asymptotic Schwarzschild coordinates:
$$\widetilde{t}=t, \ \ r^*=r+M\chi(\textstyle\frac r{t+1}) \ln r, \ \ \x=r^* \omega,$$
where $\chi$ is defined as in \eqref{h0}.
%(the choice of $M$? By initial value)
We denote the transition matrix by $A^\a_{\a'}=\frac{\d\x^\a}{\d x^{\a'}}$, and we can compute the components:
\begin{equation}\label{transitionmatrices}
\begin{split}
    \d_{\a'}\x^0&=\delta^0_{\a'},\\
    \d_0 \x^i&=\delta^i_0-Mr^{-1}\ln r\, \omega^i(\textstyle\frac r{t+1})^2\chi',\\
    \d_j \x^i&=\delta^i_j+Mr^{-1}\ln r((\delta^i_j-\omega_j\omega^i)\chi+\omega_j\omega^i\textstyle\frac r{t+1}\chi')+Mr^{-1}\omega_j\omega^i\chi.
\end{split}
\end{equation}

In the new coordinates the derivatives read
$$\dt_t=\d_t-\d_t (r^*)\d_{r^*},\ \ \dt_r=\d_{r^*}=\frac 1{\d_r (r^*)} \d_r,\ \ \widetilde{\slashed \d}_i=\frac r{r^*}\slashed \d_i,\ \ \dt_i=\omega_i \dt_r+\widetilde{\slashed \d}_i.$$
We also define the corresponding null frame $\lt=\dt_t+\dt_r$, $\ltb=\dt_t-\dt_r$, $\widetilde{S}_i=\frac{r}{r^*}S_i$, and the commuting vector fields: $\z\in \{\dt_\mu,\ \widetilde{\Omega}_{ij}=\x_i\dt_j-\x_j\dt_i,\ \widetilde{\Omega}_{0i}=\x_i\dt_t+t\dt_i,\ \widetilde{S}=t\dt_t+\x^i\dt_i\}$. We will also use $\Tc$, $\Uc$, $\cdots$ instead of $\mathcal T$, $\mathcal U$, $\cdots$ defined before, to denote the counterpart in the new coordinates.

We use $\mt$ to denote the Minkowski metric tensor in the original coordinates, i.e.\ $\mt^{\a\b}=A^\a_{\a'}A^\b_{\b'}m^{\a'\b'}$, and $\gt$ to denote our dynamic Lorentzian metric with components in the new coordinates; similarly we can define $\widetilde H_0$, $\hti^0$, $\Hi$ and $\hti^1$.

Let $\mt_0^\mn=\mt^\mn+\widetilde H_0^\mn$, %where $\widetilde H_0^{\mu\nu}=-Mr^{-1}\cht(\frac r{t+1})\delta^{\mu\nu}$, 
so now $$\gt^\mn=\mt_0^\mn+\Hi^\mn.$$
Similarly we define $\mt^0_\mn=\mt_\mn+\hti^0_\mn$. We also have the Minkowski metric in the new coordinates: $\mh=-d\, \widetilde{t}\, ^2+\sum_i(d\x^i)^2$, and now the perturbation of the metric from the ``Minkowski metric" in the new coordinates is $\gt-\mh$. 

Using the expression of the transition matrices \eqref{transitionmatrices}, one can prove the following estimates (see \cite{KLMKG21} for proof):

%This is not as good as $H_1$ but there difference would be small if it is about their derivatives: we have

%Will show: \begin{equation}\label{dm0}
%    |\dt \mt_0|\lesssim \frac{M\ln \langle %t+r^*\rangle}{\langle t+r^*\rangle ^2}.
%\end{equation}

\begin{Prop}\label{estimatetransitionmatricesprop}
We have the estimate for transition matrices:
\begin{equation*}
    |\d^I(A^{\a}_{\a'}-\delta_{\a'}^\a)|+|\d^I(A^{\a'}_\a-\delta^{\a'}_\a)|\lesssim_I \frac {M\ln (1+r)}{r^{1+|I|}}.
\end{equation*}
As a consequence we also have 
\begin{equation*}
    |\d^I Z^J(A^{\a}_{\a'}-\delta_{\a'}^\a)|+|\d^I Z^J(A^{\a'}_\a-\delta^{\a'}_\a)|\lesssim_I \frac {M\ln (1+r)}{r^{1+|I|}}.
\end{equation*}
And if $\frac r{1+t}\leq \frac 14$, these terms are zero. Moreover, the same estimates hold if we replace $\d^I$, $Z^J$ by $\dt^I$, $\z^J$.
\end{Prop}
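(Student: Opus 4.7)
The starting point is the explicit formulas \eqref{transitionmatrices} for the entries of $A^\alpha_{\alpha'}$. At $|I|=0$ the estimate is immediate: each deviation from $\delta^\alpha_{\alpha'}$ is a product of $M$, a factor $r^{-1}$ or $r^{-1}\ln r$, and a bounded function of $\omega$ and $r/(t+1)$ that is supported where $\chi$ or $\chi'$ is nonzero. Since $\chi(s)=0$ for $s\leq 1/2$, every correction vanishes on $\{r/(1+t)\leq 1/4\}$, which is the stated support property.

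For $|I|\geq 1$ I would proceed by induction on $|I|$, differentiating the explicit expressions. Each coordinate derivative applied to $r^{-k}$ or $r^{-k}\ln r$ gains one factor of $r^{-1}$; applied to $\omega^i=x^i/r$ it also produces $r^{-1}$ times a bounded angular function. The only subtle factor is $\chi(r/(t+1))$: its derivatives involve $\chi'$ and bounded rational functions of $r$ and $t+1$ supported where $r\sim 1+t$, so $(1+t)^{-1}\lesssim r^{-1}$ on that set and one again gains a factor of $r^{-1}$. Leibniz then iterates this without altering the $\ln(1+r)$ weight.

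To include the Minkowski vector fields $Z^J$, the key observation is that each $Z\in\{\partial_\mu,\Omega_{ij},\Omega_{0i},S\}$ preserves the class of symbols $r^{-k}(\ln r)^\ell F(\omega,r/(1+t))$ (with $F$ smooth and supported in the relevant set) with the \emph{same} weight $k$: the rotations act only in $\omega$; the scaling $S=t\partial_t+x^i\partial_i$ satisfies $Sr=r$, $S(\ln r)=1$ and $S(r/(1+t))=r/(1+t)^2$, the latter bounded on the supports of $\chi$ and $\chi'$; the boosts $\Omega_{0i}$ behave similarly. Combined with the differentiation bookkeeping above, this yields the claimed estimate for $\partial^I Z^J(A^\alpha_{\alpha'}-\delta^\alpha_{\alpha'})$, with the support restriction preserved throughout.

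The bound for $A^{\alpha'}_\alpha-\delta^{\alpha'}_\alpha$ follows by writing $A=I+B$: on the relevant support $|B|\lesssim Mr^{-1}\ln(1+r)$ is small once $r$ is large, so the Neumann series $A^{-1}=I-B+B^2-\cdots$ converges and inherits the same decay by Leibniz, while for bounded $r$ one uses smoothness and invertibility visible from the explicit formulas. The corresponding bounds in the tilde derivatives follow from $\widetilde\partial=A^{-1}\partial$ and the fact that $\widetilde Z-Z$ is itself a $Z$-type vector field with coefficients of the same transition-matrix type just estimated. The main technical obstacle is really the careful bookkeeping around the $\chi$-cutoffs — making sure that whenever a derivative lands on $\chi(r/(t+1))$ the support relation $r\sim 1+t$ is invoked to convert $(1+t)^{-1}$ into $r^{-1}$ — after which the entire proposition is an elementary consequence of the explicit formulas.
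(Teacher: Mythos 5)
Your approach is correct and is the natural one here; the paper itself gives no proof for this proposition, simply deferring to \cite{KLMKG21}, so there is no internal argument to compare against. One genuine slip to fix: $r/(1+t)^2$ is \emph{not} bounded on the full support of $\chi$ (take $t$ small and $r$ large), only on the support of $\chi'$ — your argument nonetheless goes through because $S$ (and likewise $\Omega_{0i}$) only sees the cutoff through its derivative $\chi'$, whose support forces $r\sim 1+t$, but the sentence as written is false. You should also make explicit that the Neumann series for $A^{-1}$, and the final transfer to $\dt^I\z^J$, rely on $M$ being small (consistent with the standing hypothesis $M\leq\varepsilon$): in particular, expanding $\z - Z$ in the $\d_{\a'}$ basis gives coefficients of size $O(M\ln r)$ rather than $O(Mr^{-1}\ln r)$ — the homogeneous weight $t$ or $\x^i$ in $\z$ eats one power of $r^{-1}$ — which are bounded on $\{r\gtrsim 1+t\}$ but not of ``transition-matrix size,'' so the reduction produces products like $(M\ln r/r)\cdot Mr^{-1-|I|}\ln r$ that must be observed to be $O(M^2)$ and hence absorbed.
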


\begin{Prop}
For $\mt_0$ and $\mt^0$ we have the following estimate
\begin{equation}\label{dm0higherorder}
    |\dt^I \z^J \mt_0|+|\dt^I \z^J \mt^0|\lesssim \frac{M\ln(1+r)}{(1+t+r)^{1+|I|}}.
\end{equation}
\end{Prop}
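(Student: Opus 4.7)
The plan is to decompose
\[
\mt_0^\mn = \mh^\mn + (\mt^\mn - \mh^\mn) + \widetilde{H}_0^\mn,
\]
and analogously for $\mt^0_\mn$, reducing the claim to bounds on the two small pieces $\mt - \mh$ and $\widetilde{H}_0$ (the constant part $\mh$ being annihilated by any derivative in the estimate; I read the statement as controlling the deviation from $\mh$). The key geometric observation is that both small pieces vanish in the interior $r \leq (t+1)/4$: the cutoff $\chi(r/(t+1))$ in $\widetilde{H}_0$ is zero there, and by Proposition \ref{estimatetransitionmatricesprop} so are the correction matrices $A^\a_{\a'} - \delta^\a_{\a'}$. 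Hence both pieces are supported in $r \gtrsim t$, where $1 + r \sim 1 + t + r$ and $M/r \lesssim M\ln(1+r)/(1+t+r)$, already producing the right weight at the level $|I| = |J| = 0$.

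For the Schwarzschild piece, I would use $\widetilde{H}_0^\mn = A^\mu_{\a'} A^\nu_{\b'}\,\chi(r/(t+1))(M/r)\delta^{\a'\b'}$. Each coordinate derivative $\dt$ applied to $\chi(r/(t+1))$ or $1/r$ yields one further factor of $(1+t+r)^{-1}$, while the Minkowski commuting vector fields $\z$---translations, rotations, boosts, and the scaling in $(\widetilde{t},\x)$---are homogeneous of weight zero with respect to this asymptotic scaling and preserve the functional form of $\chi(r/(t+1))(M/r)$. The $A$-factors contribute through Proposition \ref{estimatetransitionmatricesprop}, with the additional $\ln(1+r)$ there being harmless (it is already present in the stated bound).

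For the coordinate-change piece, I would use the transformation law $\mt^\mn = A^\mu_{\a'} A^\nu_{\b'} m^{\a'\b'}$ to expand
\[
\mt^\mn - \mh^\mn = (A^\mu_{\a'} - \delta^\mu_{\a'})\, A^\nu_{\b'}\, m^{\a'\b'} + \delta^\mu_{\a'}(A^\nu_{\b'} - \delta^\nu_{\b'})\, m^{\a'\b'},
\]
and invoke Proposition \ref{estimatetransitionmatricesprop}, which directly supplies $|\dt^I \z^J(A - I)| \lesssim M\ln(1+r)/r^{1+|I|}$ uniformly in $|J|$; the Leibniz rule then closes the bound. The lower-index estimate for $\mt^0_\mn = \mt_\mn + \hti^0_\mn$ follows by the same argument, now using the bound on $A^{\a'}_\a - \delta^{\a'}_\a$ supplied by the same proposition.

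The main obstacle is purely bookkeeping: one must distribute $\dt^I \z^J$ across a product of several factors (two transition matrices, the cutoff $\chi(r/(t+1))$, and $1/r$) and verify that no allocation of derivatives loses a factor of $(1+t+r)^{-1}$. This is handled by the uniformity in $|J|$ built into Proposition \ref{estimatetransitionmatricesprop}---arbitrary mixed $\dt^I \z^J$-derivatives of $A - I$ obey the same pointwise bound---together with the observation that the vector fields $\z$ act benignly on the scaling functions $\chi(r/(t+1))$ and $1/r$, so the Leibniz sum is closed within the claimed weight.
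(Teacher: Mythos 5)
Your proof is correct and takes essentially the same approach as the paper: both reduce to the explicit form of the Schwarzschild piece $h^0$ together with the transition-matrix estimates of Proposition \ref{estimatetransitionmatricesprop}, using the observation that all the non-trivial pieces are supported where $r\gtrsim 1+t$, so that $r^{-1}\lesssim (1+t+r)^{-1}$ and the $\z$'s cost nothing while each $\dt$ costs a full power. The paper does its bookkeeping by pulling everything back to the original coordinates and checking that $\d^I Z^J$ of $m_{\a'\b'}A^{\a'}_\mu A^{\b'}_\nu + h^0_{\a'\b'}A^{\a'}_\mu A^{\b'}_\nu$ obeys the bound, then invoking the ``moreover'' clause of Proposition \ref{estimatetransitionmatricesprop} to pass to $\dt,\z$; you organize the same computation by splitting off the constant $\mh$ and treating $\mt-\mh$ and $\widetilde H_0$ separately directly in the new coordinates. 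The content is identical, and your explicit note that the statement must be read as controlling the deviation from $\mh$ (the bound is false for the bare components at $|I|=|J|=0$) is a fair reading which the paper leaves implicit.
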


\begin{proof}
By definition $\d^I Z^J (\mt_\mn+\hti^0_\mn)=\d^I Z^J (m_{\a'\b'}A_\mu^{\a'}A_\nu^{\b'}+h^0_{\a'\b'}A_\mu^{\a'}A_\nu^{\b'})$. Now in view of the explicit expression of $h^0$, we get the estimate $|\d^I Z^J \mt^0|\lesssim \frac{M\ln(1+r)}{(1+t+r)^{1+|I|}}$ using Proposition \ref{estimatetransitionmatricesprop}. It is also not difficult here to show that same estimate hold if we replace $\d$ and $Z$ by $\dt$ and $\z$. The estimate for $\mt_0$ follows similarly.
\end{proof}

\begin{Prop}Let $\gh$ be the Christoffel symbol of the metric $\mt$ in the new coordinates, i.e.\ $\gh_{\mu\nu}^{\ \ \,\rho}=\frac 12 \mt^{\lambda\rho}(\dt_\mu \mt_{\nu\lambda}+\dt_\nu \mt_{\lambda \mu}-\dt_\lambda \mt_{\mu\nu})$. Then
\begin{equation}\label{boundforChristoffelhat}
    |\z^I \gh|\lesssim \frac{M(1+\ln (1+r))}{(1+r+t)^2}.
\end{equation}
We also have
\begin{equation}\label{decayofSchwarzschildpart}
    |\z^I(\mt^{\mu\nu}-\mh^{\mu\nu})|\lesssim \frac{M(1+\ln(1+r))}{1+r+t};
\end{equation}
and same estimate follows if we consider $\mt_{\mu\nu}$ and $\mh_{\mu\nu}$.
\end{Prop}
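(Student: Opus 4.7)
The plan rests on two facts already in hand: in the tilde coordinates the ``flat'' Minkowski metric $\mh$ has constant components, so every $\dt$-derivative of $\mh$ (and every Christoffel symbol built from $\mh$) vanishes; and by the preceding proposition the transition matrices $A^\a_{\a'}$, $A^{\a'}_\a$ already differ from the identity only in the region $r \geq (1+t)/4$, with explicit bounds.

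First I would prove the middle statement \eqref{decayofSchwarzschildpart}. Writing
\[
\mt^{\mu\nu}-\mh^{\mu\nu} = (A^\mu_{\mu'} A^\nu_{\nu'} - \delta^\mu_{\mu'}\delta^\nu_{\nu'})\, m^{\mu'\nu'}
= (A^\mu_{\mu'}-\delta^\mu_{\mu'})A^\nu_{\nu'} m^{\mu'\nu'} + \delta^\mu_{\mu'}(A^\nu_{\nu'}-\delta^\nu_{\nu'}) m^{\mu'\nu'},
\]
I apply $\z^I$, distribute by Leibniz, and invoke Proposition~\ref{estimatetransitionmatricesprop} to bound each factor by $M\ln(1+r)/r$ (and $A$ itself by a constant). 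Since the whole expression vanishes when $r/(1+t)\le 1/4$, on its support $r\sim 1+t+r$, so $1/r\lesssim 1/(1+t+r)$ and we obtain the claimed bound. The argument for $\mt_{\mu\nu}-\mh_{\mu\nu}$ is identical after replacing the $A$'s by the inverse transition matrices.

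For the Christoffel estimate \eqref{boundforChristoffelhat}, I use $\dt_\mu \mh_{\nu\lambda}=0$ to rewrite
\[
\gh_{\mu\nu}^{\ \ \rho}=\frac 12 \mt^{\lambda\rho}\bigl(\dt_\mu(\mt_{\nu\lambda}-\mh_{\nu\lambda})+\dt_\nu(\mt_{\lambda\mu}-\mh_{\lambda\mu})-\dt_\lambda(\mt_{\mu\nu}-\mh_{\mu\nu})\bigr).
\]
Thus $\gh$ is a product of a bounded factor $\mt^{\lambda\rho}$ (since $|\mt^{-1}-\mh^{-1}|$ is small by the previous step) and a $\dt$-derivative of $\mt-\mh$. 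A direct computation (or, equivalently, reusing Proposition~\ref{estimatetransitionmatricesprop} with $|I|=1$ to bound $\dt A$ by $M(1+\ln(1+r))/r^2$) shows that the $\dt$-derivative gains an extra factor $1/(1+t+r)$ on the support, yielding the size $M(1+\ln(1+r))/(1+t+r)^2$. Extending to $\z^I \gh$ proceeds by Leibniz: each commuting vector field $\z$ applied to the transition-matrix factors leaves their size unchanged (again by Proposition~\ref{estimatetransitionmatricesprop}), while $\z$ applied to $\dt(\mt-\mh)$ commutes with $\dt$ up to a constant-coefficient combination of $\dt$'s, so no decay is lost.

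The only genuinely computational step is verifying that $\dt$ applied to a quantity of size $M\ln(1+r)/r$ supported in $r\gtrsim 1+t$ produces a quantity of size $M(1+\ln(1+r))/(1+t+r)^2$; this is immediate from $\dt\ln r\sim 1/r$ and $\dt(1/r)\sim 1/r^2$. No other estimate in the argument is delicate, so I expect no serious obstacle beyond bookkeeping the Leibniz expansions.
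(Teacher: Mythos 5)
Your argument is correct and is essentially the approach the paper intends: the paper's own proof of this proposition is a one-liner ("follows from Proposition \ref{estimatetransitionmatricesprop}"), and you have simply made explicit the bookkeeping — expanding $\mt-\mh$ and $\gh$ in terms of the transition matrices, using that $\mh$ has constant components, and using the support condition $r\gtrsim 1+t$ to convert $1/r$-decay into $1/(1+t+r)$-decay. Nothing further is needed.
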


%The second estimate here does not give a satisfactory decay but still helps to make cubic terms negligible.

\begin{proof}
This also follows from Proposition \ref{transitionmatrices}.
\end{proof}

The following lemma in \cite{KLMKG21} shows that $\mt_0^{\mu\nu}\sim (1+\frac {M\cht}r)\mh^{\mu\nu}$.

\begin{lem}\label{m0lemma}
We have the decomposition
\begin{equation}\label{m0}
    \mt_0^{\alpha\beta}=\kappa_0\mh^{\alpha\beta}+\kappa_1 \slashed S^{\alpha\beta}+\kappa_2 \omega^\alpha\omega^\beta+\kappa_3 i_+^{\alpha\beta},
\end{equation}
where 
$$\textstyle\kappa_0=1+\frac {M\cht}r,\ \k_1=2\cht \frac{M\ln r}r\chi_1(\frac r{t+1},\frac{M\ln r}r,\frac Mr)-2\cht\frac Mr,\ \k_2=(\frac {M\cht}r)^2(1+\frac {M\cht}r),\ \k_3=\frac{M\ln r}r,$$
and $i_+^{\mu\nu}=\cht'(\frac r{t+1})\chi^{\mu\nu}(\frac r{t+1},\frac{M\ln r}r,\frac Mr,\omega)$ is zero close to the light cone. Moreover, if we let $\k_0^I$ be the function satisfying $\lz^I(\k_0 \mh)=\k_0^I\mh$, then
\begin{equation}
    (\lz^I \mt_0)^{\mu\nu}=\k_0^I\mh^{\mu\nu}+\slashed S^{I\, \mu\nu}+(\lz^I R)^{\mu\nu},
\end{equation}
where \begin{equation}
\begin{split}
    &\textstyle\k_0^I\lesssim 1,\ |\slashed S^I_{\Ut\widetilde V}|\lesssim \frac{M\ln(1+t+r^*)}{1+t+r^*}\chi_1(\frac r{t+1}),\ |\slashed S^I_{\lt\Ut}|\lesssim \frac{(1+|q^*|)\ln(1+t+r^*)}{(1+t+r^*)^2}\chi_1(\frac r{t+1}),\\ &\textstyle|(\lz^I R)_{\Ut\widetilde V}|\lesssim \frac{M^2(\ln(1+r^*))^2}{(1+t+r^*)^2}\chi_1(\frac r{t+1})+\frac{M\ln(1+r)}{1+t+r^*}\chi_2'(\frac r{t+1}),
    \end{split}
\end{equation}
where $\chi_1$ and $\chi_2'$ are functions bounded by 1 with support on $[\frac 14,\infty)$ and $[\frac 14,\frac 34]$ respectively.
\end{lem}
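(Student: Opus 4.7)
The plan is a direct computation combined with a tensor decomposition in the new coordinates. Starting from $\mt_0^{\a\b}=A^\a_{\a'}A^\b_{\b'}(m^{\a'\b'}+H_0^{\a'\b'})$, I would substitute the explicit expressions in \eqref{transitionmatrices} for the transition matrices. In the interior $r/(t+1)<1/4$ everything is trivial since $A$ is the identity and $H_0=0$. Where $\chi\equiv 1$ and $\chi'\equiv 0$, the spatial block reduces to $A^i_j=\delta^i_j+(M\ln r/r)(\delta^i_j-\omega^i\omega_j)+(M/r)\omega^i\omega_j$ while the remaining components of $A$ are trivial; contracting two copies against $\delta^{k\ell}$ and using the orthogonality of $\omega$ to the spherical projection splits the result cleanly into an $\mh^{\a\b}$-piece, a spherical piece, and a radial $\omega^\a\omega^\b$-piece. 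Multiplying by the $(1+M\cht/r)$ coming from $m^{\a'\b'}+H_0^{\a'\b'}$ then produces the stated scalar coefficients: $\kappa_0=1+M\cht/r$ in front of $\mh$, the cubic $\kappa_2=(M\cht/r)^2(1+M\cht/r)$ arising as $((1+M/r)^2-1)(1+M\cht/r)$ in the radial direction, and $\kappa_1$ in front of the spherical tensor. In the transition zone the derivative of $\chi$ generates the extra piece $\kappa_3 i_+^{\a\b}$, whose support is automatically confined to $\frac14\leq r/(t+1)\leq \frac34$.

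For the iterated Lie-derivative claim, the central observation is that $\lz\mh=0$ and that the tensors $\omega^\a\omega^\b$, $\slashed S^{\a\b}$ and $i_+^{\a\b}$ are mapped to linear combinations of tensors of the same structural type under rotations, boosts, and the scaling $\widetilde S$, with coefficients of comparable size. Consequently $\lz^I \mt_0$ inherits the decomposition \eqref{m0}, and the work reduces to controlling the action of $\z$ on the scalar coefficients $\kappa_i(t,r,\omega)$ and on the cutoff $\chi(r/(t+1))$. Each application of $\z$ either brings out an extra factor $(1+t+r^*)^{-1}$ by the homogeneity arguments already used in Proposition \ref{estimatetransitionmatricesprop} and \eqref{dm0higherorder}, or produces a factor $\chi'$ supported in $\frac14\leq r/(t+1)\leq\frac34$; induction on $|I|$ then yields the stated bounds on $\k_0^I$, on the two contractions of $\slashed S^I$, and on the remainder $(\lz^I R)^{\mu\nu}$.

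The subtlest point, and the step I expect to be the main obstacle, is the improved decay $(1+|q^*|)/(1+t+r^*)^2$ for $|\slashed S^I_{\lt\Ut}|$ instead of the naive $1/(1+t+r^*)$. This gain does not come from mere counting of factors of $M/r$ but from a cancellation inside the $\lt\Ut$-contraction between the tangential and the radial contributions to $\slashed S^{\a\b}$; extracting it requires Taylor-expanding $\cht(r/(t+1))$ and the auxiliary function $\chi_1$ at the light cone in powers of $q^*/(t+1)$ so that the leading terms in the two contributions cancel and leave a remainder proportional to $q^*$. Once this cancellation is made explicit, the estimate reduces to the elementary bounds $|\z^I(M\ln r/r)|\lesssim M\ln(1+r)/(1+t+r^*)^{1+|I|}$, and the rest is routine bookkeeping.
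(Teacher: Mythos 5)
Your computation of the decomposition \eqref{m0} follows the paper's approach: substitute the transition matrices \eqref{transitionmatrices}, split the spatial block of $\mt_0^{ij}$ into $\delta^{ij}$, $\delta^{ij}-\omega^i\omega^j$, and $\omega^i\omega^j$ parts, and collect scalar coefficients, with the transition region covered by $\chi'$ and absorbed into $\kappa_3 i_+^{\alpha\beta}$. That part is essentially fine.

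The serious gap is in the improved bound $|\slashed S^I_{\lt\Ut}|\lesssim (1+|q^*|)(1+t+r^*)^{-2}\ln(\cdots)$. You propose extracting the extra factor $q^*$ by Taylor-expanding $\cht(r/(t+1))$ and the auxiliary $\chi_1$ near the light cone so that leading terms cancel. This does not work: in the region near the light cone, $\cht\equiv 1$ and $\cht'\equiv 0$, so there is nothing to expand, and no cancellation of this sort occurs. The actual mechanism is structural: one writes
\begin{equation*}
\slashed S^{\mu\nu}=r^{-2}\sum_{i<j}\widetilde\Omega_{ij}^{\mu}\widetilde\Omega_{ij}^{\nu},
\end{equation*}
observes that $\lz^J\widetilde\Omega_{ij}=[\z^J,\widetilde\Omega_{ij}]$ is, by iterating the commutation relations, again a constant-coefficient combination of commuting vector fields $\z$, and then uses that every commuting vector field satisfies $|\z_\lt|\lesssim 1+|q^*|$ in the region $r/(t+1)>1/4$ (e.g.\ $(\widetilde\Omega_{ij})_\lt=0$, $(\widetilde\Omega_{0i})_\lt=-q^*\omega_i$, $\widetilde S_\lt=q^*$). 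The $q^*$ gain is a fact about the vector fields, not the cutoff. Relatedly, your remark that $\omega^\alpha\omega^\beta$, $\slashed S^{\alpha\beta}$, $i_+^{\alpha\beta}$ are mapped by $\lz$ to "linear combinations of tensors of the same structural type" is not accurate for the boosts and scaling (a Lie derivative does not in general preserve the form $\omega^\alpha\omega^\beta$); that is precisely why the paper needs the $\widetilde\Omega_{ij}$ representation of $\slashed S$ rather than a closed tensor algebra. Without that identity, your argument as written cannot produce the stated $(1+|q^*|)$ factor.
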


\begin{proof}
Use \eqref{transitionmatrices} we get
\begin{equation*}
    \mt_0^{00}=-1-\textstyle\frac{M\cht}{r},\ \ \ \mt_0^{0i}=\textstyle \frac 1r(\frac r{1+t})^2(-1-\frac{M\cht}{r})\omega_i\cht'M\ln r.
\end{equation*}
The expression of $\mt_0^{00}$ corresponds to $\k_0 \mh$, and $\mt_0^{0i}$ corresponds to $\k_3 i_+$. For $\mt_0^{ij}=m_0^{\a'\b'}A^i_{\a'}A^j_{\b'}$, we have
$$\mt_0^{ij}=m_0^{00} A_0^i A_0^j+(1-\textstyle\frac{M\cht}{r})\delta^{kl}A^i_k A^j_l,$$
and the second term equals by \eqref{transitionmatrices}
\begin{equation*}
    \begin{split}
        (1-\textstyle\frac{M\cht}{r})&(1+\textstyle\frac{M\cht}r)^2\delta^{ij}+(1-\textstyle\frac{M\cht}r)\left(2(1+\textstyle\frac{M\cht}r)(\textstyle\frac{(M\ln r-M)\cht}{r})+(\textstyle\frac{(M\ln r-M)\cht}{r})^2\right)(\delta^{ij}-\omega^i\omega^j)\\
        &=(1-\textstyle\frac{M\cht}{r})(1+\textstyle\frac{M\cht}r)^2\delta^{ij}+(1-\textstyle\frac{M\cht}r)\left((1+\textstyle\frac {M\cht\ln r}{r})^2-(1+\frac{M\cht}r)^2\right)(\delta^{ij}-\omega^i\omega^j)\\
        &=\k_1(\delta^{ij}-\omega^i\omega^j)+(1+\textstyle\frac {M\cht}r)(\delta^{ij}-\omega^i\omega^j)+(1-\textstyle\frac{M\cht}r)(1+\frac{M\cht}r)^2\omega^i\omega^j\\
        &=\k_0 \delta^{ij}+\k_1 \slashed{S}^{ij}+(\textstyle\frac{M\cht}{r})^2(1+\textstyle\frac{M\cht}r)\omega^i\omega^j.
    \end{split}
\end{equation*}
The part $m_0^{00}A^i_{0}A^j_{0}$ is a term with $\cht'$ in view of \eqref{transitionmatrices}, and appears in $\k_3 i_+$.

For higher order, we first notice that $\lzh^I(\k_0 \mh)^{\mu\nu}=(\lzh^I\k_0)\mh^{\mu\nu}$, so $\lz^I(\k_0\mh)^{\mu\nu}=\k_0^I \mh^{\mu\nu}$ for some bounded function $\k_0^I$, using that $\lz^I \mh^{\mu\nu}=c_{\z}\mh^{\mu\nu}$. For spherical tangent part, we use the identity
$$\slashed{S}^{\mu\nu}=r^{-2}\sum_{i<j}\widetilde{\Omega}_{ij}^\mu\widetilde\Omega_{ij}^\nu,$$
so applying $\lz^I$ on $\k_1 \slashed{S}^{\mu\nu}$ we either have the derivative on $\k_1 r^{-2}$, which is straightforward to bound, or on $\widetilde{\Omega}_{ij}^\mu\widetilde\Omega_{ij}^\nu$, which gives factors like $\lz^I \widetilde{\Omega}_{ij}^\mu$. Recall that $\lz X^\mu=[\z,X]$, so using iteratively $[\z_1,\z_2]=\sum_{|I|=1}c_{12I}\z^I$ where $c_{12I}$ are $-1$, $0$ or $1$, we obtain the estimate when $\frac r{t+1}>1/4$ that $|(\lz^J \widetilde{\Omega}_{ij})_\lt|\lesssim 1+|t-r^*|$, as all commuting vector fields satisfiy the bound $|\z_\lt|\lesssim 1+|t-r^*|$ in this region,
and the estimate follows.
\end{proof}

\subsection{Generalized wave coordinate condition}
Since the wave coordinate condition holds for $g$ in the original coordinates, we have for a scalar field $\phi$ that
$$\gt^{\mu\nu}\dt_\mu\dt_\nu\phi-\gt^{\mu\nu}\gat_{\mu\nu}^{\ \ \rho}\dt_\rho\phi=\Box_{\gt}\phi=\Box_g \phi=g^{\a'\b'}\d_{\a'}\d_{\b'}\phi.$$
Here $\gat$ is the Christoffel symbol associated with $\gt$ in the new coordinates. We also have the relation using the transition matrix, noticing that the Levi-Civita connection associated with the metric $m$ (hence also $\mt$) is $\d$:
$$A^{\a'}_\mu A^{\b'}_\nu \d_{\a'}\d_{\b'}\phi=\dt_\mu\dt_\nu \phi-\gh_{\mu\nu}^{\ \ \,\rho}\dt_\rho\phi.$$
Combining the two equations we get
\begin{equation}\label{generalizedwcc}
    \gt^{\mu\nu}\gat_{\mu\nu}^{\ \ \,\rho}=\gt^{\mu\nu}\gh_{\mu\nu}^{\ \ \,\rho}.
\end{equation}
This is the generalized wave coordinate condition which we will use. This is as good as the wave coordinate condition, in the sense that the right hand side is only dependent on the metric itself, not on its derivatives.

\subsection{Equations in generalized wave coordinates}
As in \cite{KLMKG21}, the reduced Einstein equation in the new coordinates becomes
\begin{equation}
    \gt^{\mu\nu}\dh_\mu \dh_\nu \hti_{\rho\sigma}=\widetilde{F}_{\rho\sigma}(\gt)(\dh \hti,\dh\hti)+\widetilde{{T}}_{\rho\sigma}-\frac 12 \mathrm{tr}\, T \gt_{\rho\sigma},
\end{equation}
where $\dh$ is the covariant derivative associated with $\mt$.
We want to establish the equation of $\hi$. Recall that $\frac 1r$ solves the Minkowskian wave equation, so $\mt^\mn \dh_\mu \dh_\nu \hti^0$ is nonzero only when $\chi'$ is nonzero. We have
\begin{equation}\label{equationofhtoh1}\textstyle
    \gt^{\mu\nu}\dh_\mu \dh_\nu \hti_{\rho\sigma}=\gt^{\mu\nu}\dh_\mu \dh_\nu \hti^1_{\rho\sigma}+M\chi'_{\rho\sigma}(\frac{r}{1+t},\frac Mr,\frac{M\ln r}r,\frac xr)r^{-3}+M\chi_{\rho\sigma\eta\xi}(\frac{r}{1+t},\frac Mr,\frac{M\ln r}r,\frac xr)r^{-3}\Hti_1^{\eta\xi},
\end{equation}
\begin{multline}\label{RHSinhtoh1}\textstyle
    \widetilde{F}_{\rho\sigma}(\gt)(\dh \hti,\dh\hti)=\widetilde{F}_{\rho\sigma}(\gt)(\dh \hti^1,\dh\hti^1)+M^2 \chi_{\rho\sigma}(\frac{r}{1+t},\frac Mr,\frac{M\ln r}r,\frac xr,\gt)r^{-4}\\
    \textstyle+M\chi_{\rho\sigma}^{\zeta\eta\xi}(\frac{r}{1+t},\frac Mr,\frac{M\ln r}r,\frac xr,\gt)r^{-2}\dh_\zeta \hti^1_{\eta\xi}.
\end{multline}
Here $\chi_{\a\b}(s,\cdot)$, $\chi_{\a\b\y\rho}(s,\cdot)$ and $\chi_{\a\b}^{\mu\nu\rho}(s,\cdot)$ are functions supported in $\{s\geq 1/4\}$, and $\chi'_{\a\b}$ is supported in $\{1/4\leq s\leq 3/4\}$.

These error terms are clearly in control and we will not go over estimating them. We also have
$$\dt^I \gh \sim M\chi r^{-2-|I|}\ln r,$$
This helps us change from $\gt^{\mu\nu}\dh_\mu \dh_\nu \hti^1_{\rho\sigma}$ to $\gt^{\mu\nu}\dt_\mu\dt_\nu \hti_{\rho\sigma}$, and $\widetilde{F}_{\rho\sigma}(\gt)(\dh \hti^1,\dh\hti^1)$ to $\widetilde{F}_{\rho\sigma}(\gt)(\dt \hti^1,\dt \hti^1)$, with the difference error terms which are in control. Therefore using the reduced wave operator $\widetilde\Box_{\gt}=\gt^{\mn} \dt_\mu\dt_\nu$, the equation of $\hi$ can be written as
\begin{equation}\label{zeroordereqforh1}
    \widetilde{\Box}_{\gt}\, \hi_{\mu\nu}=\widetilde{F}_{\mu\nu}(\gt)(\dt \hi,\dt\hi)+\widetilde{T}_{\mu\nu}+{R}^{mass}_{\mu\nu}+{R}^{cov}_{\mu\nu},
\end{equation}
where the last two terms are in good control, and we will explain the notation in next subsection.
%so we get 
%$$\gt^{\mu\nu}\dt_\mu\dt_\nu \hti_{\rho\sigma}=\widetilde{F}_{\rho\sigma}(\gt)(\dt \hti^1,\dt \hti^1)+...$$ 

%Covariant error terms: like $Mr^{-2}\chi \dt h^1$, or $Mr^{-3}\chi h^1$.

For the spinor field $\psi$ we have $\Box_{\gt}\psi=\gt^{\mu\nu}\dt_{\mu}\dt_\nu \psi-\gt^{\mu\nu}\gh_{\mu\nu}^{\ \ \rho} \dt_\rho \psi$ by the generalized wave coordinate condition. The second here on the right hand side is a covariant error term. Then the equation of $\psi$ can be written as
\begin{equation}\label{zeroorderwaveeqforpsi}
    \widetilde{\Box}_{\gt}\psi=\widetilde F^\psi(\gt)(\omegat,\dt\psi,\psi)+{R}^{cov}.
\end{equation}
For terms involving spin connection terms, we will deal with them in next section.

\subsection{Error terms}
There are three types of error terms that behave much better and can be ignored in the analysis.

\paragraph{Mass Error}Associated with the mass, produced when substracting the Schwarzschild part. The terms with $M$ in (\ref{equationofhtoh1}) and (\ref{RHSinhtoh1}) are mass error terms. We denote these terms by $R^{mass}$.
\paragraph{Covariant Error}The error terms when we change from $\dh$ to $\dt$. This is because we have good estimates of the related Christoffel symbol $\gh$. We denote these terms by $R^{cov}$.
\paragraph{Cubic Error}Terms with an extra factor of decay. The extra factor makes the term much better. This may include an extra factor of $\psi$, $\gt-\mh$, derivatives of them, and terms comparable with them (e.g. derivatives of tetrad, as we will see). We denote these terms by $R^{cube}$.

All these terms remain good after we apply vector fields, and we will simply add the index of vector fields on it, e.g.\ $R^{mass\ I}$.

For quadratic terms that contain at least one tangential derivative, such as the standard null forms, we have much better control and we sometimes denote it by $R^{tan}$.

\section{Choice of tetrad}\label{constructionoftetradsection}

The Gamma matrices adapted to the metric are determined by the tetrad by $\gamma^\mu=\bar{\gamma}^a (e_a)^\mu$. The choice of a tetrad, along with our choice of coordinates, forms a choice of gauge for our system. We present one choice of the tetrad here. We remark that there are other choices that may relate some components of the tetrad to some components of the metric, and we leave the details to Appendix \ref{appendixtetrad}.

For the sake of generality, we prefer to state this in the original notation in $(\mathbb{R}^4,g)$.

\begin{lem}[Gram-Schmidt orthogonalization]\label{GS}
We can do the standard Gram-Schmidt orthogonalization to the set $\{\d_1,\d_2,\d_3,\d_0=\d_t\}$ with respect to the metric $g$ to get a tetrad $\{e_a\}$, and the following estimates hold, where $h_{\mu\nu}=g_{\mu\nu}-m_{\mu\nu}$ is the difference between $g$ and the Minkowski metric:
\begin{enumerate}
    \item $|(e_a)^\mu-(\d_a)^\mu|\lesssim |h|+O(h^2)$;
    \item $|\d_\a (e_a)^\mu|\lesssim |\d_\a h|+O(h\cdot \d h)$;
    \item The above two estimates also hold with vector fields: $|Z^I (e_a)^\mu|\lesssim |Z^I h|+\sum_{J+K=I}O(Z^J h\cdot Z^K h)$, and $|\d_\a Z^I (e_a)^\mu|\lesssim |\d_\a Z^I h|+\sum_{J+K=I} O(\d Z^J h\cdot Z^K h)$.
    \item If $g$ satisfies the reduced Einstein's equation, then the tetrad satisfies the equation $\widetilde\Box_{g}(e_a)^\mu=F_a^\mu(g,\d g,\psi,\d \psi)$ for functions $F_a^\mu$ at least quadratic;
    %In particular, we have $|g^{\a\b} \d_\a \d_\b Z^I (e_a)^\mu|\lesssim |g^{\a\b} \d_\a \d_\b Z^I h|$ plus quadratic terms.
    \item Same estimates hold if one replaces $e_a^\mu$ by $\gamma^\mu$.
\end{enumerate}
\end{lem}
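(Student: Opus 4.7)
The plan is to realize each tetrad component $(e_a)^\mu$ explicitly as a smooth function of the metric components $g_{\mu\nu}$, so that every stated estimate reduces to a chain-rule computation together with the reduced Einstein equation.

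First I would carry out Gram--Schmidt on the ordered frame $(\d_1,\d_2,\d_3,\d_0)$ relative to $g$: set $\widetilde{e}_1=\d_1$, recursively define $\widetilde{e}_k=\d_k-\sum_{j<k}\frac{g(\d_k,e_j)}{g(e_j,e_j)}\, e_j$ for $k=2,3$, then $\widetilde{e}_0=\d_0-\sum_{j=1}^3\frac{g(\d_0,e_j)}{g(e_j,e_j)}\, e_j$, and finally normalize $e_a=\widetilde{e}_a/\sqrt{|g(\widetilde{e}_a,\widetilde{e}_a)|}$. Since $g=m+h$ is close to Minkowski, the three spatial denominators stay positive and the timelike one stays negative, so the procedure is well-defined and yields an orthonormal tetrad with $g(e_a,e_b)=m_{ab}$. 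The key structural fact is that each $(e_a)^\mu$ is then a smooth function $f_a^\mu(g_{\cdot\cdot})$ of the metric components with $f_a^\mu(m)=\delta_a^\mu$.

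Items (1)--(3) then follow from Taylor expansion. For (1), $(e_a)^\mu-\delta_a^\mu=f_a^\mu(m+h)-f_a^\mu(m)=O(h)$. For (2), the chain rule gives $\d_\alpha(e_a)^\mu=(\d f_a^\mu/\d g_{\rho\sigma})(g)\, \d_\alpha h_{\rho\sigma}$, where the coefficient equals a bounded constant at $h=0$ plus an $O(h)$ correction. The vector-field version (3) follows by an iterated Fa\`a di Bruno expansion: applying $Z^I$ to $f_a^\mu(g)$ produces $(\d f_a^\mu/\d g)\, Z^I h$ plus products $Z^{J_1}h\cdots Z^{J_\ell}h$ indexed by partitions of $I$, and the differentiated version contributes one extra $\d h$ factor.

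For (4), I would compute $\widetilde{\Box}_g (e_a)^\mu=g^{\alpha\beta}\d_\alpha\d_\beta f_a^\mu(g)$ by chain rule, obtaining
\begin{equation*}
\widetilde{\Box}_g (e_a)^\mu=\frac{\d f_a^\mu}{\d g_{\rho\sigma}}\, \widetilde{\Box}_g g_{\rho\sigma}+\frac{\d^2 f_a^\mu}{\d g_{\rho\sigma}\, \d g_{\gamma\delta}}\, g^{\alpha\beta}(\d_\alpha g_{\rho\sigma})(\d_\beta g_{\gamma\delta}).
\end{equation*}
Substituting the reduced Einstein equation $\widetilde{\Box}_g g_{\rho\sigma}=F_{\rho\sigma}(g)(\d g,\d g)+T_{\rho\sigma}-\tfrac12 g_{\rho\sigma}\mathrm{tr}_g T$ rewrites the right-hand side as a sum of terms each either quadratic in $\d g$ or quadratic in $(\psi,\d\psi)$ through $T$, so $F_a^\mu$ is at least quadratic in $(g,\d g,\psi,\d\psi)$ as claimed. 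Item (5) is then immediate from $\gamma^\mu=\gb^a (e_a)^\mu$ with constant matrices $\gb^a$: estimates (1)--(4) transfer verbatim, with $\gb^a\delta_a^\mu$ playing the role of the Minkowski reference.

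The main subtlety is conceptual rather than computational: one might initially worry about a derivative loss in (4), since computing $\widetilde{\Box}_g$ of a tetrad built from $g$ appears to demand control of $\d^2 g$ beyond what the reduced Einstein system directly provides; the chain rule, however, repackages those second derivatives precisely into $\widetilde{\Box}_g g$, which the equation supplies algebraically in terms of $\d g$ and matter. A secondary bookkeeping point is verifying that the Gram--Schmidt denominators stay bounded away from zero, which reduces to the smallness of $h$ assumed throughout the paper.
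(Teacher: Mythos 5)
Your proof is correct and takes essentially the same route as the paper's: both construct the tetrad by Gram--Schmidt on $(\d_1,\d_2,\d_3,\d_0)$, observe that each $(e_a)^\mu$ is a smooth function of the metric components with the Minkowski frame as its value at $h=0$, and derive items (1)--(3) by Taylor expansion and chain rule. The only real difference is presentational: the paper writes out the explicit Gram--Schmidt formulas term by term and then says the estimates ``follow from the expression,'' whereas you package the same fact abstractly as $(e_a)^\mu=f_a^\mu(g)$ and spell out the chain-rule identity for $\widetilde{\Box}_g f_a^\mu(g)$ in item (4), which is actually more explicit than the paper's brief treatment of that point and correctly captures where the reduced Einstein equation is used to repackage second derivatives of $g$.
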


\begin{proof}
We perform a standard Schmidt orthogonalization process to the Minkowski orthonormal frame $\{\d_1,\d_2,\d_3,\d_0=\d_t\}$. We get 
\begin{equation*}
    \begin{split}
        \textstyle e_1&\textstyle=\frac{\d_1}{\sqrt{g_{11}}}= \frac{\d_1}{\sqrt{1+h_{11}}},\\
        \textstyle v_2&\textstyle=\d_2-\mathrm{proj}_{\d_1}(\d_2)=\d_2-\frac{g_{12}}{g_{11}}\d_1,\  e_2=\frac{v_2}{\sqrt{g(v_2,v_2)}}=\frac 1{\sqrt{1+(\frac{h_{12}}{1+h_{11}})^2}} \d_2 -\frac{h_{12}}{1+h_{11}} \frac 1{\sqrt{1+(\frac{h_{12}}{1+h_{11}})^2}}\d_1,\\
        \textstyle v_3&=\textstyle\d_3-\mathrm{proj}_{\d_1}(\d_3)-\mathrm{proj}_{e_2}(\d_3)=\d_3-\frac{h_{13}}{1+h_{11}} \d_1 - \frac{h_{23}-\frac{h_{12}}{1+h_{11}} h_{13}}{1+(\frac{h_{12}}{1+h_{11}})^2} (\d_2-\frac{h_{12}}{1+h_{11}} \d_1)\\
        \textstyle e_3&\textstyle=\frac{v_3}{\sqrt{g(v_3,v_3)}}=\cdots,\\
        \textstyle v_0&\textstyle=\d_t-\mathrm{proj}_{\d_1}(\d_0)-\mathrm{proj}_{e_2}(\d_0)-\mathrm{proj}_{e_3}(\d_0)=\cdots,\ e_0=\frac{v_0}{\sqrt{g(v_0,v_0)}}=\cdots
    \end{split}
\end{equation*}
We omit the precise computation of the last two since it is rather tedious but essentially the same. It is not hard to see using Taylor expansion that the difference between new frame and the original one is comparable to $|h|$. Also, if we take coordinate derivative of frame components, we get the corresponding derivative on the metric, i.e.\ $|\d_\mu (e_a)^\nu| \lesssim |\d_\mu h|+|h||\d h|$. Other estimates also follow from the expression.
\end{proof}
%lemma: $|\dd_U e_b|\lesssim |U h|+O(h^2)$, where $U\in \{L,\underline{L},S_1,S_2\}$.

\paragraph{Tetrad in the new coordinates}
Now consider our new coordinates, %and to improve the term that will appear in the commutator,%
where the Minkowski metric here is $\mh$. %, and we will see that $\k \gt-\mh\sim \Hi$. In view of this, we construct our tetrad using the metric $\k\gt$ and Minkowski metric $\mh$ instead. 
Let $\{\e_a\}$ this tetrad associated with $\gt$, from the way of construction above. Then the properties in Lemma \ref{GS} holds in our coordinates (i.e.\ add tilde to those vectors), with $h$ replaced by $\gt-\mh$. %Then $\{\k^{1/2}e_a\}$ is a tetrad with $\gt$. We will see in Section ( ) that $|(e_a^\mu-(\dt_a)^\mu)\Tt_\mu|$ can be written as combination of $TU$ components of $(\k\gt-\mh)$. This has improved behavior along the light cone which we need. For derivatives we have $\dt_\mu (e_b)^\nu\sim \dt(\k\gt-\mh)+\frac \cht r(\k\gt-\mh)+\frac {\cht'}r(\k\gt-\mh)$. %(In interior the way we construct the tetrad in the standard way, so the second term does not appear)

We have the following estimates for the spin connection coefficients. We will discuss higher order cases in Section \ref{higherorderomega}.

\begin{lem}\label{omeganull}
%(1)$|\omega_{\mu ab}|\lesssim |\dt \gt|+\frac\chi r|\gt-\mh|+O(...)$;
The spin connection coefficients satisfy the following estimate: $$|\omegat_{\mu ab}|\lesssim |\dt \gt|+|\dt (\e_a)|,$$
%(2)$|\lt^\mu \omega_{\mu ab}|\lesssim |\dtb \gt|+|\dt \gt|_{\lt\Tt}+\frac \chi r |\gt-\mh|+O(h\cdot \dt h)$.
Moreover, we have for the $\lt$ components
$$|\lt^\mu \omegat_{\mu ab}|\lesssim |\dtb \gt|+|\dt \gt|_{\lc\Tc}+|\dtb (\e_a)|.$$
\end{lem}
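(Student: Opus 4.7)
The plan is to work from the definition $\omegat_{\mu ab} = \gt(\e_a, \dd_\mu \e_b)$ unpacked in the new coordinates as
\[
\omegat_{\mu ab} = \gt_{\nu\sigma}(\e_a)^\sigma\, \dt_\mu (\e_b)^\nu + \gat_{\mu\rho\sigma}(\e_a)^\sigma (\e_b)^\rho,
\]
where $\gat_{\mu\rho\sigma} := \tfrac12\bigl(\dt_\mu \gt_{\rho\sigma} + \dt_\rho \gt_{\mu\sigma} - \dt_\sigma \gt_{\mu\rho}\bigr)$ is the lowered-index Christoffel symbol of $\gt$. Because $\gt$ stays close to $\mh$ and $\e_a$ stays close to the coordinate frame $\{\dt_0,\dots,\dt_3\}$ (so $|\gt|,|\e_a|\lesssim 1$), the first summand is controlled by $|\dt \e_a|$ and the Christoffel summand by $|\gat|\lesssim |\dt \gt|$, giving the first estimate immediately.

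For the $\lt$-contracted estimate, contracting $\lt^\mu$ with the first term produces $\gt_{\nu\sigma}(\e_a)^\sigma \lt(\e_b)^\nu$, a tangential derivative of $\e_b$, contributing $|\dtb \e_a|$. The delicate contribution is $\lt^\mu \gat_{\mu\rho\sigma}(\e_a)^\sigma (\e_b)^\rho$, which I would handle by exploiting the antisymmetry $\omegat_{\mu ab} = -\omegat_{\mu ba}$ inherited from the constancy of $\gt(\e_a,\e_b) = m_{ab}$. Antisymmetrizing in $\sigma \leftrightarrow \rho$ under the contraction with $(\e_a)^\sigma(\e_b)^\rho$ collapses the Christoffel symbol:
\[
(\gat_{\mu\rho\sigma} - \gat_{\mu\sigma\rho})(\e_a)^\sigma(\e_b)^\rho = \bigl(\dt_\rho \gt_{\mu\sigma} - \dt_\sigma \gt_{\mu\rho}\bigr)(\e_a)^\sigma(\e_b)^\rho,
\]
since the $\dt_\mu \gt_{\rho\sigma}$ contributions cancel by symmetry of $\gt$. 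Contracting with $\lt^\mu$ and commuting it past $\dt$ (the commutator terms $\dt_\rho \lt^\mu\cdot\gt_{\mu\sigma}$ are of order $|\dt\lt||\hti|=O(|\hti|/r)$, using $\dt\mh=0$ in the new coordinates, and are absorbed as cubic errors in the sense of Section 3.3) reduces the task to bounding $(\dt_\rho \gt_{\lt\sigma} - \dt_\sigma \gt_{\lt\rho})(\e_a)^\sigma (\e_b)^\rho$.

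I would then expand $(\e_a)^\sigma, (\e_b)^\rho$ in the null frame $\Uc = \{\lt, \ltb, \widetilde{S}_1, \widetilde{S}_2\}$ with bounded coefficients, and check each pair $(\sigma,\rho)\in\Uc\times\Uc$. In every case, the antisymmetry in $\sigma, \rho$ forces each surviving summand to be either a tangential derivative of $\gt$ (bounded by $|\dtb \gt|$) or a derivative of a $\gt$-component with one $\lt$-index paired with a tangential one (bounded by $|\dt \gt|_{\lc\Tc}$). The \textbf{main obstacle} is the component $\gt_{\lt\ltb}$, which lies outside $\lc\Tc$; it can appear only via $(\sigma,\rho)=(\ltb,\lt)$ and its swap, producing $\dt_\lt \gt_{\lt\ltb} - \dt_\ltb \gt_{\lt\lt}$. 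The first term is saved because $\dt_\lt$ is tangential, and the second involves $\gt_{\lt\lt} \in \lc\lc \subset \lc\Tc$. The cases involving $\widetilde{S}_A$ are analogous, while $(\ltb,\ltb)$ vanishes outright by antisymmetry, yielding the stated bound.
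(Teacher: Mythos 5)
Your overall strategy — expand the Christoffel symbol, observe that the dangerous component $\dt_\ltb\gt_{\lt\ltb}$ cancels by antisymmetry in the last two index slots of $\gat_{\mu\rho\sigma}$, and absorb the remaining terms into $|\dtb\gt|$ and $|\dt\gt|_{\lc\Tc}$ — is the same as the paper's. But two of your intermediate steps are detours, and one of them introduces a term the lemma does not allow.

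First, invoking the antisymmetry $\omegat_{\mu ab}=-\omegat_{\mu ba}$ is unnecessary, and, as written, unclear: you cannot replace $\gat_{\mu\rho\sigma}(\e_a)^\sigma(\e_b)^\rho$ alone by its antisymmetrized version, because that single summand is not antisymmetric in $a,b$ — only the full $\omegat_{\mu ab}$ is. To make the manoeuvre legitimate you would have to antisymmetrize the entire expression, including the tetrad-derivative term. In fact none of this is needed: the Christoffel symbol already has the antisymmetry built in. Writing
\[
\lt^\mu\gat_{\mu\nu}^{\ \ \,\beta}(\e_a)_\beta
=\tfrac12(\e_a)^\alpha\bigl((\lt\gt)_{\nu\alpha}+\dt_\nu\gt_{\lt\alpha}-\dt_\alpha\gt_{\lt\nu}\bigr),
\]
the first term is a tangential derivative, and the second and third are already antisymmetric in $\nu\leftrightarrow\alpha$; taking $\nu=\alpha=\ltb$ gives $\ltb\gt_{\lt\ltb}-\ltb\gt_{\lt\ltb}=0$, which is exactly the cancellation you found.

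Second, and more substantively: you should not commute $\lt^\mu$ past $\dt_\rho$. In this paper $\dt_\nu\gt_{\lt\alpha}$ denotes $\lt^\mu\,\dt_\nu\gt_{\mu\alpha}$ — the derivative is taken first, then the contraction is applied — and this is precisely what the notation $|\dt\gt|_{\lc\Tc}$ captures (see the definition in Section 2.4). With that reading there is no commutator to worry about. If you do commute, you pick up the term $-(\dt_\rho\lt^\mu)\gt_{\mu\sigma}+(\dt_\sigma\lt^\mu)\gt_{\mu\rho}$, whose $\mh$ parts do cancel (since $\lt_\sigma=\dt_\sigma q^*$ is a gradient and hence $\dt_\rho\lt_\sigma-\dt_\sigma\lt_\rho=0$), but whose remainder is $O(|\gt-\mh|/r)$. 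That residual does not appear on the right-hand side of the lemma, so strictly speaking your proof establishes a weaker statement. Dropping the commutation step removes this issue and recovers the lemma as stated.
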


\begin{proof}
Recall that $\omegat_{\mu ab}=\gt(\e_a,\dd_\mu \e_b)=\gt_{\alpha\beta} (\e_a)^\alpha \dd_\mu (\e_b)^\beta=\gt_{\alpha\beta} (\e_a)^\alpha \dt_\mu (\e_b)^\beta+\gt_{\alpha\beta}(\e_a)^\alpha \gat_{\mu\nu}^{\ \ \,\beta}(\e_b)^\nu$. Then the first estimate follows in view of (\ref{Christoffelsymbol}), and $|\gt|,|\e|\lesssim 1$.

Now contracting the identity with $\lt^\mu$ we get $\lt^\mu\omegat_{\mu ab}=\gt_{\alpha\beta} (\e_a)^\alpha \lt (\e_b)^\beta+\gt_{\alpha\beta}(\e_a)^\alpha \lt^\mu \gat_{\mu\nu}^{\ \ \,\beta}(\e_b)^\nu$. The first term is comparable with $\lt (\e_b)$. For the second term we recall \eqref{Christoffelsymbol} to get $$\gt_{\alpha\beta}(\e_a)^\alpha \lt^\mu \gat_{\mu\nu}^{\ \ \,\beta}=\lt^\mu (\e_a)^\a  \frac 12 (\dt_{\mu} \gt_{\nu\a}+\dt_{\nu} \gt_{\mu\a}-\dt_{\a} \gt_{\mu\nu})=(\e_a)^\a  \frac 12 ((\lt \gt)_{\nu\a}+\dt_{\nu} \gt_{\lt\a}-\dt_{\a} \gt_{\lt\nu}).$$
We expand this in the null frame. Since $\lt^\mu (\dt_{\mu} \gt_{\nu\a}+\dt_{\nu} \gt_{\mu\a}-\dt_{\a} \gt_{\mu\nu}){\ltb}^\nu\ltb^\a=(\lt \gt)_{\ltb\ltb}$, and other components can clearly be controlled by $|\dtb \gt|+|\dt \gt|_{\lc\Tc}$, we get the desired estimate.
\end{proof}
%---------------------
%\begin{lem}\label{omeganull}
% $|\omega_{Lab}|\lesssim |\bar\d h|+|\d h|_{LT}$. 
%\end{lem}

%\begin{proof}
%By definition we have $\omega_{\mu ab}=g(e_a,\dd_\mu e_b)=(e_a)_\nu (\d_\mu e_b)^\nu + (e_a)_\rho \Gamma_{\mu\nu}^{\ \ \rho} (e_b)^\nu$. Contracting with $L^\mu$ it is clear that the first term can be controlled by $|\bar\d h|$. The second term after contraction reads 

%\end{proof}

The second order equation of $\psi$ reads in the massless case
\begin{equation}\label{secondordereqpsi}
    \widetilde{\Box}_{\gt} \psi=\frac 12 \gt^\mn \omegat_{\mu ab}\Sigma^{ab}\dt_\nu \psi+\frac 14 \gt^\mn \dd_\mu (\omegat_{\nu ab})\Sigma^{ab}\psi+\frac 1{16}\gt^\mn\omegat_{\mu ab}\,\omegat_{\nu ab}\Sigma^{ab}\Sigma^{cd}\psi.
\end{equation}
There is a potential loss of derivative from the term $\dd_\mu (\omegat_{\nu ab})\Sigma^{ab}\psi$, as $\omegat_{\mu ab}$ behaves like the first derivative of the metric. We deal with this in the following lemma.

\begin{lem}[Avoiding loss of derivative] \label{avoidloss}
Suppose the generalized wave coordinate condition \eqref{generalizedwcc} holds. Then
\begin{multline}
    \gt^{\mu\nu} \dd_\mu (\omegat_{\nu ab})=\gt^{\mu\nu} (\dd_\mu (\e_a))_\rho (\dd_\nu (\e_b))^\rho+\gt^{\mu\nu}(\e_a)_\rho \gat_{\mu\beta}^{\ \ \rho} \dd_\nu (\e_b)^\beta-\gt^{\mu\nu} (\e_a)_\rho \gat_{\mu\nu}^{\ \ \alpha} \dd_\alpha (\e_b)^\rho\\ -(\gt^{\delta\rho}R_{\sigma\delta} + \gt^{\mu\nu}\gat_{\mu\lambda}^{\ \ \rho}\, \gat_{\sigma\nu}^{\ \ \lambda}-\gt^{\mu\nu}\gat_{\sigma\lambda}^{\ \ \rho}\, \gat_{\mu\nu}^{\ \ \lambda}+(\dt_\sigma \gt^{\mu\nu})\gat_{\mu\nu}^{\ \ \,\rho}+\dt_\sigma (\gt^{\mu\nu}\gh_{\mu\nu}^{\ \ \rho}))(\e_a)_\rho(\e_b)^\sigma \\
    +\gt^{\mu\nu} (\e_a)_\rho \gat_{\nu\sigma}^{\ \ \,\rho}\dt_\mu (\e_b)^\sigma+(\e_a)_\rho \widetilde{\Box}_{\gt} (\e_b)^\rho. 
\end{multline}
%Here $\widetilde{\Box}_g$ is the reduced wave operator $g^{\a\b}\d_\a\d_\b$.
\end{lem}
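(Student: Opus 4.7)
The starting point is to rewrite the spin connection as $\omegat_{\nu ab}=\gt_{\alpha\beta}(\et_a)^\alpha(\dd_\nu \et_b)^\beta=(\et_a)_\rho(\dd_\nu \et_b)^\rho$, which is legitimate since $\gt$ is covariantly constant. Applying $\gt^{\mu\nu}\dd_\mu$ and using the Leibniz rule (again since $\dd \gt=0$) gives
\[
\gt^{\mu\nu}\dd_\mu\omegat_{\nu ab}=\gt^{\mu\nu}(\dd_\mu \et_a)_\rho(\dd_\nu \et_b)^\rho+\gt^{\mu\nu}(\et_a)_\rho\,\dd_\mu\dd_\nu(\et_b)^\rho,
\]
so the whole task reduces to expressing the second term, which contains two derivatives of the tetrad (hence potentially two derivatives of the metric), in a form where the only second-order derivatives that remain are either $\widetilde\Box_{\gt}(\et_b)^\rho$ (controlled by the reduced Einstein equation via Lemma~\ref{GS}) or a curvature term (controlled by the Einstein equation itself).

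Next I would unfold $\dd_\mu\dd_\nu(\et_b)^\rho$ in partial derivatives using \eqref{covariant derivative} twice, so that the mixed object
\[
\gt^{\mu\nu}\dd_\mu\dd_\nu(\et_b)^\rho=\widetilde\Box_{\gt}(\et_b)^\rho+\gt^{\mu\nu}(\dt_\mu\gat_{\nu\sigma}^{\ \ \,\rho})(\et_b)^\sigma+\text{($\gat$-quadratic and $\gat\cdot \dt\et$ terms)}
\]
appears. The genuinely dangerous piece is $\gt^{\mu\nu}\dt_\mu\gat_{\nu\sigma}^{\ \ \,\rho}$, which is essentially a schematic $g\cdot\d^2 h$ and would cost a derivative. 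Here I would invoke the key identity from \eqref{curvatureinChristoffelsymbol},
\[
\dt_\mu\gat_{\nu\sigma}^{\ \ \,\rho}=\dt_\sigma\gat_{\mu\nu}^{\ \ \,\rho}-R_{\mu\sigma\nu}{}^{\rho}+\gat_{\sigma\lambda}^{\ \ \,\rho}\gat_{\mu\nu}^{\ \ \,\lambda}-\gat_{\mu\lambda}^{\ \ \,\rho}\gat_{\sigma\nu}^{\ \ \,\lambda},
\]
so as to trade $\dt_\mu$ for $\dt_\sigma$ outside the contracted pair $(\mu\nu)$. Contracting with $\gt^{\mu\nu}$ turns the Riemann piece into a Ricci contraction (using the first Bianchi identity and the pair symmetry), contributing the term proportional to $\gt^{\delta\rho}R_{\sigma\delta}$.

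The decisive step is then to exploit the generalized wave coordinate condition \eqref{generalizedwcc}: after pushing the derivative outside via
\[
\gt^{\mu\nu}\dt_\sigma\gat_{\mu\nu}^{\ \ \,\rho}=\dt_\sigma\bigl(\gt^{\mu\nu}\gat_{\mu\nu}^{\ \ \,\rho}\bigr)-(\dt_\sigma\gt^{\mu\nu})\gat_{\mu\nu}^{\ \ \,\rho},
\]
the condition $\gt^{\mu\nu}\gat_{\mu\nu}^{\ \ \,\rho}=\gt^{\mu\nu}\gh_{\mu\nu}^{\ \ \,\rho}$ replaces the potentially second-order-in-metric term by $\dt_\sigma(\gt^{\mu\nu}\gh_{\mu\nu}^{\ \ \,\rho})$, which is explicitly controlled by \eqref{boundforChristoffelhat}. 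This is the resolution of the loss of derivative anticipated in the introduction. The remaining bookkeeping just assembles all the $\gat$-quadratic pieces coming from the double unfolding together with those produced by the Riemann identity, and multiplies by $(\et_a)_\rho$ to recover exactly the stated expression.

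The main obstacle is not conceptual but combinatorial: carefully tracking index placements, making sure the symmetries $\gat_{\nu\sigma}^{\ \ \,\rho}=\gat_{\sigma\nu}^{\ \ \,\rho}$ and the antisymmetries of the Riemann tensor are used consistently, and verifying that all Christoffel-squared terms produced in the two steps (the double unfolding of $\dd\dd(\et_b)^\rho$ and the Riemann substitution) combine into precisely the two quadratic terms appearing inside the large bracket of the statement. Once the signs for the curvature contraction and the wave-coordinate substitution are pinned down, the identity is a straightforward rearrangement.
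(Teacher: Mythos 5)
Your proposal is correct and follows essentially the same route as the paper: rewrite $\omegat_{\nu ab}=(\et_a)_\rho(\dd_\nu\et_b)^\rho$, apply Leibniz, unfold $\dd_\mu\dd_\nu(\et_b)^\rho$ in partial derivatives to isolate $\widetilde\Box_{\gt}(\et_b)^\rho$ and $\gt^{\mu\nu}\dt_\mu\gat_{\nu\sigma}^{\ \ \,\rho}$, swap the differentiated Christoffel index via \eqref{curvatureinChristoffelsymbol}, contract to Ricci, and finally pull $\dt_\sigma$ outside and invoke \eqref{generalizedwcc}. One tiny remark: the identification $\gt^{\mu\nu}R_{\mu\sigma\nu}{}^{\rho}=\gt^{\delta\rho}R_{\sigma\delta}$ requires only the two pair antisymmetries of the Riemann tensor, not the first Bianchi identity.
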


\begin{proof}
We have $$\gt^{\mu\nu} \dd_\mu (\omegat_{\nu ab})=\gt^{\mu\nu} \dd_\mu((\e_a)_\rho (\dd_\nu (\e_b))^\rho)=\gt^{\mu\nu} (\dd_\mu (\e_a))_\rho (\dd_\nu (\e_b))^\rho +\gt^{\mu\nu}(\e_a)_\rho \dd_\mu \dd_\nu (\e_b)^\rho.$$
%Note that the first term can be estimated by $(|\dt (\e_a)|+|\dt \gt|)^2$. 
Using (\ref{covariant derivative}), the second term here equals
$$\gt^{\mu\nu} (\e_a)_\rho \dt_\mu (\dd_\nu (\e_b)^\rho)-\gt^{\mu\nu} (\e_a)_\rho \gat_{\mu\nu}^{\ \ \,\alpha} \dd_\alpha (\e_b)^\rho+\gt^{\mu\nu}(\e_a)_\rho \gat_{\mu\beta}^{\ \ \,\rho} \dd_\nu (\e_b)^\beta.$$
Here the last two terms again behaves well and do not require further calculation. The first term here equals $$\gt^{\mu\nu} (\e_a)_\rho\dt_\mu(\dt_\nu (\e_b)^\rho+\gat_{\nu\sigma}^{\ \ \rho}(\e_b)^{\sigma})=(\e_a)_\rho \widetilde{\Box}_{\gt}(\e_b)^\rho+\gt^{\mu\nu}(\e_a)_\rho \gat_{\nu\sigma}^{\ \ \rho} \dt_\mu (\e_b)^\sigma+\gt^{\mu\nu}(\dt_\mu \gat_{\nu\sigma}^{\ \ \rho})(\e_a)_\rho(\e_b)^\sigma \Sigma^{ab}.$$ %where the first term contains a factor $\widetilde{\Box}_{\gt} (\e_b)^\rho$. For the second term, if the derivative falls on $\e_b$, we get a term $\gt^{\mu\nu}(\e_a)_\rho \gat_{\nu\sigma}^{\ \ \rho} \dt_\mu (\e_b)^\sigma$; if the derivative falls on the Christoffel symbol, the term reads $\gt^{\mu\nu}(\dt_\mu \gat_{\nu\sigma}^{\ \ \rho})(\e_a)_\rho(\e_b)^\sigma \Sigma^{ab}$. 

We need to deal with the last term above. Recall the curvature formula (\ref{curvatureinChristoffelsymbol})
$$R_{\mu\sigma\nu}{}^{\rho}=-\dt_\mu \gat_{\sigma\nu}^{\ \ \,\rho} +\dt_\sigma \gat_{\mu\nu}^{\ \ \,\rho} - \gat_{\mu\lambda}^{\ \ \,\rho}\,\gat_{\sigma\nu}^{\ \ \,\lambda}+\gat_{\sigma\lambda}^{\ \ \,\rho}\,\gat_{\mu\nu}^{\ \ \,\lambda},$$
so we have
$$\gt^{\mu\nu}\dt_\mu \gat_{\nu\sigma}^{\ \ \,\rho}=-\gt^{\mu\nu}R_{\mu\sigma\nu}{}^{\rho}+\gt^{\mu\nu}\dt_\sigma \gat_{\mu\nu}^{\ \ \,\rho} - \gt^{\mu\nu}\gat_{\mu\lambda}^{\ \ \,\rho}\,\gat_{\sigma\nu}^{\ \ \,\lambda}+\gt^{\mu\nu}\gat_{\sigma\lambda}^{\ \ \,\rho}\,\gat_{\mu\nu}^{\ \ \,\lambda}.$$
The first term on the right is $-\gt^{\delta\rho} R_{\sigma\delta}$, where $R_{\mu\nu}$ is the Ricci tensor; for the second term we have $\gt^{\mu\nu}\dt_\sigma \gat_{\mu\nu}^{\ \ \,\rho}=\dt_\sigma (\gt^{\mu\nu}\gat_{\mu\nu}^{\ \ \,\rho})-(\dt_\sigma \gt^{\mu\nu})\gat_{\mu\nu}^{\ \ \,\rho}=\dt_\sigma (\gt^{\mu\nu}\gh_{\mu\nu}^{\ \ \,\rho})-(\dt_\sigma \gt^{\mu\nu})\gat_{\mu\nu}^{\ \ \,\rho}$, where for the last equality we used the generalized wave coordinate condition (\ref{generalizedwcc}). %and it equals $\dt_\sigma(\gt^{\mu\nu} \gh_{\mu\nu}^{\ \ \rho})$.%we use the identity (\ref{derivativeofcontractedChristoffelsymbol}):
%\begin{multline*}\dt_\sigma (\gt^{\mu\nu} \gat_{\mu\nu}^{\ \ \rho})=\gt^{\mu\nu}\gt^{\rho\lambda} \dt_\sigma (2\dt_\mu \gt_{\lambda\nu}-\dt_\lambda \gt_{\mu\nu})=\gt^{\rho\lambda} (2R_{\sigma\lambda}+\gt^{\mu\nu}\dt_\mu\dt_\nu \gt_{\sigma\lambda})\\
%+2\gt^{\rho\lambda}(\gat_{\sigma\delta}^{\ \ \rho}\gat_{\rho\lambda}^{ \ \delta}-\gat_{\rho\delta}^{\ \ \rho}\gat_{\sigma\lambda}^{\ \ \delta}+(\dt_\sigma \gt^{\rho\delta})\gat_{\delta\lambda;\rho}-(\dt_\rho \gt^{\delta\rho}) \gat_{\sigma\lambda;\delta}).
%\end{multline*}
%We still need to contract this with $(e_a)_\rho (e_b)^\sigma \Sigma^{ab}\psi$, so we get $$(e_a)^\rho (e_b)^\sigma (2R_{\sigma\rho}+\gt^{\mu\nu}\dt_\mu\dt_\nu \gt_{\sigma\rho}+\dt \gt\cdot\dt \gt)\Sigma^{ab}\psi.$$
%Notice that $\Sigma$ is antisymmetric, and $R_{\sigma\rho}$ and $\gt_{\sigma\rho}$ are symmetric, we have corresponding terms to be zero, so this is just like $\dt \gt\cdot \dt \gt \cdot \psi$.
%Therefore we have 
%\begin{multline*}
%\gt^{\mu\nu}(e_a)_\rho \dd_\mu \dd_\nu  (e_b)^\rho=\gt^{\mu\nu}(e_a)_\rho \gat_{\mu\beta}^{\ \ \rho} \dd_\nu (e_b)^\beta-\gt^{\mu\nu} (e_a)_\rho \gat_{\mu\nu}^{\ \ \alpha} \dd_\alpha (e_b)^\rho+\gt^{\mu\nu} (e_a)_\rho \gat_{\nu\sigma}^{\ \ \rho} (\dt_\mu (e_b)^\rho)\\
%+\gt^{\mu\nu}(\d h\cdot \d h)(e_a)_\rho (e_b)^\sigma,
%\end{multline*}
%and the estimate follows. (Keep equality for vector fields)
\end{proof}

Using the reduced Einstein equation and Proposition \ref{GS}, we know that the tetrad we choose satisfies the equation for its components $\widetilde{\Box}_{\gt} (\e_a)^\mu=\widetilde F_a^\mu(\gt,\dt\gt,\psi,\dt\psi)$. Also, the Ricci curvature term can be similarly written as a term only involving at most first order derivatives of the metric and the field, using the Einstein equation. Then we see that the right hand side of the equation \eqref{secondordereqpsi} only contains zero and first order derivatives of $\gt$ and $\psi$. This, together with the reduced Einstein equation, forms a system of quasilinear wave equations, and we have the existence and uniqueness of a local-in-time solution.\footnote{There is another way to fix the gauge of tetrad, by giving the evolution equation of the tetrad. Once we know the equation for $\widetilde\Box_{\gt} (e_a)^\mu$, the equation of $\psi$ is well-posed, in view of Lemma \ref{avoidloss}. Here our choice just corresponds to the equation $\widetilde{\Box}_{\gt} (\e_a)^\mu=\widetilde F_a^\mu(\gt,\dt\gt,\psi,\dt\psi)$ which is actually derived from our explicit choice. See \cite{Bao:1984bp} for a use of this idea in a more complicated model.}\footnote{It is also clear without any additional work that this local wellposedness also holds for massive case $m\neq 0$.}
We will also use this second order equation for the spinor field to study the global problem in the remaining part of this work.

\section{Energy estimates}

\subsection{\texorpdfstring{$L^2$}{L2} estimate for Dirac equation}

Consider the Dirac equation with an inhomogeneous term in a curved background:
$$\gamma^\mu D_\mu \psi+im\psi=F.$$
From Section \ref{Chnotation} we also have the adjoint Dirac equation $D_\mu \pb \gamma^\nu-im\pb=\overline{F}$.
%In practice we will move the connection term to the right hand side, so instead we consider the equation $$\gamma^\mu \partial_\mu \psi=F.$$
Then by Corollary \ref{spinorleibnizcor}, we have
$\dd_\mu (\pb \gamma^\mu \psi)=\pb (F-im\psi)+(\overline{F}+im\pb)\psi=\pb F+\overline F\psi$. Writing this in our coordinates gives $$\dt_\mu(\pb \gamma^\mu \psi)=\pb F+\overline{F}\psi-\gat_{\mu\rho}^{\ \ \,\mu} \pb\gamma^\rho \psi,$$
and we recall here $\gamma^\rho=\gb^a (\e_a)^\rho$. We then integrate on the spacetime region between two time slices to get
\begin{equation*}
       \int_{\Sigma_{t}} \pb \gamma^0 \psi\, d\x =\int_{\Sigma_0} \pb \gamma^0 \psi \, d\x  + \int_0^t \int_{\Sigma_\tau} (\pb F+\overline{F}\psi-\gat_{\mu\rho}^{\ \ \,\mu} \pb\gamma^\rho \psi) \,d\x d\tau.
\end{equation*}

We will use the following estimate for the massless case $m=0$:
\begin{Prop}\label{energyestimateDirac} Suppose $\psi\in \mathbb{C}^4$ is a solution of the equation $\gamma^\mu D_\mu\psi+im\psi=F$ which vanishes at spacelike infinity. Assume that the background metric $\gt=\hti^0+\hi$ satisfies the bounds 
\begin{equation*}
    \begin{split}
        &|\dt \hi|\leq C\varepsilon(1+t)^{-\frac 34}(1+|q^*|)^{-\frac 12},\ \ \ |\hi|\leq C\varepsilon (1+t)^{-\frac 34}(1+|q^*|)^{\frac 12},\\
        &|\dt\, \slashed{\mathrm{tr}}\hi|\leq C\varepsilon (1+t)^{-\frac 54},\ \ \ |\dt\hi|_{\lc \Uc}\leq C\varepsilon(1+t)^{-1},\ \ \ M\leq \varepsilon,
    \end{split}
\end{equation*}
where $\Tc=\{\lt,\widetilde S_1,\widetilde S_2\}$, and $\Uc= \{\lt,\ltb,\widetilde S_1,\widetilde S_2\}$. Then 
\begin{equation*}
\int_{\Sigma_{t}} |\psi|^2 \, d\x \leq 2\int_{\Sigma_0} |\psi|^2 \, d\x
+4\int_{0}^{t} \int_{\Sigma_\tau} \left(\frac{C\varepsilon}{1+\tau} |\psi|^2  +|\psi||F|\right) \, d\x d\tau
\end{equation*}
\end{Prop}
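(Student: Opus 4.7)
The plan is to start from the integrated identity derived immediately before the proposition statement:
\begin{equation*}
\int_{\Sigma_t} \pb \gamma^0 \psi\, d\x = \int_{\Sigma_0} \pb \gamma^0 \psi \, d\x + \int_0^t \int_{\Sigma_\tau}\left(\pb F + \overline{F}\psi - \gat_{\mu\rho}^{\ \ \,\mu} \pb \gamma^\rho \psi \right) d\x d\tau.
\end{equation*}
The mass contributions drop out because $\pb(im\psi)+\overline{(im\psi)}\psi=im\pb\psi-im\pb\psi=0$, so the analysis is uniform in $m$. The proof then reduces to three tasks: (i) comparing $\pb\gamma^0\psi$ to $|\psi|^2$, (ii) bounding the inhomogeneous terms by $|\psi||F|$, and (iii) controlling the Christoffel contraction by $C\varepsilon(1+t)^{-1}|\psi|^2$.

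For (i), since $\gamma^0=\gb^a(\e_a)^0$ and Lemma \ref{GS} gives $|(\e_a)^\mu-(\d_a)^\mu|\lesssim |\gt-\mh|$, the hypotheses on $\hi$ together with $M\leq\varepsilon$ yield $|(\e_a)^\mu-(\d_a)^\mu|\leq C\varepsilon$, so $\gb^0\gamma^0=I+O(\varepsilon)$ as a Hermitian matrix. For $\varepsilon$ sufficiently small this gives $\tfrac12|\psi|^2\leq \pb\gamma^0\psi\leq 2|\psi|^2$ pointwise, along with $|\pb\gamma^\rho\psi|\leq C|\psi|^2$ for every $\rho$. Claim (ii) is immediate since $\gb^0$ is unitary: $|\pb F|+|\overline{F}\psi|\leq 2|\psi||F|$.

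The main work is (iii). Using the standard identity $\gat_{\mu\rho}^{\ \ \,\mu}=\tfrac12 \gt^{\alpha\beta}\dt_\rho\gt_{\alpha\beta}$ and splitting $\gt^{\alpha\beta}=\mh^{\alpha\beta}+(\gt^{\alpha\beta}-\mh^{\alpha\beta})$, together with the fact that $\mh$ has constant components in the new coordinates, we obtain
\begin{equation*}
    \gat_{\mu\rho}^{\ \ \,\mu}=\tfrac12 \dt_\rho\bigl(\mh^{\alpha\beta}(\gt-\mh)_{\alpha\beta}\bigr)+\tfrac12 (\gt^{\alpha\beta}-\mh^{\alpha\beta})\dt_\rho\gt_{\alpha\beta}.
\end{equation*}
Expanding in the null frame gives the identity $\mh^{\alpha\beta}T_{\alpha\beta}=-T_{\lt\ltb}+\slashed{\mathrm{tr}}\, T$ for symmetric $T$, so the first piece is bounded by $|\dt\gt|_{\lc\Uc}+|\dt\,\slashed{\mathrm{tr}}\,\gt|$ plus lower-order terms from derivatives of the null frame vectors (each of order $r^{-1}|\gt-\mh|$). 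Decomposing $\gt-\mh=\hi+\hti^0+(\mt-\mh)$, the assumed bounds $|\dt\hi|_{\lc\Uc}\leq C\varepsilon(1+t)^{-1}$ and $|\dt\,\slashed{\mathrm{tr}}\,\hi|\leq C\varepsilon(1+t)^{-5/4}$ control the $\hi$ contribution, while the Schwarzschild piece $\hti^0=O(M/r)$ and the coordinate artifact $\mt-\mh$ are explicit with derivatives of size $\lesssim M(1+r)^{-1}(1+t)^{-1}\leq \varepsilon(1+t)^{-1}$ in view of \eqref{decayofSchwarzschildpart}. The quadratic remainder satisfies $|\gt-\mh||\dt\gt|\lesssim \varepsilon\cdot\varepsilon(1+t)^{-3/4}\leq C\varepsilon(1+t)^{-1}$ since $\varepsilon\leq 1$. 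Summing gives $|\gat_{\mu\rho}^{\ \ \,\mu}|\leq C\varepsilon(1+t)^{-1}$, hence $|\gat_{\mu\rho}^{\ \ \,\mu}\pb\gamma^\rho\psi|\leq C\varepsilon(1+t)^{-1}|\psi|^2$. Substituting the bounds from (i)--(iii) into the integrated identity and absorbing the constants $\tfrac12$ and $2$ from (i) produces the stated inequality (with the $4$ on the right arising from combining the $2$ in (ii) with the $2$ from the comparison of $\pb\gamma^0\psi$ and $|\psi|^2$). The principal obstacle is precisely the null-frame reduction in (iii): the generic decay $|\dt\gt|\lesssim \varepsilon(1+t)^{-3/4}$ provided by the hypothesis would be inadequate to close Gronwall, and the sharp $(1+t)^{-1}$ rate is recovered only because $\gat_{\mu\rho}^{\ \ \,\mu}$ depends on the metric through its Minkowski trace, which in turn reduces to the good components $\slashed{\mathrm{tr}}\,\gt$ and $\gt_{\lt\ltb}$ for which the faster decay is assumed.
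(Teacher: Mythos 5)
Your proposal is essentially the paper's own proof: the same integrated identity, the same Christoffel-trace identity $\gat_{\mu\rho}^{\ \ \,\mu}=\tfrac12\gt^{\mu\nu}\dt_\rho\gt_{\mu\nu}$, the same split $\gt^{\mu\nu}=\mh^{\mu\nu}+(\gt^{\mu\nu}-\mh^{\mu\nu})$, and the same null-frame reduction sending the Minkowski trace to the good components $\gt_{\lt\ltb}$ and $\slashed{\mathrm{tr}}\,\gt$. Your closing remark correctly identifies the crux: the generic $(1+t)^{-3/4}$ decay of $\dt\gt$ would not close Gronwall, and it is precisely the contracted structure of $\gat_{\mu\rho}^{\ \ \,\mu}$ that rescues the sharp $(1+t)^{-1}$ rate. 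One small quantitative slip: with your comparison $\tfrac12|\psi|^2\leq\pb\gamma^0\psi\leq 2|\psi|^2$, the integrated identity yields a coefficient $4$ on $\int_{\Sigma_0}|\psi|^2$, not the stated $2$. The paper instead uses the bound $|\gamma^0-\gb^0|\leq\tfrac14$ (which the hypotheses supply for $\varepsilon$ small), giving $\tfrac34\int|\psi|^2\leq\int\pb\gamma^0\psi\leq\tfrac54\int|\psi|^2$; then $\tfrac{5/4}{3/4}=\tfrac53\leq 2$ and $\tfrac{2}{3/4}=\tfrac83\leq 4$ produce the exact constants. Tightening your operator-norm estimate to match closes the gap.
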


\begin{proof}
Notice that $$\gat_{\mu\rho}^{\ \ \,\mu}=\frac 12 \gt^{\mu\nu} (\dt_\mu \gt_{\rho\nu}+\dt_{\rho}\gt_{\mu\nu}-\dt_\nu \gt_{\mu\rho})=\frac 12 \gt^{\mu\nu}\dt_\rho \gt_{\mu\nu}=\frac 12\mh^{\mu\nu} \dt_\rho \gt_{\mu\nu}+\frac 12 (\gt-\mh)^{\mu\nu} \dt_\rho \gt_{\mu\nu},$$
so expanding in null frame we have the estimate 
$$|\gat_{\mu\rho}^{\ \ \,\mu}|\lesssim |\dtb \gt|+|\dt\, \slashed{\mathrm{tr}}\gt|+|\dt \gt|_{\lc \Uc}+|\gt-\mh||\dt\gt|,$$
and hence $|\gat_{\mu\rho}^{\ \ \,\mu} \pb\gamma^\rho \psi|\leq C\varepsilon(1+t)^{-1}|\psi|^2$ using the assumption and (\ref{dm0higherorder}). Now using the simple bound $|\gamma^0-\gb^0|\leq \frac 14$ we have the relation
\begin{equation*}
    \frac 34\int_{\Sigma_t} |\psi|^2 \, d\x\leq \int_{\Sigma_t}\pb\gamma^0 \psi\, d\x\leq \frac 54\int_{\Sigma_t}|\psi|^2\, d\x,
\end{equation*}
so the estimate follows.
\end{proof}

\subsection{Energy estimate for wave equation}

We now give the energy estimate for the quasilinear wave equation, which was used in \cite{04,LT,KLMKG21}.
We consider two weight functions of $q^*=r^*-t$
\begin{equation}\label{eqforw}
w(q^*)=\begin{cases}
1+(1+|q^*|)^{-2\mu},\, q^*<0\\
1+(1+|q^*|)^{1+2\gamma},\, q^*>0
\end{cases}\end{equation}
and \begin{equation}\label{eqforw1}
    w_1(q^*)=\begin{cases}
1+(1+|q^*|)^{-2\mu},\, q^*<0\\
1+(1+|q^*|)^{2+2s},\, q^*>0
\end{cases}
\end{equation}
where $0<\gamma<1$, $\mu>0$, and $0<s<1$. We have $$(1+|q^*|)^{-1}(1+q^*_-)^{-2\mu}w\lesssim_\mu w'\lesssim (1+|q^*|)^{-1}w,\ \ (1+|q^*|)^{-1}(1+q^*_-)^{-2\mu}w_1\lesssim_\mu w_1'\lesssim (1+|q^*|)^{-1}w_1.$$
\begin{Prop}\label{energyestimatewave}
If $g$ with $\gt^\mn=\mt_0^\mn+\Hi^\mn$ satisfies the following assumption:
\begin{equation}
    \begin{split}
        M&\leq \varepsilon,\\
        |\Hi|+(1+|q^*|)|\dt\Hi|+(1+t+r^*)|\dtb \Hi|&\leq C\varepsilon (1+|q^*|)^{\frac 12 -\mu}(1+t)^{-\frac 12-\mu},\\
        |\Hi|_{\lc\lc}+(1+|q^*|)|\dt\Hi|_{\lc\lc}&\leq C\varepsilon (1+|q^*|)(1+t+r^*)^{-1-2\mu},
    \end{split}
\end{equation} then
$$\int_{\Sigma_t} |\dt\phi|^2 w\, d\x+\int_0^t \int_{\Sigma_\tau}|\dtb\phi|^2 w' \, d\x d\tau\leq 8\int_{\Sigma_0} |\dt\phi|^2w\, d\x+12\int_0^t \int_{\Sigma_\tau} |\widetilde{\Box}_{\gt}\phi||\dt\phi|w\, d\x d\tau,$$
and the same estimate holds if we replace $w$ by $w_1$, and $\phi$ by a 4-component spinor field $\psi$.
\end{Prop}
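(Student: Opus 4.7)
The plan is a standard multiplier argument: pair $\widetilde{\Box}_{\gt}\phi$ with $2(\dt_t\phi)\,w(q^*)$ and integrate by parts. Using the symmetry of $\gt^{\mu\nu}$,
\begin{equation*}
2(\dt_t\phi)\,\widetilde{\Box}_{\gt}\phi \;=\; \dt_\mu P^\mu \;-\; 2(\dt_\mu \gt^{\mu\nu})\dt_\nu\phi\,\dt_t\phi \;+\; (\dt_t\gt^{\mu\nu})\dt_\mu\phi\,\dt_\nu\phi,
\end{equation*}
with current $P^\mu = 2\gt^{\mu\nu}\dt_\nu\phi\,\dt_t\phi - \delta^\mu_0\,\gt^{\alpha\beta}\dt_\alpha\phi\,\dt_\beta\phi$. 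Multiplying both sides by $w$ and writing $w\,\dt_\mu P^\mu = \dt_\mu(P^\mu w) - P^\mu\,\dt_\mu w$, the divergence theorem on $[0,t]\times\Sigma_\tau$ converts the left-hand side into boundary integrals $\int_{\Sigma_\tau}P^0 w\,d\x$ at $\tau=0,t$ together with a spacetime bulk $-\int\int P^\mu \dt_\mu w$, the two commutator-type integrals coming from $\dt_\mu\gt^{\mu\nu}$ and $\dt_t\gt^{\mu\nu}$, and $2\int\int \widetilde{\Box}_{\gt}\phi\cdot \dt_t\phi\,w$.

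Decomposing $\gt^{\mu\nu}=\mh^{\mu\nu}+(\mt_0^{\mu\nu}-\mh^{\mu\nu})+\Hi^{\mu\nu}$, one computes $P^0\big|_{\mh} = -|\dt\phi|^2$, and the remainder is bounded by $C\varepsilon |\dt\phi|^2$ using $|\mt_0-\mh|\lesssim\varepsilon$ from Lemma \ref{m0lemma} and the hypothesis on $\Hi$; this produces the boundary term $\approx \int_{\Sigma_t}|\dt\phi|^2 w$ with constants close to $1$. For the bulk $-\int\int P^\mu \dt_\mu w$, the key identity is $\dt_\mu w = w'(q^*)\lt_\mu$ (equivalently $\mh^{\mu\nu}\dt_\mu w = w'\lt^\nu$, since $dw=w'(dr^*-dt)$), which together with $\dt_t=\tfrac12(\lt+\ltb)$ collapses the Minkowski contribution to exactly $w'|\dtb\phi|^2$; moving this to the left-hand side produces the desired positive bulk integral.

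It remains to estimate the three error contributions: the perturbation of $-\int P^\mu \dt_\mu w$ by $\gt-\mh$, and the two commutator integrals. The dangerous piece is the $\lc\lc$-component of $\Hi$ (the $\ltb\ltb$ component of the contravariant $\Hi$) contracting with $(\lt\phi)^2$ against $w'$, since $\lt\phi$ is not controlled by $|\dtb\phi|$. The separate hypothesis $|\Hi|_{\lc\lc}\leq C\varepsilon(1+|q^*|)(1+t+r^*)^{-1-2\mu}$ is exactly calibrated for this: combined with the elementary bound $(1+|q^*|)^{-1}(1+q_-^*)^{-2\mu}w\lesssim w'$ recalled just before the proposition, it dominates this contribution by $\tfrac{C\varepsilon}{1+t+r^*}|\dt\phi|^2 w'$, which is absorbed into the good bulk for $\varepsilon$ small. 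All other error terms carry either a tangential factor $\dtb\phi$ (absorbed again by $\int w'|\dtb\phi|^2$) or an extra decay $(1+t)^{-1}$, producing $\tfrac{C\varepsilon}{1+\tau}\int |\dt\phi|^2 w$ errors that are handled by a standard Gronwall argument, with the generous constants $8$ and $12$ accommodating the resulting multiplicative factors for $\varepsilon$ small.

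The main obstacle is precisely the control of the $\lc\lc$ null component of $\Hi$ (and of $\dt\Hi$ in the commutator estimate) against the weight derivative $w'$; this is the only place where the improved null-frame hypothesis is strictly necessary, and everything else amounts to careful bookkeeping of the Minkowski multiplier identity. For the spinor version, the quasilinear operator $\widetilde{\Box}_{\gt}=\gt^{\mu\nu}\dt_\mu\dt_\nu$ acts diagonally on the four complex components of $\psi$, so the same computation applied to $\mathrm{Re}\bigl[(\dt_t\psi)^\dagger \,\widetilde{\Box}_{\gt}\psi\, w_1\bigr]$ yields the identical identity with $|\dt\phi|^2$ replaced by $|\dt\psi|^2$ and $w$ by $w_1$; since $w_1$ satisfies the same elementary inequality $(1+|q^*|)^{-1}(1+q_-^*)^{-2\mu}w_1\lesssim w_1'$, the error analysis goes through verbatim.
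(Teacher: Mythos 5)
Your proposal is essentially the paper's proof: pair the equation with the multiplier $\dt_t\phi\, w(q^*)$, write the resulting identity as a spacetime divergence of an energy current plus lower-order terms, use $\dt_\mu w = w'(q^*)\lt_\mu$ so that the Minkowski part of the bulk collapses to $w'|\dtb\phi|^2$, decompose $\gt = \mh + (\mt_0 - \mh) + \Hi$, and observe that the improved hypothesis on $|\Hi|_{\lc\lc}$ is exactly what controls the one quadratic term against $w'$ that is not absorbed by $\int w'|\dtb\phi|^2$; the spinor version follows since $\widetilde{\Box}_{\gt}$ is diagonal on the components of $\psi$ and $w_1$ satisfies the same pointwise inequalities. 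Two slips are worth correcting. First, the derivative whose square escapes $|\dtb\phi|^2$ is $\ltb\phi = -2\d_{q^*}\phi$, not $\lt\phi$; since $\lt\in\Tc$, $\lt\phi$ \emph{is} a tangential derivative. Concretely, expanding $\dt_\mu\phi$ in the covector null frame, the coefficient of $\lt_\mu$ is $\d_{q^*}\phi$, so the dangerous piece of $w'\Hi^{\mu\nu}\dt_\mu\phi\dt_\nu\phi$ is $w'\,\Hi^{\mu\nu}\lt_\mu\lt_\nu\,(\d_{q^*}\phi)^2 = w'\,|\Hi|_{\lc\lc}\,(\d_{q^*}\phi)^2$, and your parenthetical remark and the factor $(\lt\phi)^2$ have the roles of $\lt$ and $\ltb$ interchanged. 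Second, the $|\dt\phi|^2 w$ error terms must decay at a \emph{time-integrable} rate, roughly $(1+\tau)^{-1-2\mu}$, which is what the hypothesis $|\dt\Hi|\lesssim \varepsilon(1+|q^*|)^{-\frac12-\mu}(1+t)^{-\frac12-\mu}$ is designed to deliver; if the rate were only $(1+\tau)^{-1}$ as you wrote, Gronwall would produce an unbounded factor $(1+t)^{C\varepsilon}$ that cannot be absorbed into the fixed constants $8$ and $12$ of the statement. Neither slip affects the architecture of your argument, which is sound and coincides with the paper's.
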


\begin{proof}
We follow the idea of the proof from \cite{04}. One can prove the following identity
\begin{multline*}
    \frac d{dt} \int (-\gt^{00}\dt_t \phi \dt_t\phi+\gt^{ij} \dt_i\phi\dt_j\phi)w(q^*)\, d\x=-\int w'(q^*)(\gt^{\a\b}\dt_\a\phi\dt_\b\phi+2(\omega_i \gt^{i\a}-\gt^{0\a})\dt_t\phi \dt_\a \phi)\, d\x\\
    -\int w(q^*)(2\widetilde\Box_{\gt} \phi \dt_t\phi-(\dt_t\gt^{\a\b})\dt_\a\phi\dt_\b\phi+2(\dt_\a g^{\a\b})\dt_\b\phi\dt_t\phi)\, d\x
\end{multline*}
We also have $\mh^{\a\b}\dt_\a\phi\dt_\b\phi+2(\omega_i \mh^{i\a}-\mh^{0\a})\dt_t\phi \dt_\a \phi=(\dt_t \phi+\d_{r^*}\phi)^2+\delta^{ij}\dtb_i\phi\dtb_j\phi=|\dtb\phi|^2$. Therefore, using the equivalence
$$\frac 12|\dt\phi|^2\leq -\gt^{00}\dt_t \phi \dt_t\phi+\gt^{ij} \dt_i\phi\dt_j\phi\leq 2|\dt\phi|^2, $$and integrating over $[0,t]$, we have for $\Hh^{\a\b}=\gt^{\a\b}-\mh^{\a\b}$ that
\begin{multline*}
    \int_{\Sigma_0} |\dt\phi|^2 w\, d\x+2\int_0^t\int_{\Sigma_\tau} |\dtb\phi|^2w'\, d\x\leq 4\int_{\Sigma_t} |\dt\phi|^2w\, d\x\\
    +2\int_0^t\int_{\Sigma_\tau} |\Hh^{\a\b}\dt_\a\phi \dt_\b\phi+2(\omega_i \Hh^{i\b}-\Hh^{0\b})\dt_\b\phi\dt_t\phi|w'\, d\x dt\\
    +2\int_0^t\int_{\Sigma_\tau}|(2\dt_\a\Hh^{\a\b})\dt_\b\phi\dt_t\phi-(\dt_t\Hh^{\a\b})\dt_\a\phi\dt_\b\phi+2\widetilde\Box_{\gt} \phi \dt_t\phi|w\, d\x,
\end{multline*}
Now by the null condition and the assumption we have 
$$|(\dt_t\Hi^{\a\b})\dt_\a\phi\dt_\b\phi|\leq C\varepsilon \frac 1{(1+t+r^*)^{1+2\mu}}|\dt\phi|^2w+C\varepsilon |\dtb\phi|^2w',$$
$$|(\dt_\a\Hi^{\a\b})\dt_\b\phi\dt_t\phi|\leq C\varepsilon\frac 1{(1+t+r^*)^{1+2\mu}}|\dt\phi|^2w+C\varepsilon |\dtb\phi|^2w',$$
and if we replace $\Hi$ by $\mt_0$, this two terms are also good in view of \eqref{dm0higherorder}.
Similarly, using $(1+|q^*|)w'\lesssim w$
\begin{multline*}
    |\Hi^{\a\b}\dt_\a\phi \dt_\b\phi|+|(\omega_i \Hi^{i\b}-\Hi^{0\b})\dt_\b\phi\dt_t\phi|w'\leq C\varepsilon \frac 1{(1+t+r^*)^{1+2\mu}}|\dt\phi|^2w\\
    +C\varepsilon \frac{(1+|q^*|)^{\frac 12-\mu}}{(1+t+r^*)^{\frac 12+\mu}}|\dt\phi||\dtb\phi|)w'\\
    \leq C\varepsilon \frac 1{(1+t+r^*)^{1+2\mu}}|\dt\phi|^2w+C\varepsilon  |\dt\phi|^2w'.
\end{multline*}
Also, in view of the decomposition \eqref{m0}, we have
$|(\mt_0-\mh)|_{\lc\lc}\lesssim \frac{M\ln(1+r)}{(1+t+r^*)^2}+\frac{M\chi'(\frac r{1+t})\ln(1+r)}{1+t+r^*}\lesssim C\varepsilon (1+|q^*|)(1+t+r^*)^{-1-2\mu}$, so the same estimate holds. Now absorb terms with $|\dt\phi|^2$ using the spacetime integral on the left hand side, and use standard Gronwall's inequality, we get the desired estimate.
%$|(\mt_0-\mh)^{\a\b}\dt_\a\phi\dt_\b\phi|\lesssim \frac M{1+t+r^*}|\mh^{\a\b}\dt_\a\phi\dt_\b\phi|+\frac {M\ln(1+r)}{1+t+r^*}|\dtb\phi||\dt\phi|+\frac {M\ln(1+r)}{1+t+r^*}\chi'|\d$
\end{proof}

\subsection{Hardy's inequality}
We state the following lemma from \cite{04}, which will be used to control the $L^2$ norms of Lie derivatives of the metric itself.

\begin{lem}\label{Hardy}
Let $w(q^*)$ be the weight function defined in \eqref{eqforw}. Then for any $-1\leq a\leq 1$ and any $\phi\in C_0^\infty (\mathbb{R}^3)$,
\begin{equation}
    \int \frac{|\phi|^2}{(1+|q^*|)^2}\frac{w\, d\x}{(1+t+|q^*|)^{1-a}}\lesssim \int |\d \phi|^2 \frac{w\, d\x}{(1+t+|q^*|)^{1-a}}.
\end{equation}
If in addition $a<2\min(\y,\mu)$, then
\begin{equation}
    \int \frac{|\phi|^2}{(1+|q^*|)^2}\frac{(1+|q^*|)^{-a}}{(1+t+|q^*|)^{1-a}}\frac{w\, d\x}{(1+q^*_-)^{2\mu}}\lesssim \int |\d \phi|^2 \min(w',\frac{w}{(1+t+|q^*|)^{1-a}})\, d\x.
\end{equation}
\end{lem}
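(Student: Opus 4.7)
The plan is to reduce each inequality to a one-dimensional weighted Hardy estimate along each radial line through the origin, then integrate over the sphere. Writing $d\x = r^{*2}\,dr^*\,d\omega$ and using $|\d \phi|^2 \geq |\d_{r^*}\phi|^2$, it suffices to work at fixed $\omega$ and $t$, viewing the weight as a function of $q^* = r^* - t$ alone: call it $V = \tilde w := w/(1+t+|q^*|)^{1-a}$ for the first inequality and $V = W := \tilde w\,(1+|q^*|)^{-a}(1+q^*_-)^{-2\mu}$ for the second.

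The starting point is the distributional identity
\begin{equation*}
\d_{r^*}\!\Bigl(\tfrac{-\mathrm{sgn}(q^*)}{1+|q^*|}\Bigr) = \tfrac{1}{(1+|q^*|)^2},
\end{equation*}
which, after multiplication by $V$, rewrites the Hardy density as a total $r^*$-derivative plus explicit corrections. Testing against $|\phi|^2 r^{*2}$ and integrating by parts in $r^*$ (boundary terms vanish by compact support and by $r^{*2}|\phi|^2 \to 0$ at the origin) produces three kinds of terms: (i) a main cross term $\int \tfrac{\mathrm{sgn}(q^*) V}{1+|q^*|}\cdot 2\phi\,\d_{r^*}\phi\,d\x$, which by Cauchy--Schwarz with a small constant is absorbed into the left-hand side plus $\int |\d_{r^*}\phi|^2 V\,d\x$; (ii) an error from $\d_{q^*} V$; and (iii) a term $\int \tfrac{\mathrm{sgn}(q^*) V |\phi|^2}{r^*(1+|q^*|)}\,d\x$ from differentiating $r^{*2}$. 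The term (iii) is harmless: near $r^*=0$ one has $1+|q^*|\sim 1+t$, so $V$ decays enough, and for $r^*\gtrsim 1$ the factor $1/r^*$ is absorbed into the Hardy density.

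The main work is (ii). A direct computation from the explicit form of $w$ gives $|\d_{q^*}\tilde w|\lesssim \tilde w/(1+|q^*|)$ in the interior $q^*<0$; in the exterior $q^*>0$ the derivative of the growing factor $(1+q^*)^{1+2\gamma}$ must be balanced against the decreasing factor $(1+t+q^*)^{-(1-a)}$, and the range $-1\leq a\leq 1$ is precisely what keeps this balance. For the sharper inequality, the additional factor $(1+|q^*|)^{-a}(1+q^*_-)^{-2\mu}$ in $W$ is introduced so that $\d_{q^*} W$ has a definite favorable sign in the interior (yielding an extra absorbable term), while in the exterior, differentiating the $(1+q^*)^{1+2\gamma-a}(1+t+q^*)^{a-1}$ combination produces the factor $(1+q^*)^{2\gamma}\sim w'$, giving precisely the $\min(w',\tilde w)$ structure on the right.

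The main obstacle is the exterior sign analysis for the second inequality: one has to verify that the growth of $w$ in $q^*>0$ does not overpower the Hardy density after being divided by the extra $(1+|q^*|)^{-a}$, and it is exactly the hypothesis $a < 2\min(\gamma,\mu)$ that forces this absorption to close (the analogous interior constraint $a<2\mu$ ensures the derivative of $(1+q^*_-)^{-2\mu}$ contributes with the correct sign). Once these sign matters are settled, Cauchy--Schwarz and absorption conclude the proof.
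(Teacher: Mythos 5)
The one--dimensional reduction and integration by parts are the right starting point, but two concrete things go wrong in your sketch.

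First, the claimed distributional identity is incorrect: $\d_{r^*}\bigl(-\mathrm{sgn}(q^*)/(1+|q^*|)\bigr)=\tfrac{1}{(1+|q^*|)^2}-2\delta(q^*)$, because $-\mathrm{sgn}(q^*)/(1+|q^*|)$ jumps by $-2$ across $q^*=0$. After multiplying by the test density $|\phi|^2 V r^{*2}$ and integrating, this produces a boundary contribution $2\,|\phi|^2\,V(0)\,t^{2}$ supported on the outgoing cone $\{r^*=t\}$, with the unfavorable sign (it is added to the left-hand side). This is a genuine trace term, not a negligible correction, and nothing in the proposal controls it; any choice of primitive that decays like $1/(1+|q^*|)$ necessarily has a unit jump at $q^*=0$, so the issue cannot be made to disappear by flipping signs.

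Second, even ignoring the boundary term, your absorption of (ii) and (iii) does not close. In the exterior, $-g\,\d_{r^*}(Vr^{*2})$ with $g=-\mathrm{sgn}(q^*)/(1+|q^*|)$ produces densities that are comparable to the Hardy density $\tfrac{Vr^{*2}}{(1+|q^*|)^2}$ with \emph{order-one} constants (for example, the piece coming from $\d_{r^*}r^{*2}$ behaves like $\tfrac{2(1+q^*)}{r^*}$ times the Hardy density, which tends to $2$ as $q^*\to\infty$, and the piece from $w'$ contributes a factor of roughly $1+2\gamma$). These are $\geq 1$, so a smallness absorption into the left-hand side fails; the constraint $-1\leq a\leq 1$ does not rescue this by itself.

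The standard way around both problems (and, as far as I can tell, what the cited reference \cite{04} does) is to avoid the pointwise primitive of $1/(1+|q^*|)^2$ altogether and instead take the primitive of the full weighted Hardy density. Writing $|\phi(r^*)|^2=-\int_{r^*}^\infty\d_s|\phi|^2\,ds$ along the radial ray and using Fubini gives, with $N(r^*)=\int_0^{r^*}\tfrac{V(s-t)s^2}{(1+|s-t|)^2}\,ds$,
\begin{equation*}
\int_0^\infty \frac{|\phi|^2\,V\,r^{*2}}{(1+|q^*|)^2}\,dr^* \;=\; -\int_0^\infty 2\phi\,\d_{r^*}\phi\,N(r^*)\,dr^*,
\end{equation*}
with no boundary or delta term, since $N(0)=0$ and $\phi$ has compact support. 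A single Cauchy--Schwarz then gives $\mathrm{LHS}\leq 2\,\mathrm{LHS}^{1/2}\bigl(\int |\d_{r^*}\phi|^2\,\tfrac{N^2(1+|q^*|)^2}{Vr^{*2}}\bigr)^{1/2}$, and closing the estimate requires only the pointwise bound $N(r^*)\lesssim V(q^*)\,r^{*2}/(1+|q^*|)$ --- \emph{any} finite constant suffices, because the Cauchy--Schwarz produces $\sqrt{\mathrm{LHS}}\cdot\sqrt{\mathrm{RHS}}$ rather than something to be absorbed with a small factor. Verifying that pointwise bound (separately for $q^*<0$ and $q^*>0$, using the explicit form of $w$ and the factor $(1+t+|q^*|)^{a-1}$) is where the restriction $-1\leq a\leq 1$ actually enters for the first inequality, and where the extra factor $(1+|q^*|)^{-a}(1+q^*_-)^{-2\mu}$ and the condition $a<2\min(\gamma,\mu)$ enter for the second. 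That computation is the real content of the lemma, and it is what your proposal elides.
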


\section{Decay estimates}
\subsection{Weighted Klainerman-Sobolev inequality}\label{KS}
We will use the following version of Klainerman-Sobolev inequality, which slightly generalizes the one in \cite{04} simply in view of the proof there:
\begin{Prop}
For any smooth function $\phi(t,\x)$ which is spatially compactly supported, we have $$|\phi(t,\x)|(1+t+|q^*|)(1+|q^*|)w_i^{1/2}(q^*)\leq C\sum_{|I|\leq 3} ||w_i^{1/2}(q^*)\z^I\phi(t,\cdot)||_{L^2(\mathbb{R}^3)},\ \, i=0,1,2$$
where $w_0=w$ and $w_1$ are defined in \eqref{eqforw}, \eqref{eqforw1}, and $w_2=1$.
\end{Prop}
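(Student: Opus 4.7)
\emph{Proof plan.} This is a direct extension of the weighted Klainerman--Sobolev inequality (Proposition~14.1 of \cite{04}) to the tilded coordinate system, and as the paper remarks, the argument goes through verbatim. I would only need to check a handful of structural facts about the tilded commuting vector fields $\z$.

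The first step is to record the analog of \eqref{dtoZ} in the new frame: by inverting the linear system defining the $\z$,
\begin{equation*}
    |\dtb\phi|\lesssim \frac{1}{1+t+r^*}\sum_{|I|\leq 1}|\z^I\phi|,\qquad |\dt\phi|\lesssim \frac{1}{1+|q^*|}\sum_{|I|\leq 1}|\z^I\phi|,
\end{equation*}
where $\dtb$ denotes any derivative tangential to the outgoing tilded null cones (i.e.\ in a direction of $\lt$ or $\widetilde S_i$). These follow directly from the explicit formulas for $\widetilde\Omega_{ij},\widetilde\Omega_{0i},\widetilde S$; the fact that the tilded vector fields differ from the Minkowski ones by $O(M\log r/r)$ via the transition matrices (Proposition~\ref{estimatetransitionmatricesprop}) produces only subleading corrections.

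Next, fix a point $(t_0,\x_0)$ with $r_0^*=|\x_0|$ and $q_0^*=r_0^*-t_0$, and consider a box $B\subset\{t=t_0\}$ centered at $\x_0$ of radial thickness comparable to $1+|q_0^*|$ (in $\dt_{r^*}$) and angular aperture comparable to $(1+t_0+r_0^*)^{-1}$ on the sphere. On such a box the weight $w_i(q^*)$ varies only by a constant factor relative to $w_i(q_0^*)$, so it can be pulled out of the integral. Rescaling $B$ to a unit cube, the rescaled Cartesian derivatives convert into factors of $(1+|q_0^*|)\dt_{r^*}$ and $(1+t_0+r_0^*)\dtb$, both controlled by $\z$-derivatives via the two estimates above. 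Applying the standard 3D Sobolev embedding $H^3\hookrightarrow L^\infty$ on the rescaled cube, absorbing the weight factor, and undoing the scaling, yields the claimed pointwise bound at $(t_0,\x_0)$.

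\emph{Main obstacle.} There is no conceptual novelty: the main concern is a bookkeeping one, namely that the corrections coming from the coordinate transformation (the $\dt$ versus $\d$ discrepancy, and the mild non-orthogonality of $\widetilde S_i$ to $\dt_{r^*}$ caused by $r^* \neq r$) do not spoil the sharp derivative-control estimates above. Since all such errors are controlled by $O(M\log r/r)$ via Proposition~\ref{estimatetransitionmatricesprop} and the decomposition of $\mt_0$ in Lemma~\ref{m0lemma}, they can be absorbed for $M$ small, and the estimate reduces to its flat-space counterpart in \cite{04}.
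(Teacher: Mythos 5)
Your proposal is correct and takes essentially the same approach as the paper: invoke the proof of the weighted Klainerman--Sobolev inequality from \cite{04}, checking that the weight functions $w_i$ satisfy $|w_i'(q^*)|(1+|q^*|)\lesssim w_i(q^*)$ and $w_i(q^*)\approx w_i(t)$ in the far interior, so the dyadic-box/rescaling/Sobolev argument applies verbatim. One small simplification you could make: the concern in your ``main obstacle'' paragraph about coordinate-transformation corrections is a non-issue, because the proposition is stated entirely in the $(\widetilde t,\x)$ coordinates, where the $\z$ are literally the Minkowski commuting vector fields for the flat metric $\mh$ and $r^*=|\x|$ is the genuine radial coordinate --- so there are no $O(M\log r/r)$ errors to absorb; the only new ingredient relative to \cite{04} is the wider class of admissible weights.
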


The proof are the same since $w_i$ satisfies $|w_i'(q^*)|(1+|q^*|)\leq Cw_i(q^*)$ and $w_i(q^*)\approx w_i(t)$ in the far interior $r^*\leq t/2$.

This also holds if we replace $\z$ by $\lz$, or the version adapted to the spinor field (\ref{vectorfieldDirac}) which we will define, by the equivalence (\ref{equivalenceLiederivative}) and (\ref{equivalencespinor}).

\subsection{H\"{o}rmander \texorpdfstring{$L^1$}{L1} - \texorpdfstring{$L^\infty$}{Li} estimate}\label{HormanderSection}
We can improve the interior decay of the metric using H\"{o}rmander's $L^1$-$L^\infty$ estimates, see \cite{04}:
\begin{Prop}\label{Hormanderprop}
Suppose $u$ is a solution to the linear inhomogeneous wave equation $\Box u=F$ with vanishing initial data $u|_{t=0}=\d_{t} u|_{t=0}=0$. Then
$$|u(t,x)|(1+t+|x|)\leq C \sum_{|I|\leq 2} \int_0^t \int_{\mathbb{R}^3} \frac{|Z^I F(s,y)|}{1+s+|y|}\, dyds.$$
\end{Prop}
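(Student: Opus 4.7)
My plan is to start from the explicit retarded potential representation in $3+1$ dimensions. Since the initial data vanishes, Duhamel's principle together with Kirchhoff's formula gives
\begin{equation*}
u(t,x)=\frac{1}{4\pi}\int_0^t(t-s)\int_{\mathbb{S}^2}F(s,x+(t-s)\omega)\, dS_\omega\, ds,
\end{equation*}
so that $|u(t,x)|$ is dominated by the integral of $|F(s,y)|/|y-x|$ over the past solid light cone from $(t,x)$. The goal is then to show that multiplication by $(1+t+|x|)$ can be absorbed by commuting at most two Minkowski vector fields $Z\in\{\d_\mu,\Omega_{ij},\Omega_{0i},S\}$ through $F$, together with a gain of the weight $(1+s+|y|)^{-1}$.

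The first step is a null-frame reduction of the wave equation itself. Setting $\widetilde u=ru$ and using null coordinates $p=r+t$, $q=r-t$, the relation $r\, \Box u=-\d_t^2(ru)+\d_r^2(ru)+r^{-1}\slashed{\Delta}(ru)$ becomes
\begin{equation*}
4\, \d_p\d_q\widetilde u=-rF+r^{-1}\slashed{\Delta}\widetilde u,
\end{equation*}
where $\slashed{\Delta}$ is the Laplacian on the sphere, which up to lower-order terms is $\sum_{i<j}\Omega_{ij}^2$. Integrating twice along characteristics (once in $p$ at fixed $q$, once in $q$ at fixed $p$) writes $\widetilde u(t,x)$ as a double integral over the past light cone of $rF$ plus the angular remainder; substituting the identity back into the angular term converts $\slashed{\Delta}\widetilde u$ into $\slashed{\Delta}(\text{source})$, producing at most two $\Omega_{ij}$ applied to $F$.

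The second step is to extract the weight $(1+s+|y|)^{-1}$, for which the scaling field $S=t\d_t+x^i\d_i$ is the natural generator: on the past light cone through the origin, $S$ is essentially the radial vector generating the flow along which $(1+s+|y|)$ scales, so integration by parts against $S$ trades an unweighted integral for one with weight $(1+s+|y|)^{-1}$ at the cost of an $SF$ on the right hand side. The boundary contributions of this integration by parts vanish at $s=0$ (by the zero initial data) and at the tip $(t,x)$ (they match the left hand side in the limit). Finally, Sobolev embedding $H^2(\mathbb{S}^2)\hookrightarrow L^\infty(\mathbb{S}^2)$ converts $L^2$-in-angle estimates into pointwise ones, which is exactly the source of the threshold $|I|\leq 2$ in the statement.

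The main obstacle will be the careful bookkeeping of the competing weights $1+t-s$, $|y-x|$, and $1+s+|y|$ along the past light cone. A natural partition handles this: in the \emph{interior region} $|x|\leq (1+t)/2$, one has $1+t+|x|\sim 1+t$ and the entire gain must come from the $S$-integration-by-parts argument; in the region \emph{near the light cone} $|x|>(1+t)/2$, one has $1+t+|x|\lesssim 1+|y|$ for $y$ on the bulk of the relevant parameter sphere, so most of the weight is present for free and only a mild use of $S$ (or a direct estimate) is needed to handle the region close to the tip. Tracking that the various integrations commute correctly with $Z^I$ — in particular that one does not need more than two vector fields in total — is the key combinatorial step.
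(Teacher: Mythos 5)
The paper does not prove Proposition~\ref{Hormanderprop}; it is quoted from \cite{04}, which attributes it to H\"ormander (a full proof can be found in H\"ormander's textbook on nonlinear hyperbolic equations and in Lindblad's lifespan works). So your sketch has to stand on its own, and it has genuine gaps. The central one is Step~3. After the null-frame reduction $4\,\d_p\d_q\widetilde u = -rF + r^{-1}\slashed\Delta\widetilde u$ and double integration along characteristics, the angular remainder still contains $\widetilde u$, not $F$; substituting the identity back in produces a Neumann series in which each iteration trades one copy of $\widetilde u$ for a nested double integral of $r^{-1}\slashed\Delta\widetilde u$, so after $k$ iterations one has $(\slashed\Delta)^k$, with no reason to stop at two rotations. (If you instead mean to use $\Box(\slashed\Delta u)=\slashed\Delta F$ and replace $\slashed\Delta u$ by its retarded potential, the result is a nested double retarded potential that does not collapse into a bound of the required form.) The established proof drops the null reduction entirely and works directly with Kirchhoff's formula together with the averaging identity $\frac{1}{4\pi}\int_{\mathbb{S}^2}g(|x+\rho\omega|)\,d\omega = \frac{1}{2r\rho}\int_{|\rho-r|}^{\rho+r}\lambda\,g(\lambda)\,d\lambda$ (with $r=|x|$) to reduce to a one-dimensional radial comparison.

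Beyond that, your derivative count does not close and the $S$-integration-by-parts is not a valid step as written. Step~3 asks for two $\Omega_{ij}$'s, Step~4 for one $S$, and Step~5 for two more from spherical Sobolev; even conflating Steps~3 and~5, the $SF$ from Step~4 forces $|I|\le 3>2$. The field $S$ is tangent to forward cones through the origin, not to the backward cone from $(t,x)$ on which the retarded potential is supported, so there is no clean integration by parts, and the claim that the boundary contribution at the tip ``matches the left-hand side in the limit'' would, taken literally, make the inequality circular. Most tellingly, testing the estimate on a bump source supported in $s\approx T\gg 1$, $|y|\lesssim 1$ shows both sides to be of size $T$; the right-hand side reaches this only through $S^2F$, $\Omega_{0i}SF$ or $\Omega_{0i}\Omega_{0j}F$, since rotations and translations hitting $F$ contribute only $O(1)$. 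Your sketch never uses the boosts $\Omega_{0i}$ and produces at most one factor of $S$, so it cannot reproduce the correct scaling: the missing idea is precisely how the full Lorentz algebra, not just rotations and scaling, is deployed to beat the weight along the light cone.
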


We also need to estimate the linear homogeneous solution (also see \cite{04}):
\begin{lem}\label{homogeneouslemma}
If $v$ is the solution to $\Box v=0$, with initial data $v|_{t=0}=v_0$ and $\d_t v|_{t=0}=v_1$, then for any $\y>0$
$$(1+t+r)|v(t,x)|\leq \sup_x \left((1+|x|)^{2+\y} (|v_1(x)|+|\d v_0(x)|)+(1+|x|)^{1+\y}|v_0(x)|\right).$$
\end{lem}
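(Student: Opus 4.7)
}

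The plan is to use Kirchhoff's representation formula for the $3{+}1$-dimensional wave equation with vanishing source. Writing $C_0$ for the supremum on the right-hand side, the hypothesis is that
\begin{equation*}
|v_0(y)| \leq C_0(1+|y|)^{-1-\gamma}, \qquad |v_1(y)| + |\d v_0(y)| \leq C_0(1+|y|)^{-2-\gamma}.
\end{equation*}
Kirchhoff's formula gives
\begin{equation*}
v(t,x) = \frac{1}{4\pi}\int_{S^2}\bigl[v_0(x+t\omega) + t\,\omega\cdot\nabla v_0(x+t\omega) + t\, v_1(x+t\omega)\bigr]\, d\omega,
\end{equation*}
so pointwise
\begin{equation*}
|v(t,x)| \leq \frac{C_0}{4\pi}\int_{S^2}\Bigl[\tfrac{1}{(1+|x+t\omega|)^{1+\gamma}} + \tfrac{t}{(1+|x+t\omega|)^{2+\gamma}}\Bigr]\, d\omega.
\end{equation*}

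The task reduces to bounding $I_k(t,x) := \int_{S^2} (1+|x+t\omega|)^{-k}\, d\omega$ for $k=1+\gamma$ and $k=2+\gamma$. When $|x|>0$ we use the classical change of variables: after integrating out the azimuthal angle one obtains
\begin{equation*}
I_k(t,x) = \frac{2\pi}{t|x|}\int_{\,||x|-t|}^{\,|x|+t}\frac{\rho\, d\rho}{(1+\rho)^k},
\end{equation*}
since on the sphere of radius $t$ centered at $x$ the quantity $\rho=|x+t\omega|$ has image $[\,||x|-t|,\, |x|+t]$.

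I would then split into four regimes and check the target inequality $(1+t+|x|)|v(t,x)|\leq C_0$ in each. (i) Short time $t\leq 1$: here $|x+t\omega|\sim |x|$, so $I_k \lesssim (1+|x|)^{-k}$ and the two terms together give $|v|\lesssim C_0(1+|x|)^{-1-\gamma}$, which is enough. (ii) Interior $|x|\leq t/2$: the sphere stays in the annulus $\{\rho\sim t\}$, so $I_k \lesssim (1+t)^{-k}$ and $|v|\lesssim C_0(1+t)^{-1-\gamma}$. (iii) Exterior $|x|\geq 2t$: we have $|x+t\omega|\geq |x|/2$, so $I_k\lesssim (1+|x|)^{-k}$ and $|v|\lesssim C_0(1+|x|)^{-1-\gamma}$. (iv) Intermediate $t/2\leq |x|\leq 2t$: using the representation above and the elementary estimate $\rho/(1+\rho)^k \leq (1+\rho)^{1-k}$, the integrals are dominated by $t^{1-\gamma}$ (for $k=1+\gamma$) and $(1+||x|-t|)^{-\gamma}$ (for $k=2+\gamma$); dividing by $t|x|\sim t^2$ and inserting into the formula for $v$ yields $(1+t+|x|)|v|\lesssim C_0 (1+||x|-t|)^{-\gamma}\leq C_0$.

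The main obstacle is the intermediate regime (iv), where both $t$ and $|x|$ are large and $||x|-t|$ may be small, so one cannot treat $(1+|x+t\omega|)$ as uniformly comparable to $1+t+|x|$; the argument then relies on the positive $\gamma$-decay together with the Jacobian factor $(t|x|)^{-1}$ absorbing one power of $t$. The boundary case $|x|=0$ is handled by continuity (or directly by $|x+t\omega|=t$). Combining the four regimes gives $(1+t+r)|v(t,x)|\leq C\, C_0$, which is the claimed estimate (after absorbing the universal constant $C$ into the sup-definition of $C_0$, or keeping it explicit as in the statement).
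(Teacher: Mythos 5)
Your proof uses the classical Kirchhoff representation and a case analysis on the relative sizes of $t$, $|x|$, and $\big||x|-t\big|$; this is the standard route to this estimate, and it matches (in spirit) the argument in the reference the paper cites for this lemma — the paper itself does not supply a proof. The change of variables you perform, integrating out the azimuthal angle to get a one-dimensional integral in $\rho = |x+t\omega|$ with the Jacobian $(t|x|)^{-1}$, is exactly right, and the four regimes are split in the natural way.

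Two small points worth noting. First, in regime (iv) your bound on the $k=1+\gamma$ integral, $\int_0^{|x|+t}(1+\rho)^{-\gamma}\,d\rho \lesssim t^{1-\gamma}$, tacitly assumes $\gamma<1$; for $\gamma\geq 1$ the integral is instead $\lesssim \ln t$ or $\lesssim 1$, so the argument only improves, but since the lemma claims ``any $\gamma>0$'' this should be stated (it is also the only place in the proof where the size of $\gamma$ enters). Second, in the same regime the two contributions to $|v|$ are $\lesssim C_0 t^{-1-\gamma}$ and $\lesssim C_0 t^{-1}(1+||x|-t|)^{-\gamma}$, and you must carry both: the display $(1+t+|x|)|v|\lesssim C_0(1+||x|-t|)^{-\gamma}$ drops the first contribution, which is not always the smaller one (for instance when $||x|-t|\approx t$). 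The final conclusion $(1+t+r)|v|\lesssim C_0$ is unaffected since both pieces are $\lesssim C_0 t^{-1}$, but the intermediate display as written is not literally correct. Finally, you are right to flag that the argument produces the inequality only up to a universal constant; the lemma as printed omits the $C$, but it should be read as carrying one, consistent with the H\"ormander estimate just above it.
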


\subsection{Weighted \texorpdfstring{$L^\infty$}{Li} - \texorpdfstring{$L^\infty$}{Li} estimate}\label{improveddecay}
%(lemma, to be applied for $h$ and $\psi$)

%Improved decay estimate for $\psi$: Use the wave equation for $\psi$ we can integrate along the approximate characteristic just like vacuum case. Here we use the weak decay of $\psi$. This will give more decay in $q$ and we get $|\d \psi|\leq (1+t+r)^{-1}(1+|q|)^{-\frac 32 +\delta}(1+q_-)^\rho (1+q_+)^{-\gamma}$. Integrate along $q=$ const to initial hypersurface we get $|\psi|\leq (1+t+r)^{-1}(1+|q|)^{-\frac 12 +\delta}(1+q_-)^\rho (1+q_+)^{-\gamma}$.

We need to integrate on the direction of the outgoing light cone in order to get the estimates that are predicted in the asymptotic system. Following \cite{lindblad1990lifespan} and \cite{KLMKG21} we have 
\begin{lem}\label{improveddecaylemma}
Let $D_t=\{(t,\x)\colon |t-|\x||\leq c_0 t\}$ for some constant $0<c_0<1$. Let $\wb(q^*)$ be any continuous function, where $q^*=r^*-t$, $r^*=r+M\ln r$. Suppose that $\Box^*\phi_{\mu\nu}=F_{\mu\nu}$. Then for $\Ut,\widetilde V\in \{\lt,\ltb,\widetilde S_1,\widetilde S_2\}$ and $\phi_{\Ut\widetilde V}=\phi_{\mu\nu}\Ut^\mu \widetilde{V}^\nu$ we have
\begin{multline*}
    (1+t+r^*)|\dt \phi_{\Ut\widetilde V}(t,x) \wb(q^*)|\lesssim \sup_{|q^*|/4\leq \tau\leq t}\sum_{|I|\leq 1} ||\z^I\phi(\tau,\cdot) \wb||_{L^\infty}  \\
    +\int_{|q^*|/4}^t (1+\tau)||F_{\Ut\widetilde V}(\tau,\cdot) \wb||_{L^\infty(D_\tau)}+\sum_{|I|\leq 2} (1+\tau)^{-1} ||\z^I \phi(\tau,\cdot)\wb||_{L^\infty(D_\tau)} \, d\tau.
\end{multline*}
\end{lem}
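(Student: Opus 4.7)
The plan is to adapt the weighted $L^\infty$--$L^\infty$ technique of \cite{lindblad1990lifespan,KLMKG21} to the starred coordinates. The key scalar identity, obtained by writing $\Box^* = -\dt_t^2+\dt_{r^*}^2+2r^{*-1}\dt_{r^*}+r^{*-2}\widetilde{\slashed\Delta}$ in spherical coordinates, is
\[
\lt\ltb(r^*f) = -r^*\,\Box^* f + r^{*-1}\widetilde{\slashed\Delta} f.
\]
Combined with the fact that $\lt q^*=0$ (so that $\wb(q^*)$ and $\wb'(q^*)$ are annihilated by $\lt$), this lets one convert the wave equation for $\phi_{\Ut\widetilde V}$ into a transport equation along the $\lt$-characteristic for the weighted quantity $\Psi := \ltb(r^*\phi_{\Ut\widetilde V}\wb)$.

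First I would reduce controlling $(1+t+r^*)|\dt\phi_{\Ut\widetilde V}\wb|$ at $(t,x)\in D_t$ to controlling $|\Psi(t,x)|$: decomposing $\dt$ into $\lt$, $\ltb$ and tangential pieces, the good derivatives contribute $(1+t+r^*)^{-1}\sum_{|I|\le 1}|\z^I\phi|\wb$ by the usual vector-field inequality, while the bad $\ltb$-piece satisfies $(1+t+r^*)|\ltb\phi_{\Ut\widetilde V}|\wb \lesssim |\Psi|+|\phi_{\Ut\widetilde V}|\wb$ using $\ltb r^*=-1$ and $|\wb'|\lesssim \wb/(1+|q^*|)$. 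Next, using $\lt\wb=\lt\wb'=0$, one computes
\[
\lt\Psi \;=\; \wb\,\lt\ltb(r^*\phi_{\Ut\widetilde V}) - 2\wb'\,\lt(r^*\phi_{\Ut\widetilde V}) \;=\; -r^*\wb\,\Box^*(\phi_{\Ut\widetilde V}) + r^{*-1}\wb\,\widetilde{\slashed\Delta}\phi_{\Ut\widetilde V} - 2\wb'\,\lt(r^*\phi_{\Ut\widetilde V}),
\]
and writes $\Box^*(\phi_{\Ut\widetilde V}) = F_{\Ut\widetilde V} + [\Box^*,\Ut^\mu\widetilde V^\nu]\phi_{\mu\nu}$. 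Then I would integrate $\lt\Psi$ along the $\lt$-characteristic through $(t,x)$, on which $r^*(\tau)=\tau+q^*$ with $q^*$ constant, from an initial time $\tau_0$ of order $|q^*|$ up to $\tau=t$. At $\tau_0$ one has $|\Psi(\tau_0)|\lesssim \sum_{|I|\le 1}|\z^I\phi|\wb$ by combining $r^*(\tau_0)\lesssim 1+|q^*|$ with the vector-field identity $(1+|q^*|)|\dt\phi|\lesssim\sum|\z\phi|$. Collecting the boundary term and the integral contributions, then dividing by $1+t+r^*\sim r^*$ (valid in $D_t$), gives the claim.

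The main obstacle is the commutator $[\Box^*,\Ut^\mu\widetilde V^\nu]$. A naive bound $|\dt(\Ut^\mu\widetilde V^\nu)|=O(r^{*-1})$ together with $|\dt\phi|\lesssim(1+|q^*|)^{-1}|\z\phi|$ would give a commutator contribution of size $r^{*-1}(1+|q^*|)^{-1}|\z\phi|$, which after multiplication by $r^*$ becomes $(1+|q^*|)^{-1}|\z\phi|$; this is too large in $D_t$ to fit the target $\int(1+\tau)^{-1}|\z^I\phi|\wb$. The resolution is that for each $\Ut\in\{\lt,\ltb,\widetilde S_1,\widetilde S_2\}$ the coordinate derivative $\dt_\alpha\Ut^\mu$ is purely tangential (e.g.\ $\dt_j\lt^i=\dt_j\omega^i$ projects onto the sphere), so the commutator effectively involves an angular derivative $\widetilde{\slashed\d}\phi\sim r^{*-1}|\z\phi|$ rather than a full $\dt\phi$, producing the required $O(r^{*-2})|\z\phi|$. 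The angular Laplacian $\widetilde{\slashed\Delta}\phi_{\Ut\widetilde V}$ is treated by the same tangential structure through the rotation fields $\widetilde\Omega_{ij}$, and the $\wb'$-terms are absorbed using $|\wb'|\lesssim\wb/(1+|q^*|)$.
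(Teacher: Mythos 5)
Your overall strategy -- rewrite $\Box^*$ in null form, integrate along the $\lt$-characteristic (on which $q^*$ is constant), and use that the frame vectors have tangential derivatives -- is the paper's argument. But there is a concrete gap in the way you carry the weight, and it matters.

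You define $\Psi = \ltb(r^*\phi_{\Ut\widetilde V}\wb)$, putting $\wb$ \emph{inside} the $\ltb$-derivative. Expanding, $\Psi = -\phi_{\Ut\widetilde V}\wb + r^*(\ltb\phi_{\Ut\widetilde V})\wb - 2r^*\phi_{\Ut\widetilde V}\wb'$, so the correct version of your reduction inequality is
\[
r^*|\ltb\phi_{\Ut\widetilde V}|\wb \;\lesssim\; |\Psi| + |\phi_{\Ut\widetilde V}|\wb + r^*|\phi_{\Ut\widetilde V}|\,|\wb'|.
\]
The last term is bounded only by $\frac{r^*}{1+|q^*|}|\phi_{\Ut\widetilde V}|\wb$, and $\frac{r^*}{1+|q^*|}$ is of order $t$ in the interior of $D_t$ where $|q^*|$ is small; it is \emph{not} $\lesssim |\phi_{\Ut\widetilde V}|\wb$ as you claim. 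The same spurious factor reappears in the transport equation for $\Psi$ through the term $-2\wb'\,\lt(r^*\phi_{\Ut\widetilde V})$, whose time integral is $\frac{1}{1+|q^*|}\int\wb\sum|\z^I\phi|\,d\tau$ rather than the $\int(1+\tau)^{-1}\wb\sum|\z^I\phi|\,d\tau$ demanded by the lemma. Additionally, the lemma assumes only that $\wb$ is continuous, so $\wb'$ need not even exist. The paper's proof sidesteps this entirely: since $q^*$ is constant along the $\d_{s^*}$-flow lines, one multiplies the pointwise bound on $\d_{s^*}\d_{q^*}(r^*\phi_{\Ut\widetilde V})$ by the constant $\wb(q^*)$ and integrates; no derivative of $\wb$ is ever taken. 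The same fix works in your framework: define $\Psi' = \wb\cdot\ltb(r^*\phi_{\Ut\widetilde V})$ (weight multiplied \emph{after} the derivative). Then $\lt\Psi' = \wb\,\lt\ltb(r^*\phi_{\Ut\widetilde V})$ with no $\wb'$ term, $r^*|\ltb\phi_{\Ut\widetilde V}|\wb \lesssim |\Psi'| + |\phi_{\Ut\widetilde V}|\wb$ exactly as you want, and the rest of your argument goes through.

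One further remark on the commutator: your observation that $\dt_\alpha\Ut^\mu$ is tangential is correct and is what makes the argument close, but the paper phrases this more efficiently. Rather than computing $\Box^*\phi_{\Ut\widetilde V} = F_{\Ut\widetilde V} + [\Box^*,\Ut^\mu\widetilde V^\nu]\phi_{\mu\nu}$ and then massaging the commutator, the paper uses that the frame vectors commute with $\d_{s^*}$ and $\d_{q^*}$ to write $\d_{s^*}\d_{q^*}(r^*\phi_{\Ut\widetilde V}) = \Ut^\mu\widetilde V^\nu\,\d_{s^*}\d_{q^*}(r^*\phi_{\mu\nu})$, substitutes the equation for $\phi_{\mu\nu}$ directly, and is left only with the angular Laplacian. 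No commutator ever appears, so the analysis of $\dt_\alpha\Ut^\mu$ is not needed. Both routes reach the same estimate; the paper's is just cleaner.
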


\begin{proof}
For tangential derivatives, and for points $(t,\x)$ outside the region $D_t$, we can directly use \eqref{dtoZ}, so we focus on $\d_{q^*}\phi_{\Ut\widetilde V}$ at points inside $D=\cup_{\tau\geq 0} D_\tau$. Recall the decomposition of the Minkowski wave operator $\Box^*\phi=4(r^*)^{-1}\d_{s^*}\d_{q^*}(r^*\phi)+(r^*)^{-2}\triangle_\omega\phi$, where $\d_{q^*}=\frac 12(\d_{r^*}-\dt_t)$, $\d_{s^*}=\frac 12(\d_{r^*}+\dt_t)$, and $\bigtriangleup_\omega=\sum_{i,j}\Omega_{ij}^2$. Notice that the null frame vectors commute with $\d_{s^*}$ and $\d_{q^*}$, so this gives the estimate
\begin{equation*}
    |\d_{s^*}\d_{q^*}(r^*\phi_{\Ut\widetilde V})|\lesssim r^*|F_{\Ut\widetilde V}|+(r^*)^{-1}\sum_{|I|\leq 2}|\Ut^\mu \widetilde V^\nu \z^I \phi_{\mu\nu}|.
\end{equation*}
%Then using Lemma \ref{} we get 
%\begin{equation*}
%    |\d_{s^*}\d_{q^*}(r^*\phi_{UV})|\lesssim r|F_{UV}|+r^{-1}\sum_{|I|\leq 2}|U^\mu V^\nu \z^I \phi_{UV}|.
%\end{equation*}
Now integrating along the flow lines of $\d_{s^*}$ from the boundary of $D=\cup_{\tau\geq 0} D_\tau$ to $(t,\x)$, and using that $\wb(q^*)$ is constant along these flow lines, we get the result.
\end{proof}

Clearly, same estimate works if we consider the scalar case $\Box^*\phi=F$, and we shall apply this to components of the spinor field.

%Will let $\wb=(1+|q^*|)^{\frac 32-\rho-\delta}$ when $q^*<0$.

\section{Bootstrap assumptions}\label{BA}
We define the energy for the metric: $\displaystyle E_N(t)=\sup_{0\leq\tau\leq t}\int_{\Sigma_\tau} |\dt \z^I \hi|^2w(q^*)\, d\x,$ where $$w(q^*)=\begin{cases}
1+(1+|q^*|)^{-2\mu},\, q^*<0,\\
1+(1+|q^*|)^{1+2\gamma},\, q^*\geq 0.
\end{cases}$$%(should let w' nontrivial even if w is like 1)
Similarly, we define $\displaystyle E^1_N(t)=\sup_{0\leq\tau\leq t}\int_{\Sigma_\tau}|\dt \z^I \psi|^2 w_1(q^*)\, d\x$,
where $$w_1(q^*)=\begin{cases}
1+(1+|q^*|)^{-2\mu},\, q^*<0,\\
1+(1+|q^*|)^{2+2s},\, q^*\geq 0.
\end{cases}$$
These are the same weight functions we defined in \eqref{eqforw}, \eqref{eqforw1}.
We also consider the $L^2$ norm for the spinor field itself: $\displaystyle C_N(t)=\sup_{0\leq\tau\leq t}\int_{\Sigma_\tau} |\psi|^2 \,d\x$.

Let $T$ be the maximal time such that the following inequalities hold:
\begin{equation}\label{BAeq}
    E_N(t)+C_N(t)\leq C_b\varepsilon^2(1+t)^{2\delta},\  E_N^1(t)\leq C_b\varepsilon^2,
\end{equation}
where $C_b$ is a constant which we will determine later (and all constant $C$ below might be dependent on $C_b$). We take $\delta>0$ small so that $\delta<s/10$. By local existence result, we know that $T>0$, and we want to show that these estimates can be improved, so by continuity we must have $T=\infty$.

\begin{Prop}[Weak decay]\label{weakdecay}
Suppose the bootstrap assumption \eqref{BAeq} holds. Then
\begin{equation}
\stepcounter{equation}
\tag{\theequation a}\label{weakdecaydh1}
    |\dt \z^I \hi|\leq
\begin{cases}
C\varepsilon(1+t)^{\delta}(1+t+r^*)^{-1} (1+|q^*|)^{-\frac12},\ q^*<0\\
C\varepsilon(1+t)^{\delta}(1+t+r^*)^{-1} (1+|q^*|)^{-1-\gamma},\ q^*\geq 0
\end{cases}
,\ |I|\leq N-3
\end{equation}

\begin{equation}\label{weakdecayh1}
\tag{\theequation b}|\z^I \hi|\leq
    \begin{cases}
    C\varepsilon(1+t+r^*)^{-1+\delta} (1+|q^*|)^{\frac 12},\ q^*<0\\
    C\varepsilon(1+t+r^*)^{-1+\delta} (1+|q^*|)^{-\gamma},\ q^*\geq 0
    \end{cases}
,\ |I|\leq N-3
\end{equation}

\begin{equation}
\tag{\theequation d}
|\dt \z^I \psi|\leq 
\begin{cases}
C\varepsilon(1+t)^{\delta}(1+t+r^*)^{-1} (1+|q^*|)^{-\frac 32},\ q^*<0\\
C\varepsilon(1+t+r^*)^{-1} (1+|q^*|)^{-\frac 32-s},\ q^*\geq 0
\end{cases}
 ,\ |I|\leq N-4\end{equation}

\begin{equation}
\tag{\theequation c}
|\z^I \psi|\leq\begin{cases}
C\varepsilon (1+t+r^*)^{-1+\delta} (1+|q^*|)^{-\frac 12},\ q^*<0\\
C\varepsilon(1+t+r^*)^{-1} (1+|q^*|)^{-\frac 12-s},\ q^*\geq 0
\end{cases}
,\ |I|\leq N-3
\end{equation}
\end{Prop}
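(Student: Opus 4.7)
The plan is to deduce each of the four pointwise bounds from the bootstrap assumptions (\ref{BAeq}) by means of the weighted Klainerman--Sobolev inequality of Section \ref{KS} together with integration along outgoing radial rays from spatial infinity. The index budget is comfortable: $E_N$, $E_N^1$ and $C_N$ control $L^2$ norms summed over $|J|\leq N$, so a Sobolev embedding using $\sum_{|J|\leq 3}$ produces pointwise bounds for $|I|\leq N-3$.

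The first step is (a) and (d). For (a), apply weighted Klainerman--Sobolev with weight $w$ to $\dt\z^I\hi$; since $|I|+|J|\leq N$ for $|J|\leq 3$, the right hand side is dominated by $\sqrt{E_N(t)}\lesssim\sqrt{C_b}\,\varepsilon(1+t)^\delta$. Reading off the factor $w^{-1/2}$ separately in the interior ($w\sim 1$) and in the exterior ($w\sim (1+q^*)^{1+2\gamma}$) yields the two cases of (a). Estimate (d) is completely analogous, with $w_1$ and $E_N^1$ in place of $w$ and $E_N$; the absence of a $(1+t)^\delta$ factor in the exterior comes from $E_N^1\leq C_b\varepsilon^2$ in the bootstrap.

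The second step is (b) and the exterior case of (c), both by integration from spatial infinity. Since $\hi,\psi\to 0$ as $r^*\to\infty$, one writes $\z^I\hi(t,\x)=-\int_{r^*}^\infty\dt_{r'}\z^I\hi(t,r'\omega)\,dr'$ (and similarly for $\psi$) and bounds the integrand by (a) (respectively (d)). In the exterior $q^*\geq 0$, $\int_{q^*}^\infty(1+q')^{-1-\gamma}dq'\lesssim(1+q^*)^{-\gamma}$ produces the exterior decay in (b) and $\int_{q^*}^\infty(1+q')^{-3/2-s}dq'\lesssim(1+q^*)^{-1/2-s}$ the exterior decay in (c). For the interior case of (b), split the ray integration into $r'\in[r^*,t]$ and $r'>t$; on the interior piece $1+t+r'\approx 1+t$ and $\int_{q^*}^0(1+|q'|)^{-1/2}dq'\lesssim(1+|q^*|)^{1/2}$, producing the claimed $(1+|q^*|)^{1/2}$ factor.

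The main obstacle is the interior bound in (c), where direct integration of (d) from infinity yields only $C\varepsilon(1+t)^{-1+\delta}$ and fails to give the claimed $(1+|q^*|)^{-1/2}$ improvement, because the weight $w_1$ is too weak to survive crossing from exterior to interior. The resolution is to bypass (d) in the interior and apply Klainerman--Sobolev with the trivial weight $w_2=1$ directly to $\z^I\psi$, using $\sqrt{C_N(t)}\lesssim\sqrt{C_b}\,\varepsilon(1+t)^\delta$; here $C_N$ is understood to control $\|\z^J\psi\|_{L^2}$ for all $|J|\leq N$ via the modified Dirac commuting vector fields on which the $L^2$ estimate of Proposition \ref{energyestimateDirac} is applied. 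Since $1+t+|q^*|\geq 1+t$ in the interior, this yields $|\z^I\psi|\lesssim\varepsilon(1+t)^{-1+\delta}(1+|q^*|)^{-1}$, which is stronger than the stated bound. This is precisely the step in which the $L^2$ estimate for the Dirac equation itself---the distinctive new ingredient of the paper---enters, as emphasized in the introduction.
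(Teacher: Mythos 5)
Your first step and last step are on target, but the claimed analogy between (a) and (d) in the interior breaks down and leaves a genuine gap.

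For (a), the weighted Klainerman--Sobolev inequality with $w$ and $E_N$ gives $|\dt\z^I\hi|\lesssim\varepsilon(1+t)^{\delta}(1+t+|q^*|)^{-1}(1+|q^*|)^{-1}w^{-1/2}$, and in the interior $w\sim 1$ so you obtain the extra factor $(1+|q^*|)^{-1}$, which is \emph{stronger} than the stated $(1+|q^*|)^{-1/2}$. But in (d) the stated interior decay is $(1+|q^*|)^{-3/2}$, whereas the same Klainerman--Sobolev argument (now with $w_1$, which also satisfies $w_1\sim 1$ for $q^*<0$) only yields $(1+|q^*|)^{-1}$. The missing half power in $q^*$ cannot be absorbed by the $(1+t)^\delta$ factor, since $|q^*|$ can be of order $t$ in the interior and $\delta$ is small. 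So ``completely analogous'' fails precisely where the statement asks for \emph{more} than the raw Klainerman--Sobolev output. This is also signalled by the index count: (d) is stated only for $|I|\leq N-4$, one fewer derivative than (a), which is impossible to explain if (d) came directly from Klainerman--Sobolev.

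The paper's route in the interior is the reverse of yours: first prove (c) for $q^*<0$ from the $C_N$-bootstrap and Klainerman--Sobolev with $w_2=1$ --- exactly as you do, and this gives $|\z^I\psi|\lesssim\varepsilon(1+t+r^*)^{-1+\delta}(1+|q^*|)^{-1}$ --- and \emph{then} derive (d) for $q^*<0$ from \eqref{dtoZ}, i.e.\ $|\dt\z^I\psi|\lesssim(1+|q^*|)^{-1}\sum_{|J|\leq 1}|\z^J\z^I\psi|$. That costs one vector field (hence $|I|\leq N-4$) and produces $(1+|q^*|)^{-2}$, which beats $(1+|q^*|)^{-3/2}$. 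You actually have all the ingredients --- the $C_N$-based bound on $\z^I\psi$ is in your text --- but you put (d) before (c), so the interior part of (d) is left unproven. The fix is simply to establish the interior part of (c) first and cite \eqref{dtoZ}.

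A smaller point: your integration for (b) and for the exterior of (c) is along radial rays at fixed $t$ out to $r^*=\infty$, whereas the paper integrates along the $\d_{q^*}$ direction back to $t=0$ and uses a separately-derived bound on the initial data. Both paths are valid; yours is a self-contained single integration at time $t$ and is arguably cleaner, at the mild cost of the splitting $r'\in[r^*,t]$, $r'>t$ that you already carry out.
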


\begin{proof}
The decay (\ref{weakdecaydh1}) follows from the bootstrap assumption and the Klainerman-Sobolev inequality (Proposition \ref{KS}). 

From \eqref{weakdecaydh1} at $t=0$ we see that 
\begin{equation}\label{dhinr}
    |\d\z^I\hi(0,\x)|\leq C\varepsilon(1+r^*)^{-2-\y},\ \ |I|\leq N-3,
\end{equation}
we can integrate along $\d_{r^*}$ direction to spacelike infinity to get for the inital data that 
\begin{equation}\label{hinr}
    |\z^I h^1(0,\x)|\leq C\varepsilon (1+r^*)^{-1-\gamma},\ \ |I|\leq N-3.
\end{equation} Then we integrate along the integral curve of $\d_{q^*}$ to the initial hypersurface
and use the decay property of initial data to get (\ref{weakdecayh1}) for $q^*\leq 0$, and the decay on $q^*=0$ to get for $q^*<0$. 

For the spinor field $\psi$, when $q^*\geq 0$ we first use the bound of $E^1_N(t)$, and integrate along $\d_{q^*}$ as the metric case; for $q^*<0$, we use the bound of $C_N(t)$ first to get estimate for $|\z^I \psi|$ and then use (\ref{dtoZ}).
\end{proof}

\begin{Cor}
For tangential derivatives we have 
\begin{equation}
    |\dtb\z^I\hi|\leq C\varepsilon (1+t+r^*)^{-2+\delta}(1+|q^*|)^{\frac 12}(1+q_+^*)^{-\frac 12-\gamma},\ \ |I|\leq N-4,
\end{equation}
\begin{equation}
    |\dtb\z^I\psi|\leq C\varepsilon (1+t+r^*)^{-2+\delta}(1+|q^*|)^{-\frac 12}(1+q_+^*)^{-s},\ \ |I|\leq N-4.
\end{equation}
\end{Cor}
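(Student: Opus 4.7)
The corollary is a straightforward consequence of Proposition \ref{weakdecay} combined with the tangential-derivative-to-vector-field inequality. The plan is as follows.

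First, in the new coordinate system $(t,\tilde x)$, the same algebraic identities that give \eqref{dtoZ} carry over verbatim, since $\lt = (1+t+r^*)^{-1}(\widetilde S + \omega^i \widetilde\Omega_{0i})$-type formulas (modulo harmless bounded coefficients) and $\widetilde{\slashed\partial}_i = (r^*)^{-1}\omega^j \widetilde\Omega_{ij}$-type formulas hold. Hence for any smooth field $\phi$,
\begin{equation*}
|\dtb \phi| \lesssim (1+t+r^*)^{-1}\sum_{|I|\leq 1}|\z^I\phi|.
\end{equation*}
Applying this to $\phi = \z^I \hi$ (respectively $\z^I \psi$) with $|I|\leq N-4$ yields
\begin{equation*}
|\dtb\z^I \hi| \lesssim (1+t+r^*)^{-1}\sum_{|J|\leq N-3}|\z^J \hi|, \qquad |\dtb\z^I \psi| \lesssim (1+t+r^*)^{-1}\sum_{|J|\leq N-3}|\z^J \psi|.
\end{equation*}

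Next, I insert the bounds from Proposition \ref{weakdecay} on the right-hand side. For the metric, \eqref{weakdecayh1} gives $|\z^J\hi|\leq C\varepsilon (1+t+r^*)^{-1+\delta}(1+|q^*|)^{1/2}$ when $q^*<0$ and $|\z^J\hi|\leq C\varepsilon(1+t+r^*)^{-1+\delta}(1+q^*)^{-\gamma}$ when $q^*\geq 0$; multiplying by $(1+t+r^*)^{-1}$ produces exactly
\begin{equation*}
|\dtb\z^I\hi| \leq C\varepsilon (1+t+r^*)^{-2+\delta}(1+|q^*|)^{1/2}(1+q_+^*)^{-1/2-\gamma},
\end{equation*}
after noting that the combined weight $(1+|q^*|)^{1/2}(1+q_+^*)^{-1/2-\gamma}$ reduces to $(1+|q^*|)^{1/2}$ for $q^*<0$ and to $(1+q^*)^{-\gamma}$ for $q^*\geq 0$, matching the two cases of \eqref{weakdecayh1}.

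For the spinor field, the bound on $|\z^J\psi|$ in Proposition \ref{weakdecay} reads $C\varepsilon (1+t+r^*)^{-1+\delta}(1+|q^*|)^{-1/2}$ for $q^*<0$ and $C\varepsilon(1+t+r^*)^{-1}(1+|q^*|)^{-1/2-s}$ for $q^*\geq 0$ (the second case has no $(1+t+r^*)^{\delta}$ loss, but enlarging the exponent to $-2+\delta$ is harmless). Multiplying by $(1+t+r^*)^{-1}$ and combining the two cases into $(1+|q^*|)^{-1/2}(1+q_+^*)^{-s}$ yields
\begin{equation*}
|\dtb\z^I\psi| \leq C\varepsilon (1+t+r^*)^{-2+\delta}(1+|q^*|)^{-1/2}(1+q_+^*)^{-s}.
\end{equation*}

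There is essentially no obstacle here; the only point to verify is the analog of \eqref{dtoZ} in the new coordinate system, which is immediate from the definitions of $\lt$, $\widetilde{\slashed\partial}_i$, and the modified commuting vector fields $\z$, and the loss of one order in $|I|$ (so the constraint $|I|\leq N-3$ for Proposition \ref{weakdecay} becomes $|I|\leq N-4$ for the corollary).
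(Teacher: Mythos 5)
Your proof is correct and follows essentially the same route as the paper's (which says only "these estimates immediately follow from the decay above and \eqref{dtoZ}"). Your version just spells out the insertion of the Proposition \ref{weakdecay} bounds and the case distinction for the $q^*$-weights, which the paper leaves implicit.
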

\begin{proof}
These estimate immediately follow from the decay above and $\eqref{dtoZ}$.
\end{proof}

%\begin{Cor}[Weak decay for tetrad quantities]
%\begin{equation}
%\stepcounter{equation}
%\tag{\theequation a}
%    |\dt \z^I (e_a)|\leq
%\begin{cases}
%C\varepsilon (1+t+r)^{-1+\delta} (1+|q|)^{-\frac12},\ q^*<0\\
%C\varepsilon (1+t+r)^{-1+\delta} (1+|q|)^{-1-\gamma}+\varepsilon (1+t+r)^2\ln r,\ q^*>0
%\end{cases}
%, |I|\leq N-2
%\end{equation}
%\begin{equation}\tag{\theequation b}|\z^I ((e_a)^\mu-(\d_a)^\mu)|\leq
%\begin{cases}
%C\varepsilon(1+t+r)^{-1+\delta} (1+|q|)^{\frac 12},\ q^*<0\\
%C\varepsilon(1+t+r)^{-1+\delta} (1+|q|)^{-\gamma}+\varepsilon(1+t+r)^{-1}\ln r,\ q^*>0
%\end{cases}\end{equation}

%\end{Cor}

%\begin{proof}
%From Proposition \ref{GS} we have
%$|\dt \z^I (e_a)|\lesssim \sum_{|J|\leq |I|}|\dt \z^J (\gt-\mh)|+\frac{\chi+\chi'}{1+t+r}|\z^J (\gt-\mh)|\lesssim \sum_{|J|\leq |I|} |\dt\z^J \hi|+\frac{1}{1+t+r}|\z^J \hi|+|\dt \z^J \mt_0|+\frac {1}{1+t+r}|\z^J(\mt_0-\mh)|$. Recall that $M<\varepsilon$, so by Corollary \ref{weakdecay} and (estimate of $m_0$ with vector fields) we get the estimate. The second estimate follows in a similar way.
%\end{proof}

We will often need to estimate terms involing the inverse metric, where we need the following lemmas in \cite{KLMKG21} to get same estimates for $\Hi$ as $\hi$:
\begin{lem}
Let $g_{\a\b}=m_{\a\b}+h^0_{\a\b}+h^1_{\a\b}$, and let $g^{\a\b}=m^{\a\b}+H_0^{\a\b}+H_1^{\a\b}$, where $h_0$ and $H^0$ are defined as in \eqref{h0} and \eqref{H0}. Suppose the estimate $|Z^I(h^0_{\a\b})|+|Z^I(h^1_{\a\b})|\lesssim \varepsilon (1+t+r)^{-1+\delta}$ holds for $|I|\leq N$, and a constant $\delta\in (0,1/2)$. Then for vector fields $X,Y$ with bounded components and sufficently small $\varepsilon>0$, we have
\begin{equation*}
    |X^\a Y^\b Z^I (H_{1 \a\b})|\leq |X^\a Y^\b Z^I (h^1_{\a\b})|+C\varepsilon^2 (1+t+r)^{-2+2\delta},
\end{equation*}
where $H_{1\a\b}$ is from lowering indices of $H_1^{\a\b}$ using Minkowski metric.
\end{lem}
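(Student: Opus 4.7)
The plan is to start from the exact algebraic identity coming from $g^{\a\b}g_{\b\y}=\delta^\a_\y$. Writing $g=m+h$ and $g^{-1}=m^{-1}+H$ and cancelling the $mm^{-1}$ term gives
\begin{equation*}
H^{\a\b}m_{\b\y} = -m^{\a\b}h_{\b\y}-H^{\a\b}h_{\b\y},
\end{equation*}
and lowering the remaining free index with the Minkowski metric yields the \emph{exact} pointwise identity
\begin{equation*}
H_{\a\b} = -h_{\a\b} - H_{\a}{}^{\sigma}h_{\sigma\b},
\end{equation*}
whose right-hand side is linear in $h$ plus a purely bilinear correction in $h$ and $H$.

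Next I would subtract the background relation determined by the Schwarzschild ans\"atze $h^0$ and $H_0$, which are chosen so that $H_0$ matches the leading inverse of $m+h^0$ up to a term supported in $\{r/(1+t)>1/2\}$ and bounded pointwise by $CM^2/r^2 \lesssim C\varepsilon^2 (1+t+r)^{-2}$. After this subtraction the identity becomes
\begin{equation*}
H_{1\,\a\b}= -h^1_{\a\b}+E_{\a\b},
\end{equation*}
where $E_{\a\b}$ is a finite sum of bilinear expressions of the schematic form $(h^0+h^1)_{\a\mu}(H_0+H_1)^{\mu}{}_{\b}$, plus the $O(M^2/r^2)$ background remainder.

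The third step is to apply $Z^I$ and distribute via Leibniz:
\begin{equation*}
Z^I H_{1\,\a\b}= -Z^I h^1_{\a\b}+\sum_{J+K=I}\bigl(Z^J h\bigr)_{\a\mu}\bigl(Z^K H\bigr)^{\mu}{}_{\b}+(\text{analogous lower-order products}).
\end{equation*}
The hypothesis gives $|Z^J h|\lesssim \varepsilon(1+t+r)^{-1+\delta}$ for all $|J|\leq N$. To bound each bilinear piece it suffices to propagate the same pointwise decay from $Z^J h$ to $Z^K H$; I would do this by induction on $|K|$, using the identity $H_{\a\b}=-h_{\a\b}-H_{\a}{}^{\sigma}h_{\sigma\b}$ in Leibniz form, so that $Z^K H$ is expressed in terms of $Z^{K'}h$ and $Z^{K''}H$ with $|K''|<|K|$, and no decay is lost because $\varepsilon$ is small. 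Each bilinear term is then pointwise dominated by $C\varepsilon^2(1+t+r)^{-2+2\delta}$, contracting with the bounded vectors $X^\a Y^\b$ and applying the triangle inequality produces the stated bound.

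The main obstacle is keeping the inductive bound on $|Z^K H|$ uniform in $|K|$ without an extra $\varepsilon$ or $(1+t+r)^\delta$ loss; this is handled by noting that the recursion produced by differentiating the inverse only couples terms with strictly smaller multi-index order, so the bound closes after finitely many steps, and all constants depend only on $N$. The background mass-error term supported where $\chi(r/(1+t))\neq 0,1$ is benign because $M\leq\varepsilon$ and $r\sim 1+t+r$ in its support, so $M^2/r^2\lesssim \varepsilon^2(1+t+r)^{-2}$ as required.
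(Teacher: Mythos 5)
The paper does not prove this lemma; it cites \cite{KLMKG21} and imports the statement, so there is no in-paper proof to compare against. On its own merits your argument is essentially correct: the exact algebraic identity $H_{\a\b}=-h_{\a\b}-H_{\a}{}^{\sigma}h_{\sigma\b}$, subtracting the background, applying $Z^I$ via Leibniz, and bounding each bilinear piece is precisely the natural route and it closes the estimate.

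Two points deserve more care. First, the sign of $H_0$ is load-bearing and you glossed it. For the cancellation $H_{1\a\b}=-h^1_{\a\b}+(\text{bilinear})$ to hold, one needs $H_{0\,\a\b}=m_{\a\mu}m_{\b\nu}H_0^{\mu\nu}=-h^0_{\a\b}$ exactly, i.e.\ $H_0^{\mu\nu}=-\chi\frac{M}{r}\delta^{\mu\nu}$; this is the sign forced by $g^{\mu\nu}\approx m^{\mu\nu}-m^{\mu\a}m^{\nu\b}h^0_{\a\b}$ and it is the convention in \cite{04,KLMKG21}. Equation \eqref{H0} as printed in this paper carries a $+$ sign; taken literally that would leave an uncancelled $-2h^0_{\a\b}=O(\varepsilon(1+t+r)^{-1})$ in $H_{1\a\b}+h^1_{\a\b}$, which is incompatible with the claimed $\varepsilon^2(1+t+r)^{-2+2\delta}$ bound. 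You implicitly used the correct sign by appealing to ``$H_0$ matches the leading inverse,'' but this should be stated explicitly; it is the entire point of the decomposition.

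Second, your description of the induction on $|K|$ for bounding $|Z^K H|$ is slightly off. Applying $Z^K$ to $H_{\a\b}=-h_{\a\b}-H_{\a}{}^{\sigma}h_{\sigma\b}$ by Leibniz produces, among the bilinear terms, the contribution $Z^K(H_\a{}^\sigma)\,h_{\sigma\b}$ in which the full multi-index falls on $H$; this does \emph{not} have $|K''|<|K|$. The recursion therefore does not strictly decrease multi-index order; rather, one must move this top-order term to the left and invert the map $M_{\a\b}\mapsto M_{\a\b}+M_\a{}^\sigma h_{\sigma\b}$, which is possible with a constant uniformly close to $1$ precisely because $|h|\lesssim\varepsilon$ is small. (Equivalently, one can bypass the induction entirely by applying $Z^I$ term by term to the Neumann series $H=\sum_{k\geq 1}(-1)^k (m^{-1}h)^k m^{-1}$ and summing the geometric series.) You invoke smallness of $\varepsilon$ in the right place, but you attribute the closing of the induction to a decrease in multi-index order that is not actually there.
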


\begin{lem}\label{lemmaH1equivh1integral}
Let $\wb$ be a positive function, and take $N$ such that $|Z^I(h^0_{\a\b})|+|Z^I(h^1_{\a\b})|\lesssim \varepsilon (1+t+r)^{-1+\delta}$ holds for $|I|\leq N/2$. Then, for a given region $\Omega$ with volume element $dV$
\begin{equation*}
    \int_\Omega |Z^I(H_{1\a\b})|^2\wb dV\leq \int_\Omega |Z^I(h^1_{\a\b})|^2 \wb dV+C\varepsilon \sum_{|J|\leq |I|-1}\int_\Omega (1+t+r)^{-2+2\delta}(|Z^J h|^2+|Z^J H|^2)\wb dV.
\end{equation*}
\end{lem}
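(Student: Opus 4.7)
The plan is to derive a pointwise algebraic identity that expresses $Z^I H_1$ in terms of $Z^I h^1$ plus strictly lower-order bilinear products of $Z^J h$ and $Z^K H$, then square and integrate. Note that the preceding pointwise lemma only gives $|Z^I H_1|\le |Z^I h^1|+C\varepsilon^2(1+t+r)^{-2+2\delta}$, and squaring this would produce a useless $\varepsilon^4$-error at top order; so the refinement required here is to track which derivatives fall on which factor rather than lump everything into a generic $\varepsilon^2$ error.

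I would begin from the exact identity $g^{\alpha\mu}g_{\mu\beta}=\delta^\alpha_\beta$. Writing $g=m+h$ and $g^{-1}=m^{-1}+H$ and expanding, this becomes
\begin{equation*}
H^{\alpha\beta}+m^{\alpha\mu}m^{\beta\nu}h_{\mu\nu}=-m^{\beta\nu}H^{\alpha\mu}h_{\mu\nu}.
\end{equation*}
Substituting the splittings $h=h^0+h^1$, $H=H_0+H_1$, and collecting the Schwarzschild mismatch $H_0+m^{-1}h^0m^{-1}$ into a mass-error term of better order, one obtains an identity of the schematic form
\begin{equation*}
H_1^{\alpha\beta}+m^{\alpha\mu}m^{\beta\nu}h^1_{\mu\nu}=\widetilde B^{\alpha\beta}(h,H)+R^{mass\,\alpha\beta},
\end{equation*}
with $\widetilde B$ bilinear in the components of $h$ and $H$.

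Next I would apply $Z^I$ and expand $Z^I\widetilde B$ by the Leibniz rule into a finite sum of products $Z^J(h\text{ or }H)\cdot Z^K(h\text{ or }H)$ with $J+K=I$. The single summand with $|K|=0$ pairs $Z^I H$ with a factor $h$ of size $O(\varepsilon)$; after lowering indices by $m$ and moving it to the left, it is absorbed using the bootstrap smallness of $h$. For each surviving product, by exchanging $J\leftrightarrow K$ if necessary one may assume $|K|\le |I|/2\le N/2$, so the hypothesized $L^\infty$ bound yields $|Z^K h|+|Z^K H|\le C\varepsilon(1+t+r)^{-1+\delta}$. Squaring and integrating against $\wb$ then majorizes each such product by
\begin{equation*}
C\varepsilon^2\int_\Omega (1+t+r)^{-2+2\delta}\bigl(|Z^J h|^2+|Z^J H|^2\bigr)\wb\,dV,
\end{equation*}
and summing over the $O(1)$ Leibniz terms and over $|J|\le |I|-1$ gives exactly the stated inequality, the extra $\varepsilon^2$ being absorbed as $C\varepsilon$.

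The main obstacle is confirming that the algebraic identity leaves no linear contribution beyond $-m^{-1}h^1m^{-1}$, i.e.\ that the Schwarzschild pieces $h^0$ and $H_0$ are set up consistently so that the discrepancy $H_0+m^{-1}h^0m^{-1}$ lies in the mass-error regime and is dominated by the right-hand side of the claimed bound. Once that and the commuting-vector-field identities are in place, the rest is routine bookkeeping of the Leibniz expansion plus a single $L^\infty$–$L^2$ split.
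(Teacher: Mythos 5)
Your strategy is the right one and matches the standard argument behind this lemma (which the paper takes from Kauffman--Lindblad \cite{KLMKG21}): start from $g^{\alpha\mu}g_{\mu\beta}=\delta^\alpha_\beta$ to isolate $H_1+m^{-1}h^1m^{-1}$ as a bilinear in $h,H$, apply $Z^I$ by Leibniz, absorb the single top-order summand on the left using smallness of $h$, and put the factor carrying at most $|I|/2\le N/2$ vector fields in $L^\infty$. However, the squaring step glosses over the cross term. After absorbing the $|K|=0$ summand you have, schematically, $|Z^I H_1|\le (1-C\varepsilon)^{-1}\bigl(|Z^I h^1|+E\bigr)$ with $E\lesssim \varepsilon(1+t+r)^{-1+\delta}\sum_{|J|\le|I|-1}(|Z^Jh|+|Z^JH|)$; squaring produces $2|Z^I h^1|E$, and no matter how you split this by Young's inequality a piece of it lands back on $|Z^I h^1|^2$. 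The best this route produces is a coefficient $(1+C\varepsilon)$ in front of $\int_\Omega|Z^Ih^1|^2\wb\,dV$, not exactly $1$. That is almost certainly all the lemma ever needs (it is only used to pass $L^2$ control between $h^1$ and $H_1$ up to lower order and $C\varepsilon$), but your proposal claims ``exactly the stated inequality,'' and as written it does not account for the cross term at all.

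On the point you flag as the main obstacle: with the standard Lindblad--Rodnianski convention $H_0^{\mu\nu}=-\chi\frac Mr\delta^{\mu\nu}$, the mismatch $H_0+m^{-1}h^0m^{-1}$ is not merely ``of better order'' — it vanishes identically, since $m^{\alpha\mu}m^{\beta\nu}\delta_{\mu\nu}=\delta^{\alpha\beta}$. Be aware that \eqref{H0} as printed in this paper has the opposite sign, $H_0^{\mu\nu}=+\chi\frac Mr\delta^{\mu\nu}$; with that sign the mismatch is $2\chi\frac Mr\delta^{\alpha\beta}$, which is first order in $M$ and would break the identity you need. This is almost certainly a typographical sign (the preceding pointwise lemma would likewise fail with the $+$ sign), but since you singled it out as the crux, it is exactly the place to pin the convention down rather than appeal to ``mass-error regime.''
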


\section{Commutators}
We estimate the commutator in this section. To compute the commutator involving the spinor field explicitly, we need to fix a choice of tetrad, which we did in Section \ref{constructionoftetradsection}. Therefore, when commuting we do not take local Lorentz covariance into consideration anymore. For instance, for Lie derivative of $\psi$ we view $\psi$ as a four component complex scalar field.
%It is well known that linear wave equation commutes well with the following vector fields: ...

%The linear massless Dirac equation has good commutating properties with these modified vector fields: $\d_\mu, \hat{\Omega}_{\mu\nu}=\Omega_{\mu\nu}+\frac 12 \bar{\gamma}_\mu\bar{\gamma}_\nu, S$.

%For this quasilinear problem, these vector fields commute with the ``center value" $\bar{\gamma}^\mu \d_\mu$. Hence the ``quasilinear part" here is $(\gamma^\mu-\bar{\gamma}^\mu)\d_\mu \psi$, and we need to study the commutator generated by this. 

%$$|\gamma^\mu D_\mu \zhsi \psi|\lesssim $$ 

\subsection{Reduced wave operator}

We define the modified Lie derivative in the new coordinates by 
$$\lzh \widetilde{K}^{\a_1\cdots\alpha_r}_{\b_1\cdots\b_s}+\textstyle\frac{r-s}4 (\dt_\lambda \z^\lambda) \widetilde{K}^{\a_1\cdots\alpha_r}_{\b_1\cdots\b_s}.$$
Then one can verify that $\lzh \mh^{\a\b}=0$ for $\z\in\{\dt_\mu,\widetilde \Omega_{ij},\widetilde\Omega_{0i},\widetilde S\}$.

It is straightforward to verify the following identity, where $\k=\k_0^{-1}$ with $\k_0$ defined in Lemma \ref{m0lemma}:
$$\lz (\k \gt^{\a\b}\dt_\a \dt_\b \hi_{\mu\nu})=(\lzh (\k \gt^{\a\b})) \dt_\a \dt_\b \hi_{\mu\nu}+\k \gt^{\a\b}\dt_\a\dt_\b \lzh \hi_{\mu\nu}.$$
For the spinor field, we formally view $\psi$ as a 4-component scalar in computation. Define $\widehat{\widehat \l}_{\z} \psi=\lz \psi-c_{\z} \psi$, where $c_{\widetilde S}=2$ and $c_{\z}=0$ for other vector fields. Then similarly
$$\lz (\k \gt^{\a\b}\dt_\a \dt_\b \psi)=(\lzh (\k \gt^{\a\b})) \dt_\a \dt_\b \psi+\k \gt^{\a\b}\dt_\a\dt_\b \lhhz \psi.$$

Applying more $\lz$ on left hand side, and iterating the process, we have
\begin{equation}\label{RhcomI}
    R^{com\ I}_{\hi}=\k\gt^{\a\b} \dt_\a\dt_\b \lzh^I \hi_{\mu\nu}-\lz^I(\k\gt^{\a\b}\dt_\a\dt_\b\hi_{\mu\nu})=\sum_{J+K=I,\, |K|<|I|}\lzh^J (\kappa \gt-\mh)^{\alpha\beta} \dt_\a\dt_\b \lzh^K \hi_{\mu\nu},
\end{equation}
and
\begin{equation}\label{RpsicomI}
    R^{com\ I}_\psi=\k\gt^{\a\b} \dt_\a\dt_\b (\lhhz)^I \psi-\lz^I(\k\gt^{\a\b}\dt_\a\dt_\b\psi)=\sum_{J+K=I,\, |K|<|I|}\lzh^J (\kappa \gt-\mh)^{\alpha\beta} \dt_\a\dt_\b (\lhhz)^K \psi,
\end{equation}

Note that $\k\gt^{\a\b}-\mh^{\a\b}=\k\Hi^{\a\b}+(\k\mt_0^{\a\b}-\mh^{\a\b})$. Using \eqref{m0} and (\ref{dtoZ}) we have 
\begin{equation}
    |(\lz^I(\k\mt_0^{\mu\nu}-\mh^{\mu\nu}))\dt_\mu\dt_\nu \phi|\lesssim \frac {M\ln(1+t+r)}{(1+t+r)^2}\sum_{|J|\leq 1}|\dt \z^J \phi|,
\end{equation} where $\phi$ can be $\lzh^I\hi_{\mu\nu}$ or $(\lhhz)^I\psi$. For the $\Hi$ part, we expand in the null frame, using \eqref{dtoZ}, to get
\begin{equation}\label{commutatorestimate}
    |R^{com\ I}_{\hi}|\lesssim \sum_{\substack{|J|+|K|\leq |I|+1\\ 1\leq |J|\leq |I|}} \left( \frac{M\ln (1+t+r)}{(1+t+r)^2}+\frac{|\lz^J \Hi|_{\lc\lc}}{1+|q^*|} +\frac{|\lz^J \Hi|}{1+t+r}\right)|\dt \lz^K \hi|,
\end{equation}
and same estimate holds if we replace $\hi$ by $\psi$.

We will also commute to get the constant wave operator $\Box^*=\mh^{\mu\nu}\dt_\mu\dt_\nu$ for decay estimates:
\begin{equation*}
    R^{com*I}_{\hi}=\Box^*\lzh^I \hi_{\mu\nu}-\lz^I(\k\gt^{\a\b}\dt_\a\dt_\b \hi_{\mu\nu})=\sum_{J+K=I}\lzh^J (\kappa \gt-\mh)^{\alpha\beta} \dt_\a\dt_\b \lzh^K \hi_{\mu\nu},
\end{equation*}
\begin{equation*}
    R^{com*I}_\psi=\Box^*(\lhhz)^I \psi-\lz^I(\k\gt^{\a\b}\dt_\a\dt_\b \psi)=\sum_{J+K=I}\lzh^J (\kappa \gt-\mh)^{\alpha\beta} \dt_\a\dt_\b (\lhhz)^K \psi,
\end{equation*}
and similarly,
\begin{equation}\label{commutatorestimatefordecay}
    |R^{com*I}_{\hi}|\lesssim \sum_{\substack{|J|+|K|\leq |I|+1\\ |J|\leq |I|}} \left( \frac{M\ln (1+t+r)}{(1+t+r)^2}+\frac{|\lz^J \Hi|_{\lc\lc}}{1+|q^*|} +\frac{|\lz^J \Hi|}{1+t+r}\right)|\dt \lz^K \hi|.
\end{equation}
Again, same estimate holds if we replace $\hi$ by $\psi$.

\subsection{Dirac operator}
We also want to commute the vector fields with the Dirac equation $\gamma^\mu D_\mu \psi=0$. Like \cite{bachelot1988DiracVectorField} and \cite{1902DiracHyperboloid}, we modify the boost and rotation vector fields so that they commute with the flat Dirac operator. For the scaling vector field, the modification is similar with the wave equation case.

We define in our coordinates $\gb^\mu=\gb^a (\dt_a)^\mu$. In this way, we formally assign a tensorial index to constant Gamma matrices, and the Lie derivative reads $\lz \gb^\mu =\z(\gb^\mu)-\gb^\nu \dt_\nu \z^\mu=-\gb^\nu \dt_\nu \z^\mu$. 

For the spinor field $\psi$, we define the modified derivative with respect to vector fields $\widetilde\Omega_{\mu\nu}=\x_\mu\dt_\nu-\x_\nu \dt_\mu$ \footnote{We here lower the indices using $\mh$, so $\x_0=-t$. In this convention we get the negative Lorentz boost $\Omega_{0i}=-t\dt_i-\x_i \dt_t$, but clearly there is no essential difference.} by \begin{equation}\label{vectorfieldDirac}
    \l_{\widetilde\Omega_{\mu\nu}}^\psi\psi=\l_{\widetilde\Omega_{\mu\nu}}\psi-\frac 12 \gb_\mu\gb_\nu \psi=\widetilde{\Omega}_\mn\psi-\frac 12 \gb_\mu\gb_\nu \psi,
\end{equation}
where $\gb_\mu=\mh_\mn \gb^\nu$. The modified Lie derivative with respect to the scaling vector field $\widetilde S=\x^\mu\dt_\mu$ by $$\lh_{\widetilde S}^\psi \psi=\l_{\widetilde S}^\psi \psi-\psi=\widetilde S\psi-\psi.$$

We also denote $\widehat\l_{\Omega_{\mu\nu}}\psi=\l_{\Omega_{\mu\nu}}\psi$. Note that since $\gb_\mu$ are constant invertible matrices, we have the equivalence of the norm:
\begin{equation}\label{equivalencespinor}\sum_{|J|\leq |I|}|\lz^J \psi|\lesssim |\lhzp^I \psi|\lesssim \sum_{|J|\leq |I|}|\lz^J \psi|. \end{equation}

\begin{Prop}
The following commuting properties hold:
$$\l^\psi_{\widetilde\Omega_{\mu\nu}}(\gb^\rho \dt_\rho \psi)=\gb^\rho \dt_\rho \l^\psi_{\widetilde\Omega_{\mu\nu}} \psi;$$
$$\l^\psi_{\widetilde S} (\gb^\rho\dt_\rho \psi)=\gb^\rho \dt_\rho\lh^\psi_{\widetilde S}\psi.$$
\end{Prop}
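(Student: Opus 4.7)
The plan is to verify both identities by direct computation, treating $\psi$ as a four-component scalar, so that $\l_{\widetilde\Omega_{\mu\nu}}\psi = \widetilde\Omega_{\mu\nu}\psi = \x_\mu\dt_\nu\psi - \x_\nu\dt_\mu\psi$ and $\l_{\widetilde S}\psi = \widetilde S\psi$. Since $\gb^\rho$ are constant matrices, they commute with $\dt_\rho$, so the only potential obstruction to commuting past the Dirac operator comes from (i) the derivatives $\dt_\rho \x_\mu = \mh_{\rho\mu}$ and $\dt_\rho \x^\sigma = \delta^\sigma_\rho$, and (ii) the Minkowski-Gamma correction $-\tfrac12\gb_\mu\gb_\nu$ (resp.\ $-1$) in the modified Lie derivative. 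The strategy is to expand each side, collect the leftover terms, and show they cancel using the anticommutation relation $\gb^\mu\gb^\nu+\gb^\nu\gb^\mu=-2\mh^{\mu\nu}I$.

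For the boost/rotation identity, I will expand
\begin{equation*}
\l^\psi_{\widetilde\Omega_{\mu\nu}}(\gb^\rho\dt_\rho\psi)-\gb^\rho\dt_\rho\l^\psi_{\widetilde\Omega_{\mu\nu}}\psi = -\gb_\mu\dt_\nu\psi + \gb_\nu\dt_\mu\psi + \tfrac12(\gb^\rho\gb_\mu\gb_\nu-\gb_\mu\gb_\nu\gb^\rho)\dt_\rho\psi,
\end{equation*}
where the first two terms arise from $\gb^\rho\dt_\rho\x_\mu = \gb_\mu$ and its $\mu\leftrightarrow\nu$ counterpart. The key algebraic step is to move $\gb^\rho$ past $\gb_\mu\gb_\nu$: applying $\gb^\rho\gb_\mu = -\gb_\mu\gb^\rho-2\delta^\rho_\mu I$ twice gives
\begin{equation*}
\gb^\rho\gb_\mu\gb_\nu = \gb_\mu\gb_\nu\gb^\rho + 2\delta^\rho_\nu\gb_\mu - 2\delta^\rho_\mu\gb_\nu,
\end{equation*}
so that $\tfrac12(\gb^\rho\gb_\mu\gb_\nu-\gb_\mu\gb_\nu\gb^\rho)\dt_\rho\psi = \gb_\mu\dt_\nu\psi - \gb_\nu\dt_\mu\psi$, which exactly cancels the first two correction terms.

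For the scaling identity the computation is direct: using $\dt_\rho(\x^\sigma\dt_\sigma\psi) = \dt_\rho\psi + \x^\sigma\dt_\rho\dt_\sigma\psi$,
\begin{equation*}
\gb^\rho\dt_\rho\lh^\psi_{\widetilde S}\psi = \gb^\rho\dt_\rho(\widetilde S\psi-\psi) = \gb^\rho\dt_\rho\psi + \gb^\rho\x^\sigma\dt_\rho\dt_\sigma\psi - \gb^\rho\dt_\rho\psi = \gb^\rho\x^\sigma\dt_\sigma\dt_\rho\psi,
\end{equation*}
which agrees with $\l^\psi_{\widetilde S}(\gb^\rho\dt_\rho\psi) = \x^\sigma\dt_\sigma(\gb^\rho\dt_\rho\psi)$ by the symmetry of second partial derivatives. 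The $-\psi$ in the definition of $\lh^\psi_{\widetilde S}$ is precisely what is needed to absorb the extra $\gb^\rho\dt_\rho\psi$ produced when $\dt_\rho$ falls on $\x^\sigma$.

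The only nontrivial step is the cubic Gamma-matrix rearrangement in the first identity; the sign conventions from $(-+++)$ must be handled with care, but the cancellation is forced by the standard fact that $-\tfrac14\gb_\mu\gb_\nu$ (antisymmetrized) realizes the Lorentz generators in the spin representation, so that $-\tfrac12\gb_\mu\gb_\nu$ is exactly the spinorial correction $S(\varepsilon)$ for the infinitesimal rotation generated by $\widetilde\Omega_{\mu\nu}$. Apart from this, both identities reduce to keeping track of derivatives of the coordinate functions $\x_\mu, \x^\sigma$.
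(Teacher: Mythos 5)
Your proof is correct and follows essentially the same route as the paper's: expand the difference using the Leibniz rule, identify the two sources of correction terms (derivatives of the coordinate factors $\x_\mu$ and the spinorial piece $-\tfrac12\gb_\mu\gb_\nu$), and show they cancel via the anticommutation relation $\gb^\rho\gb_\mu\gb_\nu=\gb_\mu\gb_\nu\gb^\rho+2\delta^\rho_\nu\gb_\mu-2\delta^\rho_\mu\gb_\nu$. The only cosmetic difference is that the paper packages the coordinate-factor terms as the formal Lie derivative $\l_{\z}\gb^\rho=-\gb^\sigma\dt_\sigma\z^\rho$, whereas you compute the commutator $[\widetilde\Omega_{\mu\nu},\dt_\rho]$ directly; these yield the same contribution.
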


\begin{proof}
By definition
\begin{equation*}
    \begin{split}
        \l^\psi_{\widetilde\Omega_{\mu\nu}}(\gb^\rho \dt_\rho \psi)&= (\l_{\widetilde\Omega_{\mu\nu}} \gb^\rho)\dt_\rho \psi+\gb^\rho \dt_\rho \l_{\widetilde\Omega_{\mu\nu}} \psi-\frac 12 \gb_\mu\gb_\nu\gb^\rho \dt_\rho \psi \\
        &=-(\gb^\sigma \dt_\sigma (\widetilde\Omega_{\mu\nu})^\rho) \dt_\rho \psi+\gb^\rho \dt_\rho \l_{\widetilde\Omega_{\mu\nu}} \psi-\frac 12 \gb_\mu\gb_\nu\gb^\rho \dt_\rho \psi\\
        &=-(\gb^\sigma \dt_\sigma (\widetilde\Omega_{\mu\nu})^\rho) \dt_\rho \psi+\gb^\rho \dt_\rho \l^\psi_{\widetilde\Omega_{\mu\nu}} \psi+\frac 12 \gb^\rho \gb_\mu\gb_\nu\dt_\rho \psi-\frac 12 \gb_\mu\gb_\nu\gb^\rho \dt_\rho \psi.
    \end{split}
\end{equation*}
For the first term we calculate $$(\gb^\rho \dt_\rho (\widetilde\Omega_{\mu\nu})^\sigma)\dt_\sigma \psi=\gb^\rho(\dt_\rho(\x_\mu\delta_\nu^\sigma-\x_\nu \delta_\mu^\sigma))\dt_\sigma \psi=\gb^\rho(\mh_{\rho\mu}\delta_\nu^\sigma-\mh_{\rho\nu}\delta_\mu^\sigma)\dt_\sigma \psi=\gb_\mu\dt_\nu\psi-\gb_\nu\dt_\mu \psi,$$
and this equals $-\frac 12(\gb_\mu\gb_\nu\gb^\rho \dt_\rho \psi-\gb^\rho\gb_\mu\gb_\nu\dt_\rho\psi)$ by noticing $\gb_\mu\gb^\mu=-I$ and $\gb^\sigma \gb^\rho=-\gb^\rho\gb^\sigma$. Therefore $$\l^\psi_{\widetilde\Omega_{\mu\nu}}(\gb^\rho \dt_\rho \psi)%=\gb^\rho \dt_\rho \l_{\widetilde\Omega_{\mu\nu}} \psi-\frac 12\gb^\rho\gb_\mu\gb_\nu \dt_\rho \psi
=\gb^\rho \dt_\rho\l^\psi_{\widetilde\Omega_{\mu\nu}} \psi. $$

For the second identity, notice that $\l_{\widetilde S} \gb^\mu=-\gb^\rho\dt_\rho \widetilde S^\mu=-\gb^\mu$, so 
$$\l_{\widetilde S} (\gb^\rho\dt_\rho \psi)=-\gb^\rho\dt_\rho \psi+\gb^\rho\l_{\widetilde S} \dt_\rho \psi=\gb^\rho \dt_\rho(\l_{\widetilde S}\psi-\psi)=\gb^\rho\dt_\rho\lh^\psi_{\widetilde S}\psi.\eqno\qed$$
\renewcommand{\qedsymbol}{}
\end{proof}

%$\l_Z \gb^\mu=Z \gb^\mu -(\d_\rho Z^\mu)\gb^\rho$

%$\l_Z (\gb^\mu \d_\mu \psi)=(\l_Z \gb^\mu)\d_\mu \psi+\gb^\mu \d_\mu \l_Z \psi$

%Calculation: $(\gb^\rho \d_\rho Z^\mu)\d_\mu \psi$

%Claim: $(\gb^\rho \d_\rho (\Omega_{\mu\nu})^\sigma)\d_\sigma \psi=\frac 12(\gb_\mu\gb_\nu\gb^\rho \d_\rho \psi-\gb^\rho\gb_\mu\gb_\nu\d_\rho\psi)$

%LHS = $\gb^\rho(\d_\rho(x_\mu\delta_\nu^\sigma-x_\nu \delta_\mu\sigma))\d_\sigma \psi=\gb^\rho(m_{\rho\mu}\delta_\nu^\sigma-m_{\rho\nu}\delta_\mu^\sigma)\d_\sigma \psi=\gb_\mu\d_\nu\psi-\gb_\nu\d_\mu \psi$ = RHS, using $\gb_\mu\gb^\mu=I$ and $\gb^\sigma \gb^\rho=-\gb^\rho\gb^\sigma$.

%$\lh_{\Omega_{\mu\nu}}=\l_{\Omega_{\mu\nu}}+\frac 12 \gb_\mu\gb_\nu$

%$\lh_{\Omega_{\mu\nu}}(\gb^\rho \d_\rho \psi)= (\l_{\Omega_{\mu\nu}} \gb^\rho)\d_\rho \psi+\gb^\rho \d_\rho \l_{\Omega_{\mu\nu}} \psi+\frac 12 \gb_\mu\gb_\nu\gb^\rho \d_\rho \psi=-(\gb^\sigma \d_\sigma (\Omega_{\mu\nu})^\rho) \d_\rho \psi +\gb^\rho \d_\rho \lh_{\Omega_{\mu\nu}} \psi-\frac 12 \gb^\rho \gb_\mu\gb_\nu\d_\rho \psi+\frac 12 \gb_\mu\gb_\nu\gb^\rho \d_\rho \psi=\gb^\rho \d_\rho \lh_{\Omega_{\mu\nu}} \psi $ where we used the claim on last equality.

We are now ready to commute with curved Dirac operator. Recall that $\gamma^\mu=\gb^a (\e_a)^\mu$, where $\e_a$'s are constructed in Section \ref{constructionoftetradsection}. Define $\go^\mu:=\gamma^\mu-\gb^\mu=\gb^a((\e_a)^\mu-(\dt_a)^\mu)$, and
%Recall the process of Gram-Schmidt orthogonalization in Lemma \ref{GS}. If we change the background metric from $\gt$ to $\k\gt$, then the resulting tetrad will just be a rescaled set $\{\k^{\frac 12}(e_a)\}$. This is exactly the tetrad corresponding to Gamma matrices $\{\k^{\frac 12}\gamma^\mu\}$. In this case the difference $\go=\k^{\frac 12}\gamma-\gb$ is comparable to $|\k\gt-\mh|$.
we calculate
\begin{multline}\label{Diraccommutatoridentity}
        \lzp^I (\gamma^\mu D_\mu \psi)=\lzp^I (\mathring{\gamma}^\mu \dt_\mu \psi)+\lzp^I (\gb^\mu \dt_\mu \psi)- \frac 14 \lzp^I (\gamma^\mu \omegat_{\mu ab} \Sigma^{ab} \psi) \\
        =\lzp^I (\mathring{\gamma}^\mu \dt_\mu \psi)+\gb^\mu\dt_\mu \lhzp^I\psi- \frac 14 \lzp^I (\gamma^\mu \omegat_{\mu ab} \Sigma^{ab} \psi)\\
        =\left(\lzp^I (\go^\mu \dt_\mu \psi)-\go^\mu\dt_\mu \lhzp^I \psi\right) +\gamma^\mu D_\mu \lhzp^I \psi\\
        +\frac 14\left(\gamma^\mu \omegat_{\mu ab}\Sigma^{ab}\lhzp^I \psi-  \lzp^I (\gamma^\mu \omegat_{\mu ab} \Sigma^{ab} \psi)\right).
        %\ \ \ \ \ \ \ \ \ \ \ \ \ \ \ \ \ \ \ \ \ \ \ \ \ \ \ \ - \frac 14 \lzp^I (\gamma^\mu \omega_{\mu ab} \Sigma^{ab} \psi)\\
        %&=\sum_{J+K=I,|K|<|I|} \left(\lzp^J \go^\mu \d_\mu\lzp^K \psi -\frac 14 \lzp^J (\gamma^\mu \omega_{\mu ab} \Sigma^{ab})  \lzp^K\psi \right)\\
        %&=\sum_{J+K=I} \left(\lzp^J \go^\mu \d_\mu\lzp^K \psi -\frac 14 \lzp^J (\gamma^\mu \omega_{\mu ab} \Sigma^{ab})  \lzp^K\psi \right)
        %&\leq \k^{\frac 12}\gamma^\mu D_\mu \lhzp^I \psi+\sum_{J+K=I, |K|<I} \left(\l_Z^J \go^\mu \d_\mu\l_Z^K \psi +\frac 14 \l_Z^J (\k^{\frac 12}\gamma^\mu \omega_{\mu ab} \Sigma^{ab})  \l_Z^K\psi \right)
\end{multline}
Note that $\lzp^I (\go^\mu \d_\mu \psi)$ equals $\lz^I(\go^\mu\dt_\mu \psi)$ plus terms involing at most $|I|-1$ vector fields, and $\go^\mu\d_\mu \lhzp^I \psi$ equals $\go^\mu\dt_\mu \lz^I \psi$ plus terms with at most $|I|-1$ vector fields, we see that there are cancellations for terms with all $|I|$ vector fields falling on $\psi$. Similar cancellation occurs in the last parenthesis. Therefore, we have the following estimate for the commutator, using that $\Sigma^{ab}$ and $\gb^a$ are constant matrices, $\gamma^\mu=\gb^a(\e_a)^\mu$, and the equivalence \eqref{equivalencespinor}:
\begin{multline}\label{Diraccommutatorinequality}
|\gamma^\mu D_\mu \lhzp^I \psi-\lzp^I(\gamma^\mu D_\mu \psi)|\lesssim \sum_{|J|+|K|\leq |I|,\, |K|<|I|}\left(|\lz^J \go^\mu \dt_\mu\lz^K \psi| +|\lz^J (\gamma^\mu \omegat_{\mu ab})  \lz^K\psi| \right)\\
\lesssim \sum_{|J|+|K| \leq |I|,\, |K|<|I|}\left(|\lz^J ((\e_a)^\mu-(\dt_a)^\mu) \dt_\mu\lz^K \psi| +|\lz^J ((\e_c)^\mu \omegat_{\mu ab})  \lz^K\psi| \right)
\end{multline}
%where $\Sigma_K$ is a product of standard Gamma matrices no greater than $2|K|$ times.

%Note: cannot estimate the absolute value now. The second term will generate a term like $\gamma_L \cdot \d \l_Z^J h\cdot \l_Z^K \psi$ ($|J|<|K|$, for the other side we have the sharp decay of $\psi$), we have to keep this form, and get a term that can be controlled by $\gb_L$ bulk term in the energy estimate. (If vector fields fall on $\y$, then this becomes a cubic term)

%\begin{equation*}
 %   \begin{split}
 %       |\gamma^\mu D_\mu \l_Z^I \psi|&\lesssim \sum_{J+K=I,|J|\leq |K|<|I|} |\l_Z \go_L||\d \l_Z^K \psi|+|\l_Z h||\db \l_Z^K  \psi|+|\gamma_L
  %  \end{split}
%\end{equation*}

\section{Improved estimate from generalized wave coodinate condition}
The wave coordinate condition can be written as $\Box_g x^\nu=0$, i.e.\ $\d_\mu(g^\mn \sqrt{|\det g|})=0$ in the original coordinates, which implies for the tensor $H^{\mu\nu}=g^{\mu\nu}-m^{\mu\nu}$ that $\d_\mu(H^{\mu\nu}-\frac 12m^{\mu\nu}m_{\a\b}H^{\a\b})=W^\nu(g)(H,\d H)$. Since $H^0$ is explicit, we can compute to see that $|\d_\mu(H_0^{\mu\nu}-\frac 12m^{\mu\nu}m_{\a\b}H_0^{\a\b})|\lesssim M(1+t+r)^{-2}$ decays well, so we have a similar estimate as $H$ for $H_1=H-H_0$, and in the new coordinates it reads
$$\dt_\mu(\Hi^{\mu\nu}-\frac 12\mt^{\mu\nu}\mt_{\a\b}\Hi^{\a\b})=\widetilde{W}^\nu(\gt)(\Hi,\dt \Hi)+\widetilde{W}^\nu_{mass}+\widetilde{W}^\nu_{cov},$$
where $\widetilde{W}^\nu_{mass}$ and $\widetilde{W}^\nu_{cov}$ are the mass and covariant error terms, which can be ignored just as in Section \ref{ChapterSchwarzschildcoordinates}. This also commutes well with modified Lie derivatives, since $\lzh$ falling on $\mt$ provides extra decay.
Then as in \cite{KLMKG21} the following estimate holds:
\begin{lem}Suppose the generalized wave coordinate condition \eqref{generalizedwcc} holds. Then
\begin{equation}\label{wccestimate}
    |\d_{q^*} \lz^I \Hi|_{\lc\Tc} +|\d_{q^*} \slashed{\mathrm{tr}}\lz^I \Hi|\lesssim |\dtb \lz^I \Hi|+\sum_{\substack{|J|+|K|\leq |I|\\i,j\in \{0,1\}}}|\lz^J \Hti_i||\dt \lz^K \Hti_j|+\frac{M|\chi'(\frac{r}{t+1})|}{(1+t+r)^2}
\end{equation}
where $\chi'(s)$ is nonzero when $1/4\leq s\leq 1/2$.
\end{lem}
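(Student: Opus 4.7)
The plan is to apply $\lzh^I$ to the generalized wave coordinate identity and then contract with the covectors dual to $T\in\Tc$ and to $\ltb$ in the null frame in order to isolate the bad $\d_{q^*}$ derivative of the specific components of $\lz^I\Hi$. This is the $(\widetilde t,\x)$-coordinate analogue of the classical Lindblad--Rodnianski wave-coordinate lemma, adapted to the Schwarzschild-tail decomposition.

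\emph{Step 1: commute $\lzh^I$.} Since $\lzh\mh^{\mu\nu}=0$, applying $\lzh^I$ to
\[
\dt_\mu\bigl(\Hi^{\mu\nu}-\tfrac12\mt^{\mu\nu}\mt_{\a\b}\Hi^{\a\b}\bigr)=\widetilde W^\nu(\gt)(\Hi,\dt\Hi)+\widetilde W^\nu_{mass}+\widetilde W^\nu_{cov}
\]
produces
\[
\dt_\mu\bigl(\lzh^I\Hi^{\mu\nu}-\tfrac12 \mh^{\mu\nu}\mh_{\a\b}\lzh^I\Hi^{\a\b}\bigr)=\mathcal R^\nu_I,
\]
where $\mathcal R^\nu_I$ splits into (a) Leibniz-expanded $\lz^I\widetilde W^\nu$, which is quadratic and already fits the bilinear sum $\sum_{|J|+|K|\le |I|}|\lz^J\Hti_i||\dt\lz^K\Hti_j|$; (b) commutators coming from $\mt-\mh$ being Schwarzschild-sized, controlled by \eqref{dm0higherorder} and absorbed into the same bilinear sum via $|\Hti_0|\sim M/r$; (c) the mass and covariant errors, whose $\lz^I$-derivatives are supported where $\chi'(r/(t+1))\neq 0$ and bounded by $M|\chi'(r/(t+1))|/(1+t+r)^2$.

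\emph{Step 2: extract $\d_{q^*}\lz^I\Hi_{\lt T}$ by contracting with $T_\nu$, $T\in\Tc$.} Using the null-frame decomposition
\[
\dt_\mu=-\tfrac12\ltb_\mu\,\lt^\rho\dt_\rho-\tfrac12\lt_\mu\,\ltb^\rho\dt_\rho+\widetilde S_{A\mu}\,\widetilde S^{A\rho}\dt_\rho
\]
and $\ltb^\rho\dt_\rho=-2\d_{q^*}$, the only non-tangential contribution to $T_\nu\dt_\mu\lzh^I\Hi^{\mu\nu}$ is $\d_{q^*}\lzh^I\Hi^{\lt T}$; every other contraction yields a tangential derivative $\dtb\lzh^I\Hi$, plus corrections of size $O((1+t+r)^{-1})|\lzh^I\Hi|$ from derivatives of frame vectors. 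The trace contribution is tangential because $\mh^{\mu\nu}T_\nu=T^\mu\in\Tc$, so $\dt_\mu(T^\mu\,\mh_{\a\b}\lzh^I\Hi^{\a\b})=T(\mh_{\a\b}\lzh^I\Hi^{\a\b})+O((1+t+r)^{-1})|\lzh^I\Hi|$.

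\emph{Step 3: extract $\d_{q^*}\slashed{\mathrm{tr}}\,\lz^I\Hi$ by contracting with $\ltb_\nu$.} The divergence piece gives the bad term $\d_{q^*}\lzh^I\Hi^{\lt\ltb}$ plus tangential/lower-order. For the trace piece, $\mh^{\mu\nu}\ltb_\nu=\ltb^\mu$, so its divergence contains $\ltb^\mu\dt_\mu=-2\d_{q^*}$ acting on the Minkowski trace $\mh_{\a\b}\lzh^I\Hi^{\a\b}=-\lzh^I\Hi^{\lt\ltb}+\slashed{\mathrm{tr}}\lzh^I\Hi$. With the $-\tfrac12$ prefactor, the two occurrences of $\d_{q^*}\lzh^I\Hi^{\lt\ltb}$ cancel exactly, leaving $\d_{q^*}\slashed{\mathrm{tr}}\lzh^I\Hi$ as the only surviving bad-derivative contribution, modulo tangential and lower-order terms.

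\emph{Main obstacle.} The main technical difficulty is bookkeeping. One must verify that the Leibniz expansion of $\lz^I\widetilde W^\nu$ together with the $\lzh^I$-commutators involving $\mt-\mh$ can be packaged into the quadratic form $|\lz^J\Hti_i||\dt\lz^K\Hti_j|$ with $|J|+|K|\leq|I|$, $i,j\in\{0,1\}$, and that the $O((1+t+r)^{-1})|\lzh^I\Hi|$ error from frame-vector derivatives is genuinely controlled by $|\dtb\lz^I\Hi|$. The latter uses that $\widetilde S_A$ is an $r$-rescaled rotational field, so $r^{-1}\lzh^I\Hi$ can be rewritten as $\dtb$ of a lower-order Lie derivative, and that outside a bounded region the $\mt-\mh$ correction is proportional to $|\Hti_0|$ and is zero where $r\leq (t+1)/2$, the transition region being absorbed into the $M|\chi'(r/(t+1))|/(1+t+r)^2$ term.
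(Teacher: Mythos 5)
Your Steps 1--3 take essentially the same route as the paper's proof: apply $\lzh^I$ to the divergence identity for $\Hh_1^{\mu\nu}=\Hi^{\mu\nu}-\tfrac12\mt^{\mu\nu}\mt_{\a\b}\Hi^{\a\b}$, expand the summed index in the null frame, and contract with $\Tt_\nu\in\Tc$ and with $\ltb_\nu$; your Step~3 cancellation of the two $\d_{q^*}\lzh^I\Hi^{\lt\ltb}$ contributions is algebraically identical to the paper's observation that $(\lzh^I\Hh_1)^{\lt\ltb}=\slashed{\mathrm{tr}}\lzh^I\Hi$ modulo quadratic terms.

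One point in your ``main obstacle'' paragraph is off and worth correcting, although it does not damage the argument because the difficulty you raise is not there. The null-frame expansion $\dt_\mu K^{\mu\nu}=\lt_\mu\d_{q^*}K^{\mu\nu}-\ltb_\mu\d_{s^*}K^{\mu\nu}+\sum_A\widetilde S_{A\mu}\d_{\widetilde S_A}K^{\mu\nu}$ is an exact pointwise identity obtained by inserting $\delta^\rho_\mu=-\tfrac12\lt^\rho\ltb_\mu-\tfrac12\ltb^\rho\lt_\mu+\sum_A\widetilde S^\rho_A\widetilde S_{A\mu}$ into the contracted index; no derivative ever lands on a frame covector, and $|\d_{q^*}\lz^I\Hi|_{\lc\Tc}$ is by the paper's convention the frame contraction of $\d_{q^*}$ of the Cartesian components, so there is no $O((1+t+r)^{-1})|\lzh^I\Hi|$ error to control (for the $\ltb_\nu$-contraction one only needs $\d_{q^*}\lt_\mu=\d_{q^*}\ltb_\nu=0$, which holds because these components depend only on $\omega$). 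Moreover, the mechanism you propose --- that ``$r^{-1}\lzh^I\Hi$ can be rewritten as $\dtb$ of a lower-order Lie derivative'' --- runs in the wrong direction: $\dtb$ of a lower-order Lie derivative is itself $\lesssim r^{-1}$ times that lower-order quantity, so it could never recover a factor $r^{-1}$ on the full-order $\lzh^I\Hi$.
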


\begin{proof}
Let $\Hh_1^{\mu\nu}=\Hi^{\mu\nu}-\frac 12 \mt^{\mu\nu}\mt_{\a\b}\Hi^{\a\b}$. By the property of Lie derivative, we have for the commuting vector fields $\z$ that
\begin{equation}\label{wccLiederivative}
    \dt_\mu \lzh^I\Hh_1^{\mu\nu}=\sum_{|J|\leq |I|}c^I_J\lzh^J \dt_\mu \Hh_1^{\mu\nu}.
\end{equation}
Now we expand the divergence in the null frame
$$\dt_\mu \lzh^I\Hh_1^{\mu\nu}=\lt_\mu \d_{q^*}\lzh^I\Hh_1^{\mu\nu}-\ltb_\mu \d_{s^*} \lzh^I\Hh_1^{\mu\nu}+\sum_{i=1,2}\widetilde S_{i\mu}\d_{\widetilde S_i}\lzh^I\Hh_1^{\mu\nu}.$$
Notice that $(\lzh^I\Hh_1)^{\mn}\approx\lzh^I\Hi^\mn-\frac 12 \mh^\mn\mh_{\a\b}\lzh^I \Hi^{\a\b}$ modulo quadratic terms, so $(\lzh^I\Hh_1)_{\lt\ltb}=\slashed{\mathrm{tr}}\lzh^I\Hi$ plus quadratic terms. Therefore, contracting with $\Tt_\nu$ and $\ltb_\nu$ respectively we get
$$|\d_{q^*} \lz^I \Hi|_{\lc\Tc} +|\d_{q^*} \slashed{\mathrm{tr}}\lz^I \Hi|\lesssim |\dtb \lz^I \Hi|+|\dt_\mu \lzh^I\Hh_1^{\mu\nu}|$$
plus error terms. Then, using \eqref{wccLiederivative} and the decay of error terms, we get the result.
\end{proof}

\begin{Prop}\label{wccL2}
Suppose the weak decay \eqref{weakdecay} holds. Then
\begin{multline}
    \int_0^t\int_{\Sigma_\tau} (|\d_{q^*} \lz^I \Hi|^2_{\lc \Tc}+|\d_{q^*} \slashed{\mathrm{tr}} \lz^I \Hi|^2)w' \, d\x d\tau \lesssim \int_0^t \int_{\Sigma_\tau} |\dtb \lz^I \hi|^2 w'\, d\x d\tau\\
    +(M^2+\varepsilon^2)\left(1+\sum_{|J|\leq|I|} \int_0^t \frac{1}{(1+\tau)^{2-2\delta}}\int_{\Sigma_\tau} |\dt\lz^J \hi|^2 w\, d\x d\tau\right).
\end{multline}
\end{Prop}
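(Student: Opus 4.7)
The plan is to square the pointwise identity \eqref{wccestimate}, multiply by $w'(q^*)$, and integrate over $[0,t]\times\mathbb R^3$. This produces three categories of right-hand side contributions to estimate: a linear term $|\dtb\lz^I\Hi|^2$, a quadratic term $\sum_{|J|+|K|\leq|I|,\,i,j\in\{0,1\}}|\lz^J\Hti_i|^2|\dt\lz^K\Hti_j|^2$, and a mass-error term $M^2|\chi'(r/(1+t))|^2(1+t+r)^{-4}$. I will handle these in turn.

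For the linear term, I would convert $\dtb\lz^I\Hi$ to $\dtb\lz^I\hi$ using the algebraic relation from matrix inversion, $\Hi^{\mu\nu}=-\hi^{\mu\nu}+(\text{at least quadratic in }\hti)$. After applying $\lz^I$ and a Leibniz expansion of the $\dtb$, the $L^2$ content of this replacement is exactly Lemma \ref{lemmaH1equivh1integral} with weight $w'$, so that $\int_0^t\!\int |\dtb\lz^I\Hi|^2 w'\, d\x d\tau \lesssim \int_0^t\!\int |\dtb\lz^I\hi|^2 w'\, d\x d\tau + (\text{quadratic remainder})$, and the remainder is of the same type as the next category and handled together with it.

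For the quadratic term I would split the sum according to whether $|J|\leq \lfloor|I|/2\rfloor$ or $|K|\leq \lfloor|I|/2\rfloor$; since $|I|\leq N$ and $N\geq 9$, the low-order index always falls into the range where the weak decay Proposition \ref{weakdecay} applies (together with the explicit Schwarzschild bound $|\lz^J\Hti_0|+(1+|q^*|)|\dt\lz^J\Hti_0|\lesssim M(1+t+r)^{-1}$ on $\mathrm{supp}\,\chi$). In the first case, $|\lz^J\Hti_i|^2\lesssim (M^2+\varepsilon^2)(1+t+r^*)^{-2+2\delta}(1+|q^*|)$ (checking $q^*<0$ directly from \eqref{weakdecayh1} and $q^*\geq 0$ by absorbing the $\gamma$-weight into $w'$); using $(1+|q^*|)w'\lesssim w$, the product is bounded by $(M^2+\varepsilon^2)(1+\tau)^{-2+2\delta}|\dt\lz^K\hi|^2 w$ pointwise, whose spacetime integral is of the stated form after reconverting $\dt\lz^K\Hti_j$ to $\dt\lz^K\hi$ as above. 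In the second case, $|\dt\lz^K\Hti_j|^2\lesssim (M^2+\varepsilon^2)(1+t+r^*)^{-2+2\delta}(1+|q^*|)^{-1}$ from \eqref{weakdecaydh1}; combined with $w'\lesssim w/(1+|q^*|)$ this yields pointwise control by $(M^2+\varepsilon^2)(1+\tau)^{-2+2\delta}\, |\lz^J\hi|^2(1+|q^*|)^{-2}\, w$, on which I apply Hardy's inequality (Lemma \ref{Hardy} with $a=1$) to trade the $(1+|q^*|)^{-2}$ weight for a coordinate derivative, arriving again at the stated form.

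For the mass-error term, $\chi'(r/(1+t))$ is supported on $\{t/4\leq r\leq t/2\}$, where $r\approx t$, $|q^*|\approx t$, and $w'\lesssim (1+t)^{-1-2\mu}$; since the annular support has volume $\lesssim(1+t)^3$, one obtains $\int_{\Sigma_\tau} M^2|\chi'|^2(1+t+r)^{-4}w'\, d\x\lesssim M^2(1+\tau)^{-2-2\mu}$, whose time integral is $\lesssim M^2$, absorbable into the $M^2$ summand of the claim. The main obstacle is the second case of the quadratic split: Hardy's inequality loses a derivative, which is precisely why the right-hand side of the proposition must allow the full range $|J|\leq|I|$ (rather than $|J|\leq|I|-1$) inside the higher-order energy. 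A secondary technical point is extending Lemma \ref{lemmaH1equivh1integral} through the tangential derivative $\dtb$, which follows by Leibniz and contributes only additional terms of the quadratic type already controlled.
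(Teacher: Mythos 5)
The paper does not give a proof of this proposition (it is stated without proof, following the style of \cite{KLMKG21}), so your argument must stand on its own. It does. The strategy---square the pointwise identity \eqref{wccestimate}, multiply by $w'$, integrate over $[0,t]\times\Sigma$, then control the three resulting classes of terms (linear, quadratic, mass-error) via Cauchy--Schwarz, the weak decay of Proposition \ref{weakdecay}, and Hardy's inequality---is the natural and, almost certainly, the intended one. The decomposition of the quadratic sum according to whether the low-order index sits on $\lz^J\Hti_i$ or on $\dt\lz^K\Hti_j$ is correct, and your accounting works out: in the first case $(1+|q^*|)w'\lesssim w$ absorbs the $q^*$-weight, while in the second case $w'\lesssim w/(1+|q^*|)$ combined with the $(1+|q^*|)^{-1}$ from the squared derivative bound yields the $(1+|q^*|)^{-2}$ factor that Lemma \ref{Hardy} (with $a=1$) trades for a derivative. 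Your observation that this last step is what forces the sum over $|J|\le|I|$ (rather than $|J|\le|I|-1$) on the right-hand side of the proposition is exactly right. A few small remarks: the explicit $(0,0)$ (pure Schwarzschild) and mixed $(1,0)$ quadratic cases, though not spelled out, fall under the "1" on the right-hand side just like the mass-error term, which you correctly reduce to a convergent time integral of order $M^2$; and the conversion $\dtb\lz^I\Hi\to\dtb\lz^I\hi$ via Lemma \ref{lemmaH1equivh1integral} must indeed be carried out at the $L^2$ level as you indicate, either before or after Hardy (both orders work since the lemma's correction terms are quadratic with the fast factor $(1+t+r)^{-2+2\delta}$ already built in). The stated support of $\chi'$ in the paper's text surrounding \eqref{wccestimate} does not match the definition of $\chi$ in \eqref{h0}; either way the annulus sits near $r\sim t$ and your volume count is unaffected.
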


%\begin{Prop}\label{wccL2Hardy}
%We have
%\begin{multline}
%    \int_0^t\int_{\Sigma_\tau} \left(\frac{|\lz^I \Hi|^2_{\lc \Tc}}{(1+|q^*|)^2}+\frac{|\slashed{tr} \lz^I \Hi|^2}{(1+|q^*|)^2}\right)w' \, d\x d\tau \lesssim \int_0^t\int_{\Sigma_\tau} |\dtb \lz^I \hi|^2 w'\, d\x dt\\
%    +(M^2+\varepsilon^2)\left(1+\sum_{|J|\leq|I|} \int_0^t \frac{d\tau}{(1+\tau)^{2-2\delta}}\int_{\Sigma_\tau} |\dt\lz^J \hi|^2 w\, d\x\right).
%\end{multline}
%\end{Prop}

\section{Estimate of inhomogeneous terms}

\subsection{Einstein's equation}
Recall that the main part of the inhomogeneous term $\widetilde{F}_{\mu\nu}(\gt)(\dt \hi,\dt\hi)$ is $\widetilde P(\gt)(\dt_\mu \hti^1,\dt_\nu \hti^1)$. Modulo cubic terms we can consider $\widehat{P}=P(\mh)$, and $P(\mh)(\pi,\theta)=\frac 14 \mh^{\a\b} \pi_{\a\b} \, \mh^{\rho\sigma} \theta_{\rho\sigma}-\frac 12 \mh^{\a\b}\mh^{\rho\sigma}\pi_{\a\rho}\, \theta_{\b\sigma}$. First we have \begin{equation}
\label{LmuLnu}|\widehat P(\dt_\mu \hi,\dt_\nu \hi)-\lt_\mu\lt_\nu \widehat P(\d_{q^*}\hi,\d_{q^*} \hi)|\lesssim |\dtb \hi||\dt \hi|+|\dt\hi||\dtb \hi|.\end{equation}
Now expanding $\widehat P$ in the null frame we obtain $\widehat P(\pi,\theta)=-(\pi_{\lt\lt}\theta_{\ltb\ltb}+\pi_{\ltb\ltb}\theta_{\lt\lt})/8-(2\pi_{AB}\theta^{AB}-\slashed{\mathrm{tr}}\pi \, \slashed{\mathrm{tr}}\theta)/4+\delta^{AB}(2\pi_{A\lt}\theta_{B\ltb}+2\pi_{A\ltb}\theta_{B\lt}-\pi_{AB}\theta_{L\ltb}-\pi_{\lt\ltb}\theta_{AB})/4$.
Therefore $$|\widehat P(\pi,\theta)-\widehat P_{\mathcal{S}}(\pi,\theta)|\lesssim (|\pi|_{\lc\Tc}+|\slashed{\mathrm{tr}} \pi|)|\theta|+|\pi|(|\theta|_{\lc\Tc}+|\slashed{\mathrm{tr}} \theta|),$$ 
with
$$P_\mathcal{S}(\pi,\theta)=-\widehat{\pi}_{AB}\widehat{\theta}^{AB}/2,\ A,B\in \widetilde{\mathcal{S}}=\{\widetilde S_1,\widetilde S_2\},\ \text{where}\ \widehat \pi_{AB}=\pi_{AB}-\delta_{AB}\slashed{\mathrm{tr}}\pi/2,$$
so clearly $|P_\mathcal{S}(\pi,\theta)|\lesssim |\pi|_{\Tc\Tc}|\theta|_{\Tc\Tc}$.

For higher order, the structure is well preserved thanks to the properties of Lie derivatives:
\begin{equation}
    \lz S_{\mn}(\dt h,\dt k)=S_\mn (\dt\lzh h,\dt k)+S_\mn (\dt h,\dt\lzh k),
\end{equation}
where $S$ can be $\widehat P$ or $Q$.
Therefore
$$\lz^I(\widetilde F_{\mu\nu}(\gt)(\dt\hti^1,\dt \hti^1))=\sum_{J+K=I}\widetilde F_{\mu\nu}(\gt)(\dt\lzh^J \hi,\dt \lzh^K \hi)+R^{cube\ I},$$
and by (\ref{LmuLnu}) we have 
\begin{equation}\label{ZLmuLnu}
    \k^2 \widetilde F_{\mu\nu}(\gt)(\dt\lzh^J\hti^1,\dt \lzh^K \hti^1)=\lt_\mu \lt_\nu \widehat P(\d_{q^*} \lzh^J \hti^1,\d_{q^*}\lzh^K \hti^1)+\k^2 R^{tan}+\k^2 R^{cube},
\end{equation} with the same $\widehat P$ as above.

We also need to estimate the term from the spinor field.
Recall from \eqref{EMT} that
\begin{equation}
T_{\mu\nu}=\frac i4 (\overline{\psi}\gamma_\mu D_\nu\psi-D_\nu\overline{ \psi} \gamma_\mu \psi)+\frac i4 (\overline{\psi}\gamma_\nu D_\mu\psi-D_\mu\overline{ \psi} \gamma_\nu \psi).
\end{equation}
Note that in this expression, we are lowering the index of $\gamma$ using the metric $g$, which is different from $\gb_\mu=\mh_{\mu\nu}\gb^\nu$. Then $\gamma_\mu-\gb_\mu=\gt_{\mu\nu}\gamma^\nu-\mh_{\mu\nu}\gb^\nu=\gt_{\mu\nu}(\gamma^\nu-\gb^\nu)+(\gt_{\mu\nu}-\mh_{\mu\nu})\gb^\nu$, so modulo error terms we consider instead $\gamma_\mu$ to $\gb_\mu$. Also $D_\mu \psi=\dt_\mu\psi+\frac 14\omegat_{\mu ab}\Sigma^{ab}\psi$, and the terms involing the connection $\omegat$ are also with an extra factor decay. The main term then reads
\begin{equation}
    \frac i4(\pb \gb_\mu \dt_\nu\psi+\dt_\nu \pb\gb_\mu\psi)+(\mu\leftrightarrow\nu)
\end{equation}
Noticing that $\lz{\gb_\mu}=Z(\gb_\mu)+(\dt_\mu Z^\nu)\gb_\nu=(\dt_\mu Z^\nu)\gb_\nu$ is a constant multiple of $\gb_\mu$, we derive the following estimate:
\begin{Prop}
Assume the weak decay in Proposition \ref{weakdecay} holds. Then
\begin{equation}\label{eqEMThigherorderTT}
    |\lz^I T|_{\Tc\Tc}\lesssim \sum_{|J|+|K|\leq |I|}|\lz^J \psi||\dtb\lz^K \psi|+\sum_{|J_1|+|J_2|+|J_3|\leq |I|}|\lz^{J_1} \psi||\dt\lz^{J_2} \psi||\lz^{J_3}(\gt-\mh)|.
\end{equation}
\end{Prop}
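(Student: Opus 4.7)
The plan is to expand $T_{\mu\nu}$ via formula \eqref{EMT} and exploit the fact that when one index is contracted with a tangential vector $T \in \Tc$, the covariant derivative $D_T\psi$ becomes a tangential derivative modulo small spin-connection and $(\gt-\mh)$ corrections, so the principal quadratic piece carries a $\dtb\psi$ rather than an arbitrary $\dt\psi$. This gain is precisely the distinction between the first sum $|\psi||\dtb\psi|$ and the second, which is weighted by the extra factor $|\gt-\mh|$.

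For the base case $|I|=0$, I write
$$T_{\mu\nu}T^\mu U^\nu = \tfrac{i}{4}\bigl(\pb\gamma_T D_U\psi - D_U\pb\,\gamma_T\psi\bigr) + (T \leftrightarrow U),$$
with $\gamma_T = T^\mu\gamma_\mu$ and $D_U\psi = U^\nu\dt_\nu\psi - \tfrac14 U^\nu\omegat_{\nu ab}\Sigma^{ab}\psi$, then split $\gamma_\mu = \gb_\mu + (\gamma_\mu-\gb_\mu)$. The principal piece $\tfrac{i}{4}\pb\gb_T U(\psi)$ is bounded by $|\psi||\dtb\psi|$ since $U\in\Tc$ acts on $\psi$ as a tangential derivative, producing the first sum. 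The correction $\gamma_\mu-\gb_\mu = \gb^a((\e_a)^\mu-(\dt_a)^\mu)$ is controlled by $|\gt-\mh|$ via Lemma \ref{GS}, producing the second sum at $|J_3|=0$ with a general $\dt\psi$. The spin-connection piece $U^\nu\omegat_{\nu ab}\Sigma^{ab}\psi$ contributes an $|\omegat|$ factor that, together with the surrounding $\pb$ and $\psi$, is absorbed into the second sum after using $|\omegat|\lesssim|\dt\gt|+|\dt\et|$ and trading a derivative against a $(\gt-\mh)$ factor at the schematic level of \emph{cubic error}.

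For general $|I|$, apply $\lz^I$ and use the Leibniz rule. The key algebraic input, noted just before the statement, is that $\lz\gb_\mu = (\dt_\mu\z^\nu)\gb_\nu$ is a \emph{constant} linear combination of the $\gb_\rho$'s, so each factor $\gb_T$ in the contraction stays a fixed linear combination of $\gb_\rho T^\rho$-type quantities --- in particular the $\Tc$-contraction structure survives commutation. Likewise $\lz(\gamma-\gb)$ and $\lz\omegat$ only rearrange the Gram--Schmidt correction and the spin-connection components, all of which remain controlled by the same schematic quantities $|\gt-\mh|$ and $|\dt\gt|$. Distributing the multi-index $I$ across $\pb$, $\psi$, and the inserted $\gt-\mh$ factor by Leibniz directly reproduces the two claimed sums.

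The main obstacle is confirming that the tangential character of $U\psi$ is preserved after commuting with $\lz^I$: one needs $U(\lz^K\psi)$ for $U\in\Tc$ and $K\subseteq I$ to remain bounded by $|\dtb\lz^K\psi|$ through the induction. This holds because each commutator $[\z,U]$ is a constant-coefficient linear combination of vectors in $\Tc$ (the family $\Tc$ being preserved, up to rescalings, under the Minkowski rotations, boosts and scaling), so any residual error is of the same schematic form and can be reabsorbed at the next step of the induction.
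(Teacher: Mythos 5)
Your overall strategy matches the paper's: expand the formula for $T_{\mu\nu}$, split $\gamma_\mu = \gb_\mu + (\gamma_\mu-\gb_\mu)$ and $D_\mu = \dt_\mu + \Omega_\mu$, extract the principal quadratic piece $\pb\gb_T\dtb\psi$ which gives the first sum, bundle the $\gamma-\gb$ and spin-connection corrections into the cubic second sum, and then propagate the structure through $\lz^I$ using $\lz\gb_\mu = (\dt_\mu\z^\nu)\gb_\nu$ being a constant combination of the $\gb_\rho$'s.

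However, the justification of the higher-order step is off in a way worth flagging. You identify the ``main obstacle'' as showing that $U(\lz^K\psi)$ stays tangential, and you attempt to resolve it by asserting that $[\z,U]$ is a \emph{constant-coefficient} linear combination of vectors in $\Tc$. That claim is false: the boosts do not preserve $\Tc$ with constant coefficients, e.g.\ $[\widetilde\Omega_{0i},\lt] = -\omega_i\lt + (t/r-1)\dtb_i$, and the translations $\dt_\mu$ produce coefficients $\sim 1/r$. But more importantly this is the wrong framing altogether. You are not commuting raw vector fields through a contracted scalar $U\psi$; you are computing $\lz^I$ of the $(0,2)$-tensor $T_{\mu\nu}$ \emph{first} and contracting with $T_1^\mu T_2^\nu$ \emph{afterward}. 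Under the Leibniz rule for the Lie derivative, the covector factor $\dt_\nu\psi$ satisfies $\lz(\dt_\nu\psi) = \dt_\nu(\lz\psi)$ (since $\psi$ is a scalar), so the Lie derivative commutes with $\dt_\nu$ and the index $\nu$ is preserved. Contracting the $\nu$ index with $T_2^\nu$ at the end therefore gives $\dt_{T_2}(\lz^K\psi) = \dtb(\lz^K\psi)$ automatically --- there is simply no commutator $[\z,U]$ to control, and no induction needed. Since your false claim happens to cover for a non-issue, the conclusion you reach is correct, but a referee would notice the incorrect lemma, and fixing it requires adopting the Lie-derivative-then-contract viewpoint.

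Two smaller remarks. First, your formula $\gamma_\mu - \gb_\mu = \gb^a((\e_a)^\mu - (\dt_a)^\mu)$ is the upper-index identity; for lower indices one has $\gamma_\mu - \gb_\mu = \gt_{\mu\nu}(\gamma^\nu-\gb^\nu) + (\gt_{\mu\nu}-\mh_{\mu\nu})\gb^\nu$ as the paper states, though both pieces are still controlled by $|\gt-\mh|$ so the estimate is unaffected. Second, the ``trading a derivative against a $(\gt-\mh)$ factor'' step for the spin-connection contribution $|\psi|^2|\omegat|$ is not a pointwise inequality and relies on the assumed weak decay to show that $|\psi|^2|\dt\gt|$ and $|\psi||\dt\psi||\gt-\mh|$ have the same decay rate; the paper glosses over this in exactly the same way, so it is acceptable here, but it is worth being conscious that this identification is only valid under the hypothesis of the proposition.
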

We now look at other components. The Dirac equation reads $\gamma^\mu D_\mu \psi=0$. If we decompose this in the Minkowski null frame, we get $$\frac 12\gamma_{\lt}D_{\ltb}\psi=-\frac 12\gamma_{\ltb}D_{\lt}\psi+\sum_{A\in\widetilde{\mathcal{S}}}\gamma_{A}D_{A}\psi,$$ where $\gamma_{\widetilde U}=\mh_{\mu\nu}\gamma^{\mu}\widetilde U^\nu$, and similarly we can do for the conjugate equation. This means that for the solution of Dirac equation, the expression $\gamma_{\lt} D_{\ltb} \psi$ actually behaves like a good (tangential) derivative. Then in view of \eqref{EMT}, if we contract $T_{\mu\nu}$ with $\lt^\mu$, then modulo error terms (e.g.\ from $\gamma_\mu\lt^\mu=\gamma_{\lt}+(\gt-\mh)_{\mu\nu}\gamma^\mu\lt^{\nu}$ because of the notation), which can be ignored, we can always control it by $\psi\cdot\dtb\psi$.

For higher order, it is similar as we have $\gamma^\mu D_\mu \lhzp^I \psi$ equal to the commutator $F^{com\ I}$, which is quadratic in view of \eqref{Diraccommutatorinequality}. When contracting $(\lz^I \Tt)_{\mu\nu}$ with $\lt^\mu \Ut^\nu$, we get tangential derivatives with one exception, which is $\gamma_{\lt}D_{\ltb}\lhzp^I\psi$. Then we can do the same decomposition for the equation $\gamma^\mu D_\mu \lhzp^I \psi=F^{com\ I}$ in null frame the bad derivatives can again be controlled by good derivatives. Therefore, we also have
\begin{Prop}
Suppose that $\psi$ satisfies the Dirac equation $\gamma^\mu D_\mu\psi=0$, and the weak decay in Proposition \ref{weakdecay} holds. Then for $\lc=\{\lt\},\ \Uc=\{\lt,\ltb,\widetilde S_1,\widetilde S_2\}$
\begin{equation}\label{eqTLU}
    |\lz^I T|_{\lc\Uc}\lesssim \sum_{|J|+|K|\leq |I|}|\lz^J \psi||\dtb\lz^K \psi|+\sum_{|J_1|+|J_2|+|J_3|\leq |I|}|\lz^{J_1} \psi||\dt\lz^{J_2} \psi||\lz^{J_3}(\gt-\mh)|.
\end{equation}
\end{Prop}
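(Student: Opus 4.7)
Since $\lc = \{\lt\} \subset \Tc$ and $\Uc = \Tc \cup \{\ltb\}$, the components in $\Tc\Tc$ are already handled by \eqref{eqEMThigherorderTT}, so only the $(\mu,\nu) = (\lt, \ltb)$ entry is new, and the plan is to bound $\lt^\mu\ltb^\nu\lz^I T_{\mu\nu}$. I first isolate the leading part of $T_{\mu\nu}$: splitting $\gamma_\mu = \gb_\mu + (\gamma_\mu - \gb_\mu)$ and $D_\nu = \dt_\nu - \tfrac14\omegat_{\nu ab}\Sigma^{ab}$, the discrepancies $\gamma_\mu - \gb_\mu$ and $\omegat$ both reduce to factors of $\gt - \mh$ and its derivatives via Lemma \ref{GS} and Lemma \ref{omeganull}, contributing cubic errors of the form required on the RHS. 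The main piece is then $\tfrac{i}{4}(\pb\gb_\lt\dt_\ltb\psi - \dt_\ltb\pb\gb_\lt\psi) + (\lt\leftrightarrow\ltb)$. Applying $\lz^I$, using that each $\lz$ acting on $\gb_\mu$ produces only constant linear combinations of $\gb$'s (so $\lz^I$ distributes by Leibniz between the $\pb$ and $\psi$ factors), together with the equivalence \eqref{equivalencespinor} to pass freely between $\lz^K\psi$ and $\lhzp^K\psi$, reduces matters to bounding terms of the form $\pb^{(J)}\gb_\lt\dt_\ltb\psi^{(K)}$ and their symmetric/conjugate partners, with $|J|+|K|\leq|I|$ and $\psi^{(K)} := \lhzp^K\psi$.

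The key step is to trade the bad $\dt_\ltb\psi^{(K)}$ for tangential derivatives using the Dirac equation. From the massless equation $\gamma^\mu D_\mu\psi = 0$ and the commutator inequality \eqref{Diraccommutatorinequality}, we have $\gamma^\mu D_\mu\psi^{(K)} = -F^{com\,K}$ with
$$|F^{com\,K}| \lesssim \sum_{\substack{|J'|+|K'|\leq|K|\\ |K'|<|K|}} \bigl(|\lz^{J'}((\e_a)^\mu - (\dt_a)^\mu)|\,|\dt\lz^{K'}\psi| + |\lz^{J'}((\e_c)^\mu\omegat_{\mu ab})|\,|\lz^{K'}\psi|\bigr).$$
Expanding the Minkowski part of $\gamma^\mu D_\mu$ in the null frame via $\mh^{\mu\nu} = -\tfrac12(\lt^\mu\ltb^\nu + \ltb^\mu\lt^\nu) + \delta^{AB}\widetilde{S}_A^\mu\widetilde{S}_B^\nu$, and converting $\gamma, D$ to $\gb, \dt$ modulo cubic errors, yields
$$\gb_\lt\dt_\ltb\psi^{(K)} = -\gb_\ltb\dt_\lt\psi^{(K)} + 2\gb_A\dt^A\psi^{(K)} - 2F^{com\,K} + (\text{cubic }(\gt-\mh)\text{ errors}).$$
Multiplying by $|\pb^{(J)}|$, the first two tangential-derivative pieces produce the quadratic contribution $|\lz^J\psi||\dtb\lz^K\psi|$ on the RHS, while the commutator and discrepancy terms produce cubic contributions of the form $|\lz^{J_1}\psi||\dt\lz^{J_2}\psi||\lz^{J_3}(\gt-\mh)|$ after applying Lemmas \ref{GS} and \ref{omeganull}. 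The conjugate piece $\dt_\ltb\pb^{(J)}\gb_\lt\psi^{(K)}$ is treated identically via the adjoint equation $D_\mu\pb\gamma^\mu = 0$ decomposed in the same null frame.

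The main obstacle is the bookkeeping in the second step: one has to verify that every residual term arising from $F^{com\,K}$ and from the $\gamma-\gb$, $\omegat$ discrepancies genuinely fits the schematic cubic structure on the RHS, allowing for the mild rearrangement that a derivative falling on the metric factor (coming from $\omegat \sim \dt(\gt-\mh)$) may be understood inside $\lz^{J_3}(\gt-\mh)$ by absorbing it as an extra commuting vector field, in parallel with the proof of \eqref{eqEMThigherorderTT}. This amounts to systematic use of Lemma \ref{GS} to identify $(\e_a)^\mu-(\dt_a)^\mu$ with $\gt-\mh$ and Lemma \ref{omeganull} to bound spin-connection factors, together with \eqref{equivalencespinor}; no new analytic input is needed beyond what was already used for the $\Tc\Tc$ case.
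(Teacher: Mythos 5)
Your proposal is correct and follows essentially the same strategy as the paper: you reduce to the new component $\lt\ltb$ (the paper handles all of $\lc\Uc$ directly, but citing \eqref{eqEMThigherorderTT} for the $\Tc\Tc$ overlap is a legitimate and slightly cleaner organization), isolate the bad pairing $\gb_\lt\dt_\ltb\psi^{(K)}$, and eliminate it by decomposing the commuted Dirac equation $\gamma^\mu D_\mu(\lhzp)^K\psi = F^{com\,K}$ in the Minkowski null frame so the bad derivative is exchanged for tangential derivatives plus the quadratic commutator, with the adjoint equation handling the conjugate term. This is exactly the observation the paper invokes ("$\gamma_\lt D_\ltb$ behaves like a good derivative"), and the residual $\gamma\!-\!\gb$, $D\!-\!\dt$, and spin-connection discrepancies are absorbed as cubic errors just as in the $\Tc\Tc$ case, so no new analytic content is needed.
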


\subsection{Spin connection coefficients}\label{higherorderomega}

We now consider when $\lz$ applied on the spin connection coefficients. Recall that 
$$\omegat_{\mu ab}=\gt(\e_a,\dd_\mu \e_b)=\gt_{\alpha\beta} (\e_a)^\alpha \dd_\mu (\e_b)^\beta=\gt_{\alpha\beta} (\e_a)^\alpha \dt_\mu (\e_b)^\beta+\gt_{\alpha\beta}(\e_a)^\alpha \gat_{\mu\nu}^{\ \ \beta}(\e_b)^\nu.$$
Note that $\lz$ does not interact with tetrad indices by our definition. Applying $\lz^I$, one can verify that
\begin{equation*}
    \lz^I(\gt_{\alpha\beta} (\e_a)^\alpha \dt_\mu (\e_b)^\beta)=\sum_{J+K=I}\lz^{J}(\gt_{\a\b}(\e_a)^\a)  \dt_\mu\lzh^{K} (\e_b)^\beta.
\end{equation*}
Also, recall that $\gat_{\mu\nu}^{\ \ \b}=\frac 12\gt^{\b\rho}(\dt_\nu \gt_{\mu\rho}+\dt_\mu \gt_{\nu\rho}-\dt_\rho \gt_{\mu\nu})$. Then similarly
\begin{equation*}
    \lz^I(\gt_{\alpha\beta}(\e_a)^\alpha \gat_{\mu\nu}^{\ \ \beta}(\e_b)^\nu)=\frac 12 \sum_{J+K=I}\lz^J(\gt^{\b\rho}\gt_{\alpha\beta}(\e_a)^\alpha (\e_b)^\nu)(\dt_\nu \lz^K\gt_{\mu\rho}+\dt_\mu\lz^K \gt_{\nu\rho}-\dt_\rho \lz^K\gt_{\mu\nu})
\end{equation*}
%using (commutating), with other terms quadratic, we have the main terms $$\sum_{I_1+I_2+I_3=I}\lzh^{I_1}\gt_{\a\b} \lzh^{I_2}(\e_a)^\a  \lzh^{I_3}\dt_\mu (\e_b)^\beta + \sum_{J_1+J_2+J_3=I}\lzh^{J_1}\gt_{\a\b} \lzh^{J_2}((\e_a)^\a(\e_b)^\nu)\lzh^{J_3}(\gat_{\mu\nu}^{\ \ \b}).$$

%We note that the factors associated with $I_3$ and $J_3$ are always perturbations. Since $\lzh \mh=0$, we get quadratic terms when $I_1\neq 0$ and $J_1\neq 0$ for the first and second term respectively. Now $\lzh^{I_2} (e_a)^\a$ is a sum of vector fields less than order $|I_2|$ applied on $(e_a)^\alpha$. Whenever the order of the vector fields is not zero, we get a quadratic term. Similar for $J_2$ so 
For both terms, the factors with $\lz^J$ are clearly bounded, and the factors with $\lz^K$ will determine the decay properties.
%From Prop \ref{existenceofsuchtetrad} we know that the when taking derivatives of the frame, the derivatives fall on the metric, or the coefficients that consist of the cutoff function $\chi(\frac r{t+1})$ and $\omega_i$'s, in which case the resulting term will behave better. %We have $|\lz^I e_b|\lesssim \sum_{|J|\leq |I|} |\lz^J (\gt-\mh)|+O(\frac \cht r)|\lz^J (\gt-\mh)|$. 

%$$|\dt_\mu \lz^I e_a|\lesssim |\dt_\mu \lz (\gt-\mh)|+|(\dt_\mu U^\nu \lz^I (\gt-\mh)_\nu \widetilde{V})|+\sum_{|J|<|I|} \dt_\mu (c\lz^J(\gt-\mh)_{\Ut\widetilde{V}})$$
%If contracting with $\lt$, then no terms like the second term here. (But still have $\chi'$)

%For the second term, Recall that $\gat_{\mu\nu}^{\ \ \b}=\frac 12\gt^{\b\rho}(\dt_\nu \gt_{\mu\rho}+\dt_\mu \gt_{\nu\rho}-\dt_\rho \gt_{\mu\nu}).$
%Applying $\lzh^I$, and we can again only consider the case when all derivatives fall on the latter factor. By the definition of the modified Lie derivative, and the commuting property of the original Lie derivative, we know that the term is a constant coefficient combination of terms of the form $$\frac 12\gt^{\b\rho}(\dt_\nu \lz^J\gt_{\mu\rho}+\dt_\mu \lz^J\gt_{\nu\rho}-\dt_\rho \lz^J\gt_{\mu\nu})$$ with $|J|\leq |I|$.
This gives the estimate for general components of the spin connection.
Also, if we contract with $\lt^\mu$, the first term becomes $\lt$ derivatives; for the second term, we expand in the null frame just like the proof of Lemma \ref{omeganull}, also using the estimate in Proposition \ref{GS}, to get the following estimate:

%$$\lz (\omega_{\mu ab})=\lz (\gt_{\alpha\beta} (e_a)^\alpha) \dt_\mu (e_b)^\beta+\gt_{\alpha\beta} (e_a)^\alpha \dt_\mu \lz (e_b)^\beta+$$

\begin{lem}\label{omeganullwithZ}
Suppose the weak decay \eqref{weakdecay} holds. Then
$|\lz^I \omegat_{\mu ab}|\lesssim \sum_{|J|\leq |I|}|\dt \lz^J \gt|$. Moreover, for components in $\lt$ direction we have
$$|\lt^\mu\lz^I (\omegat_{\mu ab})|\lesssim \sum_{|J|\leq |I|}|\dtb \lz^J\gt|+|\dt \lz^J\gt|_{\lc\Tc}.$$
\end{lem}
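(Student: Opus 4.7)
The plan is to build on the pointwise identity for $\omegat_{\mu ab}$ used in the proof of Lemma~\ref{omeganull}, namely
\[
\omegat_{\mu ab}=\gt_{\alpha\beta}(\e_a)^\alpha\dt_\mu(\e_b)^\beta+\gt_{\alpha\beta}(\e_a)^\alpha\gat_{\mu\nu}^{\ \ \beta}(\e_b)^\nu,
\]
apply $\lz^I$ using the Leibniz expansions already recorded at the start of Section~\ref{higherorderomega}, and then argue that each factor of type $\lz^J(\gt_{\alpha\beta}(\e_a)^\alpha)$ is $O(1)$ while the remaining single-derivative factor produces the desired right-hand side. The boundedness of the ``bulk'' factors follows from the weak decay bound \eqref{weakdecayh1} on $\lz^J\hi$ together with part (3) of Proposition~\ref{GS}, which lets me control $|Z^J(\e_a)^\alpha|$ by sums of products of $|Z^{J'}\hti|$.

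For the first inequality, I simply observe that after the expansion
\[
\lz^I(\gt_{\alpha\beta}(\e_a)^\alpha\dt_\mu(\e_b)^\beta)=\sum_{J+K=I}\lz^{J}(\gt_{\a\b}(\e_a)^\a)\,\dt_\mu\lzh^{K}(\e_b)^\beta,
\]
the factor $\dt_\mu\lzh^K(\e_b)^\beta$ is pointwise dominated by $\sum_{|J'|\le|K|}|\dt\lz^{J'}\gt|$ via Proposition~\ref{GS}(3), and similarly for the Christoffel contribution the derivative-bearing factor is manifestly of the form $|\dt\lz^{K}\gt|$. Summing over the partitions yields $|\lz^I\omegat_{\mu ab}|\lesssim\sum_{|J|\le|I|}|\dt\lz^J\gt|$.

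The second inequality is the one that requires some care: I contract the full expansion with $\lt^\mu$. The first piece becomes $\lz^J(\gt_{\alpha\beta}(\e_a)^\alpha)\,\lt(\lzh^K(\e_b))^\beta$, whose $\lt$-directional derivative is a tangential derivative and is thus bounded by $|\dtb\lz^{J'}\gt|$. For the Christoffel piece, the key computation is
\[
\lt^\mu\bigl(\dt_\nu\lz^K\gt_{\mu\rho}+\dt_\mu\lz^K\gt_{\nu\rho}-\dt_\rho\lz^K\gt_{\mu\nu}\bigr)=(\lt\lz^K\gt)_{\nu\rho}+\dt_\nu\lz^K\gt_{\lt\rho}-\dt_\rho\lz^K\gt_{\lt\nu},
\]
which I then expand in the null frame $\{\lt,\ltb,\widetilde S_1,\widetilde S_2\}$ exactly as in the proof of Lemma~\ref{omeganull}. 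The only potentially dangerous component, namely the one picking out $\ltb\lz^K\gt_{\lt\ltb}$, is cancelled by the antisymmetry between the second and third terms; every surviving contribution is either a tangential derivative $|\dtb\lz^K\gt|$ or a $\lc\Tc$-component $|\dt\lz^K\gt|_{\lc\Tc}$. Summing over partitions gives the stated bound.

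The main technical obstacle is essentially bookkeeping: I need to verify that the Leibniz expansion of $\lz^I$ on a composite object that mixes a tensorial metric, tetrad components (with flat indices acting as inert labels under $\lz$), and a Christoffel symbol really does take the two clean forms quoted above, and that in particular the $\lt$-contracted cancellation that saves Lemma~\ref{omeganull} is preserved at higher order. Once this is confirmed, the actual bounds are immediate from Proposition~\ref{GS}(3) and the weak decay assumption, with no further estimation required.
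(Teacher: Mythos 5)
Your proof follows the paper's own argument essentially verbatim: it applies $\lz^I$ via the two Leibniz expansions recorded at the start of Section~\ref{higherorderomega}, bounds the $\lz^J$-factors using Proposition~\ref{GS}(3) and the weak decay, and for the $\lt^\mu$-contracted estimate reproduces the null-frame expansion and the antisymmetry cancellation of the $\ltb(\lz^K\gt)_{\lt\ltb}$ term exactly as in the proof of Lemma~\ref{omeganull}. The argument is correct and takes the same route as the paper.
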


%Avoid loss of derivative: In view of the proof of Lemma \ref{avoidloss}, the term equals $\widetilde{\Box}_g (e_b)^\rho +\d h\cdot\d h$. Apply $(\lz+2)^I$, using the identity $$(\lz+2)^I (g^{\a\b}\d_\a\d_\b v^\rho)=\sum_{J+K=I,|K|<|I|}(\lz^J g^{\a\b})\d_\a \d_\b \lz^K v^\rho+g^{\a\b}\d_\a \d_\b \lz^I v^a$$
%Only last term may cause loss. 
%$$|g^{\a \b}\d_\a \d_\b \lz^I (e_a)^\rho|\lesssim \sum_{|J|\leq |I|}|g^{\a\b} \d_\a \d_\b \lz^J h|$$

\subsection{Second-order equation for spinor field}
The main term on the right hand side of the second order equation of $\psi$ is
\begin{equation}\label{eqfptilde}
    \widetilde F^\psi (\omegat,\dt\psi,\psi)= \frac 12\gt^{\mu\nu} (\omegat_{\mu ab}\Sigma^{ab} \dt_\nu \psi + \frac 12 \dd_\mu (\omegat_{\nu ab})\Sigma^{ab} \psi +\frac 1{8} \omegat_{\mu ab}\omegat_{\nu cd} \Sigma^{ab}\Sigma^{cd} \psi).
\end{equation}

We use the calculation in Section \ref{constructionoftetradsection} to estimate the second term here. From Lemma \ref{avoidloss} we know it equals $-\gt^{\delta\rho}R_{\sigma\delta}-\dt_\sigma (\gt^{\mu\nu}\gh_{\mu\nu}^{\ \ \rho})(\e_a)_\rho(\e_b)^\sigma \Sigma^{ab}\psi+(\e_a)_\rho \widetilde{\Box}_{\gt} (\e_b)^\rho \Sigma^{ab} \psi$ plus cubic terms. For $R_{\sigma\delta}$ and $\widetilde{\Box}_{\gt} (\e_b)^\rho$, we use the Einstein equation to replace them, and then we again have cubic terms. The term with $\gh$ also decays well, in view of \eqref{boundforChristoffelhat}.

\begin{lem}\label{secondorderDiracRHS}
We have
\begin{equation}
|\lz^I \widetilde F^\psi|\lesssim \sum_{|J|+|K|\leq |I|} |\dtb \lz^J \psi||\dt \lz^K \gt|+\sum_{|J|+|K|\leq |I|} |\dt \lz^J \psi|(|\dtb \lz^K \gt|+|\dt \lz^K \gt|_{\lc\Tc})
+|R^{cube\ I}|.
\end{equation}

\end{lem}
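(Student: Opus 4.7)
The plan is to treat the three summands of $\widetilde F^\psi$ in \eqref{eqfptilde} separately, and show that only the semilinear one contributes the two bilinear pieces of the claimed bound, while the other two are absorbed into $R^{cube\ I}$.

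For the semilinear term $\tfrac 12 \gt^{\mu\nu}\omegat_{\mu ab}\Sigma^{ab}\dt_\nu\psi$, I would first split $\gt^{\mu\nu}=\mh^{\mu\nu}+(\gt-\mh)^{\mu\nu}$; the $(\gt-\mh)$-piece carries an extra decay factor and is placed in $R^{cube}$. In the $\mh^{\mu\nu}$-piece I expand in the null frame
\[
\mh^{\mu\nu}=-\tfrac 12(\lt^\mu\ltb^\nu+\ltb^\mu\lt^\nu)+\delta^{AB}\widetilde S_A^{\mu}\widetilde S_B^{\nu}.
\]
Every piece in which $\dt_\nu$ hits $\lt$ or $\widetilde S_A$ produces $|\dtb\psi|\,|\omegat_{\mu ab}|\lesssim |\dtb\psi||\dt\gt|$ by the first half of Lemma \ref{omeganull}, giving the first term of the claimed estimate. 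The only potentially dangerous piece is $\lt^\mu\omegat_{\mu ab}\cdot\ltb\psi$; by the second half of Lemma \ref{omeganull}, together with $|\dtb(\e_a)|\lesssim|\dtb\gt|$ from Lemma \ref{GS}(2), we have $|\lt^\mu\omegat_{\mu ab}|\lesssim |\dtb\gt|+|\dt\gt|_{\lc\Tc}$, which delivers exactly the second term of the claimed bound paired with $|\dt\psi|$. The purely cubic term $\tfrac 1{16}\gt^{\mu\nu}\omegat_{\mu ab}\omegat_{\nu cd}\Sigma^{ab}\Sigma^{cd}\psi$ falls into $R^{cube}$ immediately since $|\omegat_{\mu ab}|\lesssim|\dt\gt|$.

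For the apparent-loss-of-derivative summand $\tfrac 14\gt^{\mu\nu}\dd_\mu(\omegat_{\nu ab})\Sigma^{ab}\psi$, I would apply Lemma \ref{avoidloss}. Each term on its right-hand side, once multiplied by the free factor $\psi$, is cubic: products of $\dt\e$'s (or of $\gat$ with $\dt\e$) yield $O((\dt\gt)^2\psi)$; the Ricci tensor is replaced via the Einstein equation by $T_{\sigma\delta}-\tfrac 12 \gt_{\sigma\delta}\mathrm{tr}_\gt T$, hence $O(\psi^2\dt\psi)$; the generalized wave coordinate contribution $\dt_\sigma(\gt^{\mu\nu}\gh_{\mu\nu}^{\ \ \rho})$ decays rapidly by \eqref{boundforChristoffelhat}; and $\widetilde\Box_\gt(\e_b)^\rho$ is at least quadratic by Lemma \ref{GS}(4). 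All such contributions are absorbed into $R^{cube}$.

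The extension to general $|I|$ proceeds by distributing $\lz^I$ via Leibniz; since $\Sigma^{ab}$ and $\gb^a$ are constant matrices, each factor $\lz^J\omegat$ obeys the same structural estimates as $\omegat$ by Lemma \ref{omeganullwithZ}, in particular with the refined $\lt$-contraction bound $|\lt^\mu\lz^J\omegat_{\mu ab}|\lesssim\sum_{|J'|\leq |J|}(|\dtb\lz^{J'}\gt|+|\dt\lz^{J'}\gt|_{\lc\Tc})$. Repeating the null-frame expansion of the first term produces the stated bilinear sums. I expect the principal bookkeeping difficulty to be the second summand at top order: one must check that the algebraic identity of Lemma \ref{avoidloss} still reduces the apparent second derivative of $\e$ to cubic contributions after commutation with $\lz^I$. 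This is handled because that identity is algebraic in $\gt$, $\dt\gt$, $\e$, $\dt\e$, $\gat$, the Ricci tensor, and $\gt^{\mu\nu}\gh_{\mu\nu}^{\ \ \rho}$, all of whose $\lz$-commutators are again cubic (or, in the $\gh$ case, rapidly decaying), in the same spirit as the commutator analysis of Section \ref{ChapterSchwarzschildcoordinates}.
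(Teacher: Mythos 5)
Your proposal is correct and follows essentially the same route as the paper: split $\gt=\mh+(\gt-\mh)$ (discarding the latter into $R^{cube}$), expand $\mh^{\mu\nu}$ in the null frame so that the only non-tangential derivative $\ltb\psi$ is paired with $\lt^\mu\omegat_{\mu ab}$, apply Lemma \ref{omeganull}/\ref{omeganullwithZ} to convert that pairing into $|\dtb\gt|+|\dt\gt|_{\lc\Tc}$, treat the $\dd(\omegat)$ summand via Lemma \ref{avoidloss} together with the Einstein equation and \eqref{boundforChristoffelhat}, and push the quadratic-in-$\omegat$ summand into $R^{cube}$. The paper is merely more terse, using the observation that $\lz^{I_1}\mh^{\mu\nu}$ is a constant multiple of $\mh^{\mu\nu}$ so the null-frame expansion commutes with the Leibniz distribution of $\lz^I$, which you invoke implicitly.
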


\begin{proof}
We have already discussed the second term in \eqref{eqfptilde}, and the third term is clearly a cubic term. %By Lemma \ref{avoidloss} we know that the second term is also like a cubic term (still a cubic term when applied vector fields using Proposition \ref{existenceofsuchtetrad}). 
For the first term, modulo a cubic error term, we can consider instead
$$\lz^I (\mh^{\mu\nu}\omegat_{\mu ab}\Sigma^{ab}\dt_\nu \psi)=\sum_{I_1+I_2+I_3=I} (\lz^{I_1} \mh^{\mu\nu})\lz^{I_2} \omegat_{\mu ab} \Sigma^{ab} \dt_\nu\lz^{I_3}\psi.$$
Since $\lz^{I_1}\mh$ is a multiple of $\mh$, we can expand in the null frame, and use Lemma \ref{omeganullwithZ} to get the estimate.
\end{proof}

%$|\dt \gt|$ to $|\dt \hi|$
Using the weak decay \eqref{weakdecay} for the factor with less vector fields, we obtain
\begin{Prop}
We have
\begin{multline}\label{eqsecondorderDiracRHSwithdecaypluggedin}
    |\lz^I \fp|\lesssim \sum_{|J|\leq |I|}\left(\frac \varepsilon{(1+t)^{2-2\delta}}+\frac {M\ln (1+t+r^*)}{(1+t+r^*)^2}\right)|\dt \lz^J \psi|+\varepsilon(1+t)^{-1+\delta} (1+|q^*|)^{-\frac 12}|\dtb \lz^J \psi|\\
    +\varepsilon(1+t)^{-1+\delta}(1+|q^*|)^{-\frac 32}(1+q_+^*)^{-s}(|\dtb \lz^J \hi|+|\dt \lz^J \hi|_{\lc\Tc})\\
    +\varepsilon(1+t)^{-2+\delta}(1+|q^*|)^{-\frac 12}(1+q_+^*)^{-s}|\dt\lz^J \hi|.%+\frac{1}{1+t+r}|\lz^J\hi|)\\.
\end{multline}
\end{Prop}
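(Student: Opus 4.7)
The plan is to combine the algebraic bound of Lemma \ref{secondorderDiracRHS} with the pointwise weak-decay estimates of Proposition \ref{weakdecay}, applied to the lower-index factor of each product (a standard low-high / high-low split). Since $|J|+|K|\leq |I|\leq N$ and the weak decay holds up to order $N-3$ for $\hi$ and $N-4$ for the tangential derivatives, at least one factor in every bilinear term admits the pointwise estimate.

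I would begin by writing $\gt=\mt^0+\hi$, so that by \eqref{dm0higherorder} each contribution $|\dt\lz^K\mt^0|\lesssim M\ln(1+t+r^*)(1+t+r^*)^{-2}$ multiplied by a retained $|\dt\lz^J\psi|$ produces the $M$-piece of Term A. Thereafter it suffices to substitute $\hi$ for $\gt$ in every bilinear product. For the first sum $|\dtb\lz^J\psi||\dt\lz^K\hi|$: if $|K|$ is the smaller index, plugging in $|\dt\lz^K\hi|\lesssim \varepsilon(1+t)^{-1+\delta}(1+|q^*|)^{-1/2}$ gives exactly Term B; if $|J|$ is the smaller, the tangential-derivative corollary of Proposition \ref{weakdecay} gives $|\dtb\lz^J\psi|\lesssim \varepsilon(1+t+r^*)^{-2+\delta}(1+|q^*|)^{-1/2}(1+q_+^*)^{-s}$, which is Term D.

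For the second sum $|\dt\lz^J\psi|(|\dtb\lz^K\hi|+|\dt\lz^K\hi|_{\lc\Tc})$, I would first use the generalized wave coordinate condition \eqref{wccestimate} to reduce $|\dt\lz^K\hi|_{\lc\Tc}$ to $|\dtb\lz^K\hi|$ modulo quadratic errors. Then, when $|J|$ is the smaller index, the weak decay $|\dt\lz^J\psi|\lesssim \varepsilon(1+t+r^*)^{-1+\delta}(1+|q^*|)^{-3/2}(1+q_+^*)^{-s}$ yields Term C. When instead $|K|$ is smaller, the weak-decay bound on the metric factor combined with the remaining $|\dt\lz^J\psi|$ contributes to Term A after exploiting the exterior weight $(1+q_+^*)^{-1/2-\gamma}$. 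Finally, the cubic remainder $R^{cube, I}$ carries three factors, two of which may be weak-decayed simultaneously (the two of lowest index), yielding an overall coefficient $\lesssim \varepsilon^2(1+t+r^*)^{-2+2\delta}$ times $|\dt\lz^J\psi|$, which closes the $\varepsilon$-piece of Term A.

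The only non-routine step is the top-order case in the second sum, where $|K|$ is small and $|J|=|I|$ is at top order: substituting the $\dtb$-weak decay produces a factor $(1+|q^*|)^{1/2}$ that must be absorbed against the available weight $(1+q_+^*)^{-1/2-\gamma}$ in the exterior and against the $t$-decay margin of Term A in the interior. Careful bookkeeping of the $q^*$- versus $(t+r^*)$-weights in that case is the only delicate ingredient; the remainder of the proof consists of a direct substitution of Proposition \ref{weakdecay} into Lemma \ref{secondorderDiracRHS}.
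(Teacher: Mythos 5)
Your plan follows the paper's route: immediately before the Proposition the paper writes ``Using the weak decay \eqref{weakdecay} for the factor with less vector fields, we obtain,'' which is exactly the low-high / high-low substitution you describe, applied term by term to Lemma~\ref{secondorderDiracRHS}. Your mapping of the individual products to Terms A--D is consistent with that plan.

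However, the step you flag as the only ``non-routine'' one is in fact a genuine gap that your sketch does not close. Consider the second sum of Lemma~\ref{secondorderDiracRHS} with $|K|$ small and $|J|$ at top order, so that weak decay must be applied to the metric factor $|\dtb\lz^K\hi|+|\dt\lz^K\hi|_{\lc\Tc}$. The available interior bound (the Corollary to Proposition~\ref{weakdecay}, or equivalently $|\dtb|\lesssim|\dt|$ combined with \eqref{weakdecaydh1}) is
\begin{equation*}
|\dtb\lz^K\hi|\lesssim \varepsilon(1+t+r^*)^{-2+\delta}(1+|q^*|)^{1/2},\qquad q^*<0.
\end{equation*}
In the far interior, where $r^*\ll t$ and $|q^*|\sim t$, this is $\sim\varepsilon(1+t)^{-3/2+\delta}$, which for small $\delta$ exceeds Term A's coefficient $\varepsilon(1+t)^{-2+2\delta}$ by a factor $(1+t)^{1/2-\delta}$. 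Your proposal to absorb the $(1+|q^*|)^{1/2}$ ``against the $t$-decay margin of Term A in the interior'' therefore does not follow from the quoted bounds: there is no such margin. The exterior weight $(1+q_+^*)^{-1/2-\gamma}$ does neutralize $(1+|q^*|)^{1/2}$ on $q^*\geq0$, exactly as you say, but contributes nothing for $q^*<0$.

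To close this case as the Proposition is stated, you need a sharper interior bound, namely $|\dt\lz^K\hi|\lesssim\varepsilon(1+t)^{\delta}(1+t+r^*)^{-1}(1+|q^*|)^{-1}$ for $q^*<0$. The Klainerman--Sobolev inequality (Proposition~\ref{KS}) actually gives this, since $(1+|q^*|)w^{1/2}\sim(1+|q^*|)$ in the interior; the $(1+|q^*|)^{-1/2}$ quoted in \eqref{weakdecaydh1} understates what is available. With the sharper bound, integration in $q^*$ yields only a logarithm, $|\lz^K\hi|\lesssim\varepsilon(1+t)^{-1+\delta}\ln(1+|q^*|)$, hence $|\dtb\lz^K\hi|\lesssim\varepsilon(1+t)^{-2+\delta}\ln(1+|q^*|)\lesssim\varepsilon(1+t)^{-2+2\delta}$, which fits Term A. You should either invoke this strengthened interior decay explicitly, or weaken Term A's coefficient to something like $\varepsilon(1+t+r^*)^{-2+\delta}(1+|q^*|)^{1/2}(1+q_+^*)^{-1/2-\gamma}$; either option suffices in the weighted energy estimate of Section~13.2 where the Proposition is ultimately used, but the proof as you have written it does not yet establish the statement as given.
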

%\begin{proof}
%We use Lemma \ref{secondorderDiracRHS}. For the term involing the tetrad, we use Proposition \ref{GS} to get terms expressed in $\gt$. Then we separate the Schwarszchild parts, and use (\ref{decayofSchwarzschildpart}) to estimate them. %For terms involing $\Hi$ and $\psi$, we use the weak decay to estimate one of the factors.
%\end{proof}

\section{Improved decay estimates of the metric and field}

%Use H\"{o}rmander estimate to improve $|\lz h^1|$ in the interior
\subsection{Decay estimates from H\"{o}rmander estimate}
\begin{Prop}\label{improvementfromHormanderProp}We have
\begin{equation}\label{eqHormanderforh}
    |\z^I h^1|\leq C\varepsilon (1+t)^{2\delta} (1+t+r^*)^{-1}(1+q_+^*)^{-\gamma},\ \ \ |I|\leq N-3
\end{equation}
and consequently
\begin{equation}\label{eqHormanderfordh}
    (1+t+r^*)|\dtb\z^I h^1|+(1+|q^*|)|\dt\z^I h^1|\leq C\varepsilon (1+t)^{2\delta} (1+t+r^*)^{-1}(1+q_+^*)^{-\gamma},\ |I|\leq N-4
\end{equation}
\end{Prop}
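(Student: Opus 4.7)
The plan is to apply H\"ormander's $L^1$-$L^\infty$ estimate (Proposition \ref{Hormanderprop}) together with the homogeneous bound (Lemma \ref{homogeneouslemma}) to the wave equation satisfied by $\lzh^I \hi_{\mu\nu}$, where the weak decay of Proposition \ref{weakdecay} controls the inhomogeneous terms. The second statement (\ref{eqHormanderfordh}) is then an immediate corollary of (\ref{eqHormanderforh}) via the pointwise estimates (\ref{dtoZ}).

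First I would commute the reduced equation (\ref{zeroordereqforh1}) with $\lz^I$ and, using (\ref{commutatorestimatefordecay}), rewrite it as a flat inhomogeneous wave equation
\begin{equation*}
    \Box^* \lzh^I \hi_{\mu\nu} = F^I_{\mu\nu},
\end{equation*}
where $F^I$ collects $\lz^I$ of the semilinear source $\widetilde F(\gt)(\dt \hi, \dt \hi) + \Tt_{\mu\nu} - \tfrac12 \mathrm{tr}\, T\, \gt_{\mu\nu}$, the mass/covariant errors, and the commutator $R^{com*\, I}$ from (\ref{commutatorestimatefordecay}). I would split $\lzh^I \hi = v^I + u^I$, where $v^I$ solves $\Box^* v^I = 0$ with the same Cauchy data as $\lzh^I \hi$ at $t=0$, while $u^I$ solves $\Box^* u^I = F^I$ with zero data. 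Using the initial bounds (\ref{dhinr})--(\ref{hinr}) coming from the smallness assumption and the weak decay, Lemma \ref{homogeneouslemma} applied with $\gamma$ gives $(1+t+r^*)|v^I| \lesssim \varepsilon (1+q_+^*)^{-\gamma}$, since a forward characteristic hitting $(t,x)$ in the exterior originates at $|y|\sim q^*_+$ on $\{t=0\}$.

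Next I would invoke Proposition \ref{Hormanderprop} on $u^I$:
\begin{equation*}
    (1+t+r^*)|u^I(t,x)| \lesssim \sum_{|J| \leq 2} \int_0^t \int_{\mathbb{R}^3} \frac{|\z^J F^I(s,y)|}{1+s+|y|} \, dy\, ds,
\end{equation*}
and bound each factor of $\z^J F^I$ pointwise using Proposition \ref{weakdecay}, the decompositions (\ref{eqEMThigherorderTT}) and (\ref{eqTLU}) for the $T_{\mu\nu}$ contribution, the null form structure of $\widetilde F$, and Lemma \ref{lemmaH1equivh1integral} to pass between $\Hi$ and $\hi$. A representative term $|\dt \lz^{J_1} \hi||\dt \lz^{J_2} \hi|$ is controlled in the worst case by $\varepsilon^2 (1+s)^{2\delta}(1+s+|y|)^{-2}(1+|q^*|)^{-1}$; after dividing by $(1+s+|y|)$ and integrating in spherical coordinates the total contribution is $\lesssim \varepsilon^2 (1+t)^{2\delta}$, while the $(1+q_+^*)^{-\gamma}$ factor in the exterior is inherited from the $q_+^*$ weight in (\ref{weakdecayh1}). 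Finally, (\ref{eqHormanderfordh}) follows from (\ref{eqHormanderforh}) and (\ref{dtoZ}): one has $(1+t+r^*)|\dtb \z^I \hi| + (1+|q^*|)|\dt \z^I \hi| \lesssim \sum_{|J|\leq |I|+1}|\z^J \hi|$, and the right-hand side satisfies (\ref{eqHormanderforh}) provided $|I| \leq N-4$.

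The \textbf{main obstacle} is the $\Tt_{\mu\nu}$ contribution to $F^I$. The weak decay gives $|\psi|\lesssim \varepsilon (1+t+r^*)^{-1+\delta}(1+|q^*|)^{-1/2}$ in the interior, so $|\psi||\dt\psi|/(1+s+|y|)$ is only marginally integrable and the naive estimate would produce a logarithmic loss beyond $(1+t)^{2\delta}$. To avoid this one must exploit both the exterior decay in $q_+^*$ coming from the weight $w_1$ in $E^1_N$ and the $\lc\Uc$-structure (\ref{eqTLU}), which converts the worst time-derivative of $\psi$ into a tangential derivative via the Dirac equation itself. A secondary technical point is the commutator $R^{com*\, I}$, whose coefficient $|\lz^J \Hi|_{\lc\lc}/(1+|q^*|)$ must be controlled using the generalized wave coordinate estimate (\ref{wccestimate}) applied pointwise, so that it does not spoil the $(1+t)^{2\delta}$ growth.
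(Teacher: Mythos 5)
Your overall framework (decompose $\lzh^I\hi = v^I + u^I$, treat the homogeneous part with Lemma \ref{homogeneouslemma} and the inhomogeneous part with Proposition \ref{Hormanderprop}, then deduce \eqref{eqHormanderfordh} from \eqref{dtoZ}) is the same as the paper's, and the observation that \eqref{eqHormanderfordh} requires $|I|\leq N-4$ because one derivative is traded for one vector field is correct.

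However, your method of bounding the Hörmander source diverges from the paper's and contains a gap. You estimate $|\z^J F^I|$ pointwise via Proposition \ref{weakdecay} and then integrate in spherical coordinates. For the $|\dt\lz^{J_1}\hi||\dt\lz^{J_2}\hi|$ term this yields
\[
\int_0^t (1+s)^{2\delta}\int_{\mathbb{R}^3}\frac{r^2\,dr}{(1+s+r)^{3}(1+|q^*|)}\,ds ,
\]
and the $(1+|q^*|)^{-1}$ weight is only marginal near $r\sim s$, producing an extra $\ln(1+t)$ factor beyond $(1+t)^{2\delta}$. Curiously, you flag a marginal-integrability problem but attribute it to the $T_{\mu\nu}$ term; in fact $\psi$ is \emph{not} the culprit here — the interior bounds from Proposition \ref{weakdecay} give $|\z^J\psi||\dt\lz^K\psi|\lesssim \varepsilon^2(1+t)^{2\delta}(1+t+r^*)^{-2}(1+|q^*|)^{-2}$, which is integrable without loss. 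The log loss really sits in the metric self-interaction term $\widetilde F(\gt)(\dt\hi,\dt\hi)$. Your proposed remedy — exploiting the $\lc\Uc$ structure of $T_{\mu\nu}$ via the Dirac equation — cannot help in any case: Proposition \ref{Hormanderprop} requires a bound on $\sum_{|J|\leq 2}|Z^J F^I|$ for \emph{all} components, so null-structure gains that only hold for special contractions such as $T^\mu U^\nu$ are unavailable at this step.

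The paper avoids the loss by working at the $L^2$ level. Rather than estimating the integrand pointwise, it bounds the spatial $L^1$ norm of the source on each slice by Cauchy--Schwarz against the bootstrap energies, e.g.
\[
\int_{\Sigma_\tau}|\lz^J\psi||\dt\lz^K\psi|\,d\x \lesssim C_N(\tau)^{1/2}\,E^1_N(\tau)^{1/2}\lesssim \varepsilon^2(1+\tau)^{\delta},
\qquad
\int_{\Sigma_\tau}|\dt\lz^J\gt||\dt\lz^K\gt|\,d\x \lesssim \varepsilon^2(1+\tau)^{2\delta},
\]
and then integrates in $\tau$ against $(1+\tau)^{-1}$. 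This works out cleanly precisely because of the bootstrap assumption on $C_N(t)=\sup_\tau\int|\psi|^2$ — the ``estimate for the Dirac equation itself'' highlighted in the introduction — which your argument does not invoke. Note also that in the paper the $(1+q_+^*)^{-\gamma}$ factor in \eqref{eqHormanderforh} is obtained directly from the weak decay \eqref{weakdecayh1} in the exterior $q^*\geq 0$; the H\"ormander step is only needed for $q^*<0$, where that factor is trivially $1$. Your remark that the $q_+^*$-weight is ``inherited'' through the inhomogeneous estimate misplaces where that decay actually comes from.
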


\begin{proof}
The estimates for $q^*\geq 0$ follow directly from the weak decay \eqref{weakdecay}. We now prove the case for $q^*<0$. Consider $\lzh^I\hi_{\mu\nu}=v^I_{\mu\nu}+u^I_{\mu\nu}$, where
%\begin{equation*}
%    \Box^* v_{\mu\nu}=0,\ \ \ v_{\mu\nu}=\hi_{\mu\nu}|_{t=0},\ \ \ \dt_t v_{\mu\nu}|_{t=0}=\dt_t \hi_{\mu\nu}|_{t=0},
%\end{equation*}
%and
%\begin{equation*}
%    \Box^* u_{\mu\nu}=\Box^* \hi_{\mu\nu},\ \ \ \ \ u_{\mu\nu}|_{t=0}=\dt_t u_{\mu\nu}|_{t=0}=0.
%\end{equation*}
%To derive the estimate with vector fields, we commute this two equations with Lie derivatives, and we have
\begin{equation*}
    \Box^* v^I_{\mu\nu}=0,\ \ \ v^I_{\mu\nu}|_{t=0}=\lzh^I\hi_{\mu\nu}|_{t=0},\ \ \ \dt_t v^I_{\mu\nu}|_{t=0}=\dt_t\lzh^I \hi_{\mu\nu}|_{t=0},
\end{equation*}
and
\begin{equation*}
    \Box^* u^I_{\mu\nu}=\Box^* \lzh^I \hi_{\mu\nu},\ \ \ \ \ u^I_{\mu\nu}|_{t=0}=\dt_t\lzh^I u_{\mu\nu}|_{t=0}=0.
\end{equation*}
By (\ref{commutatorestimatefordecay}) we have 
\begin{multline*}
    |\Box^*\lzh^I \hti^1|\lesssim |\Box^*\lzh^I \hti^1-\lz^I(\k\widetilde{\Box}_{\gt} \hti^1)|+|\lz^I \widetilde{F}(\gt)(\dt\hi,\dt\hi)|+|\lz^I \widetilde T|+|\widetilde R^{mass\ I}|+|\widetilde{R}^{cov\ I}|\\
    \lesssim \sum_{\substack{|J|+|K|\leq |I|+1 \\|J|\leq |I|}}\left(\frac{M\ln (1+t+r^*)}{(1+t+r^*)^2}+\frac{|\lzh^J\Hi|}{1+|q^*|}\right)|\dt\lzh^K \hti^1|+\sum_{|J|+|K|\leq |I|}|\dt\lzh^J \gt||\dt\lzh^K \gt|\\
    +\sum_{|J|+|K|\leq |I|}|\lz^J \psi||\dt\lz^K \psi|+|\widetilde R^{mass\ I}|+|\widetilde{R}^{cov\ I}|,
\end{multline*}
%where $w_{1,\y}(t,x)=(1+q^*_+)^\y$. 
The error terms are easier to control and we omit the estimate. We have for $|J|+|K|\leq |I|\leq N-3$ that
\begin{multline*}
    \int_{\Sigma_t}|\dt\lzh^J \gt||\dt\lzh^K \gt|\, d\x\lesssim \int_{\Sigma_t} |\dt\lzh^K \hi|^2+|\dt\lzh^K \mt_0|^2\, d\x\\
    \lesssim \int_{\Sigma_t} |\dt\z^K \hi|^2\, d\x+\int_{\Sigma_t} \frac{\varepsilon^2 \ln(1+t+r^*)^2}{(1+t+r^*)^4}\, d\x\lesssim \varepsilon^2 (1+t)^{2\delta},
\end{multline*}
and
\begin{equation*}
    \int_{\Sigma_t}|\lz^J \psi||\dt\lz^K \psi|\, d\x\lesssim \left(\int_{\Sigma_t} |\z^{J_1} \psi|^2 \, dx\right)^{\frac 12} \left(\int_{\Sigma_t} |\dt\z^{J_2} \psi|^2 \, dx\right)^{\frac 12}
    \lesssim \varepsilon^2 (1+t)^{\delta}.
\end{equation*}
The term from the commutator can also be controlled similarly, using Lemma \ref{Hardy} and Lemma \ref{lemmaH1equivh1integral}.
Now applying Proposition \ref{Hormanderprop} to $u^I_{\mu\nu}$ we have 
\begin{equation*}
    |u^I_{\mu\nu}(t,\x)|(1+t+r^*)\lesssim \int_0^t \frac{\varepsilon(1+\tau)^{2\delta}}{1+\tau}\, d\tau\lesssim \varepsilon (1+t)^{2\delta}.
\end{equation*}
Then also applying Lemma \ref{homogeneouslemma} to $v^I_{\mu\nu}$, using \eqref{dhinr} and \eqref{hinr}, we get the result.
\end{proof}

\subsection{Improved decay estimates from generalized wave coordinate condition}
Using the decay we get from Proposition \ref{improvementfromHormanderProp}, we can get the improved decay estimate from our (generalized) wave coordinate condition. 
\begin{Prop}\label{wccdecayprop}
For $|I|\leq N-4$,
\begin{equation}\label{dHLL}
    |\d_{q^*} \lz^I \Hi|_{\lc\Tc}+|\d_{q^*} \slashed{\mathrm{tr}} \lzh^I \Hi| \lesssim \varepsilon (1+t+r^*)^{-2+4\delta} (1+q^*_+)^{-\y} (1+|q^*|)^{-2\delta},
\end{equation}
\begin{equation}\label{HLL}
    |\lz^I \Hi|_{\lc\Tc}+|\slashed{\mathrm{tr}} \lzh^I \Hi|\lesssim \varepsilon (1+t+r^*)^{-1-\y+2\delta} (1+q^*_-)^{\y}.
\end{equation}
\end{Prop}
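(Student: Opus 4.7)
The strategy is to combine the pointwise identity \eqref{wccestimate} with the H\"ormander-improved decay \eqref{eqHormanderforh}--\eqref{eqHormanderfordh}, and then integrate in $\d_{q^*}$ at fixed $(t,\omega)$ to pass from \eqref{dHLL} to \eqref{HLL}. For \eqref{dHLL} I substitute the H\"ormander estimates into the right-hand side of \eqref{wccestimate}. The tangential-derivative term $|\dtb\lz^I\hi|\lesssim \varepsilon(1+t)^{2\delta}(1+t+r^*)^{-2}(1+q^*_+)^{-\y}$ is recast, via the elementary inequality $(1+t)(1+|q^*|)\lesssim (1+t+r^*)^2$ (which follows from $t+|q^*|\leq 2t+r^*$), as $\lesssim \varepsilon(1+t+r^*)^{-2+4\delta}(1+|q^*|)^{-2\delta}(1+q^*_+)^{-\y}$, matching the target. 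For the quadratic sum $|\lz^J\Hti_i||\dt\lz^K\Hti_j|$ with $|J|+|K|\leq |I|\leq N-4$, I first pass from $\Hti$ to $\hti^1$ via the lemma preceding Lemma \ref{lemmaH1equivh1integral} (with harmless quadratic remainders), then apply \eqref{eqHormanderforh}--\eqref{eqHormanderfordh} to each factor; both indices obey $|J|,|K|\leq N-4$, so the product is $\varepsilon$ times the target and is absorbed by smallness. The mass remainder $M|\chi'(r/(t+1))|(1+t+r)^{-2}$ is supported where $r\sim t\sim |q^*|$ and $q^*<0$, so $M\leq\varepsilon$ together with the surplus $(1+t+r^*)^{4\delta}$ in the target handles it.

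To derive \eqref{HLL}, I integrate the just-proved derivative bound along $\d_{q^*}$ at fixed $(t,\omega)$ from $q'=+\infty$, where $\lz^I\Hti_1\to 0$ (a consequence of the bootstrap energies, Klainerman--Sobolev, and the initial-data decay \eqref{dhinr}). For $q^*\geq 0$ I split the integral at $q'=2t$: on $q^*\leq q'\leq 2t$ one has $(1+2t+q')\sim 1+t$ and integrating $(1+q')^{-\y-2\delta}$ (using $\y+2\delta<1$ for $\delta$ small) gives $\lesssim\varepsilon(1+t)^{-1-\y+2\delta}$; on $q'\geq 2t$, $(1+2t+q')\sim (1+q')$ and the integrand $(1+q')^{-2-\y+2\delta}$ integrates to $\max(q^*,2t)^{-1-\y+2\delta}$. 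Combined, the total is $\lesssim\varepsilon(1+t+r^*)^{-1-\y+2\delta}$, matching \eqref{HLL} since $(1+q^*_-)^\y=1$ here. For $q^*<0$ I split at $q'=0$: the interior portion $\int_{q^*}^{0}$ uses $(1+2t+q')\geq (1+t+r^*)$ to extract $(1+t+r^*)^{-2+4\delta}$, leaving $\int_{q^*}^0(1+|q'|)^{-2\delta}\,dq'\lesssim (1+|q^*|)^{1-2\delta}$; this is dominated by the target because $|q^*|\leq t+r^*$ and $1-2\delta-\y\geq 0$ imply $(1+|q^*|)^{1-2\delta-\y}\leq (1+t+r^*)^{1-\y-2\delta}$. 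The exterior piece $\int_0^\infty$ coincides with the previous case at $q^*=0$ and gives $\lesssim \varepsilon(1+t)^{-1-\y+2\delta}$, absorbed since $(1+t)\gtrsim (1+t+r^*)$ when $q^*<0$ and $(1+|q^*|)^\y\geq 1$.

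The main technical obstacle is the $q^*$-weight bookkeeping in the integration step: the factor $(1+|q^*|)^{-2\delta}$ present in \eqref{dHLL} integrates, in the interior, to $(1+|q^*|)^{1-2\delta}$, and one must verify this is consistent with the weight $(1+q^*_-)^\y$ allowed in \eqref{HLL}. This is precisely where the hypotheses $\y<1$ (so that $1-\y-2\delta>0$ for sufficiently small $\delta$) and the interior inequality $|q^*|\leq t+r^*$ enter; everything else is essentially substitution of Proposition 10.1 into Lemma 9.1 and careful splitting of the $\d_{q^*}$-integral.
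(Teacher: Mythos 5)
Your derivation of \eqref{dHLL} is correct and matches the paper's one-line proof: substitute the H\"ormander-improved decay \eqref{eqHormanderforh}--\eqref{eqHormanderfordh} into the right-hand side of \eqref{wccestimate}; the tangential-derivative term dominates and is recast via the elementary bound $(1+t)(1+|q^*|)\lesssim (1+t+r^*)^2$, while the quadratic term gains an extra factor of $\varepsilon$ and the mass remainder is handled by $M\leq\varepsilon$ together with the support of $\chi'$.

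For \eqref{HLL}, the arithmetic of your $q'$-integration and the splittings at $q'=2t$ (resp.\ $q'=0$) is correct, but the integration scheme has a gap as stated. ``Integrating $\d_{q^*}\lz^I\Hi$ at fixed $(t,\omega)$ from $q'=+\infty$'' is the same as integrating in $r^*$ at fixed $t$; but $\d_{r^*}|_t=\d_{q^*}+\d_{s^*}$, so recovering $\lz^I\Hi$ from that integral also requires a bound on the tangential piece $\d_{s^*}\lz^I\Hi=\frac{1}{2}\lt\lz^I\Hi$, which \eqref{dHLL} does not supply. This piece is in fact controllable --- by \eqref{dtoZ} it costs one extra vector field, and since $|I|+1\leq N-3$ the H\"ormander bound \eqref{eqHormanderforh} gives $|\d_{s^*}\lz^I\Hi|\lesssim\varepsilon(1+t)^{2\delta}(1+t+r^*)^{-2}(1+q^*_+)^{-\y}$, which integrates to the same target --- but you should address it. Alternatively, integrate along the actual integral curve of $\d_{q^*}$, i.e.\ holding $s^*=t+r^*$ fixed, which is what the paper does (cf.\ ``the integral curve of $\d_{q^*}$'' in its proof of Proposition \ref{weakdecay}). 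Then only $\d_{q^*}$ appears, the factor $(1+t+r^*)^{-2+4\delta}$ is constant along the curve, the curve terminates at $t=0$ where $q'=s^*$, and the boundary term $\lesssim\varepsilon(1+s^*)^{-1-\y}$ from \eqref{hinr} is comfortably within the target. Either fix completes your argument.
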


\begin{proof}
The first estimate follows from (\ref{wccestimate}). Then we integrate along $\d_{q^*}$ to get the second estimate.
\end{proof}

\begin{remark}
Similarly, one can show that the same estimates hold with $\Hi$ replaced by $\hi$.
\end{remark}

\subsection{Improved decay estimates from the \texorpdfstring{$L^\infty$}{Li}-\texorpdfstring{$L^\infty$}{Li} estimate}
%By Proposition \ref{EMTsymmetric} we know that $T_{\mu\nu}$ is symmetric if $\psi$ is the solution of our Dirac equation. By definition of Lie derivative, we know that $\lz T$ is also symmetric once $\psi$ is the solution.

\begin{Prop}
For $|I|\leq N-5$, 
\begin{equation}\label{dpsiimproved}
    |\dt\lz^I \psi| \lesssim \varepsilon (1+t+r^*)^{-1}(1+|q^*|)^{-\frac 32+\delta}(1+q^*_+)^{-s},
\end{equation}
\begin{equation}\label{psiimproved}
%    |\lz^I \psi|\lesssim \varepsilon(1+t+r)^{-1} (1+q_+^*)^{-\frac 12-s+\delta}.
|\lz^I \psi|\lesssim \varepsilon(1+t+r^*)^{-1}(1+|q^*|)^{-\frac 12+\delta}(1+q^*_+)^{-s}.
\end{equation}
\end{Prop}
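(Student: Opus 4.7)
The plan is to apply the weighted $L^\infty$--$L^\infty$ estimate (Lemma \ref{improveddecaylemma}) to the commuted second-order equation for $\psi$, with the weight $\wb(q^*)=(1+|q^*|)^{\frac32-\delta}(1+q^*_+)^{s}$ chosen precisely so that the conclusion matches \eqref{dpsiimproved}. The second estimate \eqref{psiimproved} is then obtained by integrating $\d_{r^*}\lhhz^I\psi$ along the radial direction from spacelike infinity at fixed $t$ and $\omega$.

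First I would rewrite \eqref{zeroorderwaveeqforpsi} after commutation. Using \eqref{RpsicomI}, commuting $\lhhz^I$ through the reduced wave operator yields
\[
\Box^*\lhhz^I\psi=\lz^I\widetilde F^\psi+R^{com*I}_\psi+\lz^I R^{cov},
\]
where $R^{com*I}_\psi$ is controlled by \eqref{commutatorestimatefordecay} and $\lz^I R^{cov}$ is harmless. Applying Lemma \ref{improveddecaylemma} to each component, with $\wb(q^*)=(1+|q^*|)^{3/2-\delta}(1+q^*_+)^{s}$, reduces the proof of \eqref{dpsiimproved} to bounding, uniformly in $t$, the integral
\[
\int_0^t (1+\tau)\,\bigl\|\bigl(\lz^I\widetilde F^\psi+R^{com*I}_\psi\bigr)\wb\bigr\|_{L^\infty(D_\tau)}\,d\tau
+\sum_{|J|\le 2}\int_0^t(1+\tau)^{-1}\|\z^J\lhhz^I\psi\,\wb\|_{L^\infty(D_\tau)}\,d\tau,
\]
together with the boundary sup at $\tau\asymp |q^*|$, where the weak decay \eqref{weakdecay} together with the decay of $\wb$ in that regime gives the required control by $\varepsilon$.

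Next I would show these two spacetime integrals are $\lesssim\varepsilon$. For $\lz^I\widetilde F^\psi$, I use \eqref{eqsecondorderDiracRHSwithdecaypluggedin}: terms carrying a tangential derivative pick up an extra $(1+t+r^*)^{-1}$ from \eqref{dtoZ} and are therefore integrable once $\wb$ is multiplied in; the quasilinear term $\varepsilon(1+t)^{-2+2\delta}|\dt\lz^J\psi|$ yields, after multiplication by $\wb$ and the bootstrap weak decay, an integrand of size $\varepsilon^2(1+\tau)^{-2+3\delta}$, which is integrable. The delicate piece is $R^{com*I}_\psi$: the worst contribution is $(1+|q^*|)^{-1}|\lz^J\Hi|_{\lc\lc}|\dt\lz^K\psi|$ with $|J|\ge 1$, and here I use the improved wave-coordinate bound \eqref{HLL} from Proposition \ref{wccdecayprop}, together with the Hörmander-type improvement \eqref{eqHormanderfordh} used on the remaining $\Hi$ components, to gain the $(1+t+r^*)^{-1}$ needed. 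The last sum $\sum_{|J|\leq 2}(1+\tau)^{-1}\|\z^J\lhhz^I\psi\,\wb\|_{L^\infty}$ is treated by iteration: for $|I|\le N-5$, one uses the weak decay for $|I|+2\le N-3$ to bound this contribution by $\varepsilon(1+\tau)^{-2+\delta}$, again integrable. Combining these estimates establishes \eqref{dpsiimproved}.

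For \eqref{psiimproved}, I use that $\lz^I\psi$ vanishes as $r^*\to\infty$ at fixed $(t,\omega)$ thanks to the weak exterior decay, and the fundamental theorem of calculus:
\[
\lhhz^I\psi(t,r^*\omega)=-\int_{r^*}^\infty \d_{r^{*\prime}}\lhhz^I\psi(t,r^{*\prime}\omega)\,dr^{*\prime}.
\]
Plugging in \eqref{dpsiimproved} and using $r^{*\prime}-t=q^{*\prime}$, one splits the $q^{*\prime}$-integral at $q^{*\prime}=0$; in the interior piece $r^{*\prime}\in[r^*,t]$ one has $1+t+r^{*\prime}\asymp 1+t$, so the integral contributes $\varepsilon(1+t)^{-1}(1+|q^*|)^{-\frac12+\delta}$, while the exterior piece is controlled by $\varepsilon(1+t)^{-1}(1+q^*_+)^{-\frac12+\delta-s}$, which is dominated by the interior bound multiplied by $(1+q^*_+)^{-s}$. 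This yields \eqref{psiimproved}, and passing from $\lhhz^I$ to $\lz^I$ uses the equivalence \eqref{equivalencespinor}.

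The main obstacle is the commutator term in $R^{com*I}_\psi$ involving the slowly-decaying components $|\lz^J\Hi|$ multiplied by the large weight $\wb(q^*)$ of size $|q^*|^{3/2-\delta+s}$: a naive bound $|\lz^J\Hi|\lesssim\varepsilon(1+t+r^*)^{-1+\delta}(1+|q^*|)^{1/2}$ combined with $\wb$ is \emph{not} integrable against the weak derivative decay $|\dt\lz^K\psi|\lesssim\varepsilon(1+t+r^*)^{-1}(1+|q^*|)^{-1/2}$. The resolution is the same as in the vacuum analysis of Lindblad--Rodnianski and its extension in \cite{KLMKG21}: the dangerous directions are tied to the $\lc\lc$ component of $\Hi$, and for those directions the generalized wave coordinate estimate \eqref{HLL} supplies the extra $(1+t+r^*)^{-\gamma}$ needed to close the $\tau$-integral. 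Careful bookkeeping of the weights in $q^*$ and $t+r^*$ is required, but no new mechanism beyond what is already in play for the metric analysis.
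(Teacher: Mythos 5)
Your overall plan — apply the weighted $L^\infty$--$L^\infty$ estimate (Lemma \ref{improveddecaylemma}) to the commuted second-order equation for $\psi$, using the improved wave-coordinate and Hörmander bounds on the metric, then integrate in $q^*$ — is the paper's strategy. However, there are two concrete quantitative errors that keep the argument from closing.

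First, the choice $\wb(q^*)=(1+|q^*|)^{3/2-\delta}(1+q^*_+)^{s}$ is slightly too large. You claim that $\sum_{|J|\leq 2}(1+\tau)^{-1}\|\z^J\lhhz^I\psi\,\wb\|_{L^\infty(D_\tau)}\lesssim\varepsilon(1+\tau)^{-2+\delta}$. This is not right: the weak decay gives $|\z^J\lhhz^I\psi|\lesssim\varepsilon(1+t+r^*)^{-1+\delta}(1+|q^*|)^{-1/2}(1+q^*_+)^{-s}$, so $|\z^J\lhhz^I\psi|\,\wb\lesssim\varepsilon(1+t+r^*)^{-1+\delta}(1+|q^*|)^{1-\delta}$; taking $L^\infty(D_\tau)$ requires the supremum over $|q^*|\lesssim\tau$, so the norm is only $\lesssim\varepsilon$, giving $\varepsilon(1+\tau)^{-1}$ and hence a $\ln t$ divergence in the $\tau$-integral. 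The paper instead takes $\wb(q^*)=(1+|q^*|)^{3/2-2\delta}(1+q^*_+)^s$, so this norm decays like $\varepsilon(1+\tau)^{-\delta}$; the integral $\int_{|q^*|/4}^t(1+\tau)^{-1-\delta}d\tau$ is then bounded by $(1+|q^*|)^{-\delta}$, and the final estimate $|\dt\lz^I\psi|\lesssim\varepsilon(1+t+r^*)^{-1}(1+|q^*|)^{-3/2+2\delta}\cdot(1+|q^*|)^{-\delta}(1+q^*_+)^{-s}$ recovers the stated $(1+|q^*|)^{-3/2+\delta}$ precisely because of that extra $(1+|q^*|)^{-\delta}$ coming from starting the integral at $\tau=|q^*|/4$, not $\tau=0$. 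You lose this gain with your weight and do not actually reach the stated bound.

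Second, your derivation of \eqref{psiimproved} by integrating $\d_{r^*}\lhhz^I\psi$ from spacelike infinity at fixed $t$ does not produce the stated $(1+|q^*|)^{-1/2+\delta}$ decay in the interior. The interior piece of your integral is $\int_{q^*}^{0}\varepsilon(1+t)^{-1}(1+|q^{*\prime}|)^{-3/2+\delta}\,dq^{*\prime}$, which converges to a constant multiple of $\varepsilon(1+t)^{-1}$ as $q^*\to-\infty$ and does \emph{not} decay like $(1+|q^*|)^{-1/2+\delta}$. The paper addresses precisely this in its footnote: for $q^*<0$ one must integrate along $\d_{q^*}$ toward $r=0$ (rather than toward the light cone or spacelike infinity) to extract the $q^*$-decay. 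Your computation reproduces the (weaker, but possibly still sufficient) bound obtained from integrating from the light cone, but not the proposition as stated.
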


%\begin{remark}
%This is an improvement for $q^*<0$. Not better than the weak decay when $q^*>0$.
%\end{remark}

\begin{proof}
%\begin{multline*}
  %  (1+t+r)|\dt \phi_{UV}(t,x) \wb(q^*)|\lesssim \sup_{|q^*|/4\leq \tau\leq t}\sum_{|I|\leq 1} ||Z^I\phi(\tau,\cdot) \wb||_{L^\infty}  \\
  %  +\int_{|q^*|/4}^t (1+\tau)||F_{UV}(\tau,\cdot) \wb||_{L^\infty(D_\tau)}+\sum_{|I|\leq 2} (1+\tau)^{-1} ||Z^I \phi(\tau,\cdot)\wb||_{L^\infty(D_\tau)} \, d\tau.
%\end{multline*}
Let $\wb(q^*)=(1+|q^*|)^{\frac 32-2\delta}(1+q_+^*)^s$. Then by the weak decay $|\lz^I \psi|\lesssim \varepsilon(1+t+r)^{-1+\delta}(1+|q^*|)^{-\frac 12}(1+q_+^*)^{-s}$, we have $||\z^I \psi(\tau,\cdot)\wb||_{L^\infty(D_\tau)}\lesssim\varepsilon (1+\tau)^{-\delta}$, so using Lemma \ref{improveddecaylemma} applied to $(\lhhz)^I\psi$, we get
\begin{multline*}
    (1+t+r)|\dt \lz^I\psi(t,x) \wb(q^*)|\lesssim \varepsilon \sup_{|q^*|/4\leq \tau\leq t} ||(1+\tau)^{-\delta}||_{L^\infty}+\varepsilon\int_{|q^*|/4}^t (1+\tau)^{-1-\delta} \, d\tau \\
    +\sum_{|J|\leq |I|}\int_{|q^*|/4}^t (1+\tau)||\Box^*(\lhhz)^I\psi(\tau,\cdot) \wb||_{L^\infty(D_\tau)} \, d\tau.
\end{multline*}
Using the decay we now have and (\ref{HLL}) for the commutator estimate (\ref{commutatorestimatefordecay}), along with the estimate in Lemma \ref{secondorderDiracRHS}, we obtain
\begin{multline*}
|\Box^*(\lhhz)^I \psi|\lesssim |\Box^*(\lhhz)^I \psi-\lz^I(\widetilde{\Box}_{\k\gt} \psi)|+|\lz^I \fp|\\
\lesssim\sum_{|J|+|K|\leq |I|+1, |J|\leq |I|} \left( \frac{M\ln(1+t+r^*)}{(1+t+r^*)^2}+\frac{|\lz^J \Hi|_{\lc\lc}}{1+|q^*|} +\frac{|\lz^J \Hi|}{1+t+r^*}\right)|\dt \lz^K \psi|\\
+\sum_{|J|+|K|\leq |I|} \left(|\dtb \lz^J \psi||\dt \lz^K \gt|+ |\dt \lz^J \psi|(|\dtb \lz^K \gt|+|\dt \lz^K \gt|_{\lc\Tc})\right)
+|R^{cube\ I}|\\
\lesssim \varepsilon^2 (1+t+r^*)^{-2-\y+3\delta}(1+|q^*|)^{-\frac 52+\y}(1+q_+^*)^{-s}.    
\end{multline*}
Therefore we obtain
$$\int_{|q^*|/4}^t (1+\tau)||\Box^*\lz^I\psi(\tau,\cdot) \wb||_{L^\infty(D_\tau)} \, d\tau\lesssim \varepsilon^2 \int_{|q^*|/4}^t (1+\tau)^{-1-\y+3\delta}\, d\tau\lesssim \varepsilon (1+|q^*|)^{-\y+3\delta}.$$
Now for every terms on the right we get at least an additional decay factor $(1+|q^*|)^{-\delta}$, and the Proposition follows in view of the weight we use. For the second estimate, we again integrate along $\d_{q^*}$ to get it\footnote{For the case $q^*<0$, we need to integrate to $r=0$, along the direction of $\d_{q^*}$, which is different from the case when we integrate from the light cone, since we need to get decay in $q^*$ here. One can also integrate from the light cone, which then gives no decay in $q^*$, but it is still enough for our problem.}.
\end{proof}

\begin{Prop}
For $|I|\leq N-5$, we have

\begin{equation}\label{dh1TTeq}
|\dt \lz^I h^1|_{\Tc\Tc}\lesssim \varepsilon(1+t+r^*)^{-1}(1+|q^*|)^{-1+2\delta}(1+q_+^*)^{-\min(\y,2s)},
\end{equation}
\begin{equation}\label{dh1LUeq}
|\dt \lz^I h^1|_{\lc\Uc}\lesssim \varepsilon(1+t+r^*)^{-1}(1+|q^*|)^{-1+2\delta}(1+q_+^*)^{-\min(\y,2s)},
\end{equation}
\begin{equation}
|\dt \lz^I h^1|\lesssim \varepsilon (1+t+r^*)\ln \left(1+\frac{1+t+r^*}{1+|q^*|}\right)(1+|q^*|)^{-1+2\delta}(1+q_+^*)^{-\min(\y,2s)}.
\end{equation}

\end{Prop}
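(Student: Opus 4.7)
The plan is to apply the weighted $L^\infty$--$L^\infty$ estimate of Lemma~\ref{improveddecaylemma} to the null-frame components of $\lzh^I\hi_{\mu\nu}$, with the weight
$$\wb(q^*)=(1+|q^*|)^{1-2\delta}(1+q_+^*)^{\min(\y,2s)},$$
so that the desired bound becomes $(1+t+r^*)|\dt\lz^I\hi(t,x)|\wb(q^*)\lesssim \varepsilon$. Starting from the reduced Einstein equation \eqref{zeroordereqforh1} together with the commutator identity \eqref{RhcomI} and its companion decay estimate \eqref{commutatorestimatefordecay}, one writes
$$\Box^*\lzh^I\hi_{\mu\nu}=R^{com*I}_{\hi}+\lz^I\big(\kappa\, \widetilde F_{\mu\nu}(\gt)(\dt\hi,\dt\hi)\big)+\lz^I(\kappa\widetilde T_{\mu\nu})+\text{mass/covariant errors}.$$
The initial-value contribution and the supremum terms in Lemma~\ref{improveddecaylemma} are controlled directly by the weak decay of Proposition~\ref{weakdecay} together with the assumption on the initial data (cf.\ \eqref{dhinr}--\eqref{hinr}), so everything reduces to estimating
$$\int_{|q^*|/4}^{t}(1+\tau)\,\big\|F^I_{\Ut\widetilde V}(\tau,\cdot)\wb\big\|_{L^\infty(D_\tau)}\,d\tau.$$

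For the $\Tc\Tc$ components one contracts with $T^\mu T^\nu$, $T\in\Tc$. The dominant semilinear term $\kappa^2 \lt_\mu\lt_\nu\widehat P(\d_{q^*}\hi,\d_{q^*}\hi)$ in \eqref{ZLmuLnu} then vanishes identically because $\lt_\mu T^\mu=0$, leaving only the tangential/cubic remainders $R^{tan}$ and $R^{cube}$ and the contribution of $\lz^I\widetilde T$ in $\Tc\Tc$. The latter is controlled by \eqref{eqEMThigherorderTT}, i.e.\ by $|\lz^J\psi||\dtb\lz^K\psi|$, which by the improved spinor estimates \eqref{dpsiimproved}--\eqref{psiimproved} and the tangential-derivative gain gives an integrand bounded by $\varepsilon^2(1+\tau)^{-2-\min(\y,2s)+O(\delta)}$; the commutator $R^{com*I}_{\hi}$ is handled using the improved wave-coordinate estimates \eqref{dHLL}--\eqref{HLL} for $|\Hi|_{\lc\lc}$. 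Integrating in $\tau$ yields the sharp bound without any $\log$. The $\lc\Uc$ components are handled in exactly the same way: contracting with $\lt^\mu$ again annihilates $\lt_\mu\lt_\nu\widehat P$, while the analogous structural estimate \eqref{eqTLU} (which uses the Dirac equation itself to turn $\gamma_{\lt}D_{\ltb}\psi$ into a sum of tangential derivatives) controls $\lz^I\widetilde T$ in $\lc\Uc$.

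For the general components the term $\kappa^2 \lt_\mu\lt_\nu\widehat P(\d_{q^*}\lzh^J\hi,\d_{q^*}\lzh^K\hi)$ survives. Expanding $\widehat P$ in the null frame one gets, modulo tangential/cubic remainders,
$$\widehat P(\d_{q^*}\hi,\d_{q^*}\hi)\sim -\tfrac18\,\d_{q^*}\hi_{\lt\lt}\,\d_{q^*}\hi_{\ltb\ltb}-\tfrac12|\d_{q^*}\widehat\hi|_{\Tc\Tc}^{2}+\big(\text{terms controlled by }|\d\hi|_{\lc\Tc}+|\d\,\slashed{\mathrm{tr}}\hi|\big)|\d\hi|.$$
The first product is small because \eqref{dHLL} gives $|\d_{q^*}\hi|_{\lc\lc}\lesssim \varepsilon(1+t+r^*)^{-2+4\delta}$; the last group is absorbed by the wave-coordinate gain; and the middle term is now dominated thanks to \eqref{dh1TTeq} (already proved for this multi-index range), so that $|\d_{q^*}\hi|_{\Tc\Tc}^2\lesssim \varepsilon^2(1+t+r^*)^{-2}(1+|q^*|)^{-2+4\delta}(1+q_+^*)^{-2\min(\y,2s)}$. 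Multiplying by $\wb$ and by $(1+\tau)$, the resulting integrand is $\lesssim \varepsilon^2(1+\tau)^{-1}$, and integrating from $|q^*|/4$ to $t$ produces precisely the factor $\ln\!\big(1+(1+t+r^*)/(1+|q^*|)\big)$ appearing in the third estimate. The energy-momentum contribution $\lz^I\widetilde T$ in the generic components is bounded by $|\psi||\dt\psi|+|\dt\psi||\psi|$ with no extra null structure, but the $(1+q_+^*)^{-2s}$ decay from $\psi$ in the exterior together with the interior estimate \eqref{psiimproved} again gives an integrable bound with at worst a matching $\log$.

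The main obstacle is the bookkeeping in the general estimate: in order for the cross product $\d_{q^*}\hi_{\lt\lt}\cdot\d_{q^*}\hi_{\ltb\ltb}\cdot\wb(q^*)$ to be integrable in $(1+\tau)d\tau$, one needs the decay gained from \eqref{dHLL} to exceed the positive power $(1+|q^*|)^{1-2\delta}$ coming from $\wb$. This forces the weight choice and the use of the fine split between the $\lc\Tc$, $\slashed{\mathrm{tr}}$ part of $\Hi$ (which benefits from \eqref{dHLL}) and the remaining $\ltb\ltb$ component (for which only the weak decay is available but which is paired against a term with $\d_{q^*}\hi_{\lt\lt}$ that is extra small). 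Careful tracking of the $\min(\y,2s)$ exponent between the metric-generated and spinor-generated sources is then what ultimately closes the estimate.
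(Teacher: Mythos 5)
Your proposal follows the paper's proof essentially line by line: the same weight $\wb(q^*)$, the same observation that contracting $\lt_\mu\lt_\nu\widehat P$ with $\Tt^\mu\Tt^\nu$ or $\lt^\mu\Ut^\nu$ kills the worst semilinear term, the same use of \eqref{eqEMThigherorderTT} and \eqref{eqTLU} for the energy--momentum contributions, the same wave-coordinate gain \eqref{dHLL}--\eqref{HLL} for the commutator and the $\lc\lc$ components, and the same bootstrap of the just-proved $\Tc\Tc$ bound into the general-component estimate, with the logarithm coming from the $(1+\tau)^{-1}$ integrand produced by the $\Tc\Tc\cdot\Tc\Tc$ product and the spinor source. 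The only minor discrepancies are in the bookkeeping of $\delta$-exponents (the paper uses $\wb(q^*)=(1+|q^*|)^{1-3\delta}(1+q_+^*)^{\min(\gamma,2s)}$ and you take $1-2\delta$, and your intermediate bound $\varepsilon^2(1+\tau)^{-2-\min(\gamma,2s)+O(\delta)}$ overstates the $\tau$-decay coming from the $(1+q_+^*)^{-2s}$ factor), but these do not affect the argument.
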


\begin{proof}
Since $|(\Box^*\lzh^I \hti^1)_{\Ut\widetilde V}|\lesssim |R^{com*I}_{\Ut\widetilde{V}}|+|(\lz^I \widetilde{F})_{\Ut\widetilde V}|+|(\lz^I\widetilde T)_{\Ut\widetilde V}|$, contracting (\ref{ZLmuLnu}) with $\Tt^\mu \Tt^\nu$ and $\lt^\mu \Ut^\nu$ we get 
$$|\lz^I \widetilde F|_{\Tc\Tc}+|\lz^I \widetilde F|_{\lc\Uc}\lesssim R^{tan\ I}+R^{cube\ I}.$$ 
By \eqref{eqEMThigherorderTT} we have $$|\lz^I\widetilde{T}|_{\Tc\Tc}\lesssim |\dtb \lz^J \psi||\lz^K \psi|+|R^{cube\ I}|\lesssim \varepsilon^2(1+t+r^*)^{-3+\delta}(1+|q^*|)^{-1}(1+q_+^*)^{-2s}.$$
The estimate of the commutator is very similar with the proposition above. Hence, using Lemma \ref{improveddecaylemma} with $\wb(q^*)=(1+|q^*|)^{1-3\delta}(1+q_+^*)^{\min(\y,2s)}$ we get the first estimate. The second estimate follows similarly using \eqref{eqTLU} instead.

%For the second estimate, notice that contracting (\ref{ZLmuLnu}) with $\lt^\mu \Ut^\nu$ we get a similar cancellation. For the term from spinor field, we note that by expanding the Dirac equation $\gamma^\mu D_\mu \psi=0$, we have $|\gamma_{\lt} D_{\ltb}\psi|\lesssim |D_{\Tt}\psi|$ (we lower the index of $\gamma^\mu$ using $\mh$), where $\Tt$ are tangential vectors. Then we see that modulo cubic terms 
%We also have from \eqref{eqTLU} that
%\begin{equation*}
%    |T_{\lc\Ut}|\lesssim |\psi||\dtb \psi|+|\psi||\gamma_{\lt} D_{\ltb}\psi|\lesssim |\psi||\dtb \psi|\lesssim \varepsilon^2(1+t+r)^{-3+\delta}(1+|q^*|)^{-1}(1+q_+^*)^{-2s}.
%\end{equation*}

Now for general components, in view of \eqref{ZLmuLnu}, we estimate the main term
\begin{equation}
\begin{split}
    |\widehat P(\d_{q^*} \lzh^J \hti^1,\d_{q^*}\lzh^K \hti^1)|&\lesssim |\dt \lz^J \hti^1|_{\Tc\Tc}|\dt \lz^K \hti^1|_{\Tc\Tc}+(|\d_{q^*}\lz^J \hti^1|_{\lc\Tc}+|\d_{q^*}\slashed{\mathrm{tr}} \lz^J \hti^1|)|\dt \lz^K \hti^1|\\
    &\ \ \ \ \ \ \ \ \ +|\dt \lz^J \hti^1|(|\d_{q^*}\lz^K \hti^1|_{\lc\Tc}+|\d_{q^*}\slashed{\mathrm{tr}} \lz^K \hti^1|)\\
    &\lesssim \varepsilon^2 (1+t+r)^{-3+3\delta}(1+|q^*|)^{-1-\delta}(1+q_+^*)^{-2\y}+|\dt \lz^J \hti^1|_{\Tc\Tc}|\dt \lz^K \hti^1|_{\Tc\Tc},
\end{split}
\end{equation}
and for the last term we use the estimate of $|\dt \lz^I \hi|_{\Tc\Tc}$, which we have just established, to get
\begin{equation*}
    |\widehat P(\d_{q^*} \lzh^J \hti^1,\d_{q^*}\lzh^K \hti^1)|\lesssim \varepsilon^2 (1+t+r^*)^{-2}(1+|q^*|)^{-2+4\delta}(1+q_+^*)^{-2\min(\gamma,2s)}
\end{equation*}
Also, we have for the field that
\begin{equation}
    |(\lz^I\widetilde T)_{\Ut\widetilde V}|\lesssim \sum_{|J|+|K|\leq |I|} |\lz^J \psi||\dt\lz^K \psi|+|R^{cube\ I}|\lesssim \varepsilon^2(1+t+r^*)^{-2}(1+|q^*|)^{-2+2\delta}(1+q_+^*)^{-2s}.
\end{equation}
The $t^{-2}$ decay rate in light cone direction will give the logarithmic growth when we use Lemma \ref{improveddecaylemma}, and the third estimate follows.
\end{proof}

Integrating along the integral curve of $\d_{q^*}$ we obtain
\begin{Prop}For $|I|\leq N-5$
\begin{equation}\label{h1general}
|\lz^I \hi|\lesssim \varepsilon (1+t+r^*)^{-1}\ln (1+t+r^*)\, (1+|q^*|)^\delta (1+q^*_+)^{-\min(\y,2s)}.
\end{equation}
\end{Prop}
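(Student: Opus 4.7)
The plan is straightforward: integrate the pointwise bound on $|\dt \lz^I \hi|$ from the preceding proposition along the $\d_{q^*}$-direction at fixed $(t,\omega)$, using an appropriate boundary value at $r^* = \infty$ (for $q^* \geq 0$) or on the light cone $q^* = 0$ (for $q^* < 0$). Since $|\d_{q^*}\lz^I \hi| \lesssim |\dt\lz^I \hi|$, the starting point is the inequality
\begin{equation*}
|\d_{q^*}\lz^I \hi| \lesssim \varepsilon (1+t+r^*)^{-1}\ln\!\bigl(1+\tfrac{1+t+r^*}{1+|q^*|}\bigr)(1+|q^*|)^{-1+2\delta}(1+q^*_+)^{-\min(\y,2s)}.
\end{equation*}

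For $q^* \geq 0$, fix $t$ and parametrize the $\d_{q^*}$-flowline by $q' = r'-t \in [q^*,\infty)$. The boundary value at $r^* \to \infty$ vanishes by the weak decay estimate \eqref{weakdecayh1} together with the equivalence \eqref{equivalenceLiederivative}, which yield $|\lz^I \hi| \lesssim \varepsilon (1+t+r')^{-1+\delta}(1+q')^{-\gamma} \to 0$. Along the integration path $(1+t+r'(q'))^{-1} \leq (1+t+r^*)^{-1}$ and the logarithm is bounded by $\ln(1+t+r^*)$, so it suffices to integrate $(1+q')^{-1+2\delta-\min(\y,2s)}$ on $[q^*,\infty)$. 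Choosing $\delta$ small enough that $2\delta < \min(\y,2s)$, this integral converges and contributes a factor $(1+q^*)^{2\delta-\min(\y,2s)}$, giving the desired bound for $q^* \geq 0$.

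For $q^* < 0$, I integrate instead from the light cone $q'=0$ down to the target $q'=q^*$, using as boundary data the estimate just established at $q^*=0$, namely $\lesssim \varepsilon (1+t+r^*)^{-1}\ln(1+t+r^*)$. Along this interior path $r^*(q') \in [0,t]$, so $1+t+r^*(q')$ stays comparable to $1+t+r^*$ up to a constant factor, and the logarithm remains bounded by $\ln(1+t+r^*)$. The remaining integral $\int_{q^*}^0 (1+|q'|)^{-1+2\delta}\, dq'$ produces $(1+|q^*|)^{2\delta}$. Combining the two regimes and relabelling $2\delta \mapsto \delta$ (permissible since $\delta$ is a free small parameter) yields the stated bound.

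There is no genuine obstacle here — this is a one-line integration argument, much like the proof of \eqref{psiimproved} for $\psi$. The only care needed is (i) to verify that the boundary term at spatial infinity vanishes, which is immediate from \eqref{weakdecayh1}, and (ii) to track that $(1+t+r^*(q'))^{-1}$ does not blow up along the integration path, which follows from monotonicity in the exterior and from $r^* \geq 0$ in the interior.
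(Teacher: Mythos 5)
Your argument is exactly what the paper does — the paper's entire proof of this proposition is the single sentence ``Integrating along the integral curve of $\d_{q^*}$ we obtain,'' and your careful spelling-out (boundary term vanishing at $r^*\to\infty$ via \eqref{weakdecayh1}, convergence of the $q'$-integral in the exterior when $2\delta<\min(\gamma,2s)$, monotonicity of $(1+t+r')^{-1}\ln(1+t+r')$ along the path, and integrating from the light cone inward using the already-established exterior bound as boundary data) fills in precisely those steps. You are also right that the integration naturally produces $(1+|q^*|)^{2\delta}$ rather than $(1+|q^*|)^\delta$; the paper's later use of this bound (see the $(1+q^*_-)^{2\delta}$ factor in the energy-estimate step) confirms this is just a sloppiness in the statement, not a defect in your argument.
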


\section{Improved Energy estimate for the system}

\subsection{Einstein's equation}
Commuting the equation multiplied by $\k$ with $\lz^I$, using the definition of the commutator \eqref{RhcomI}, we get \begin{multline*}
    \widetilde{\Box}_{\gt}\lzh^I \hi_{\mu\nu}+\!\k^{-1}\, R_{\hi}^{com\ I}=\k^{-1}\!\sum_{I'+I''=I}\!\lz^{I''}\k\Big(\sum_{J+K=I'}\! \widetilde F_{\mu\nu}(\gt)(\dt \lz^J \hi,\dt \lz^K \hi)+\lz^I\widetilde T_{\mu\nu}\\
    +R_{\mu\nu}^{cube\ I}+R_\mn^{mass \ I}+R_\mn^{cov\ I}\Big).
\end{multline*}
The factor with $\kappa$ clearly behaves good and we can focus on the other factor. 
By Proposition \ref{energyestimatewave} we now have
\begin{multline}\label{metricenergystep1}
    \int_{\Sigma_t}|\dt\lzh^I \hi|^2w\, d\x+\int_0^t \int_{\Sigma_\tau} |\dtb\lzh^I \hi|^2 w'\, d\x d\tau\\
    \leq 8\int_{\Sigma_0}|\dt\lzh^I \hi|^2 w\, d\x+12\int_0^t\int_{\Sigma_\tau}|\widetilde{\Box}_{\gt}\lzh^I \hi||\dt\lzh^I \hi|w\, d\x d\tau\\
    \leq 8\int_{\Sigma_0}|\dt\lzh^I \hi|^2 w\, d\x+12\int_0^t\int_{\Sigma_\tau}\frac{\varepsilon}{1+\tau}|\dt\lzh^I \hi|^2+\varepsilon^{-1}(1+\tau)|\widetilde{\Box}_{\gt}\lzh^I \hi|^2w\, d\x d\tau.
\end{multline}

We need to estimate the last term. From \eqref{ZLmuLnu} we know that the main contributions from the last term are from $\widehat P(\d_{q^*} \lzh^J \hti^1,\d_{q^*}\lzh^K \hti^1)$, the commutator $R_{\hi}^{com\ I}$, and the matter $\lz^I \Tt_\mn$. We have
\begin{multline}
    |\lz^I \widetilde F|\lesssim \sum_{|J|+|K|\leq |I|}|\widehat P(\d_{q^*} \lzh^J \hti^1,\d_{q^*}\lzh^K \hti^1)|\lesssim \sum_{|J|+|K|\leq |I|}|\dt \lz^J \hti^1|(|\d_{q^*}\lz^K \hti^1|_{\lc\Tc}+|\d_{q^*}\slashed{\mathrm{tr}} \lz^K \hti^1|)\\
    +(|\d_{q^*}\lz^J \hti^1|_{\lc\Tc}+|\d_{q^*}\slashed{\mathrm{tr}} \lz^J \hti^1|)|\dt \lz^K \hti^1|+|\dt \lz^J \hti^1|_{\Tc\Tc}|\dt \lz^K \hti^1|_{\Tc\Tc}\\
    \lesssim \sum_{|J|+|K|\leq |I|}\frac{\varepsilon(1+q_+^*)^{-\y}}{(1+t+r^*)^{1-2\delta}(1+|q^*|)}(|\d_{q^*}\lz^J \hti^1|_{\lc\Tc}+|\d_{q^*}\slashed{\mathrm{tr}} \lz^J \hti^1|)\\
    +\frac{\varepsilon(1+q_+^*)^{-\min(\y,2s)}}{(1+t+r^*)(1+q^*)^{1-2\delta}}|\dt\lz^J \hti^1|
\end{multline}
using \eqref{eqHormanderfordh}, \eqref{dHLL} and \eqref{dh1TTeq}. Then
\begin{multline}
    \int_0^t \int_{\Sigma_\tau} \varepsilon^{-1}(1+\tau) |\lz^I \widetilde F|^2 w\, d\x d\tau\lesssim \sum_{|J|\leq |I|}\int_0^t \frac{\varepsilon}{1+\tau}|\dt\lz^J \hi|^2w\, d\x d\tau\\
    +\int_0^t\int_{\Sigma_\tau}\varepsilon (1+\tau)^{-1+4\delta}(1+|q^*|)^{-2}(|\dtb\lz^J\hi|^2+|\d_{q^*}\lz^J \hti^1|^2_{\lc\Tc}+|\d_{q^*}\slashed{\mathrm{tr}} \lz^J \hti^1|^2)w\, d\x d\tau.
\end{multline}

For the commutator, by (\ref{commutatorestimate}), (\ref{HLL}) and the weak decay,
\begin{multline*}
    |\k^{-1}R_{\hi}^{com\ I}|\lesssim \frac{\varepsilon (1+q_+^*)^{-\y}}{(1+|q^*|)^{\frac 12}(1+t+r^*)^{1-\delta}}\sum_{|J|\leq |I|}\left(\frac{|\lz^J\Hi|_{\lc\lc}}{1+|q^*|}+\frac {|\lz^J \Hi|}{1+t+r^*}\right)\\
    +\frac{\varepsilon}{(1+t+r^*)^{1+\y-\delta}}\sum_{|J|\leq |I|}|\dt\lz^J \hi|.
\end{multline*}

We have \begin{multline*}
    \int_0^t \int_{\Sigma_\tau} \varepsilon^{-1}(1+\tau) |\k^{-1} R_{\hi}^{com\ I}|^2 w\, d\x d\tau \lesssim \int_0^t \int_{\Sigma_\tau} \frac\varepsilon{(1+\tau+|q^*|)^{1-2\delta}} (1+|q^*|)^{-3} |\lz^J \Hi|^2_{\lc\lc} w\, d\x d\tau\\
    +\int_0^t \int_{\Sigma_\tau}\frac\varepsilon{(1+\tau+|q^*|)^{3-2\delta}}(1+|q^*|)^{-1}|\lz^J \Hi|^2 w\, d\x d\tau\\
    +\int_0^t \frac\varepsilon{(1+\tau)^{1+2\y-2\delta}}\int_{\Sigma_\tau}|\dt\lz^J\hi|^2 w\, d\x d\tau.
\end{multline*}
For this we use Lemma \ref{Hardy} to see that
\begin{multline}
    \int_0^t \int_{\Sigma_\tau}\frac\varepsilon{(1+\tau+|q^*|)^{3-2\delta}}\frac{|\lz^J \Hi|^2}{1+|q^*|} w\, d\x d\tau\lesssim \int_0^t \int_{\Sigma_\tau}\frac\varepsilon{(1+\tau+|q^*|)^{2-2\delta}}\frac{|\lz^J \Hi|^2}{(1+|q^*|)^2}w\, d\x d\tau\\
    \lesssim \int_0^t \int_{\Sigma_\tau}\frac\varepsilon{(1+\tau+|q^*|)^{2-2\delta}}|\dt\lz^J \Hi|^2 w\, d\x d\tau,
\end{multline}
and
\begin{equation}
    \int_0^t \int_{\Sigma_\tau} \varepsilon\frac{|\lz^J\Hi|^2_{\lc\Tc}}{(1+|q^*|)^2}\frac{(1+|q^*|)^{-2\delta}}{(1+\tau+|q^*|)^{1-2\delta}}\frac{w\, d\x d\tau}{(1+q_-^*)^{2\mu}}\lesssim \int_0^t \int_{\Sigma_\tau} (|\dt_{q^*} \lz^J \hi|_{\lc\Tc}^2+|\dtb \lz^J\hi|^2) w'\, d\x d\tau.
\end{equation}

For the term involing the spinor field, we have
\begin{equation*}
    |\lz^I \widetilde T_{\mu\nu}|\lesssim \varepsilon \frac{(1+|q^*|)^{-\frac 12+\delta}(1+q_+^*)^{-s}}{1+t+r^*}\sum_{|J|\leq |I|}|\dt\lz^J \psi|+\frac{(1+|q^*|)^{-\frac 32+\delta}(1+q_+^*)^{-s}}{1+t+r^*}\sum_{|J|\leq |I|}|\lz^J \psi|.
\end{equation*}
Recall that we require $s>\delta$ and $\gamma<1$. Under this assumption
\begin{multline}
    \int_0^t\int_{\Sigma_t} \varepsilon^{-1}(1+\tau)|\lz^I \Tt_{\mu\nu}|^2 w\, d\x d\tau \lesssim\sum_{|J|\leq |I|}\int_0^t \frac{\varepsilon}{1+\tau}(|\dt\lz^J\psi|^2 w_1+|\lz^J \psi|^2\, d\x d\tau\\
    \lesssim \int_0^t \frac\varepsilon{1+\tau}E_N^1(\tau)\, d\tau+\int_0^t \frac\varepsilon{1+\tau}C_N(\tau)\, d\tau.
\end{multline}

%We apply Lemma \ref{energyestimatewave} and ($\Hi$ to $\hi$) to $\lz^I \Hi$, also $|\lz^I \Hi|_{LL}$ with Proposition \ref{wccL2}.

%Spinor field
%\begin{equation*}
%    \int\int (1+t)|\z^J \psi|^2 |\dt \z^K \psi|^2 w(q^*)\, d\x d\tau
%\end{equation*}

%$|J|$ bigger: 
%$$\int\int (1+t)^{-1} (1+|q^*|)^{-1+2\delta} (1+q_+^*)^{-2s} |\d \z^K \psi|^2 (1+q_+^*)^{1+2\y}\, d\x d\tau$$

%$|K|$ bigger:
%$$\int\int (1+t)^{-1}(1+|q^*|)^{-3+2\delta} (1+q_+^*)^{-2s} |\z^K \psi|^2 (1+q_+^*)^{1+2\y}\, d\x d\tau$$
%$s>\delta$ suffices here.
Define $$S_N(t)=\sum_{|I|\leq N}\int_0^t \int_{\Sigma_\tau} |\dtb\lz^I \hi|^2w'\, d\x d\tau.$$Then term involing tangential derivatives can be absorbed by $S_N(t)$.
Combining these estimate together, and summing \eqref{metricenergystep1} up all $|I|\leq N$, we get
%\begin{equation*}
%\begin{split}
%E_N(t)+S_N(t)&\leq 8E_N(0)+12\sum_{|I|\leq N}\int_0^t \int_{\Sigma_{\tau}} |\widetilde{\Box}_{\gt} \lzh^I \hi||\dt\lz^I \hi|w(q^*)\, d\x d\tau \\
%    &\leq 8E_N(0)+12\sum_{|I|\leq N} \int_0^t \left(\int_{\Sigma_\tau} |\widetilde{\Box}_{\gt} \lzh^I \hi|^2 w(q^*) \, d\x\right)^{\frac 12} E_N(t)^{\frac 12} \, d\tau\\
%    &\leq 8E_N(0)+12\int_0^t \left(\varepsilon\frac 1{1+\tau} E_N(\tau)+\varepsilon^{-1}(1+\tau)\sum_{|I|\leq N}\int_{\Sigma_\tau} |\widetilde{\Box}_{\gt} \lzh^I \hi|^2 w(q^*)\, d\x \right)d\tau,
%    \end{split}
%\end{equation*}
\begin{equation}\label{estimateforEN}
    E_N(t)+S_N(t)\leq 8E_N(0)+\int_0^t \frac{C\varepsilon}{1+\tau}(E_N(\tau)+E_N^1(\tau)+C_N(t))\, d\tau.
\end{equation}

\subsection{Second order equation of the field}
Commuting the equation \eqref{zeroorderwaveeqforpsi} multiplied by $\k$ with $\lz^I$, and using the definition \eqref{RpsicomI} we get
\begin{equation*}
    \widetilde{\Box}_{\gt} (\lhhz)^I \psi+\k^{-1}R_\psi^{com\ I}=\k^{-1}\sum_{I'+I''=I}\lz^{I'} \k\, \lz^{I''}\fp+R^{cov\ I}.
\end{equation*}

By Proposition \ref{energyestimatewave} we have
\begin{multline}\label{wave_energyspinorstep1}
    \int_{\Sigma_t} |\dt(\lhhz)^I \psi|^2 w_1\, d\x+\int_0^t \int_{\Sigma_\tau}|\dtb(\lhhz)^I \psi|^2 w_1' \, d\x d\tau\leq 8\int_{\Sigma_0} |\dt(\lhhz)^I \psi|^2w_1\, d\x\\
    +12\int_0^t \int_{\Sigma_\tau} |\widetilde{\Box}_{\gt}(\lhhz)^I \psi||\dt(\lhhz)^I \psi|w_1\, d\x d\tau.
\end{multline}

The main contribution from the last term can be estimated by
\begin{multline*}
    \int_0^t\int_{\Sigma_\tau} (|R_\psi^{com\ I}|+\sum_{|I''|\leq |I|}|(\lhhz)^{I''} \fp|)|\dt(\lhhz)^I \psi|w_1\, d\x dt\\
    \lesssim \sum_{|J|\leq |I|}\int_0^t\int_{\Sigma_\tau} \frac{\varepsilon}{(1+\tau)^{1+2\delta}}|\dt\lz^J\psi|^2 w_1\, d\x d\tau\\
    +\int_0^t\int_{\Sigma_\tau}\varepsilon^{-1}(1+\tau)^{1+2\delta} (|R_\psi^{com\ I}|^2+\sum_{|I''|\leq |I|}|\lz^{I''} \fp|^2)w_1\, d\x d\tau.
\end{multline*}

For the commutator term, using \eqref{commutatorestimate} with $\hi$ replaced by $\psi$, \eqref{HLL} and the weak decay we have
\begin{multline*}
    |R_\psi^{com\ I}|\lesssim \frac{\varepsilon (1+q_+^*)^{-s}}{(1+t+r^*)^{-1+\delta}(1+|q^*|)^{\frac 32}}\sum_{|J|\leq |I|}\left(\frac{|\lz^J\Hi|_{\lc\lc}}{1+|q^*|}+\frac {|\lz^J \Hi|}{1+t+r^*}\right)\\
    +\frac{\varepsilon}{(1+t+r^*)^{1+\y-\delta}}\sum_{|J|\leq |I|}|\dt\lz^J \psi|.
\end{multline*}
Then
\begin{multline*}
    \int_0^t\int_{\Sigma_\tau}\varepsilon^{-1}(1+\tau)^{1+2\delta} |R_\psi^{com\ I}|^2w_1\, d\x d\tau\\
    \lesssim \sum_{|J|\leq |I|}\int_0^t \int_{\Sigma_\tau} \varepsilon\frac 1{(1+\tau+r^*)^{3-4\delta}}\frac {(1+q_+^*)^{-2s}}{(1+|q^*|)^3}|\lz^J\Hi|^2w_1\, d\x d\tau\\
    +\int_0^t \int_{\Sigma_\tau} \varepsilon \frac{(1+|q^*|)^{-3}(1+q_+^*)^{-2s}}{(1+\tau+|q^*|)^{1-4\delta}}\frac{|\lz^J \Hi|_{\lc\lc}^2}{(1+|q^*|)^2} w_1\, d\x d\tau+\frac\varepsilon{(1+\tau+r^*)^{1+2\y-4\delta}}|\dt\lz^J\psi|^2 w_1\, d\x d\tau
\end{multline*}
Recall that $w_1(q^*)=w(q^*)(1+q_+^*)^{1+2s-2\y}$, so by Lemma \ref{Hardy} we have
\begin{multline*}
        \int_0^t\int_{\Sigma_\tau} \varepsilon\frac{(1+|q^*|)^{-3}(1+q_+^*)^{-2s}}{(1+\tau+|q^*|)^{1-4\delta}}\frac{|\lz^J \Hi|_{\lc\lc}^2}{(1+|q^*|)^2}w_1\, d\x d\tau\\
        \lesssim \int_0^t\int_{\Sigma_\tau} \varepsilon \frac{(1+|q^*|)^{-2-2\gamma}}{(1+\tau+|q^*|)^{1-4\delta}}\frac{|\lz^J \Hi|_{\lc\lc}^2}{(1+|q^*|)^2}w\, d\x d\tau \\
        \lesssim \int_0^t \varepsilon(1+\tau)^{-8\delta}\int_{\Sigma_\tau}\frac{|\lz^J \Hi|_{\lc\lc}^2}{(1+|q^*|)^2}\frac{(1+|q^*|)^{-12\delta}}{(1+\tau+|q^*|)^{1-12\delta}}\frac{1}{(1+|q^*|)^{2\mu}}w\, d\x d\tau\\
        \lesssim \int_0^t \varepsilon (1+\tau)^{-8\delta} \int_{\Sigma_\tau} |\d_{q^*} \lz^J \Hi|^2_{\lc\lc}+|\dtb \lz^J \Hi|^2 w'\, d\x d\tau,
\end{multline*}
and
\begin{multline*}
    \int_0^t \int_{\Sigma_\tau} \varepsilon\frac 1{(1+t+r^*)^{3-4\delta}}\frac {(1+q_+^*)^{-2s}}{(1+|q^*|)^3}|\lz^J\Hi|^2w_1\, d\x d\tau\\
    \lesssim \int_0^t\int_{\Sigma_\tau}\varepsilon\frac 1{(1+t+r^*)^{3-4\delta}}\frac {|\lz^J\Hi|^2}{(1+|q^*|)^2}w\, d\x d\tau
    \lesssim\int_0^t\int_{\Sigma_\tau}\varepsilon\frac 1{(1+t+r^*)^{3-4\delta}}|\dt\lz^J\Hi|^2w\, d\x d\tau.
\end{multline*}
%$$\int\int (1+t+r)^{-1+3\delta} (1+|q^*|)^{-3} |\dtb \lz^J \hi|^2 w_1 \, d\x dt\leq \int_0^T\int (1+t+r)^{-1+3\delta} |\dtb \lz^J \hi|^2 w' \, d\x dt$$
For the term involing $\fp$, we can use \eqref{eqsecondorderDiracRHSwithdecaypluggedin} to estimate it, and it is not hard to see terms from there is no worse than what we get above, so we omit estimating them.

\vspace{1ex}
Define $$S_N^1(t)=\sum_{|I|\leq N}\int_0^t \int_{\Sigma_\tau} |\dtb\lz^I \psi|^2w_1'\, d\x d\tau.$$ Using Lemma \ref{wccL2} and summing \eqref{wave_energyspinorstep1} up all $|I|\leq N$, we have
\begin{multline}\label{estimateforEp}
    E^1_N(t)+S^1_N(t)\leq 8E^1_N(0)+\int_0^t \frac{C\varepsilon}{(1+\tau)^{1+2\delta}}(E^1_N(\tau)+E_N(\tau))\, d\tau\\
    +\sum_{|J|\leq N}\int_0^t\int_{\Sigma_\tau} \frac{C\varepsilon}{(1+\tau)^{8\delta}} |\dtb \lz^J \hi|^2 w' \, d\x d\tau
\end{multline}

The last term here can be bounded by $C\varepsilon S_N(t)$. However, we can use the dyadic decomposition in time to get better estimate. For fixed $t\in [0,T]$, pick $K=K(t)\in \mathbb{N}$ such that $t\in [2^{K-1},2^K]$, and define $I_k=[2^{k-1},2^k]$ for $k=1,\cdots,K-1$, $I_0=[0,1]$, $I_K=[2^{K-1},T]$. Then
\begin{multline}\label{dyadicestimate}
    \int_0^t\int_{\Sigma_\tau} C\varepsilon(1+\tau)^{-8\delta} |\dtb \lz^J \hi|^2 w' \, d\x d\tau=\sum_{k=0}^K \int_{I_k} \int_{\Sigma_\tau} C\varepsilon 2^{-8\delta k} |\dtb \lz^J \hi|^2 w' \, d\x d\tau\\
    \leq C\varepsilon\sum_{k=0}^{K-1} 2^{-8\delta k} S_N(2^k)+C\varepsilon 2^{-8\delta K} S_N(t).%\lesssim \varepsilon^2\sum_{k=0}^\infty 2^{-(1-3\delta)k} 2^{\delta k}\lesssim \varepsilon^2.
\end{multline}
We will use this later in this section.

\subsection{Dirac equation}
We commute the Dirac equation with modified vector fields. By (\ref{Diraccommutatoridentity}) the higher order equation reads
$$\gamma^\mu D_\mu \lhzp^I \psi=F^{com\ I}.$$

By \eqref{dh1LUeq} and \eqref{dm0higherorder} we have the decay $|\dt\gt|_{\lc\Uc}\leq C\varepsilon (1+t)^{-1}$, so using Proposition \ref{energyestimateDirac} we get
\begin{equation}\label{Diracestimatewithvectorfield}
\int_{\Sigma_t} |\lhzp^I \psi|^2 \, d\x%\int_0^t \int_{\Sigma_\tau} \overline{\lhzp^I \psi}(-\gb_\lt)\lhzp^I\psi(\wp)'\, d\x d\tau\\ 
\leq 2\int_{\Sigma_0} |\lhzp^I \psi|^2 \, d\x+4\int_0^t \int_{\Sigma_\tau} \frac {C\varepsilon}{1+\tau}|\lhzp^I\psi|^2+|F^{com\ I}||\lhzp^I \psi|%(\overline{\lhzp^I\psi} F_{Com}+\overline F_{Com}\lhzp^I\psi)
\, d\x d\tau.
\end{equation}
%We only estimate $\overline{\lhzp^I\psi} F_{Com}$ as the other one is the same in view of the relation $(\gb_\lt)^\dagger \gb^0=\gb^0 \gb_\lt$. Then the integral on the right reads
In view of \eqref{Diraccommutatoridentity}, the last term can be estimated by
\begin{multline*}
    \int_0^t\int_{\Sigma_\tau} \frac{\varepsilon}{1+\tau}|\lhzp^I \psi|^2+\varepsilon^{-1}(1+\tau)|F^{com\ I}|^2\, d\x d\tau \lesssim \int_0^t\int_{\Sigma_\tau} \frac{\varepsilon}{1+\tau}|\lhzp^I \psi|^2\\
    +\sum_{\substack{|J|+|K|\leq |I|\\|K|<|I|}}\varepsilon^{-1} (1+\tau) \left(|\lz^J ((\e_a)^\mu-(\dt_a)^\mu) \d_\mu\lz^K \psi|^2 +|\lz^J ((\e_a)^\mu \omega_{\mu ab})  \lz^K\psi|^2 \right)\, d\x d\tau.
\end{multline*}
We need to estimate the last two integrands. Recall from Proposition \ref{omeganullwithZ} that $|\lz^{J}\omega_{\mu ab}|\lesssim \sum_{|J'|\leq |J|} |\dt\lz^{J'} \gt|$.
When $|K|\leq |J|$, it is equivalent to estimate
$$\sum_{\substack{|J|+|K|\leq |I|\\|K|\leq |I|/2}}\int_0^t\int_{\Sigma_\tau} \varepsilon^{-1} (1+\tau) \left(|\dt\lz^K \psi|^2|\lz^J (\gt-\mh)|^2 +    |\lz^K\psi|^2 |\dt \lz^J \gt|^2\right) \, d\x d\tau.$$

Using \eqref{dpsiimproved}, \eqref{psiimproved}, and $|\lz^J (\mt^0-\mh)|\lesssim \frac{M\chi \ln r}{1+t+r}$ (which implies $|\lz^J (\mt^0-\mh)|\lesssim \frac{M\chi \ln (1+t)}{1+t}$ in view of the support of given by $\chi$ and the fact that $f(x)=\frac {\ln x}x$ is decreasing when $x>3$) by \eqref{dm0higherorder},  
the integral can be controlled by
\begin{multline}
    \int_0^t\int_{\Sigma_\tau} \frac{\varepsilon}{1+\tau}(1+|q^*|)^{-1+2\delta}(1+q_+^*)^{-2s} \left(\frac{|\lz^J \hi|^2}{(1+|q^*|)^{2}}+{|\dt\lz^J \hi|^2}\right)\\ +\frac{M(\ln (1+\tau))^2}{1+\tau}|\dt \lz^K\psi|^2+\frac{M(\ln (1+\tau))^2}{(1+\tau)^3}|\lz^K \psi|^2 \, d\x d\tau\\
    \leq\int_0^t\frac{\varepsilon}{1+\tau}\int_{\Sigma_\tau}  \left(\frac{|\lz^J \hi|^2}{(1+|q^*|)^{3-2\delta}}+|\dt\lz^J \hi|^2\right) w\, d\x d\tau+\int_0^t\int_{\Sigma_\tau} \frac{\varepsilon (\ln (1+\tau))^2}{1+\tau} \, |\dt \lz^K \psi|^2 w_1\, d\x d\tau \\
    \lesssim \int_0^t \frac{\varepsilon}{1+\tau}E_N(\tau)+ \frac{\varepsilon (\ln (1+\tau))^2}{1+\tau} E^1_{[N/2]}(\tau)+\frac{\varepsilon (\ln (1+\tau))^2}{(1+\tau)^3} C_{[N/2]}(\tau)\, d\tau,
\end{multline}
where $[\cdot]$ is the greatest integer function.

When $|K|$ is bigger we need more delicate estimate, since we do not have sharp decay of general components of the metric. In view of the equivalence (\ref{equivalencespinor}), it suffices to estimate %We only do the estimate on the region where $\chi=1$, since when $r^*<\frac 34 t$ the decay in $|q^*|$ is comparable with the decay in $|t+r^*|$, which makes it much easier. Expanding in the null frame, the term we want to estimate is
\begin{equation*}
    \sum_{\substack{|J|+|K|\leq |I|\\|I|/2\leq |K|<|I|}}\int_0^t\int_{\Sigma_\tau} \varepsilon^{-1} (1+\tau) \left(|\lz^J (\gt-\mh)|^2|\dt\lz^K \psi|^2+|\dt \lz^J \gt|^2|\lz^K\psi|^2\right) \, d\x dt
\end{equation*}

For the first term here, we interpolate two estimates. Recall from \eqref{dtoZ} that $$|\dt \lz^K \psi|\leq \min \left(|\dt \lz^K \psi|,\sum_{|K'|\leq |K|+1}\frac{|\lz^{K'} \psi|}{1+|q^*|}\right),$$
so $|\dt \lz^K \psi|^2\leq \sum_{|K'|\leq |K|+1}(1+|q^*|)^{-1}|\dt \lz^K \psi||\lz^{K'} \psi|$. Now since $|\lz^{J}(\gt-\mh)|\lesssim |\lz^{J}\hi|+|\lz^{J}(\mt^0-\mh)|$, using $\eqref{h1general}$ and \eqref{dm0higherorder} we derive
\begin{multline}\label{energyestimatelzgo}
    \int_0^t\int_{\Sigma_\tau} \varepsilon^{-1}(1+\tau)|\lz^J (\gt-\mh)|^2|\dt\lz^K \psi|^2 \, d\x d\tau\\
    \lesssim \sum_{|K|\leq |I|-1}\int_0^t\int_{\Sigma_\tau} \left(\frac {\varepsilon(1+q^*_-)^{2\delta}(\ln (1+\tau))^2}{1+\tau}+\frac{\varepsilon^{-1} M^2(\ln (1+\tau))^2}{1+\tau}\right)|\dt\lz^K \psi|^2 \, d\x d\tau\\
    %&\ \ \ \ \ \ \ \ \ \ \ \ \ \ \ \ \ \ +\sum_{|K|\leq |I|-2}\int\int (1+t)^{-1}\ln t\, (1+q_-)^{2\delta} |\dt\lhzp^{K} \psi|^2\, d\x dt\\
    \lesssim \sum_{\substack{|K'|\leq |I|\\|K|\leq |I|-1}}\int_0^t\int_{\Sigma_\tau} \frac \varepsilon{1+\tau}(\ln (1+\tau))^2 (1+|q^*|)^{-1+2\delta} |\dt\lz^K \psi||\lz^{K'}\psi|\, d\x d\tau\\%\frac{\varepsilon}{1+t}|\lhzp^{K'}\psi|^2+ \frac{\varepsilon\ln t}{1+t}\,  |\lhzp^{K''} \psi|^2+\frac{M(\ln t)^2}{1+t} |\dt\lz^K \psi|^2 \, d\x dt\\
    \lesssim \sum_{\substack{|K'|\leq |I|\\|K|\leq |I|-1}}\int_0^t\int_{\Sigma_\tau} \frac \varepsilon{1+\tau}\left((1+|q^*|)^{-2+4\delta} |\lz^{K'}\psi|^2+(\ln (1+\tau))^4|\dt\lz^K \psi|^2 \right)\, d\x d\tau\\
    \lesssim \sum_{|K'|\leq |I|}\int_0^t\int_{\Sigma_\tau} \frac{\varepsilon}{1+\tau}|\lz^{K'}\psi|^2\, d\x d\tau+C\varepsilon^3 (\ln (1+t))^5,
\end{multline}
where we have used the bootstrap assumption on $E_N^1(t)$ for the last inequality.

%For the last term, we recall that $-\gb^0\gb_\lt$ defines a semi-inner product. Modulo a cubic term we can change $\gamma_\lt$ to $\gb_\lt$. By Cauchy-Schwarz inequality we have
%\begin{equation*} 
%\begin{split}
%     |\overline{\lhzp^I \psi} \gb_\lt \Sigma^{ab} \lz^K\psi|\wp&\lesssim |\overline{\lhzp^I \psi}\gb_{\lt}\lhzp^I \psi|^{\frac 12}\, |\overline{\Sigma^{ab}\psi}\gb_\lt \Sigma^{ab}\lz^K \psi|^{\frac 12}\wp \\
%     &\lesssim (1+|q^*|)^{-\frac 12}(1+t)^{\frac 12}|\overline{\lhzp^I \psi}\gb_{\lt}\lhzp^I \psi|\wp+(1+|q^*|)^{\frac 12} (1+t)^{-\frac 12} |\lz^k \psi|^2\wp \\
%     &\lesssim_{\gamma,\mu} (1+|q^*|)^{\frac 12}(1+q_-)^{2\mu}(1+t)^{\frac 12}|\overline{\lhzp^I \psi}\gb_{\lt}\lhzp^I \psi|(\wp)'\\
%     &\ \ \ \ \ \ +(1+|q^*|)^{\frac 12} (1+t)^{-\frac 12} |\lz^k \psi|^2\wp.
%     \end{split}
%\end{equation*}

%We also have the term involing the tangential derivative of $\psi$:
%\begin{multline*}
%    \sum_{\substack{J+K=I\\|I|/2\leq K\leq |I|}}\int\int \varepsilon^{-1}(1+\tau) |\lz^J \go|^2 |\dtb \lz^K\psi|^2 \, d\x dt \lesssim\sum_{\substack{J+K=I\\|I|/2\leq K<|I|}}\int\int \varepsilon^{-1}(1+\tau) |\lz^J \go|^2 \frac{|\lz^K\psi|^2}{(1+\tau)^2} \, d\x dt\\
%    \lesssim \sum_{|K|\leq |I|} \int\int \frac{\varepsilon}{(1+\tau)^{3-2\delta}}|\lz^K \psi|^2\, d\x d\tau.
%\end{multline*}

For the other term, using the decay $|\dt \lz^J h^1|\lesssim \varepsilon (1+t)^{-1}\ln t\, (1+|q^*|)^{-1+\delta}(1+q^*_+)^{-\y}$ and $\eqref{dm0higherorder}$ we have
\begin{equation}\label{energyestimatebulkterm}
        \sum_{\substack{|J|+|K|\leq |I|\\|I|/2\leq |K|<|I|}}\int_0^t\int_{\Sigma_\tau} \varepsilon^{-1}(1+\tau)|\dt \lz^J \gt|^2 |\lz^K\psi|^2\, d\x d\tau\lesssim\sum_{|K|<|I|} \int_0^t\int_{\Sigma_\tau}\frac{\varepsilon(\ln (1+\tau))^2}{1+\tau}|\lz^K \psi|^2\, d\x d\tau.
\end{equation}

Therefore, summing up $|I|\leq k$, and using (\ref{energyestimatelzgo}) and (\ref{energyestimatebulkterm}) we get
\begin{equation}\label{induction}
    C_k(t)\leq 8C_k(0)+\int_0^t \frac {C\varepsilon}{1+\tau}\left(C_k(\tau)+E_k(\tau)\right)+\frac{C\varepsilon(\ln (1+\tau))^2}{1+\tau}C_{k-1}(\tau)\, d\tau+C\varepsilon^3 (\ln (1+t))^5.
\end{equation}

\subsection{Improvement of bootstrap assumptions}
Define $G_k(t)=E_k(t)+E^1_k(t)+C_k(t)$. Then by the assumption on initial data, we have $G_N(0)\leq 3\varepsilon^2$. Adding (\ref{estimateforEN}), (\ref{estimateforEp}) with last term bounded by $C\varepsilon S_k(t)$, and (\ref{estimateofGk}) we get
\begin{equation}\label{estimateofGk}
    G_k(t)+S_k(t)+S^1_k(t)\leq 24\varepsilon^2+\int_0^t \frac{C\varepsilon}{1+\tau}G_k(\tau)+\frac{C\varepsilon(\ln \tau)^2}{1+\tau}G_{k-1}(\tau)\, d\tau+C\varepsilon^3 (\ln (1+t))^5.
\end{equation}

We will use the following form of Gronwall's inequality:
\begin{lem}
For $T>0$ and continuous functions $G,f,g\colon [0,T]\ra \mathbb{R}$ such that $f\geq 0$ and $g$ is non-decreasing, if 
$$G(t)\leq \int_0^t f(\tau)G(\tau)\, d\tau+g(t),$$
for all $t\in [0,T]$, then
$$G(t)\leq g(t)\exp(\int_0^t f(\tau)\, d\tau).$$
\end{lem}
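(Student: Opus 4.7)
The plan is to reduce this to a standard integrating factor computation. Let me define $H(t)=\int_0^t f(\tau)G(\tau)\,d\tau$, so the hypothesis becomes $G(t)\leq H(t)+g(t)$, with $H(0)=0$. Since $f\geq 0$, multiplying by $f(t)$ preserves the inequality, and because $H$ is differentiable (as $f,G$ are continuous) we get the differential inequality
\begin{equation*}
H'(t)=f(t)G(t)\leq f(t)H(t)+f(t)g(t).
\end{equation*}

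Next I would introduce the integrating factor $F(t)=\exp\bigl(-\int_0^t f(s)\,ds\bigr)$, which satisfies $F'=-fF$. Multiplying the inequality above by $F(t)>0$ gives
\begin{equation*}
\frac{d}{dt}\bigl(F(t)H(t)\bigr)=F(t)H'(t)-f(t)F(t)H(t)\leq f(t)g(t)F(t).
\end{equation*}
Integrating from $0$ to $t$ and using $H(0)=0$, I obtain
\begin{equation*}
F(t)H(t)\leq \int_0^t f(\tau)g(\tau)F(\tau)\,d\tau.
\end{equation*}

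Here is where the monotonicity of $g$ enters: since $g$ is non-decreasing, $g(\tau)\leq g(t)$ for $\tau\in[0,t]$, so I can pull $g(t)$ out and compute the remaining integral exactly using $f(\tau)F(\tau)=-F'(\tau)$:
\begin{equation*}
F(t)H(t)\leq g(t)\int_0^t f(\tau)F(\tau)\,d\tau=g(t)\bigl(1-F(t)\bigr).
\end{equation*}
Dividing by $F(t)$ yields $H(t)\leq g(t)\bigl(\exp(\int_0^t f(s)\,ds)-1\bigr)$, and then from the hypothesis
\begin{equation*}
G(t)\leq H(t)+g(t)\leq g(t)\exp\Bigl(\int_0^t f(\tau)\,d\tau\Bigr),
\end{equation*}
which is the desired bound. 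There is no serious obstacle here; the only point requiring care is justifying the monotonicity step (which needs $f\geq 0$ and $g$ non-decreasing, both given) and noting that continuity of $f$ and $G$ suffices to make $H$ absolutely continuous so that the fundamental theorem of calculus applies on $[0,T]$.
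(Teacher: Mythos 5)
Your proof is correct. The paper states this Gronwall-type lemma without proof (it is standard), and your integrating-factor argument is the canonical one: the hypotheses $f\geq 0$ (so $f(\tau)F(\tau)\geq 0$) and $g$ non-decreasing are exactly what is needed to pull $g(t)$ out of the integral in the penultimate step. One tiny remark: since $f$ and $G$ are assumed continuous, $H$ is in fact $C^1$ on $[0,T]$, so the classical fundamental theorem of calculus applies directly and there is no need to invoke absolute continuity.
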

Now we do the induction. We want to prove that $G_k(t)\leq 25\varepsilon^2 (1+t)^{c_k\varepsilon}$, where $c_k$ are constants, satisfying $2c_{i-1}\leq c_i$. For $k=0$ by (\ref{induction}) we have 
$$G_0(t)\leq 24\varepsilon^2+\int \frac{C\varepsilon}{1+t} G_0(\tau)\, d\tau+C\varepsilon^3 (\ln(1+t))^5,$$
so using Gronwall's inequality we get $G_0(t)\leq (24\varepsilon^2+C\varepsilon^3(\ln (1+t))^5) (1+t)^{C\varepsilon}\leq 25\varepsilon^2 (1+t)^{c_0\varepsilon}$ for $\varepsilon$ small enough and $c_0=2C$. Now suppose the bound holds for $k-1$. Then 
$$G_k(t)\leq 24\varepsilon^2+\int_0^t\frac{C\varepsilon}{1+\tau}G_k(\tau)\, d\tau+C\varepsilon^3 (1+t)^{c_{k-1}\varepsilon}\ln (1+t)+C\varepsilon^3 (\ln (1+t))^3.$$
Using again Gronwall's inequality we get
$$G_k(t)\leq (24\varepsilon^2+C\varepsilon^3) (1+t)^{2c_{k-1}\varepsilon}\leq 25\varepsilon^2 (1+t)^{c_k\varepsilon}$$ as desired.

Now we have proved that $G_k(t)\leq 25\varepsilon^2(1+t)^{c_k\varepsilon}$. This improves the boostrap assumptions on $E_N(t)$ and $C_N(t)$, once we let $C_b>25$ in the beginning and $\varepsilon>0$ small so that $c_N\varepsilon<\delta$. Substituting the bound to the right hand side of (\ref{estimateofGk}) we get $S_k(t)\leq C\varepsilon^2 (1+t)^{c_k \varepsilon}$. Using this bound, the improved bound for $E_N(t)$, and (\ref{dyadicestimate}) for \eqref{estimateforEp} we have
\begin{equation*}
    \begin{split}
        E^1_N(t)+S^1_N(t)&\leq 8E^1_N(0)+\int_0^t \frac{C\varepsilon}{(1+\tau)^{1+2\delta}}(E^1_N(\tau)+E_N(\tau))\, d\tau+C\varepsilon^3\sum_{k=0}^{\infty} 2^{-8\delta k}2^{\delta k}\\
        &\leq 8\varepsilon^2+\int_0^t \frac{C\varepsilon}{(1+\tau)^{1+2\delta}}E^1_N(\tau)\, d\tau+C\varepsilon^3,
    \end{split}
\end{equation*}
so using Gronwall's inequality we get $E^1_N(t)\leq 8\varepsilon^2+C\varepsilon^3< C_b\varepsilon^2$ when $\varepsilon$ is small. Therefore we have improved all bootstrap assumptions, and it follows that $T=\infty$.

\paragraph{Conflict of Interest} The author declares that he has no conflict of interest.

\appendix 
\section{Appendix: A choice of the tetrad}\label{appendixtetrad}
We have seen that when dealing with the Dirac equation, especially in the massless case, one often needs to estimate the contraction of the form $\gamma^\mu \d_\mu \psi$ (and higher order version with vector fields), where ${\lb}\psi$ (or $\d_q \psi$) has worse behavior than other derivatives of $\psi$. Therefore, by expanding the contraction in the null frame, we see that the behavior of the term $\gamma_{L} \d_q\psi$ is important. We are interested in two questions: 

(1) We know that $\gamma^\mu-\gb^\mu$ behaves like the perturbation of metric $h=g-m$, which is extensively used in this work, but does the component $(\gamma-\gb)_L$ has better decay? (One can think about the contraction $H^{\a\b}\d_\a\d_\b\phi$, the worst term is like $H_{LL}\d_q\d_q\phi$, and in our context $H_{LL}$ indeed has better behavior, using wave coordinate condition)

(2) Is there any special structure of the matrix $\gb_L$, so that terms involving this matrix have better estimate? In fact, we have already used a version of this statement, when we use the Dirac equation itself to show that the term $\gb_L\d_q\psi$ can be controlled by tangential derivatives.

We will discuss the second question in the next appendix. For the first question, in the presence of $\gamma^\mu$ we know that it depends on the tetrad we choose. We present a different choice of tetrad in this appendix, which might improve some estimates in other situations. We will show that we are able to control the $L$ component of the perturbation of Gamma matrices $(\gamma-\gb)_{L}$ without using the $\lb\lb$ component of the metric, which is expected to be the worst component. From here we go back to the notation in the original coordinates $(\mathbb R^4,g)$.

\begin{lem}\label{Zonhomogeneouscoefficients}
On the region where $r>t/2$ and $t>1$, We have $|\d^I Z^J U^\mu|\lesssim (1+t+r)^{-|I|}$, where $U\in \{L,\lb,S_1,S_2\}$.
\end{lem}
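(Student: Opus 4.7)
The plan is to exploit the fact that each component $U^\mu$ is, away from the spatial origin, a smooth function of $\omega = x/|x|$ alone, independent of $t$. Indeed $L^0 = 1$, $L^i = \omega^i$, and similarly for $\lb$; the tangential fields $S_1, S_2$ (or the globally defined $\db_i = \d_i - \omega_i\d_r$ used in place of them, as the paper allows) likewise have components that are smooth, degree-zero homogeneous functions of $x$ alone. Each such component, together with all of its angular derivatives, is uniformly bounded.

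I would prove the following structural claim by induction on $|I| + |J|$: every component of $\d^I Z^J U^\mu$ is a finite linear combination of terms of the form
\begin{equation*}
    \left(\frac{t}{r}\right)^{a} \frac{1}{r^{|I|}}\, f(\omega),
\end{equation*}
with $a \geq 0$ an integer (bounded in terms of $|I|, |J|$) and $f \in C^\infty(S^2)$. The base case $|I| = |J| = 0$ is the observation above. For the inductive step, one verifies directly the action of each generator on a single such term: $\Omega_{jk} = x_j\d_k - x_k\d_j$ annihilates both $t$ and $r$, so it only modifies the angular factor $f$; the scaling $S = t\d_t + x^i\d_i$ reproduces the term with coefficient $-|I|$ by Euler's identity applied to $r^{-|I|}$ (since $t/r$ and $f(\omega)$ are spatially scaling-invariant); the boost $\Omega_{0j} = t\d_j + x_j\d_t$ yields contributions at exponents $a+1$ and $a-1$ at the same power of $r$; and a partial derivative $\d_\mu$ raises $|I|$ by one and changes $a$ by at most one, as follows from the identities $\d_i(t/r) = -(t/r)\omega_i/r$, $\d_i r^{-n} = -n\omega_i r^{-(n+1)}$, and $\d_i f(\omega) = g(\omega)/r$ for some $g \in C^\infty(S^2)$.

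On the region $\{r > t/2,\ t > 1\}$ we have $t/r < 2$, so $(t/r)^a \lesssim 1$, and $r > 1/2$ gives $1 + t + r \leq 5r$. Consequently each term in the structural expansion is bounded by $Cr^{-|I|} \lesssim (1+t+r)^{-|I|}$, and summing the finitely many terms yields the lemma. The only bookkeeping subtlety is that $S_1, S_2$ admit no global smooth definition on $S^2$; this is handled either by working on local angular patches (where $S_1, S_2$ are smooth homogeneous-of-degree-zero functions of $x$) or by applying the same argument to $\db_i$, whose components are globally smooth, degree-zero homogeneous functions of $\omega$. There is no real conceptual obstacle — the content of the lemma is precisely the algebraic observation that the Minkowski vector fields $Z$ preserve this class of $(t/r)$-homogeneous expressions, combined with the regime $r \gtrsim t$ in which $t/r$ is bounded.
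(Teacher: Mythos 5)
Your proof is correct and uses the same core idea as the paper's: the null-frame components $U^\mu$ depend only on $\omega = x/|x|$, hence are degree-zero homogeneous in $x$, so each coordinate derivative produces a factor $r^{-1}$ while the Minkowski commuting vector fields preserve this class without costing decay, and in the region $r > t/2$, $t > 1$ one has $t/r \lesssim 1$ and $1+t+r \lesssim r$. Your structural induction simply spells out the bookkeeping the paper compresses into the phrase ``writing all vector fields explicitly.''
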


\begin{proof}
Consider a derivative $\d_\a$ operating on $U^\mu$. Notice that $U^\mu$ does not change along the radial direction $\d_r$ and time direction $\d_t$, so the relavant part is the angular derivative. Therefore we have $|x|\d_i U^\mu(t,x)=\d_i U^\mu(t,x/|x|)$, and writing all vector fields explicitly we get the estimate.
%Notice that the components of $U$ are 1 or $\omega_i=\frac{x_i}{r}$. Applying $Z$ is just taking derivative and then multiply it by a first degree polynomial in $t$ and $x$. Therefore we have 
%\begin{equation*}
%    Z^J (\textstyle\frac {x_i}r) = \sum_{k\leq|J|} \textstyle\frac {\phi_k(t,x)}{r^k},
%\end{equation*}
%where $\phi_k$ is a polynomial of $t$ and $x$ of degree $k$. Then applying $\d^I$ it is clear that we get a decay factor of $r^{-|I|}$, which is equivalent with $(1+t+r)^{-|I|}$.
\end{proof}

\begin{Prop}\label{existenceofsuchtetrad}
Suppose that $g_\mn=m_\mn+h_\mn$ satisfies $|Z^I h|\lesssim (1+t+r)^{-\frac 12}$ for $|I|\leq N/2$. Then there exists a choice of tetrad $\{e_a\}$ satisfying the following properties, with $\chi=\chi(\frac r{1+t})$ defined in \eqref{h0}:
\begin{enumerate}
    \item $|(e_a)^\mu-(\d_a)^\mu|\lesssim |h|$;
    \item $|\d_\mu (e_a)^\nu|\lesssim |\d_\mu h|+\chi\frac 1r|h|$;
    \item $|Z^I ((e_a)^\mu-(\d_a)^\mu)|\lesssim \sum_{|J| \leq |I|} |Z^J h|$, and $|\d_\mu Z^I (e_a)^\nu|\lesssim \sum_{|J|\leq|I|}|\d_\mu Z^J h|+\chi\frac 1r |Z^J h|$ plus quadratic terms;
    %\item $|g^{\a\b}\d_\a\d_\b (e_a)^\mu|\lesssim |g^{\a\b}\d_\a\d_\b h|+\frac \chi r |\d h|+\frac \chi {r^2} |h|$; %Moreover, $|Z^I(g^{\a\b}\d_\a\d_\b (e_a)^\mu)|\lesssim |Z^I(g^{\a\b}\d_\a\d_\b h)|+\sum_{|J|\leq |I|}\frac{\chi+\chi'}{1+t+r}|\d Z^J h|+\frac{\chi+\chi'}{(1+t+r)^2}|Z^J h|$;
    \item On the region where $\chi(\frac r{1+t})=1$, i.e.\ $r\geq\frac 34 (t+1)$, we have $|m(e_a-\d_a,L)|+|m(e_a-\d_a,S_i)|\lesssim |h|_{\mathcal{T}\mathcal{U}}+O(h^2)$. %Moreover, we have $|Z(m(e_a-\d_a,L))|\lesssim |Z(h_{TU})|+|h|_{TU}+O_{|J|,|K|\leq 1}(Z^J h\cdot Z^K h)$.
\end{enumerate}
\end{Prop}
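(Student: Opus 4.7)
The plan is to construct the tetrad in the \emph{symmetric gauge}, as opposed to the upper-triangular gauge used in Lemma \ref{GS}. The standard Gram-Schmidt construction fixes the local Lorentz freedom in a way that forces the bad component $h_{\lb\lb}$ into the correction: e.g.\ $e_1 - \d_1 \approx -\frac 12 h_{11}\d_1$, and a null frame expansion of $h_{11}$ contains $h_{\lb\lb}$. To obtain (4) I instead write $e_a = \d_a + N_a^\mu \d_\mu$ and fix the Lorentz freedom by the symmetry condition $m_{\mu b} N_a^\mu = m_{\mu a} N_b^\mu$. The orthonormality $g(e_a,e_b) = m_{ab}$ then becomes
\begin{equation*}
h_{ab} + 2\overline{N}_{ab} + h_{a\nu}N_b^\nu + h_{\mu b}N_a^\mu + g_{\mu\nu}N_a^\mu N_b^\nu = 0, \qquad \overline{N}_{ab} := m_{\mu b} N_a^\mu,
\end{equation*}
which I solve iteratively: $N = N^{(1)} + N^{(2)} + \cdots$ with $\overline{N}^{(1)}_{ab} = -\frac 12 h_{ab}$, i.e.\ $N^{(1)\mu}_a = -\frac 12 m^{\mu b} h_{ab}$, and each $N^{(k)}$ an algebraic polynomial in $h$ and in $N^{(j)}$ for $j<k$. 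Convergence follows from the hypothesis $|Z^I h|\lesssim (1+t+r)^{-1/2}$.

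Properties (1)-(3) are then direct. From $|N_a^\mu|\lesssim |h|+O(h^2)$ we get (1). Since $m^{\mu b}$ is constant, $\d_\mu N^{(1)\nu}_a = -\frac 12 m^{\nu b}\d_\mu h_{ab}$, so $|\d_\mu e_a^\nu|\lesssim |\d h|$; in this gauge the auxiliary $\chi r^{-1}|h|$ term in (2) is in fact not needed. Property (3) follows by induction on the order of the iteration: each $Z^I$ distributes across the polynomial defining $N^{(k)}$, producing sums of products of $Z^J h$ with $|J|\leq |I|$, plus quadratic remainders.

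Property (4) is the payoff of the symmetric gauge. Using the symmetry of $\overline{N}$,
\begin{equation*}
m(e_a - \d_a, L) = m_{\mu\nu} N_a^\mu L^\nu = \overline{N}_{a\nu} L^\nu = -\tfrac 12 h(\d_a, L) + O(h^2),
\end{equation*}
and analogously $m(e_a - \d_a, S_i) = -\frac 12 h(\d_a, S_i) + O(h^2)$. Since $L, S_i \in \mathcal{T}$, and each coordinate vector $\d_a$ lies in $\mathrm{span}(\mathcal{U})$ via
\begin{equation*}
\d_t = \tfrac 12 (L + \lb), \qquad \d_i = \tfrac{\omega_i}{2}(L - \lb) + \db_i,
\end{equation*}
with $\db_i$ spherical tangent (hence in $\mathrm{span}(S_1, S_2)$), the bilinear forms $h(\d_a, L)$ and $h(\d_a, S_i)$ expand into sums each of whose null-frame components has at least one index in $\mathcal{T}$. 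Hence $|h(\d_a, L)| + |h(\d_a, S_i)| \lesssim |h|_{\mathcal{T}\mathcal{U}}$, which gives (4); the forbidden component $h_{\lb\lb}$ is absent to leading order, and any higher-order occurrence is absorbed in $O(h^2)$.

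The main obstacle is really only the gauge choice itself: once one notices that the upper-triangular gauge of Lemma \ref{GS} forces $h_{\lb\lb}$ into the $\lb$-direction of the correction and spoils (4), replacing it by the symmetric gauge makes (4) an immediate consequence of $L, S_i \in \mathcal{T}$ and $\d_a \in \mathrm{span}(\mathcal{U})$. The remaining work -- closing the iteration and propagating estimates under $Z^I$ -- is routine under the smallness of $|Z^I h|$.
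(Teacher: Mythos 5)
Your proof is correct, and it takes a genuinely different route from the paper's. The paper constructs an intermediate frame $\et_a$ by first finding $g$-null directions $\lt,\ltb$ close to $L,\lb$ (solving a quadratic for the deformation parameters $a,b$), forming $\Tt=\tfrac12(\lt+\ltb)$, $\Rt=\tfrac12(\lt-\ltb)$, cutting off near $r=0$ with $\chi(\tfrac r{1+t})$ to avoid the singularity of $\omega_i$, and then applying Gram--Schmidt to $\{\et_a\}$. The crucial point in that proof is that on $\{\chi=1\}$ the inner products $g(\et_a,\et_b)$ are controlled by $|h|_{\Tc\Uc}$, so the Gram--Schmidt correction is too, while the $b\lb$-part of $\Tt-\d_t$ annihilates $L$. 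You instead fix the local Lorentz freedom by the symmetric gauge $\overline N_{ab}=\overline N_{ba}$ and solve orthonormality iteratively, getting $N^{(1)\mu}_a=-\tfrac12 m^{\mu b}h_{ab}$. Then (4) follows purely from the fact that contracting $N^{(1)}_a$ with $L$ or $S_i$ produces $h(\d_a,L)$, $h(\d_a,S_i)$, which always have one slot in $\Tc$, so $h_{\ltb\ltb}$ never appears at first order. Your verification of (4) is complete: $L,S_i\in\mathcal T$, $\d_a\in\mathrm{span}(\mathcal U)$, and the self-consistency of the iteration (each $\overline N^{(k)}$ stays symmetric) is easy to check.

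A few things worth noting in the comparison. Your construction avoids the cutoff entirely, because the symmetric-gauge tetrad involves no $\omega_i$ or $1/r$; as you observe this also removes the $\chi r^{-1}|h|$ term from (2), and extends (4) to all of $r>0$ rather than just $\{\chi=1\}$ --- both modest strengthenings. The phrase ``using the symmetry of $\overline N$'' in your computation of $m(e_a-\d_a,L)$ is slightly loose --- the identity $m_{\mu\nu}N_a^\mu L^\nu=-\tfrac12 h(\d_a,L)+O(h^2)$ is a direct substitution of $N^{(1)}$; the role of the gauge choice is that it \emph{determines} $N^{(1)}$ to be $-\tfrac12 m^{\mu b}h_{ab}$ rather than something containing $h_{\ltb\ltb}$. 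Also, ``convergence follows from $|Z^I h|\lesssim(1+t+r)^{-1/2}$'' is not quite right as stated, since this bound alone only gives $|h|\lesssim 1$; you need $|h|$ small for the fixed-point iteration (or implicit function theorem) to close --- but this is the same implicit smallness the paper needs for its Gram--Schmidt to be well-defined, so it is not a gap relative to the paper.
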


\begin{proof}
Let $a$ be a real number such that $g(L+a\lb,L+a\lb)=0$. This is equivalent to the equation
$$h_{\lb\lb}a^2+2g_{L\lb}a+h_{LL}=0.$$
Note that $g_{L\lb}=-2+h_{L\lb}$ so its absolute value is significantly bigger than other coefficients. If $h_{\lb\lb}=0$, then we have the unique solution $a=-\frac{h_{LL}}{2g_{L\lb}}$; if $h_{\lb\lb}\neq 0$, then the discriminant is greater than zero, and $a=\frac{-g_{L\lb}\pm \sqrt{g^2_{L\lb}-h_{LL}h_{\lb\lb}}}{h_{\lb\lb}}$. We take the root whose absolute value is smaller so that $a$ really means a perturbation, i.e.
$$a=\frac{g_{L\lb}}{h_{\lb\lb}}(-1+\sqrt{1-\frac{h_{LL}h_{\lb\lb}}{g^2_{L\lb}}})\approx \frac{g_{L\lb}}{h_{\lb\lb}}(-\frac 12 \frac{h_{LL}h_{\lb\lb}}{g^2_{L\lb}})+O(h^3)=-\frac{h_{LL}}{2g_{L\lb}}+O(h^2)\approx \frac 14 h_{LL}+O(h^2).$$
The limit here as $h_{\lb\lb}\ra 0$ coincides with the value in the first case.

Similarly we can find $b$ such that $g(\lb+bL,\lb+bL)=0$ with $|b|<<1$. We also have $g(L+a\lb,\lb+bL)=(1+ab)g_{L\lb}+ah_{\lb\lb}+bh_{LL}$, which differs from $-2$ by a quadratic term. Let 

$$\textstyle\lt=\sqrt{-\frac 2{(1+ab)g_{L\lb}+ah_{\lb\lb}+bh_{LL}}}(L+a\lb),\ \ \ \ltb=\sqrt{-\frac 2{(1+ab)g_{L\lb}+ah_{\lb\lb}+bh_{LL}}}(\lb+bL).$$
Then $\lt,\ltb$ are $g$-null vectors and $g(\lt,\ltb)=-2$. %We have $$\textstyle m(\lt,L)=-\sqrt{-\frac 2{(1+ab)g_{L\lb}+ah_{\lb\lb}+bh_{LL}}}2a\approx -\frac 12h_{L\lb}+O(h^2),$$ 
%$$\textstyle m(\ltb,L)=-2\sqrt{-\frac 2{(1+ab)g_{L\lb}+ah_{\lb\lb}+bh_{LL}}}\approx -2-\frac 12 h_{L\lb}+O(h^2).$$ 

Let $$\textstyle \Tt=\frac 12(\lt+\ltb)=\sqrt{-\frac 2{(1+ab)g_{L\lb}+ah_{\lb\lb}+bh_{LL}}}(\d_t+\frac a2\lb+\frac b2 L),$$ 
$$\textstyle \Rt=\frac 12(\lt-\ltb)=\sqrt{-\frac 2{(1+ab)g_{L\lb}+ah_{\lb\lb}+bh_{LL}}}(\d_r+\frac a2 \lb-\frac b2 L).$$ Then we have 
\begin{equation}\label{innerproductofTandRwithL}
    m(\Tt,L)=-1-a+O(h^2),\ \ \ m(\Rt,L)=1-a+O(h^2),
\end{equation}
and we note that $a=\frac 14 h_{LL}+O(h^2)$.

We now define $\et_0=\d_t+\chi(\frac{r}{t+1})(\Tt-\d_t)$, $\et_i=\d_i+\omega_i\chi(\frac{r}{t+1})(\Rt-\d_r)$, where $\chi(s)$ is a smooth cutoff function with $\chi(s)=1$ when $s\leq 1/2$, and $\chi(s)=0$ when $s\geq 3/4$. Then their inner products in $g$ read
\begin{equation}\label{innerproductintermediateframe}
\begin{split}
g(\et_0,\et_0)&=1+h_{00}+2g(\d_t,\chi(\Tt-\d_t))+O(h^2),\\ g(\et_0,\et_i)&=h_{0i}+g(\chi(\Tt-\d_t),\d_i)+\omega_ig(\d_t,\chi(\Rt-\d_r))+O(h^2),\\ g(\et_i,\et_j)&=\delta_{ij}+h_{ij}+g(\d_i,\omega_j \chi(R-\d_r))+g(\omega_i(R-\d_r),\d_j)+O(h^2).
\end{split}
\end{equation}

The set $\{\et_a\}$ is not an orthonormal frame yet, and we now do a Gram-Schmidt orthogonaliztion to this set, just like the process in Lemma \ref{GS}. This will result in a modification comparable with $h_{UV}$, possibly with coefficients $\chi(\frac r{t+1})$, $\omega_i$ and components of $S_1,S_2$ (we call these coeffecients of power 0), plus quadratic terms. We denote the orthonormal frame we get by $\{e_a\}$. Then clearly we have $|e_a^\mu-(\d_a)^\mu|\lesssim |e_a^\mu-\et_a^\mu|+|(\et_a)^\mu-(\d_a)^\mu|\lesssim |h|+O(h^2)\lesssim |h|$, where we use the decay assumption for the last inequality.

For derivatives we have $$\d_\mu (e_a)^\nu=\d_\mu ((e_a)-(\et_a))^\nu+\d_\mu (\et_a)^\nu.$$ For the second term, in view of the definition of $\et_a$, the derivative may fall on $\chi(\frac r{t+1})$ or $\omega_i$, which leads to $\frac 1{1+t}\chi'$ and $\frac 1r \chi$ respectively, or the vector field $\Tt$ and $\Rt$, i.e.\ fall on terms like $h_{UV}$, modulo quadratic terms. So also using $\d_\mu U^\nu\approx \frac 1r$ when $r>t/10$ from Lemma \ref{Zonhomogeneouscoefficients}, we have the second term controlled by $|\d_\mu h|+|\frac \chi r h|+O(h\cdot \d h)$. %For the last quadratic term, we can again use the decay estimate so that we can ignore it. 
For the first term, the analysis are almost same using those expressions of inner products (\ref{innerproductintermediateframe}). %Similarly we know that $|g^{\a\b}\d_\a \d_\b (e_a)^\mu|\lesssim |g^{\a\b}\d_\a\d_\b h|+\frac \chi r |\d h|+\frac {\chi'}{1+t} |\d h|+ \frac {\chi^2}{r^2} |h| $ plus quadratic terms that behave better.

The case for applying vector fields also follows, once we recall that $|\d^I Z^J U^\nu|\lesssim r^{-|I|}$ on the support of $\chi$, by Lemma \ref{Zonhomogeneouscoefficients}. Note that when the partial derivative $\d_\mu$ falls on $h$, the direction of the derivative is preserved. Therefore the first three statement holds.

%Since $\d_\mu U^\nu\approx \frac 1r$, and $\d_\mu\cht(\frac r{t+1})\sim \frac 1{1+t+r}\cht'(\frac r{t+1})$, we have $|\d_\a (e_a)^\mu|\lesssim |\d_\a h|+\frac \cht r |h|$, and $|Z^I (e_a)^\mu|\lesssim \sum_{|J|\leq |I|} |Z^J h|$ (Use $|Z^i|\lesssim r$). Moreover, we have $|g^{\a\b}\d_\a\d_\b (e_a)^\mu|\lesssim |g^{\a\b}\d_\a\d_\b h|+\frac \cht r |\d h|+\frac \cht {r^2} |h|$.

\textbf{Moreover}, in the region where $\chi(\frac r{t+1})=1$, we have $\et_0=\Tt$, $\et_i=\db_i+\omega_i \Rt$, so
\begin{equation}
    \begin{split}
        g(\et_0,\et_0)&=1, \\
        g(\et_0,\et_i)&=h(\Tt,\db_i)\approx h(\d_t,\db_i)+O(h^2),\\
        g(\et_i,\et_j)&=\delta_{ij}+h(\db_i,\db_j)+\omega_jh(\db_i,\Rt)+\omega_i h(\db_j,\Rt)\\
        &\approx h(\db_i,\db_j)+\omega_jh(\db_i,\d_r)+\omega_i h(\db_j,\d_r)+O(h^2).
    \end{split}
\end{equation}
In view of the expression after the Gram-Schmidt process (Lemma \ref{GS}), we know that $e_a-\et_a$, hence $m(e_a-\et_a,L)$, equals a combination of $SU$ components of $h$ ($S\in \mathcal S$, $U\in \mathcal U$), with product of coefficients of order 0, plus terms of type $O(h^2)$.

Also we have $m(\et_0-\d_0,L)=-a+O(h^2)\approx -\frac 14 h_{LL}+O(h^2)$, and $m(\et_i-\d_i,L)=-\omega_i a+O(h^2)\approx -\frac 14\omega_i h_{LL}+O(h^2)$. Therefore $$|m(e_a-\d_a,L)|\leq |m(e_a-\et_a,L)|+|m(\et_a-\d_a,L)|\leq \sum_{T\in \mathcal T,\, U\in \mathcal U} |h|_{\mathcal T \mathcal U}+|h|^2,$$ %where $c_{(TU)}$ are products of coefficients of order 0. 
and similar estimates holds if we change $L$ to $S_1,S_2$. Hence the last estimate follows.
\end{proof}

\begin{remark}
From this we can prove the control of $\gamma_L-\gb_L$ %and with one vector field $\lzr (\gamma^\mu-\gb^\mu)L_\mu$ 
when $\frac r{t+1}>\frac 34$:
\begin{equation}
    |\gamma_L-\gb_L|\lesssim |h|_{\mathcal T \mathcal U}+|h|^2.
\end{equation}
%\begin{equation}
%    |\lzr (\gamma^\mu-\gb^\mu)L_\mu|\lesssim |\lzr h|_{TU}+|h|_{TU}+\sum_{|I|\leq 1} \frac{|t-r|}{r}|h|.
%\end{equation}
%The proof is just combining the results above.
\end{remark}

%$\lz \Tt^\mu$: Let $T$ be tangent to the sphere. We first consider $T=\bar{\d}_i$, and we have $\lz T=[Z,T]$. If $Z=\Omega_{0j}$, then we compute
%$$[\Omega_{0j},\bar\d_i]\phi=[x^j\d_t+t\d_j,\bar{\d}_i]\phi=-(\bar{\d}_i x_j)\d_t\phi+t\d_j\bar{\d}_i \phi -t\bar{\d}_i\d_j \phi=-(\delta_{ij}-\omega_i\omega_j)\d_t\phi-t\d_j(\omega_i\omega_k)\d_k\phi$$
%$$=-(\delta_{ij}-\omega_i\omega_j)\d_t\phi-\frac tr (\delta_{ij}-\omega_i\omega_j)\d_r\phi-t\omega_i\frac 1r(\delta_{jk}-\omega_j\omega_k)\d_k \phi=-\frac 1r(\delta_{ij}-\omega_i\omega_j)(t\d_r+r\d_t)-\frac tr\omega_i\bar{\d}_j\phi.$$

%For $T\in \{S_1,S_2\}$, we have $T=c^i(x)\bar\d_i$ with $|c_i|\lesssim 1$, so $\l_{\Omega_{0j}} T=[\Omega_{0j},c^i\bar\d_i]=c^i[\Omega_{0j},\bar\d_i]+(\Omega_{0j}c^i)\bar\d_i$.

%Notice that $t\d_r+r\d_t=\frac 12(t+r)L+\frac 12(r-t)\lb$, so we have $|h(\l_{\Omega_{0j}} T,U)|\lesssim \frac {|t-r|}r |h|+\frac {t+r}r|h|_{TU}$.

%If $Z=S$ or $\d_t$, then it is not hard to see that $[Z,S_i]=0$; if $Z=\Omega_{ij}$ it is also clear that $[\Omega_{ij},S_k]$ is tangent to the sphere with magnitude $\lesssim 1$. If $Z=\d_j$, we have $[\d_j,\bar\d_i]=-\d_j(\omega_i\omega_k)\d_k\sim \frac 1r \d\phi$. 

\section{Appendix: A spacetime integral estimate from Dirac equation}\label{bulktermappendix}
We discuss the second question, and we will see that for Dirac equation (flat or curved one with the metric satisfying the assumptions below), one can derive the estimate of a spacetime integral term involing $\gb_L$. This can be viewed as an analogue of the spacetime integral we get from the weighted energy estimate of the wave equation in Proposition \ref{energyestimatewave}.

%When we change the coordinate, we actually introduced a term with supercritical decay rate $r^{-1}\ln r$ along the light cone. If we do not change the coordinates, then the worst term behaves like $r^{-1}$, and we are able to perform weighted $L^2$ estimate for the Dirac equation.
Consider the Dirac equation in a curved background spacetime $(\mathbb{R}^4,g)$:
$$\gamma^\mu D_\mu \psi+im\psi=F,$$
where the metric $g$ satisfies the decay
\begin{equation}\label{eqmetricassumptionappendix}
    (1+|q|)^{-1}|h^1|+|\d h^1|\leq C\varepsilon (1+t)^{-\frac 34}(1+|q|)^{-\frac 12},\ \ \ |\db h^1|\leq C\varepsilon (1+t)^{-\frac 54}
\end{equation}
\begin{equation}\label{eqmetricassumptionTUappendix}
    |h^1|_{\mathcal T \mathcal U}\leq C\varepsilon (1+t)^{-1}(1+|q|)^{\frac 12}
\end{equation}
with $g_\mn=m_\mn+h^0_\mn+h^1_\mn$, and $h_0=\chi(\frac r{1+t})\frac Mr\delta_\mn$. We note that these bounds hold for the solution of Einstein vacuum equation in wave coordinates (and for some coupled system, e.g.\ massless Einstein-Maxwell-Klein-Gordon \cite{KLMKG21}). This gives $|\gamma^\mu-\gb^\mu|\lesssim C\varepsilon (1+t)^{-\frac 34}(1+|q|)^{\frac 12}$ if we choose the tetrad as in Proposition \ref{GS}, or as in Appendix \ref{appendixtetrad}.
%In practice we will move the connection term to the right hand side, so instead we consider the equation $$\gamma^\mu \partial_\mu \psi=F.$$

We consider a weight function $w=w(q)$ with $w'(q)\leq 0$, and $w'(q)\lesssim (1+|q|)^{-1}w(q)$. Again by Corollary \ref{spinorleibnizcor} we have
$\dd_\mu (\pb \gamma^\mu \psi)=\pb F+\overline{F}\psi$, which implies in our coordinates $\d_\mu(\pb \gamma^\mu \psi)=\pb F+\overline{F}\psi-\Gamma_{\mu\rho}^{\ \ \,\mu} \pb\gamma^\rho \psi$. Then adding the weight to the estimate we have $$\d_\mu( w(q)\pb \gamma^\mu \psi)=w(q) (\pb F+\overline{F}\psi)+w'(q)\pb\gamma_L \psi-w(q) \Gamma_{\mu\rho}^{\ \ \,\mu} \pb\gamma^\rho \psi,$$
where $\gamma_L:=m_{\mu\nu}L^\mu\gamma^\nu$.

We put the center value (which corresponds to $\gb_L$) of the second term on the right hand side to the left, and integrate on the spacetime region between two time slices to get
\begin{equation}\label{identitywithbulkterm}
   \begin{split}
       \int_{\Sigma_{t}} w&(q)\pb \gamma^0 \psi\, dx +\int_0^{t} \int_{\Sigma_\tau}  (-w'(q)) \pb \bar{\gamma}_L \psi\, dx d\tau=\int_{\Sigma_0} w(q)\pb \gamma^0 \psi \, dx \\
    & + \int_0^t \int_{\Sigma_\tau} w(q)(\pb F+\overline{F}\psi-\Gamma_{\mu\rho}^{\ \ \,\mu} \pb\gamma^\rho \psi)+w'(q) \pb (\gamma^\mu-\bar{\gamma}^\mu)L_\mu \psi \, dx d\tau.
    \end{split} 
\end{equation}

We now show that the spacetime term on the left hand side is nonnegative. Recall that we have $w'(q)\geq 0$.

\begin{lem}
The matrix $\bar{\gamma}^0 \bar{\gamma}_L=(\gb_L)^\dagger \gb^0$ is semi-negative definite at every point where $L$ is defined.
\end{lem}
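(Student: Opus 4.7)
The plan is to show that $\gb^0 \gb_L$ is a Hermitian matrix whose minimal polynomial has only non-positive roots, and in fact satisfies $(\gb^0\gb_L)^2 = -2\, \gb^0\gb_L$, which forces its spectrum to lie in $\{0,-2\}$.

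First I would verify Hermiticity. Recall $\gb_L = L_\nu \gb^\nu$ where $L = \d_t + \d_r$, so $L^\mu = (1,\omega^i)$ and $L_\mu = (-1,\omega^i)$. The identity $\gb^0 (\gb^\nu)^\dagger \gb^0 = \gb^\nu$ from the paper, combined with $(\gb^0)^2 = I$ (a consequence of the anticommutation relation \eqref{anticommutation} with $\mu=\nu=0$), yields $(\gb^\nu)^\dagger \gb^0 = \gb^0\gb^\nu$. Contracting with $L_\nu$ gives $(\gb^0\gb_L)^\dagger = \gb_L^\dagger \gb^0 = \gb^0\gb_L$.

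Next I would use the nullness of $L$. The anticommutation relation gives
\begin{equation*}
    2\gb_L^2 = L_\mu L_\nu(\gb^\mu\gb^\nu + \gb^\nu\gb^\mu) = -2 L_\mu L_\nu m^{\mu\nu} I = -2\, m(L,L)\, I = 0,
\end{equation*}
so $\gb_L^2 = 0$. A parallel computation with one factor replaced by $\gb^0$ gives
\begin{equation*}
    \gb^0\gb_L + \gb_L \gb^0 = L_\nu(\gb^0\gb^\nu + \gb^\nu\gb^0) = -2 L_\nu m^{0\nu} I = -2 L_0 m^{00} I = -2I.
\end{equation*}
Combining these two identities,
\begin{equation*}
    (\gb^0\gb_L)^2 = \gb^0\gb_L\gb^0\gb_L = \gb^0(-2I - \gb^0\gb_L)\gb_L = -2\gb^0\gb_L - (\gb^0)^2 \gb_L^2 = -2\gb^0\gb_L.
\end{equation*}

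Finally, the identity $(\gb^0\gb_L)^2 + 2\gb^0\gb_L = 0$ shows that every eigenvalue $\lambda$ of $\gb^0\gb_L$ satisfies $\lambda(\lambda+2)=0$, so $\lambda \in \{0,-2\}$. Since $\gb^0\gb_L$ is Hermitian it is diagonalizable with real eigenvalues, all non-positive, and hence semi-negative definite. I do not anticipate a serious obstacle here; the only subtlety is keeping track of signs coming from the $(-+++)$ convention, which is handled by the explicit calculation of $L_0 m^{00}$.
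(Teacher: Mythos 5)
Your proof is correct, and it is a genuinely different—and cleaner—route than the paper's. The paper writes out $\gb^0\gb_L$ explicitly as a $4\times 4$ matrix in terms of $\omega_1,\omega_2,\omega_3$ and then checks semi-negative definiteness by computing all principal minors (using $\gb_L^2=0$ only to conclude the determinant vanishes, and computing a representative $3\times 3$ minor by hand). You instead stay entirely at the level of the Clifford algebra: Hermiticity follows from $\gb^0(\gb^\nu)^\dagger\gb^0=\gb^\nu$, the nullness of $L$ gives $\gb_L^2=0$, the anticommutation relation gives $\{\gb^0,\gb_L\}=-2I$, and combining these yields the polynomial identity $(\gb^0\gb_L)^2=-2\,\gb^0\gb_L$, so the spectrum lies in $\{0,-2\}$. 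The sign bookkeeping is right: $L_0=-1$, $m^{00}=-1$, so $-2L_0m^{00}=-2$. What your argument buys is that it avoids any coordinate-dependent matrix computation and no case-by-case minor check; it also transparently gives the extra information that $\gb^0\gb_L$ has eigenvalues exactly $0$ and $-2$ (each with multiplicity two, by $\gb_L^2=0$ and $\mathrm{tr}(\gb^0\gb_L)=-4$), whereas the paper's principal-minor approach establishes only semi-negativity. The paper's computation, on the other hand, shows concretely what the matrix looks like, which may be convenient elsewhere.
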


\begin{proof}
Recall that $L=\d_t+\d_r$, so $\gb_L=\gb^\mu L_\mu=-\gb^0+\omega_i \gb^i$ we have
\begin{equation}
\begin{split}    
\gb^{0}\gb_L&=\begin{pmatrix}
1 & 0 & 0 & 0 \\
0 & 1 & 0 & 0 \\
0 & 0 & -1 & 0 \\
0 & 0 & 0 & -1
\end{pmatrix}
\begin{pmatrix}
-1 & 0 & \omega_3 & \omega_1-i\omega_2 \\
0 & -1 & \omega_1+i\omega_2 & -\omega_3 \\
-\omega_3 & -\omega_1+i\omega_2 & 1 & 0 \\
-\omega_1-i\omega_2 & \omega_3 & 0 & 1
\end{pmatrix},\\
&=\begin{pmatrix}
-1 & 0 & \omega_3 & \omega_1-i\omega_2 \\
0 & -1 & \omega_1+i\omega_2 & -\omega_3 \\
\omega_3 & \omega_1-i\omega_2 & -1 & 0 \\
\omega_1+i\omega_2 & -\omega_3 & 0 & -1
\end{pmatrix},
\end{split}
\end{equation}
and it is straightforward to verify that $\bar{\gamma}^0 \bar{\gamma}_L=(\gb_L)^\dagger \gb^0$. It remains to compute all possible principal minors by linear algebra (\cite{prussing1986principal}, \cite[Page 566]{semidefinitebook}). We do not need to compute the $4\times 4$ determinant since we know it is zero from the following observation:
$$\gb_L^2=(\gb^\mu L_\mu) (\gb^\nu L_\nu)=\gb^\mu \gb^\nu L_\mu L_\nu=-m^{\mu\nu}L_\mu L_\nu=0.$$

For remaining principal minors, we just compute one of them (Row 1,3,4) since others are similar:
$$\begin{vmatrix}
        -1  & \omega_3 & \omega_1-i\omega_2 \\
        \omega_3  & -1 & 0 \\
        \omega_1+i\omega_2  & 0 & -1
\end{vmatrix}=-1(1-\omega_3^2)+(\omega_1+i\omega_2)(\omega_1-i\omega_2)=-1+\omega_1^2+\omega_2^2+\omega_3^2=0.$$
\end{proof}

From this we know that $-\pb \gb_L \psi=\pb(-\gb^0\gb_L)\psi\geq 0$ for every spinor field $\psi$ which takes value in $\mathbb{C}^4$.

\begin{thm}
Suppose $\psi$ is a solution of the equation $\gamma^\mu D_\mu\psi+im\psi=F$, and vanishes at infinity. Assume the background metric $g$ satisfies the assumption \eqref{eqmetricassumptionappendix}, \eqref{eqmetricassumptionTUappendix}, then 
\begin{multline}
\int_{\Sigma_{t}} |\psi|^2 w\, d\x+\int_{0}^{t} \int_{\Sigma_\tau} (-\pb\bar{\gamma}_L \psi) w' \, d\x d\tau \leq 2\int_{\Sigma_0} |\psi|^2 w\, d\x\\
+2\int_{0}^{t} \int_{\Sigma_\tau} \left(\frac{C\varepsilon}{1+\tau} |\psi|^2  +|\pb F|+|\overline{F}\psi|\right) w \, d\x d\tau
\end{multline}
\end{thm}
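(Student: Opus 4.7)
The starting point is the divergence identity \eqref{identitywithbulkterm}, already displayed just above the theorem; the plan is to estimate each term on its right and to arrange signs so that the bulk integral on the left stays on the left. First I would record that the assumption ``$w'(q)\le 0$'' in the appendix setup is most naturally read as $w'(q)\ge 0$, consistent with the companion bound $|w'|\lesssim (1+|q|)^{-1}w$, with the weights used in the main text, and with the very purpose of the bulk term on the theorem's left. Granted this, the preceding matrix lemma yields $\pb\gb_L\psi=\psi^\dagger\gb^0\gb_L\psi\le 0$ pointwise, so $(-\pb\gb_L\psi)\,w'\ge 0$ and this term may remain on the left with the correct sign.

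For the two boundary integrals involving $\pb\gamma^0\psi$ I would use $|\gamma^0-\gb^0|\lesssim|h|\lesssim C\varepsilon$ under \eqref{eqmetricassumptionappendix} together with $(\gb^0)^2=I$, which give $\tfrac34|\psi|^2\le\pb\gamma^0\psi\le\tfrac54|\psi|^2$ for $\varepsilon$ small; this converts both slice integrals to $|\psi|^2$-integrals, with a numerical loss that is absorbed into the factor $2$ in the statement.

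The substantive work is then to estimate the two error terms on the right of \eqref{identitywithbulkterm} by $\frac{C\varepsilon}{1+\tau}|\psi|^2 w$. For the Christoffel term I would write $\Gamma_{\mu\rho}^{\ \ \,\mu}=\tfrac12 g^{\mu\nu}\d_\rho g_{\mu\nu}$, expand in the null frame, and use the bound
\[
|\Gamma_{\mu\rho}^{\ \ \,\mu}|\lesssim|\db h|+|\d h|_{L\lb}+|\d\,\slashed{\mathrm{tr}}\,h|+|h|\,|\d h|
\]
already exploited in the main-text $L^2$ estimate. Under \eqref{eqmetricassumptionappendix}, the wave-coordinate-type control of $\d_q h_{L\lb}$ and $\d_q\,\slashed{\mathrm{tr}}\,h$ (which turns the bad derivative into tangential ones), and the explicit $h^0$ contribution of size $M(1+t+r)^{-2}$, all four pieces are $\lesssim C\varepsilon(1+\tau)^{-1}$; since $|\pb\gamma^\rho\psi|\lesssim|\psi|^2$ this supplies the advertised contribution.

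The hard part will be the final term $\int_0^t\!\!\int_{\Sigma_\tau} w'\,\pb(\gamma^\mu-\gb^\mu)L_\mu\,\psi\,d\x d\tau$, and it is precisely this term that motivates the construction in Appendix~\ref{appendixtetrad}. A generic tetrad only delivers $|\gamma-\gb|\lesssim C\varepsilon(1+t)^{-3/4}(1+|q|)^{1/2}$, which combined with $|w'|\lesssim(1+|q|)^{-1}w$ produces a spacetime weight $C\varepsilon(1+t)^{-3/4}(1+|q|)^{-1/2}w$, not absorbable into $C\varepsilon/(1+\tau)w$ in the interior $|q|\ll t$. With the refined tetrad of Appendix~\ref{appendixtetrad}, however, only the ``good'' components enter: $|(\gamma-\gb)_L|\lesssim|h|_{\mathcal T\mathcal U}+O(h^2)$, so \eqref{eqmetricassumptionTUappendix} together with $|w'|\lesssim(1+|q|)^{-1}w$ yields
\[
|w'\,\pb(\gamma-\gb)_L\psi|\lesssim C\varepsilon(1+t)^{-1}(1+|q|)^{-1/2}\,|\psi|^2 w\lesssim \frac{C\varepsilon}{1+\tau}\,|\psi|^2 w,
\]
as required. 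Assembling the four pieces into \eqref{identitywithbulkterm} and absorbing constants into the factor $2$ completes the proof.
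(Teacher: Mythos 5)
Your overall strategy matches the paper: start from the divergence identity~\eqref{identitywithbulkterm}, invoke the matrix lemma to keep the bulk term on the left with a nonnegative sign, and use the Appendix~\ref{appendixtetrad} tetrad to tame $(\gamma-\gb)_L$. The sign correction you flag for $w'$ is also present in the paper itself, which first writes $w'\le 0$ and then explicitly recalls $w'\ge 0$ — your reading is the right one.

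The genuine divergence, and a gap, is in how you handle the Christoffel term $\Gamma_{\mu\rho}^{\ \ \,\mu}\pb\gamma^\rho\psi$. You propose to reuse the main-text bound $|\Gamma_{\mu\rho}^{\ \ \,\mu}|\lesssim|\db h|+|\d h|_{L\lb}+|\d\,\slashed{\mathrm{tr}}\,h|+|h||\d h|$ and then invoke ``wave-coordinate-type control of $\d_q h_{L\lb}$ and $\d_q\slashed{\mathrm{tr}}h$'' to conclude $|\Gamma_{\mu\rho}^{\ \ \,\mu}|\lesssim C\varepsilon(1+\tau)^{-1}$. But the appendix theorem assumes only~\eqref{eqmetricassumptionappendix} and~\eqref{eqmetricassumptionTUappendix}; it does \emph{not} assume the wave coordinate condition, nor the bound $|\dt\hi|_{\lc\Uc}\lesssim\varepsilon(1+t)^{-1}$ that Proposition~\ref{energyestimateDirac} in the main text explicitly demands. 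Under the stated hypotheses you only have $|\d h^1|\lesssim\varepsilon(1+t)^{-3/4}(1+|q|)^{-1/2}$ for a generic derivative, and $|h^1|_{\mathcal T\mathcal U}$ — no bound on $\d h_{L\lb}$ or $\d\slashed{\mathrm{tr}}h$ of the strength you need. So the pointwise $C\varepsilon/(1+\tau)$ bound on $\Gamma_{\mu\rho}^{\ \ \,\mu}$ is not available, and your argument breaks at exactly this step. The paper avoids the issue by \emph{not} trying to bound $\Gamma_{\mu\rho}^{\ \ \,\mu}$ pointwise: instead it expands $\Gamma_{\mu\rho}^{\ \ \,\mu}\pb\gb^\rho\psi$ in the null frame. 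Since $\gb^\rho=-\tfrac12 L^\rho\gb_\lb-\tfrac12\lb^\rho\gb_L+\slashed m^{\rho A}\gb_{S_A}$, the bad derivative $\lb^\rho\Gamma_{\mu\rho}^{\ \ \,\mu}\sim\lb g$ pairs only with the \emph{degenerate} matrix $\gb_L$, giving the bound $|\Gamma_{\mu\rho}^{\ \ \,\mu}\pb\gamma^\rho\psi|\lesssim|\d g||\pb\gb_L\psi|+|\db g||\psi|^2+|\gamma-\gb||\d g||\psi|^2$. The first term is then absorbed into the bulk integral $\int(-\pb\gb_L\psi)w'$ on the left, the second uses the $|\db g|$ decay directly, and the third uses the Appendix~\ref{appendixtetrad} gain. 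This pairing-with-$\gb_L$ observation — a null-structure phenomenon for the Dirac connection term, distinct from the wave-coordinate arguments — is the ingredient your proposal is missing, and it is precisely what makes the estimate work under the weaker hypotheses of the appendix.
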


\begin{proof}
We use \eqref{identitywithbulkterm}. Notice that $$\Gamma_{\mu\rho}^{\ \ \,\mu}=\frac 12 g^{\mu\nu} (\d_\mu g_{\rho\nu}+\d_{\rho}g_{\mu\nu}-\d_\nu g_{\mu\rho})=\frac 12 g^{\mu\nu}\d_\rho g_{\mu\nu},$$
so we have the estimate $|\Gamma_{\mu\rho}^{\ \ \,\mu} \pb\gamma^\rho \psi|\lesssim |\d g||\pb \bar{\gamma}_L \psi|+ |\db g||\psi|^2 + |\gamma^\mu-\bar{\gamma}^\mu||\d g||\psi|^2$. We can absorb the first term on the right hand side using the spacetime integral we have just got on the left hand side. For the last term here, we pick the tetrad in Appendix \ref{appendixtetrad} and we have one extra factor of decay.

We still need to deal with the last integrand in \eqref{identitywithbulkterm}. In view of the control we derived in Appendix \ref{appendixtetrad}, we have $|(\gamma^\mu-\gb^\mu)L_\mu|\lesssim |h|_{\mathcal T \mathcal U}+O(h^2)\lesssim \varepsilon (1+t)^{-1}(1+|q|)^{\frac 12}$, so using also $w'(q)\lesssim (1+|q|)^{-1}w(q)$ we get the estimate.
\end{proof}

\bibliographystyle{abbrv}
\bibliography{Ref}

\begin{thebibliography}{10}

\bibitem{bachelot1988DiracVectorField}
A.~Bachelot.
\newblock {Probleme de Cauchy global pour des systemes de Dirac-Klein-Gordon}.
\newblock In {\em Annales de l'IHP Physique th{\'e}orique}, volume~48, pages
  387--422, 1988.

\bibitem{Baoavoidloss}
D.~Bao.
\newblock {A sufficient condition for the linearization stability of N= 1
  supergravity: a preliminary report}.
\newblock {\em Annals of Physics}, 158(1):211--278, 1984.

\bibitem{Bao:1984bp}
D.~Bao, Y.~Choquet-Bruhat, J.~Isenberg, and P.~B. Yasskin.
\newblock {The Well Posedness of ($N=1$) Classical Supergravity}.
\newblock {\em J. Math. Phys.}, 26:329, 1985.

\bibitem{anomalies}
R.~Bertlmann.
\newblock {\em Anomalies in Quantum Field Theory}.
\newblock International Series of Monographs on Physics. Clarendon Press, 2000.

\bibitem{masslessEVnoncompact}
L.~Bigorgne, D.~Fajman, J.~Joudioux, J.~Smulevici, and M.~Thaller.
\newblock Asymptotic stability of {M}inkowski space-time with non-compactly
  supported massless {V}lasov matter.
\newblock {\em Arch. Ration. Mech. Anal.}, 242(1):1--147, 2021.

\bibitem{brill1966cartan}
D.~R. Brill and J.~M. Cohen.
\newblock {Cartan frames and the general relativistic Dirac equation}.
\newblock {\em Journal of Mathematical Physics}, 7(2):238--243, 1966.

\bibitem{brill1957interaction}
D.~R. Brill and J.~A. Wheeler.
\newblock {Interaction of neutrinos and gravitational fields}.
\newblock {\em Reviews of Modern Physics}, 29(3):465, 1957.

\bibitem{choquet2008general}
Y.~Choquet-Bruhat.
\newblock {\em General relativity and the Einstein equations}.
\newblock OUP Oxford, 2008.

\bibitem{christodoulou1993global}
D.~Christodoulou and S.~Klainerman.
\newblock {The global nonlinear stability of the Minkowski space}.
\newblock {\em S{\'e}minaire Equations aux d{\'e}riv{\'e}es partielles
  (Polytechnique)}, pages 1--29, 1993.

\bibitem{diracguide}
P.~Collas and D.~Klein.
\newblock {The Dirac equation in general relativity, a guide for calculations}.
\newblock {\em arXiv preprint arXiv:1809.02764}, 2018.

\bibitem{1902DiracHyperboloid}
S.~Dong, P.~G. LeFloch, and Z.~Wyatt.
\newblock {Global evolution of the U(1) Higgs Boson: nonlinear stability and
  uniform energy bounds}.
\newblock In {\em Annales Henri Poincar{\'e}}, volume~22, pages 677--713.
  Springer, 2021.

\bibitem{Stability2017Fajman}
D.~Fajman, J.~Joudioux, and J.~Smulevici.
\newblock The stability of the {M}inkowski space for the {E}instein-{V}lasov
  system.
\newblock {\em Anal. PDE}, 14(2):425--531, 2021.

\bibitem{finster1999particlelike}
F.~Finster, J.~Smoller, and S.-T. Yau.
\newblock {Particle-like solutions of the Einstein-Dirac equations}.
\newblock {\em Physical review D: Particles and fields}, 59, 06 1998.

\bibitem{Choques-Bruhat1952}
Y.~Fourès-Bruhat.
\newblock {Théorème d'existence pour certains systèmes d'équations aux
  dérivées partielles non linéaires}.
\newblock {\em Acta Mathematica}, 88:141 -- 225, 1952.

\bibitem{ionescu2019einstein}
A.~D. Ionescu and B.~Pausader.
\newblock {The Einstein-Klein-Gordon coupled system: global stability of the
  Minkowski solution}.
\newblock {\em arXiv preprint arXiv:1911.10652}, 2019.

\bibitem{kauffman2018}
C.~Kauffman.
\newblock {Global Stability for Charged Scalar Fields in an Asymptotically Flat
  Metric in Harmonic Gauge}.
\newblock {\em arXiv preprint arXiv:1801.09648}, 2018.

\bibitem{KLMKG21}
C.~Kauffman and H.~Lindblad.
\newblock {Global stability of Minkowski space for the
  Einstein-Maxwell-Klein-Gordon system in generalized wave coordinates}.
\newblock {\em arXiv preprint arXiv:2109.03270}, 2021.

\bibitem{kay2020editorial}
B.~S. Kay.
\newblock {Editorial note to: Erwin Schr{\"o}dinger, Dirac electron in the
  gravitational field I}.
\newblock {\em General Relativity and Gravitation}, 52(1):1--14, 2020.

\bibitem{klainerman1985uniform}
S.~Klainerman.
\newblock {Uniform decay estimates and the Lorentz invariance of the classical
  wave equation}.
\newblock {\em Communications in Pure Applied Mathematics}, 38:321--332, 1985.

\bibitem{LMCMP}
P.~G. LeFloch and Y.~Ma.
\newblock The global nonlinear stability of minkowski space for
  self-gravitating massive fields.
\newblock {\em Communications in Mathematical Physics}, 346(2):603--665, 2016.

\bibitem{lindblad1990lifespan}
H.~Lindblad.
\newblock On the lifespan of solutions of nonlinear wave equations with small
  initial data.
\newblock {\em Communications on Pure and Applied Mathematics}, 43(4):445--472,
  1990.

\bibitem{05}
H.~Lindblad.
\newblock Global solutions of quasilinear wave equations.
\newblock {\em American journal of mathematics}, 130(1):115--157, 2008.

\bibitem{1606Einsteinasymptotics}
H.~Lindblad.
\newblock {On the Asymptotic Behavior of Solutions to the Einstein Vacuum
  Equations in Wave Coordinates}.
\newblock {\em Communications in Mathematical Physics}, 353:135--184, 2016.

\bibitem{lindblad2003weaknull}
H.~Lindblad and I.~Rodnianski.
\newblock {The weak null condition for Einstein's equations}.
\newblock {\em Comptes Rendus Mathematique}, 336(11):901--906, 2003.

\bibitem{03}
H.~Lindblad and I.~Rodnianski.
\newblock {Global existence for the Einstein vacuum equations in wave
  coordinates}.
\newblock {\em Communications in Mathematical Physics}, 256(1):43--110, 2005.

\bibitem{04}
H.~Lindblad and I.~Rodnianski.
\newblock {The global stability of Minkowski space-time in harmonic gauge}.
\newblock {\em Annals of Mathematics}, pages 1401--1477, 2010.

\bibitem{lindblad04MKG}
H.~Lindblad and J.~Sterbenz.
\newblock {Global stability for charged-scalar fields on Minkowski space}.
\newblock {\em International Mathematics Research Papers}, 2006:52976, 2006.

\bibitem{EV}
H.~Lindblad and M.~Taylor.
\newblock {Global stability of Minkowski space for the Einstein-Vlasov system
  in the harmonic gauge}.
\newblock {\em Archive for rational mechanics and analysis}, 235(1):517--633,
  2020.

\bibitem{LT}
H.~Lindblad and M.~Tohaneanu.
\newblock {Global existence for quasilinear wave equations close to
  Schwarzschild}.
\newblock {\em Communications in Partial Differential Equations},
  43(6):893--944, 2018.

\bibitem{loizelet2009solutions}
J.~Loizelet.
\newblock {Solutions globales des {\'e}quations d’Einstein-Maxwell}.
\newblock In {\em Annales de la Facult{\'e} des sciences de Toulouse:
  Math{\'e}matiques}, volume~18, pages 495--540, 2009.

\bibitem{semidefinitebook}
C.~D. Meyer.
\newblock {\em Matrix analysis and applied linear algebra}, volume~71.
\newblock Siam, 2000.

\bibitem{pechenick1979new}
K.~R. Pechenick and J.~M. Cohen.
\newblock {New exact solution to the Einstein-Dirac equations}.
\newblock {\em Physical Review D}, 19(6):1635, 1979.

\bibitem{pollock2010dirac}
M.~Pollock.
\newblock {On the Dirac equation in curved space-time}.
\newblock {\em Acta Physica Polonica B}, 41(8), 2010.

\bibitem{prussing1986principal}
J.~E. Prussing.
\newblock The principal minor test for semidefinite matrices.
\newblock {\em Journal of Guidance, Control, and Dynamics}, 9(1):121--122,
  1986.

\bibitem{psarelli2005maxwell}
M.~Psarelli.
\newblock {Maxwell--Dirac equations in four-dimensional Minkowski space}.
\newblock {\em Communications in Partial Differential Equations},
  30(1-2):97--119, 2005.

\bibitem{schrodinger1932diracsches}
E.~Schr{\"o}dinger.
\newblock {\em {Diracsches Elektron im Schwerefeld I}}.
\newblock Akademie der wissenschaften, in kommission bei W. de Gruyter u.
  Company, 1932.

\bibitem{speck2014global}
J.~Speck.
\newblock {The global stability of the Minkowski spacetime solution to the
  Einstein-nonlinear system in wave coordinates}.
\newblock {\em Analysis \& PDE}, 7(4):771--901, 2014.

\bibitem{TaylormasslessEV}
M.~Taylor.
\newblock The global nonlinear stability of {M}inkowski space for the massless
  {E}instein-{V}lasov system.
\newblock {\em Ann. PDE}, 3(1):Paper No. 9, 177, 2017.

\bibitem{yepez2011einsteinvierbein}
J.~Yepez.
\newblock {Einstein's vierbein field theory of curved space}.
\newblock {\em arXiv preprint arXiv:1106.2037}, 2011.

\end{thebibliography}

\small\textsc{Department of Mathematics, Johns Hopkins University, Baltimore, MD 21218, USA}

\small\textit{Email address:} \texttt{{xchen165@jhu.edu}}

\end{document}